\documentclass[12pt]{article}
\usepackage[utf8]{inputenc}
\usepackage[english]{babel}
\usepackage{amsmath}
\usepackage{mathrsfs}
\usepackage{amssymb}
\usepackage{amsthm}
\newtheorem{lemma}{Lemma}
\usepackage{bm}
\usepackage{amsfonts}
\usepackage{xspace}
\usepackage{setspace}
\usepackage[normalem]{ulem}
\usepackage{graphicx}
\usepackage{multirow}
\usepackage{booktabs}
\usepackage{appendix}
\usepackage{enumerate}
\usepackage{url} 
\usepackage{authblk}
\usepackage{xcolor}
\usepackage{subcaption}
\usepackage{xr}
\usepackage{adjustbox}
\usepackage{tabularx}
\usepackage{tikz}
\usepackage[colorlinks=true,linkcolor=red,citecolor=blue]{hyperref}
\usepackage{enumitem}
\usepackage{natbib}
\usepackage{iftex}
\usepackage{mathtools}
\usepackage{etoolbox}
\usepackage{algorithm}
\usepackage{algpseudocode}
\usetikzlibrary{shapes, shapes.geometric, arrows.meta, positioning, calc, decorations.pathreplacing, backgrounds, fit}

\definecolor{myblue}{RGB}{70, 130, 180}
\definecolor{myorange}{RGB}{255, 140, 0}
\definecolor{lightbg}{RGB}{248, 250, 255}
\newcommand{\blind}{1}

\makeatletter
\newcommand*{\addFileDependency}[1]{
\typeout{(#1)}
\@addtofilelist{#1}
\IfFileExists{#1}{}{\typeout{No file #1.}}
}\makeatother

\newcommand{\mP}{\mathbb{P}}
\newcommand{\mG}{\mathbb{G}}

\newcommand{\bbeta}{\boldsymbol{\beta}}

\newcommand{\bGamma}{\boldsymbol{\Gamma}}

\newcommand{\bP}{\mathbf{P}}
\newcommand{\bX}{\mathbf{X}}
\newcommand{\bb}{\mathbf{b}}
\newcommand{\bC}{\mathbf{C}}

\newcommand{\mD}{\mathcal{D}}

\newcommand{\bx}{\mathbf{x}}

\newcommand{\bH}{\mathbf{H}}

\newcommand{\bh}{\mathbf{h}}

\newcommand{\mO}{\mathcal{O}}

\newcommand{\calX}{\mathcal{X}}

\newcommand{\lb}{\left\{}
\newcommand{\rb}{\right\}}
\newcommand{\lessp}{\lesssim_{\mP}}

\newcommand{\whp}{\widehat{p}}

\newtheorem{assumption}{Assumption}%
\newtheorem{remark}{Remark}%
\newtheorem{theorem}{Theorem}%

\newtheorem{corollary}{Corollary}

\makeatletter
\begin{document}

\def\spacingset#1{\renewcommand{\baselinestretch}%
{#1}\small\normalsize} \spacingset{1}


\if1\blind
{
  \title{\bf Doubly cross-fit debiased machine learning of heterogeneous treatment effects under principal stratification}
\author{
  Jiaqi Tong$^{1}$
  and Fan Li$^{1,*}$
  \vspace{0.2cm} 
  \\
  $^{1}$Department of Biostatistics, Yale School of Public Health, New\\ Haven, CT, USA \\
  $^{*}$\emph{email}: fan.f.li@yale.edu
}

  \maketitle
} \fi

\if0\blind
{
  \bigskip
  \bigskip
  \bigskip
  \begin{center}
    {\LARGE\bf Learning heterogeneous treatment effects under principal stratification}
\end{center}
  \medskip
} \fi

\bigskip
\begin{abstract}
Principal stratification provides a foundational framework for causal inference with intermediate outcomes by defining causal effects within subpopulations, yet existing work has largely focused on average effects across strata rather than treatment effect heterogeneity within strata. Such within-stratum heterogeneity informs individualized treatment decisions but the associated methods are sparse. We address this gap by studying the identification and estimation of the conditional principal causal effects under principal ignorability combined with an odds ratio sensitivity parameterization, which relaxes the monotonicity assumption. To efficiently learn these estimands, we propose a novel doubly cross-fit doubly robust machine learner that resolves the nested nuisance structure inherent to principal stratification. Leveraging sequential orthogonal debiased machine learning with regularized least-squares sieves, we derive $\mathcal{L}^2$ and uniform limit theory, establish oracle efficiency, and construct uniform confidence bands for the proposed estimator. We use simulations to demonstrate the finite-sample performance of our estimator, and provide an empirical analysis of a randomized trial in acute lung injury, revealing informative patterns of treatment effect heterogeneity within the always-survivor subpopulation.
\end{abstract}

\noindent%
{\it Keywords:} Debiased machine learning, double cross-fitting, efficient influence function, oracle efficiency, principal stratification, uniform confidence bands
\vfill

\newpage
\spacingset{1.8} 

\section{Introduction}

\subsection{Background and related literature}

Principal stratification is a general framework for causal inference with intermediate outcomes that lie on the pathway between assignment and outcome \citep{frangakis2002principal}; it has received widespread attention in statistical science for addressing noncompliance \citep{angrist1994}, truncation by death \citep{zhang2003estimation}, {principal surrogate evaluation \citep{gilbert2008evaluating}}, and related problems. A canonical example is truncation by death \citep{zhang2003estimation}, in which case the quality of life outcome becomes ambiguously defined for individuals who die before assessment. Comparisons of observed outcomes across groups conflate treatment effects with differences in survival, thereby compromising causal interpretation. Principal stratification addresses this challenge by targeting principal causal effects (PCEs), well-defined average causal effect estimands within subpopulations (principal strata) characterized by the joint potential values of the intermediate outcome under counterfactual conditions.

A substantial literature has emerged regarding the nonparametric point identification of PCEs \citep[e.g.,][]{DingandLu2016,JiangJRSSB2022,tong2025semiparametric}. Comparisons among PCEs can reveal treatment effect variation arising from \emph{endogenous} intermediate outcomes \citep{page2015principal}. In contrast, relatively little attention has been devoted to characterizing treatment effect variation within principal strata, that is, how treatment effects may potentially vary as a function of \emph{exogenous} baseline covariates among individuals belonging to a given principal stratum (such as always-survivors). This distinction is important: heterogeneity across principal strata reflects differences in causal effects between latent subpopulations, whereas heterogeneity within principal strata captures clinically and scientifically meaningful effect modification that remains after conditioning on joint potential intermediate outcomes. Developing statistical methods to learn such within-stratum heterogeneity remains a largely open problem.

Several related efforts have examined treatment effect heterogeneity within a certain principal stratum. Under the instrumental variable (IV) framework, \cite{syrgkanis2019machine} studied estimation of the conditional local average causal effect (CLATE) among compliers using a two-stage orthogonal learning approach, while \cite{johnson2022detecting} proposed a matching estimator and developed data-adaptive tests for discovering effect modifiers among compliers. Furthermore, \cite{takatsu2025doubly} proposed a Wald-type ratio estimator for CLATE, where the numerator and denominator are estimated separately using the data-adaptive doubly robust (DR) learner \citep{kennedy2023towards}. Alternatively, \cite{bargagli2022heterogeneous} proposed a Bayesian Causal Forest approach and \cite{spanbauer2024flexible} employed Bayesian additive regression trees (BARTs) to flexibly estimate CLATE estimands but did not establish the asymptotic properties of these estimators. Notably, all of these methods depend on the exclusion restriction and monotonicity assumptions, which are inherent to the IV definition and restrict attention to the complier principal stratum. This limits their applicability to other types of intermediate variables. In the context of truncation by death, \citet{chen2024bayesian} developed a mixture of BARTs approach to estimate the conditional survivor average causal effect (CSACE), but this approach hinges on monotonicity and parametric distributional assumptions. {Recently, under monotonicity and principal ignorability, \cite{zhang2026estimation} proposed several flexible causal machine learners for the conditional PCE among always-takers, compliers, and never-takers. However, conceptually, this study focuses primarily on estimating the conditional PCE given the full set of exogenous baseline covariates, which may be moderate to high dimensional and can therefore complicate estimation and interpretation. Inferentially, this work relies on computationally intensive bootstrap procedures for pointwise inference and does not provide a procedure for simultaneous inference.} More broadly, general data-adaptive methods that are robust to nuisance estimation errors, allow treatment effect heterogeneity within all principal strata, accommodate diverse intermediate variables, and provide strong large-sample theoretical guarantees, such as oracle efficiency and uniform inference, remain underdeveloped.

\subsection{Our contributions}
The central contribution of this article is a flexible debiased machine learning framework for estimating a new class of estimands, referred to as \emph{conditional principal causal effects}, which characterize treatment effect variation within principal strata across a low-dimensional, interpretable subset of covariates. First, we propose a novel doubly cross-fit doubly robust (DCDR) causal machine learner that targets conditional principal causal effects identified under an odds ratio sensitivity parameter and principal ignorability assumption; this allows for estimating heterogeneous treatment effects within all four principal strata and nests popular assumptions such as monotonicity and counterfactual intermediate independence as two special cases \citep{tong2025semiparametric}.
To address unique challenges when conditioning on a subset of covariates within principal strata---in which case certain nuisance functions are themselves functions of the remaining nuisance functions, we 
derive an orthogonal learner to avoid linear nuisance error propagation. Our orthogonal learner is robust to nuisance functions because it is the solution to a single estimating equation, obtained by taking the directional derivative of a Neyman-orthogonal loss function with respect to the target estimand. Our approach is thus insensitive to the perturbation of nuisance functions around their true values \citep{foster2023orthogonal}. This is in contrast to \citet{takatsu2025doubly} (developed under the IV framework) who focused on a ratio estimator by separately learning the numerator and denominator. For operationalization, we combine the double cross-fitting scheme in \cite{newey2018cross} with the proposed orthogonal learner to properly handle nested nuisance functions and to ensure faster convergence rates. 
Second, we construct computationally efficient uniform confidence bands and provide a formal treatment of the asymptotic theory of the proposed orthogonal DCDR learner, including its $\mathcal{L}^2$ and uniform limit theory, when the final-stage pseudo-outcome regression is based on the regularized least-squares sieve method \citep{belloni2015some}. We show that under suitable regularity conditions, our proposed estimator is oracle efficient and the uniform confidence bands are asymptotically honest. Compared to \cite{semenova2021debiased} (developed without an intermediate variable), we leverage regularization as a useful approach to select hyperparameters and improve finite-sample inference. {Our work is related to the concurrent work of \cite{zhang2026estimation}, but differs substantially in several dimensions, including target estimand, identification assumptions and formulas, statistical challenges, estimation and inference procedures, and asymptotic analysis. A detailed comparison of our work with \cite{zhang2026estimation} is provided in Supplementary Material Table \ref{supp;tab:comparison-zhang}.}


The remainder of this article is organized as follows. Section \ref{sec:notation-assumptions-identification} introduces estimands, assumptions, and nonparametric identification formulas. Section \ref{sec:dcf-tr-learning} proposes our DCDR learning approach for estimation and inference. Section \ref{sec:large-sample-theory-inference-bands} offers a formal treatment of the asymptotic theory for the proposed causal effect estimator. Section \ref{sec:sim_exp_main} reports the results of simulation experiments. Section \ref{s:Data_Example} exemplifies the proposed approach with an empirical application, and Section \ref{sec:conclusion} concludes.

\section{Notation, assumptions, and identification}\label{sec:notation-assumptions-identification}
We consider a setting with a binary treatment $Z \in \{0,1\}$, where $Z=1$ denotes treatment and $Z=0$ denotes control, a vector of baseline covariates $\bC$, a binary intermediate outcome $D \in \{0,1\}$, and a final unconstrained outcome $Y$. We observe $n$ independent and identically distributed copy of the data vector $\mO \coloneqq (Y, D, Z, \bC)$, specified through the underlying full data vector $\mD \coloneqq (Y(0), Y(1), D(0), D(1), Z, \bC)$, where $Y(z)$ and $D(z)$ denote the potential values of the final outcome and intermediate outcome under assignment $Z=z$. Under the Stable Unit Treatment Value Assumption, we have $Y = Y(Z)$ and $D = D(Z)$. In principal stratification analysis, the pair of joint potential intermediate outcomes, $G=(D(0), D(1))$, is unaffected by assignment and thus regarded as a pre-treatment covariate \citep{frangakis2002principal}. We define the principal strata as $G=d_0d_1$ when $D(0)=d_0$ and $D(1)=d_1$ with $d_0,d_1\in\{0,1\}$. In the context of noncompliance ($D$ is the treatment receipt), the strata $G=\{01,11,10,00\}$ correspond to compliers, always-takers, defiers, and never-takers \citep{angrist1994}; in the context of truncation by death ($D$ is the binary survival status prior to outcome measurement), these strata correspond to the subpopulation of the protected, always-survivors, harmed, and never-survivors \citep{rubin2006causal}. 

The canonical target of inference under principal stratification is the principal causal effect (PCE), defined as $\tau_{d_0d_1}=E\{Y(1)-Y(0) | G= d_0d_1\}$. While nonparametric point identification of the PCE has been previously investigated, for example, by \citet{DingandLu2016}, \citet{JiangJRSSB2022}, and \citet{tong2025semiparametric}, little attention has been given to exploring treatment effect heterogeneity within each stratum. In this article, we define a finer class of estimands, referred to as conditional principal causal effect (CPCE), by 
\begin{align*}
    \tau_{d_0d_1}(\bX)\coloneqq E\{Y(1)-Y(0)|G=d_0d_1,\bX\},~~d_0,d_1\in\{0,1\}, 
\end{align*}
where $\bX \in \mathbb{R}^p$ is a potentially low-dimensional function of $\bC$, chosen to represent the set of potential effect modifiers reflecting scientific interest. 
Estimand $\tau_{d_0d_1}(\bX)$ denotes the average treatment effect conditional on $\bX$ within the principal stratum $d_0d_1$, and the PCE is a weighted summary of CPCE within the principal stratum, i.e., $\tau_{d_0d_1}=\int \tau_{d_0d_1}(\bX) dF(\bX|G=d_0d_1)$. When $D$ denotes the treatment receipt, $\tau_{01}(\bX)$ is the CLATE studied, for example, in \citet{abrevaya2015estimating}, \citet{syrgkanis2019machine}, \citet{johnson2022detecting} and \citet{bargagli2022heterogeneous} under an IV framework. When $D$ denotes the survival status, $\tau_{11}(\bX)$ is the CSACE studied in \citet{chen2024bayesian} under a Gaussian mixture modeling framework. We invoke the following assumptions to achieve point identification.

\begin{assumption}[\emph{Treatment Ignorability}]\label{assump:treatment-ignorability}
    $\{Y(0),Y(1),D(0),D(1)\}\perp Z|\bC.$
\end{assumption}
\begin{assumption}[\emph{Mean Principal Ignorability}]\label{assump:PI_weak} (2.a) $E\{Y(1)|G=11,\bC\}=E\{Y(1)|G=01,\bC\allowbreak\}$; (2.b) $E\{Y(1)|G=10,\bC\}=E\{Y(1)|G=00,\bC\}$; (2.c) $E\{Y(0)|G=11,\bC\}=E\{Y(0)\allowbreak|G=10,\bC\}$; (2.d) $E\{Y(0)|G=01,\bC\}=E\{Y(0)|G=00,\bC\}$. 
\end{assumption}

Assumption \ref{assump:treatment-ignorability} states that the treatment is as if randomized given all pre-treatment covariates. Assumption \ref{assump:PI_weak} is a typical condition required for point identifying PCE \citep{hayden2005estimator,DingandLu2016} and states that the mean of $Y(z)$ is independent of the cross-world intermediate potential outcome $D(1-z)$, conditional on $\{D(z) = d_z, \bC\}$. Under Assumptions \ref{assump:treatment-ignorability}-\ref{assump:PI_weak}, the finest CPCE given full baseline covariates is identifiable as
\begin{align}\label{eq:iden-cpce-allc=x}
    \tau_{d_0d_1}(\bC)=m_{1d_1}(\bC)-m_{0d_0}(\bC),
\end{align}
where $m_{zd}(\bC) := E\{Y | Z = z, D = d, \bC\}$ is the conditional mean outcome given $(Z, D, \bC) = (z, d, \bC)$. Interestingly, under mean principal ignorability, it is sufficient to adjust for the observed intermediate outcome for estimating the finest CPCE. {Recently, for estimating $\tau_{d_0d_1}(\bC)$, \cite{zhang2026estimation} proposed a plug-in learner and two multiply robust learners: one based on an adaptation of the DR learner of \cite{kennedy2023towards} conditional on the observed intermediate outcome, and another motivated by the triply robust semiparametric one-step estimator of \cite{JiangJRSSB2022}.
}

Although $\tau_{d_0d_1}(\bC)$ represents the most granular estimand, investigators often prioritize a smaller, interpretable set of effect modifiers informed by subject-matter knowledge; in such cases, causal inference for CPCE estimands requires more nuanced considerations. Specifically, we assume that $\bX$ is of lower dimension, while the full set of measured covariates $\bC$ may be high-dimensional to ensure the plausibility of the ignorability assumptions. When $\bX$ is a proper subset of $\bC$ such that $\bX\subsetneq \bC$, identifying the CPCE additionally requires the identification of the principal score (PS), defined as the conditional probability of belonging to a given principal stratum, $e_{d_0d_1}(\bC)=\Pr(G=d_0d_1 | \bC)$, for $d_0,d_1\in\{0,1\}$. We focus on this more challenging problem hereafter. Since $G$ is not fully observable, we make the following sensitivity assumption to connect the PS with the observed data distribution. 

\begin{assumption}[\emph{Conditional odds ratio}]\label{assump:cor}
There exists a known function $\theta(\bullet):\mathcal{C}\to[0,\infty]$, such that $\{e_{11}(\bC)e_{00}(\bC)\}/\{e_{10}(\bC)e_{01}(\bC)\}=\theta(\bC)$, where $\mathcal{C}$ is the covariate support.
\end{assumption}
Assumption \ref{assump:cor} introduces an odds ratio sensitivity parameter, which is not identifiable from the observed data. This odds ratio parameterization offers two major advantages. First, it unifies monotonicity \citep{DingandLu2016} and counterfactual intermediate independence \citep{hayden2005estimator} as special limiting cases. That is, (i) monotonicity $D(1)\geq D(0)$ holds if and only if $\theta(\bC)\to\infty$ and $p_1(\bC)>p_0(\bC)$ with probability one, where $p_z(\bC)=\Pr(D=1|Z=z,\bC)$ has also been referred to as (an estimable) principal score \citep{JiangJRSSB2022}; (ii) and counterfactual intermediate independence holds if and only if $\theta(\bC)=1$. 
Second, unlike other parameterizations (e.g., the ratio of strata proportions in \cite{DingandLu2016}), the odds ratio sensitivity parameter is familiar to practitioners and is margin-free; that is, its range is not constrained by the unknown marginal distributions of $D(1)$ or $D(0)$ (cf. Remark 2 in \cite{tong2025semiparametric}).

Under Assumptions \ref{assump:treatment-ignorability}-\ref{assump:cor}, \citet{tong2025semiparametric} proved that the PS is identifiable, provided that $\Pr(\theta(\bC)-1\neq0)=1$, and is given by
\begin{align*}
    e_{d_0d_1}(\bC)=&\frac{(-1)^{d_0+d_1}}{2\{\theta(\bC)-1\}}\left[1+\{\theta(\bC)-1\}\kappa_{d_0d_1}(\bC)-\sqrt{\delta(\bC)}\right],
\end{align*}
where $\kappa_{d_0d_1}(\bC)=(2d_1-1)p_0(\bC)+(2d_0-1)p_1(\bC)+2(1-d_0)(1-d_1)$ and $\delta(\bC)=\left[1+\{\theta(\bC)-1\}\{p_0(\bC)+p_1(\bC)\}\right]^2-4\theta(\bC)\{\theta(\bC)-1\}p_0(\bC)p_1(\bC)$. Moreover, to accommodate the monotonicity assumption ($D(1)\geq D(0)$), one computes $\lim_{\theta\to\infty} e_{d_0d_1}(\bC)$ given $p_1(\bC)>p_0(\bC)$, which yields $e_{01}(\bC)=p_1(\bC)-p_0(\bC)$, $e_{00}(\bC)=1-p_1(\bC)$, and $e_{11}(\bC)=p_0(\bC)$; to accommodate counterfactual intermediate independence ($D(0)\perp D(1)|\bC$), one computes $\lim_{\theta\to1} e_{d_0d_1}(\bC)$, which yields $e_{11}(\bC)=p_0(\bC)p_1(\bC)$, $e_{01}(\bC)=p_1(\bC)\{1-p_0(\bC)\}$, $e_{00}(\bC)=\{1-p_0(\bC)\}\{1-p_1(\bC)\}$, and $e_{10}(\bC)=p_0(\bC)\{1-p_1(\bC)\}$.
Furthermore, under Assumptions \ref{assump:treatment-ignorability}-\ref{assump:cor}, we prove in Supplementary Material Section \ref{sec:derivation-identification-CPCE} that CPCE $\tau_{d_0d_1}(\bX)$ is also nonparametrically point identifiable based on the following g-computation formula:
\begin{align}\label{eq:identification-CPCE-subset}
    \tau_{d_0d_1}(\bX)=&\frac{E\{e_{d_0d_1}(\bC)(m_{1d_1}(\bC)-m_{0d_0}(\bC))|\bX\}}{E\{e_{d_0d_1}(\bC)|\bX\}}:=\frac{\tau^N_{d_0d_1}(\bX)}{\tau^D_{d_0d_1}(\bX)},
\end{align}
where $\bX\in\mathcal{X}:=\{\bX\subseteq\mathbb{R}^p:\Pr(\tau^D_{d_0d_1}(\bX)>0)=1\}$. Finally, setting $\bX = \bC$ as the full covariate vector in Equation \eqref{eq:identification-CPCE-subset} yields $\tau_{d_0d_1}(\bC) = m_{1d_1}(\bC) - m_{0d_0}(\bC)$, which recovers Equation \eqref{eq:iden-cpce-allc=x} as a special case. 

\section{Doubly robust learner with double cross-fitting}\label{sec:dcf-tr-learning}
\subsection{Motivation and overview}
To proceed, we define the propensity score as $\pi(\bC)=\Pr(Z=1|\bC)$ and the collection of nuisance functions as $\bGamma=(p_0,p_1,m_{0d_0},m_{1d_1},\pi,\tau^N_{d_0d_1},\tau^D_{d_0d_1})$. Further, define $\bGamma_1:=(p_0, p_1, \pi, m_{0d_0},m_{1d_1})$ as the \emph{base} nuisance functions. We refer to these as `base' nuisance functions because they are regression functions of observed data that can be estimated by existing data-adaptive machine learners. However, generalizing orthogonal learning to estimate the CPCE is complicated due to the \emph{intermediate} nuisance functions $\bGamma_2:=(\tau_{d_0d_1}^N,\tau_{d_0d_1}^D)$, which are themselves regression functions of $\bGamma_1$ and cannot be directly obtained without knowledge of the base nuisance functions. This leads to the following two questions. First, can one construct an optimal estimator for the intermediate nuisance $\bGamma_2$? Second, given reasonable estimates for $\bGamma_2$, can one construct an optimal estimator for learning CPCE that does not heavily rely on the quality of learners for $\bGamma_2$? Under the IV framework, \citet{takatsu2025doubly} proposed to use two separate DR learners for the counterparts of $\tau_{d_0d_1}^N$ and $\tau_{d_0d_1}^D$, and take the ratio to learn the final conditional estimands. This addresses the first challenge but not the second, because separately learning $\tau_{d_0d_1}^N$ and $\tau_{d_0d_1}^D$ using the same data may not guarantee the optimal bias-variance trade-off for the CPCE and can lead to suboptimal convergence rates. For example, when $\tau_{d_0d_1}^N$ and $\tau_{d_0d_1}^D$ are significantly more complex than the CPCE estimand $\tau_{d_0d_1}$, regularization for learning $\tau_{d_0d_1}^N$ or $\tau_{d_0d_1}^D$ may not be optimal for learning the simpler $\tau_{d_0d_1}$. Under the principal stratification framework without monotonicity, we address both challenges. By adapting the general idea of \citet{newey2018cross}, we propose a three-stage, doubly cross-fit pseudo-outcome regression estimator based on sequential orthogonal learning to ensure high-quality estimation of these intermediate nuisance functions while achieving the optimal bias-variance trade-off for the CPCE. This procedure applies an additional orthogonal learner (Stage 2) to estimate $\bGamma_2$, which necessitates an additional sample split within the training data. A schematic of our approach is presented in Figure \ref{fig:do_learner_schematic}.

\begin{figure}[ht!]
\centering
\begin{tikzpicture}[
    scale=0.7, 
    transform shape,
    font=\sffamily,
    >=Stealth,
    node distance=0.4cm and 0.5cm,
    thick,
    foldbox/.style={
        draw=myblue,
        very thick,
        rounded corners=3pt,
        minimum width=5.2cm, 
        align=center,
        fill=white,
        font=\footnotesize
    },
    outerfold/.style={
        foldbox,
        minimum height=0.4125cm,
        inner sep=2pt
    },
    innerfold/.style={
        foldbox,
        minimum height=0.6cm,
        font=\small
    },
    holdoutbox/.style={
        foldbox,
        draw=myorange,
        fill=myorange!10,
        inner sep=2pt
    },
    processbox/.style={
        draw=black!60,
        dashed,
        thick,
        fill=white,
        align=center,
        inner sep=6pt,
        rounded corners=2pt,
        text width=6.2cm 
    }
]

    \node[processbox] (Step1) {
        \textbf{Stage 1}\\
        Use $\mathcal{F}_{s}^c$ (Blue Outer Folds)\\
        to estimate base nuisance\\
        functions $\widehat{\boldsymbol{\Gamma}}_{1s}$.
    };

    \node[processbox, right=4.5cm of Step1] (Step2) {
        \textbf{Stage 2 Step 1}\\
        1. Train on $\mathcal{F}_{sr}^c$ (Blue Inner Folds)\\
        2. Predict on $\mathcal{F}_{sr}$ (Orange Inner Fold)\\
         $~~~~\Rightarrow\widehat{Den}, \widehat{Num}$
    };


    \node[outerfold, below=0.6cm of Step1] (OF1) {Outer Fold 1};
    \node[outerfold, below=0.15cm of OF1] (OF2) {Outer Fold 2};
    \node[outerfold, below=0.15cm of OF2] (OF3) {Outer Fold 3};
    \node[outerfold, below=0.15cm of OF3] (OF4) {Outer Fold 4};
    
    \node[holdoutbox, minimum height=0.4125cm, below=0.6cm of OF4] (OF5) {Outer Fold 5};

    \node[innerfold] (IF1) at (Step2 |- OF1.north) [anchor=north] {Inner Fold 1};
    \node[innerfold, below=0.15cm of IF1] (IF2) {Inner Fold 2};
    \node[holdoutbox, dashed, minimum height=0.6cm, below=0.15cm of IF2 , font=\small] (IF3) {Inner Fold 3};

    \path (OF1.north east) -- (OF4.south east) coordinate[midway, xshift=4pt] (OuterBraceRight);
    \path (IF1.north west) -- (IF3.south west) coordinate[midway, xshift=-4pt] (InnerBraceLeft);
    
    \draw[->, ultra thick, myblue] (OuterBraceRight) -- (InnerBraceLeft) node[midway, above, align=center, font=\bfseries, text=black] {Partition $\mathcal{F}_s^c$ into \\ $R=3$  Inner Folds};

    \draw[->, dashed] (Step1.south) -- (OF1.north);
    \draw[->, dashed] (Step2.south) -- (IF1.north);


    \node[processbox, below=0.5cm of OF5] (Step4) {
        \textbf{Stage 3 Step 1}\\
        Compute $\widehat{Y}_i^*$ for $i \in \mathcal{F}_s$\\
        Input 1: $\widehat{\boldsymbol{\Gamma}}_{1s}$ (from Stage 1)\\
        Input 2: $\widehat{\boldsymbol{\Gamma}}_{2s}$ (from Stage 2)
    };

    \node[processbox] (Step3) at (Step2 |- Step4) {
        \textbf{Stage 2 Step 2}\\
        Regress $\{\widehat{Den}_i, \widehat{Num}_i\}$ on $\mathbf{X}_i$
        for $i\in\mathcal{F}_s^c$
         $\Rightarrow\widehat{\boldsymbol{\Gamma}}_{2s}$.
    };

    \draw[->, myblue, thick] (IF3.south) -- (Step3.north);
    \draw[->, myorange, ultra thick] (OF5.south) -- (Step4.north);

    \draw[->, black, thick] (Step3.west) -- (Step4.east)
        node[midway, above] {$\widehat{\bGamma}_{2s}$};

    \draw[->, black, thick] (Step1.west) -- ++(-0.8,0) |- (Step4.west)
        node[pos=0.5, left] {$\widehat{\bGamma}_{1s}$};

    
    \coordinate (LeftEdge) at ($(Step1.west) + (-1.4cm, 0)$);

    \node[processbox, below=0.6cm of Step4, anchor=north, xshift=5.25cm, draw=black, very thick, text width=14cm] (Step5) {
        \textbf{Stage 3 Step 2}\\
        Regress pooled $\{\widehat{Y}_i^*\}_{i=1}^n$ on $\{\mathbf{X}_i\}_{i=1}^n$
        via penalized least-squares sieves.
    };
    \draw[->, ultra thick] (Step4.south) -- (Step4.south |- Step5.north);

    \begin{scope}[on background layer]
        \node[draw=blue!10, fill=blue!5, rounded corners=10pt, fit=(Step1) (Step2) (Step5) (OF1) (LeftEdge), inner sep=0.3cm] (Container) {};
    \end{scope}

\end{tikzpicture}
\caption{Schematic of the doubly cross-fit pseudo-outcome regression procedure via sequential orthogonal learning, assuming $S=5,R=3$. Here, pseudo-outcomes for $\tau_{d_0d_1}^D$ and $\tau_{d_0d_1}^N$ are denoted as $Den(\bGamma_1) := \varphi_{d_0d_1}^D$ and $Num(\bGamma_1) := \varphi_{d_0d_1}^N$; $\mathcal{F}_s$ and $\mathcal{F}_{sr}$ denote outer and inner testing samples, respectively, with their complements, $\mathcal{F}_s^c$ and $\mathcal{F}_{sr}^c$, serving as training samples, where $\mathcal{F}_{sr}\cup \mathcal{F}_{sr}^c=\mathcal{F}_s^c$.}
\label{fig:do_learner_schematic}
\end{figure}

\subsection{Some technical notation}
We define a few technical notation for the remainder of the article. Let $\lVert \bullet \rVert_{\mathrm{op}}$ denote the matrix operator norm and $\lVert f \rVert_{\mP,q} = \left( \int |f|^q d\mP \right)^{1/q}$ the $\mathcal{L}^q(\mP)$ norm. In particular, $\lVert f \rVert_{\mP,\infty}=\sup_{\bx} |f(\bx)|$ is the uniform norm. Let $\omega_k(\bullet)$ be the $k$th eigenvalue of a generic matrix $\bullet\in\mathbb{R}^{K\times K}$ such that $\omega_1\leq\ldots\leq\omega_K$. We write $a_n\lesssim b_n$ if $a_n\leq cb_n$ for some constant $c$ independent of $n$, and $a_n\asymp b_n$ if $a_n\lesssim b_n$ and $b_n\lesssim a_n$. We use $X =_d Y$ to denote that two random variables $X$ and $Y$ have the same distribution, and write $X =_d Y + o_{\mP}(a_n)$ if $(X - Y)/a_n$ converges to zero in probability. We also use $a_n \lesssim_{\mP} b_n$ to denote that the stochastic sequence $a_n$ converges no faster than $b_n$, i.e., $a_n = O_{\mP}(b_n)$. The notations $a_n \lesssim_{\mP} b_n$ and $a_n = O_{\mP}(b_n)$ are used interchangeably for convenience. 

\subsection{Estimating CPCE using penalized linear sieves}\label{SS:CPCE-inference}
To construct the pseudo-outcome for $\tau_{d_0d_1}(\bX)$, we first derive the efficient influence function (EIF) for the smoothed parameter $\widetilde{\tau}_{d_0d_1}=E\{\tau_{d_0d_1}(\bX)\}=E\{{\tau^N_{d_0d_1}(\bX)}/{\tau^D_{d_0d_1}(\bX)}\}$, given by Theorem \ref{thm:EIF for smoothed cpce}. To facilitate the exposition of the EIF, we define the following function of the observed data vector $\mathcal{O}$, for $z\in\{0,1\}$: $\psi_{F(Y,D,\bC),z}=\pi(\bC)^{-z}\{1-\pi(\bC)\}^{-(1-z)}\mathcal{I}(Z=z)\Big[F(Y,D,\bC)-E\{F(Y,D,\bC)|Z=z,\bC\}\Big]+E\{F(Y,D,\bC)|Z=z,\bC\}$,
where $\mathcal{I}(\bullet)$ is the indicator function. For example, $\psi_{D,z}$ and $\psi_{YD,z}$ can be seen as the uncentered EIFs for estimating $E\{D(z)\}$ and $E\{Y(z)D(z)\}$, respectively \citep{JiangJRSSB2022,tong2025semiparametric}.

\begin{theorem}\label{thm:EIF for smoothed cpce}
Under Assumptions \ref{assump:treatment-ignorability}-\ref{assump:cor}, 
the uncentered EIFs for the smoothed CPCE denominator, $\widetilde{\tau}_{d_0d_1}^D:=E\{\tau_{d_0d_1}^D(\bX)\}$, and numerator, $\widetilde{\tau}_{d_0d_1}^N:=E\{\tau_{d_0d_1}^N(\bX)\}$, are respectively given by $\varphi^D_{d_0d_1}(D,Z,\bC)=e_{d_0d_1}(\bC)+\phi^e_{d_0d_1}(D,Z,\bC)$ and $\varphi^N_{d_0d_1}(\mO)=\varphi^D_{d_0d_1}(D,Z,\bC)\{m_{1d_1}(\bC)-m_{0d_0}(\bC)\}+e_{d_0d_1}(\bC)\phi^m_{d_0d_1}(\mO)$,
where $\phi^m_{d_0d_1}(\mO)=\widetilde{\phi}^m_{d_1}(\mO)-\widetilde{\phi}^m_{d_0}(\mO)$, $\widetilde{\phi}^m_{d_z}(\mO)=\Pr(D=d_z|Z=z,\bC)^{-1}\{\psi_{YD^{d_z}(1-D)^{1-d_z},z}-m_{zd_z}(\bC)\psi_{D^{d_z}(1-D)^{1-d_z},z}\}$, and $\phi_{d_0d_1}^e(D,Z,\bC)$ is given by, provided that $\Pr(\theta(\bC)-1\neq0)=1$, 
{
\begin{align*}
\frac{(-1)^{d_0+d_1}}{2\sqrt{\delta(\bC)}}\left\{\sum_{z=0}^1(\psi_{D,z}-p_z(\bC))\left\{(2d_{1-z}-1)\sqrt{\delta(\bC)}-\left[1+(\theta(\bC)-1)p_z(\bC)-(\theta(\bC)+1)p_{1-z}(\bC)\right]\right\}\right\}.
\end{align*}
}
Then the EIF for the smoothed CPCE $\widetilde{\tau}_{d_0d_1}$ is given by
\begin{align*}
    \varphi_{d_0d_1}(\mO) &= -\widetilde{\tau}_{d_0d_1}+\tau_{d_0d_1}(\bX)+\frac{\varphi^N_{d_0d_1}(\mO)-\tau_{d_0d_1}(\bX)\varphi^D_{d_0d_1}(D,Z,\bC)}{\tau_{d_0d_1}^D(\bX)}.
\end{align*}

\end{theorem}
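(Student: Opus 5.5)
We derive the EIFs by the standard pathwise-derivative (Gateaux) calculus on a regular parametric submodel $\mP_t$ with score $S(\mO)=S(\bC)+S(Z\mid\bC)+S(D\mid Z,\bC)+S(Y\mid D,Z,\bC)$. Because the observed-data model is nonparametric (Assumption \ref{assump:cor} fixes $\theta$ as a known function, so $e_{d_0d_1}$ is simply a known smooth map of $(p_0,p_1)$ and imposes no restriction), the efficient influence function is the unique influence function. For each functional $\Psi$ it therefore suffices to exhibit $\varphi$ with $\partial_t\Psi(\mP_t)|_{t=0}=E\{(\varphi-\Psi)S\}$. We proceed in three steps: the denominator $\widetilde\tau^D_{d_0d_1}=E\{e_{d_0d_1}(\bC)\}$, then the numerator $\widetilde\tau^N_{d_0d_1}=E\{e_{d_0d_1}(\bC)(m_{1d_1}-m_{0d_0})(\bC)\}$, and finally the ratio functional.

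\textbf{Step 1 (denominator).} We write $e_{d_0d_1}(\bC)=g(p_0(\bC),p_1(\bC),\theta(\bC))$ with $g$ the closed form displayed before the theorem. Differentiating gives $\partial_t E\{e\}=E\{e(\bC)S(\bC)\}+E\{\sum_z \partial_{p_z}g\cdot \partial_t p_{z,t}(\bC)\}$. The well-known fact that $\psi_{D,z}-p_z(\bC)$ is the (centered) influence function of $E\{p_z(\bC)h(\bC)\}$ for a fixed $h$ then yields the correction term $\sum_z \partial_{p_z}g\,\{\psi_{D,z}-p_z(\bC)\}$. What remains is algebra: compute $\partial_{p_z}g=\frac{(-1)^{d_0+d_1}}{2(\theta-1)}\{(\theta-1)(2d_{1-z}-1)-\partial_{p_z}\sqrt\delta\}$, noting that the coefficient of $p_z$ in $\kappa_{d_0d_1}$ is $2d_{1-z}-1$. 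Next, $\partial_{p_z}\delta=2(\theta-1)[1+(\theta-1)(p_0+p_1)]-4\theta(\theta-1)p_{1-z}$. Dividing by $2\sqrt\delta$, the factor $(\theta-1)$ cancels, leaving $\frac{(-1)^{d_0+d_1}}{2\sqrt\delta}\{(2d_{1-z}-1)\sqrt\delta-[1+(\theta-1)p_z-(\theta+1)p_{1-z}]\}$, which matches $\phi^e_{d_0d_1}$. I expect this simplification (checking the bracket and the sign conventions, and the requirement $\theta\neq1$ a.s.) to be the most error-prone step, though it is mechanical.

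\textbf{Step 2 (numerator).} The product rule gives two contributions. First, perturbing $e$ with $m$ held fixed reproduces Step 1 multiplied by $(m_{1d_1}-m_{0d_0})(\bC)$. Second, perturbing $m_{zd}$ with $e$ held fixed: since $m_{zd}=E\{YD_d\mid Z=z,\bC\}/E\{D_d\mid Z=z,\bC\}$ with $D_d:=D^{d}(1-D)^{1-d}$, the quotient rule together with the Step 1 fact applied to $YD_d$ and $D_d$ gives the influence function $\Pr(D=d\mid Z=z,\bC)^{-1}\{\psi_{YD_d,z}-m_{zd}\psi_{D_d,z}\}$ weighted by $e(\bC)$, i.e. $\widetilde\phi^m_{d_z}$. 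This produces $\varphi^N_{d_0d_1}$. One point to check is that the centering in $\widetilde\phi^m$ is correct: $E\{\psi_{YD_d,z}-m_{zd}\psi_{D_d,z}\mid\bC\}=0$.

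\textbf{Step 3 (ratio).} Write $\widetilde\tau=E\{\tau^N(\bX)/\tau^D(\bX)\}$. Then $\partial_t\widetilde\tau=E\{\tau(\bX)S(\bX)\}+E\{[\partial_t\tau^N_t(\bX)-\tau(\bX)\partial_t\tau^D_t(\bX)]/\tau^D(\bX)\}$. The key device is that for any bounded weight $w(\bX)$, the functional $E\{w(\bX)\tau^N_t(\bX)\}=E_t\{w(\bX)e(\bC)(m_{1}-m_0)(\bC)\}$ up to the marginal-of-$\bX$ score. Step 2 run with the extra weight $w(\bX)$ shows its pathwise derivative at fixed marginal equals $E[w(\bX)\{\varphi^N-\tau^N(\bX)\}S]$, where the $\bC\mid\bX$ score part is absorbed because $E\{\varphi^N\mid\bX\}=\tau^N(\bX)$; the analogous statement holds for $\tau^D$. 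Taking $w=1/\tau^D$ and $w=\tau/\tau^D$, which is well defined on $\mathcal X$, and using $\tau^N-\tau\tau^D=0$, we obtain $\varphi_{d_0d_1}=\tau(\bX)-\widetilde\tau+\{\varphi^N-\tau(\bX)\varphi^D\}/\tau^D(\bX)$.
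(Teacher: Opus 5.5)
Your proposal is correct and follows essentially the same route as the paper's supplementary derivation. Both differentiate $E\{\tau^N_{d_0d_1}(\bX)/\tau^D_{d_0d_1}(\bX)\}$ along submodels of a nonparametric model, represent the perturbations of $e_{d_0d_1}$ and $m_{zd_z}$ through their influence functions, and absorb the $\bC\mid\bX$ score terms using $E\{\varphi^N_{d_0d_1}\mid\bX\}=\tau^N_{d_0d_1}(\bX)$ and $E\{\varphi^D_{d_0d_1}\mid\bX\}=\tau^D_{d_0d_1}(\bX)$. The differences are only in packaging: you derive $\phi^e_{d_0d_1}$ directly from the closed-form principal score (your algebra checks out), where the paper cites the corresponding lemma of \citet{tong2025semiparametric}, and you handle the ratio through weighted numerator and denominator functionals rather than expanding every term explicitly.
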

To accommodate monotonicity and counterfactual intermediate independence in Theorem \ref{thm:EIF for smoothed cpce} as two special limiting cases, only the expression for $\phi_{d_0d_1}^e$ needs modification, while all other expressions remain the same. For the former, one assumes $p_1(\bC)>p_0(\bC)$ and obtains $\lim_{\theta\to\infty}\phi_{d_0d_1}^e=(2d_0-d_1)\{\psi_{D,0}-p_0(\bC)\}+(2d_1-d_0-1)\{\psi_{D,1}-p_1(\bC)\}$. 
For the latter, one obtains 
$\lim_{\theta\to1}\phi_{d_0d_1}^e=\sum_{z=0,1}(-1)^{d_z+1}\{\psi_{D,z}-p_z(\bC)\}p_{1-z}(\bC)^{d_{1-z}}\{1-p_{1-z}(\bC)\}^{1-d_{1-z}}$.
More generally, the EIF in Theorem \ref{thm:EIF for smoothed cpce} motivates the form of the pseudo-outcome at the final stage: 
\begin{align}
    Y^\ast(\mO;d_0d_1,\bGamma):=\tau_{d_0d_1}(\bX)+\frac{\varphi^N_{d_0d_1}(\mO)-\tau_{d_0d_1}(\bX)\varphi^D_{d_0d_1}(D,Z,\bC)}{\tau_{d_0d_1}^D(\bX)}.\label{eq:EIF-psedo}
\end{align}
By the law of total expectation, we can verify that $E\{Y^\ast(\mO;d_0d_1,\bGamma)|\bX\}=\tau_{d_0d_1}(\bX)$,
which motivates the final-stage regression of the pseudo-outcome $Y^\ast(\mO; \bGamma)$ (omitting $d_0d_1$ whenever there is no ambiguity) on the subset of covariates, $\bX$. However, the pseudo-outcome $Y^\ast(\mO;\bGamma)$ is unobserved, as it depends on the nuisance functions $\bGamma$ that must be estimated from the observed data. 

To mitigate the ``own-observation'' bias (arising from using the same data for prediction and nuisance estimation) and nonlinearity bias (arising from using the same data for base and intermediate nuisance estimation), we generalize the double cross-fitting scheme proposed by \citet{newey2018cross}. 
We randomly partition the full sample into $S$ disjoint outer folds of approximately equal size. Let $\mathcal{F}_s \subseteq [n]$ denote the set of indices in the $s$-th fold, where $[n] = \{1, \ldots, n\}$. 
In Stage one, the base nuisance estimates $\widehat{\bGamma}_{1s}$ based on the training sample $\mathcal{F}^c_{s}=[n]\setminus \mathcal{F}_{s}$ are obtained using a flexible, data-adaptive algorithm (e.g., Super Learner \citep{van2007super}). In Stage two, we estimate the intermediate nuisance functions $\widehat{\bGamma}_{2s}$ by further partitioning the training sample $\mathcal{F}^c_s$ into $R$ disjoint inner folds, denoted as $\mathcal{F}_{sr} \subseteq \mathcal{F}^c_s$ for $r \in [R]$. We define the intermediate pseudo-outcomes as $Den(\bGamma_1):=\varphi_{d_0d_1}^D$ and $Num(\bGamma_1):=\varphi_{d_0d_1}^N$. The estimate $\widehat{\bGamma}_{2s}$ is obtained by regressing the cross-fitted pseudo-outcomes $\{Den_i(\widehat{\bGamma}_{1sr}), Num_i(\widehat{\bGamma}_{1sr})\}_{i\in\mathcal{F}_s^c}$ on $\{\bX_i\}_{i\in\mathcal{F}_s^c}$ using nonparametric regression methods, where $\widehat{\bGamma}_{1sr}$ is estimated using $\mathcal{F}_{sr}^c=\mathcal{F}_{s}\backslash\mathcal{F}_{sr}$ and distinct from $\widehat{\bGamma}_{1s}$. 
In Stage three, the estimated pseudo-outcome $\{\widehat{Y}_i^\ast=Y_i^\ast(\widehat{\bGamma}_s)\}_{i=1}^n$ is regressed on $\{\bX_i\}_{i=1}^n$ via a least-squares series estimator with generalized ridge penalization (explained below). 
That is, DCDR learner employs sequential orthogonal learning: an inner layer for $\bGamma_2$ and an outer layer for CPCE estimands. For clarity, the double cross-fitting scheme described above is outlined in Algorithm \ref{alg:double_cross_fitting}. In practice, we recommend setting $S=5$ and $R=3$ when the sample size is moderate; this configuration is utilized throughout the simulations and data applications in this manuscript.

\begin{algorithm}[t!]
\caption{Doubly cross-fit doubly robust learning}
\label{alg:double_cross_fitting}
\begin{algorithmic}[1]
\Require Observed data $\{\mathcal{O}_i\}_{i=1}^n$, $S$ outer folds, and $R$ inner folds 
\Ensure Estimated pseudo-outcome $\widehat{Y}_i^*$ for all $i\in[n]$

\State Randomly partition full sample into $S$ disjoint outer folds $\mathcal{F}_s,s\in[S]$.

\For{$s = 1$ to $S$}
    \State \textbf{Stage 1:} Define training sample $\mathcal{F}_s^c = [n] \setminus \mathcal{F}_s$. Estimate $\widehat{\boldsymbol{\Gamma}}_{1s}$ using $\mathcal{F}_s^c$.
    
    \State \textbf{Stage 2 Step 1:} Randomly partition $\mathcal{F}_s^c$ into $R$ disjoint inner folds $\mathcal{F}_{sr},r\in[R]$.
    \For{$r = 1$ to $R$}
        \State Estimate $\widehat{\boldsymbol{\Gamma}}_{1sr}$ using $\mathcal{F}_{sr}^c = \mathcal{F}_s^c \setminus \mathcal{F}_{sr}$.
        \State Compute $\widehat{Den}=\varphi_{d_0d_1}^D(\widehat{\boldsymbol{\Gamma}}_{1sr}), \widehat{Num}=\varphi_{d_0d_1}^N(\widehat{\boldsymbol{\Gamma}}_{1sr})$ using $\widehat{\boldsymbol{\Gamma}}_{1sr}$ for all $i \in \mathcal{F}_{sr}$.
    \EndFor
    
    \State \textbf{Stage 2 Step 2:} Regress $\{\widehat{Den}_i,\widehat{Num}_i\}_{i \in \mathcal{F}_s^c}$ on $\mathbf{X}_i$ to obtain $\widehat{\boldsymbol{\Gamma}}_{2s}$.
    
    \State \textbf{Stage 3 Step 1:} Compute $\widehat{Y}^* = Y^*(\widehat{\boldsymbol{\Gamma}}_{1s}, \widehat{\boldsymbol{\Gamma}}_{2s})$ for all $i \in \mathcal{F}_s$.
\EndFor

\State \textbf{Stage 3 Step 2:} Regress $\{\widehat{Y}_i^*\}_{i=1}^n$ on $\{\mathbf{X}_i\}_{i=1}^n$ via penalized least-squares
series.

\end{algorithmic}
\end{algorithm}





We write $\tau_{d_0d_1}(\bx)$ with a lower-case argument $\bx \in \mathcal{X}$ to denote the CPCE evaluated at a fixed point, thereby distinguishing it from the random variable $\tau_{d_0d_1}(\bX)$. The least-squares series approach is a computationally efficient sieve estimator in which the conditional mean function $\tau_{d_0d_1}(\bx)$ is approximated by a linear span of the set of basis functions, $\bb(\bx)^\top\bbeta$ ($\bb(\bx)=(b_1(\bx),\ldots,b_K(\bx))^\top$ and $\bbeta=(\beta_1,\ldots,\beta_K)^\top$; also known as a \emph{linear sieve}), whose dimension $K=K_n$ grows slowly with the sample size \citep{chen2007large,newey1997convergence}. Common choices for basis functions include polynomials, splines, and wavelets. The quality of the resulting approximation is governed by the quantity $\xi_K := \sup_{\bx} \lVert \bb(\bx) \rVert_2$, where $\lVert \bullet \rVert_p$ is the $\ell^p$ norm. Of note, provided that the basis functions are not overly collinear (see Assumption~\ref{asp:regularityL2}(a)), the best attainable rate for $\xi_K$ is $\sqrt{K}$, i.e., $\sqrt{K}\lesssim\xi_K$, since $\xi_K^2 = \sup_{\bx} \lVert \bb(\bx) \rVert_2^2 \geq E\{\lVert \bb(\bX) \rVert_2^2\}  \asymp K$. For instance, with spline series, wavelet series, and Fourier series, the best rate $\xi_K\lesssim \sqrt{K}$ is attainable \citep{chen2007large}. With the above preliminaries, we then propose to estimate $\bbeta$ as the solution to the least squares loss under a generalized ridge penalty \citep{eilers1996flexible}:
\begin{align}
    \widehat{\bbeta}=&\arg\min_{\bbeta} \frac{1}{2}\left\{\mP_n\{(\widehat{Y}^\ast-\bb(\bX)^\top\bbeta)^2\}+\lambda\bbeta^\top \bP\bbeta\right\}=(\widehat{\bH}+\lambda\bP)^{-1}\widehat{\bh},\label{eq:criteria-PLSS}
\end{align}
where $\mP_n\{V\}=n^{-1}\sum_{i=1}^nV_i$ denotes the sample mean, $\widehat{\bH}=\mP_{n}\{\bb(\bX)\bb(\bX)^\top\}$, $\widehat{\bh}=\mP_n\{\bb(\bX)\widehat{Y}^\ast\}$, $\lambda$ is the ridge shrinkage parameter with $\lambda=\lambda_n=o(1)$, and $\bP\in\mathbb{R}^{K\times K}$ is the user-specified positive-definite quadratic penalty matrix. Table \ref{table:penalty-matrix} gives several possible specifications of the penalty matrix, including the standard ridge penalty with $\bP = \boldsymbol{I}_K$ ($\boldsymbol{I}_K$ denotes the $K \times K$ identity matrix), the weighted ridge penalty with $\bP = \mathrm{diag}(w_1, \ldots, w_K)$, and the smoothing penalty (thin plate) with $\bbeta^\top\bP \bbeta = \int \lVert \nabla^2 (\bbeta^\top\bb(\bx)) \rVert_\mathrm{F}^2 \, d\bx$ ($\lVert \bullet \rVert_\mathrm{F}$ denotes the Frobenius norm, and $\nabla^2 \bullet$ denotes the Hessian operator). 

\begin{table}[htbp!]
\centering
\caption{Several example choices for the quadratic penalty matrix, $\bP$, in Equation \eqref{eq:criteria-PLSS}.}
\label{table:penalty-matrix}
\begin{tabularx}{\textwidth}{p{3cm} p{5cm} p{6cm}}
\toprule
{Penalty Type} & {Penalty Matrix} & {Interpretation} \\
\midrule
Standard Ridge & $\bP = \mathbf{I}_K$ & Shrink all coefficients equally. \\
Weighted Ridge & $\bP = \mathrm{diag}(w_1, \dots, w_K)$ & Shrink coefficients unequally. \\
Smoothing Penalty & $\scriptsize \bbeta^\top \bP \bbeta = \int \lVert \nabla^2 (\bbeta^\top \bb(\bx)) \rVert_\mathrm{F}^2 \, d\bx$ & Shrink towards a smoother curve. \\
\bottomrule
\end{tabularx}
\end{table}

Finally, the proposed estimator for the CPCE is given by $\widehat{\tau}_{d_0d_1}(\bx)=\bb(\bx)^\top\widehat{\bbeta}$. 
Hereafter, we refer to the estimator $\widehat{\tau}_{d_0d_1}(\bx)$ as the \emph{penalized least-squares series DCDR learner (PLSS-DCDR learner)}. 
We provide a remark below on tuning the number of basis functions and the shrinkage parameter to guide practical implementation.
\begin{remark}\label{remark:CV-pick-K,lambda}
As demonstrated later, valid inference (based on either pointwise or uniform confidence bands) requires certain undersmoothing conditions, under which the bias must be of a smaller order relative to the estimated standard error \citep{belloni2015some}. To implement undersmoothing in practice, we follow \cite{eilers1996flexible,wood2017generalized}: (i) specify a moderate number of basis functions $K$; and (ii) empirically select $\lambda$ that minimizes the Generalized Cross Validation (GCV) score, with undersmoothing controlled by $\gamma<1$.
\end{remark}
The intuition behind the PLSS-DCDR learner can be understood as follows. Define the weighted $\mathcal{L}^2$ projection of $\tau_{d_0d_1}(\bx)$ on the linear sieve $\bb(\bx)^\top\bbeta$ as
\begin{align}\label{eq:projection-target}
    \bbeta_0=&\arg \min_{\bbeta} E\{r(\bX)^2\},
\end{align}
where $r(\bX)=\tau_{d_0d_1}(\bX)-\bb(\bX)^\top\bbeta$ is the approximation error. By similar arguments to those used in Theorem 1 in \cite{kennedy2023semiparametric}, it can be shown that $\widehat{\bbeta}$ (without regularization) is the semiparametrically efficient one-step estimator for $\bbeta_0$. Consequently, the total error splits into the estimation error $\widehat{\bbeta}-\bbeta_0$ (including the regularization bias if $\lambda\neq0$) and the approximation error $r(\bX)$.

To justify the optimality of the proposed DCDR learner, conditional on the training sample, we have the following decomposition of the expected conditional bias, defined as $\text{Bias}=E(\widehat{Y}^\ast-Y^\ast|\bX,\mathcal{F}_s^c)$:
\begin{align*}
\text{Bias}
=&\frac{E\{\widehat{\varphi}^N_{d_0d_1}-\varphi^N_{d_0d_1}|\bX,\mathcal{F}_s^c\}-\widehat{\tau}_{d_0d_1}E\{\widehat{\varphi}^D_{d_0d_1}-\varphi^D_{d_0d_1}|\bX,\mathcal{F}_s^c\}}{\widehat{\tau}_{d_0d_1}^D}+\frac{(\widehat{\tau}_{d_0d_1}-\tau_{d_0d_1})(\widehat{\tau}_{d_0d_1}^D-\tau_{d_0d_1}^D)}{\widehat{\tau}_{d_0d_1}^D},
\end{align*}
where $\widehat{\varphi}^N_{d_0d_1}:=\varphi^N_{d_0d_1}(\mO;\widehat{\bGamma}_s)$, $\widehat{\varphi}^D_{d_0d_1}:=\varphi^D_{d_0d_1}(D,Z,\bC;\widehat{\bGamma}_s)$, and the first and second terms represent the bias components arising from estimating the base and intermediate nuisance functions, respectively. The form of the above decomposition has two important implications. First, the second term demonstrates that the proposed DCDR learner satisfies the mixed-bias property \citep{rotnitzky2021characterization} with respect to the intermediate nuisance functions $\bGamma_2$. 
Consequently, the proposed DCDR learner is robust to intermediate nuisance estimation error, because the error propagation of the intermediate nuisance function $\widehat{\bGamma}_2$ is quadratic; in contrast, the counterpart in the direct ratio estimator of \cite{takatsu2025doubly} is linear and does not satisfy the mixed-bias property. Second, following \cite{tong2025semiparametric}, the first term demonstrates that the proposed DCDR learner also satisfies the mixed-bias property \citep{rotnitzky2021characterization} with respect to the base nuisance functions $\bGamma_1$, as $\varphi^N_{d_0d_1}$ and $\varphi^D_{d_0d_1}$ are uncentered EIFs. Therefore, the proposed DCDR learner is also robust to base nuisance estimation errors, depending only on second- and higher-order base nuisance estimation errors. Finally, a rigorous justification for the above nonlinear error propagations is encapsulated in the term $m_{2n}$ of Theorem \ref{thm:least-squares-series-density-ratio}.

\section{Asymptotic theory and statistical inference}\label{sec:large-sample-theory-inference-bands}
We examine two statistical properties of the proposed PLSS-DCDR learner: $\mathcal{L}^2$ convergence and uniform ($\mathcal{L}^\infty$) convergence. In brief, we show that, under suitable regularity conditions, our estimator is oracle efficient, achieving the same rate of convergence as if the true nuisance functions were known. For inferential purposes, we further develop pointwise normality and a strong mean-zero Gaussian process approximation, offering a theoretical foundation for constructing both pointwise and uniform confidence bands.


\subsection{Regularity conditions}\label{ss:regularity-conditions}
To establish the $\mathcal{L}^2$ and uniform limit theory, we first discuss several key regularity conditions that are typical for least-squares sieve methods \citep{belloni2015some} and debiased machine learning \citep{dml}. The first set of conditions facilitates the derivation of the $\mathcal{L}^2$ rate of convergence.

\begin{assumption}{(Regularity conditions for $\mathcal{L}^2$ convergence)}\label{asp:regularityL2} 
{(a) The eigenvalues of $\bH=E\{\bb(\bX)\bb(\bX)^\top\}$ are uniformly bounded above and away from zero.}\\
{(b) For all $n$ and $K$, there exist finite constants $c_{K}$ and $l_{K }$ such that $\lVert \tau_{d_0d_1}(\bX)-{\bb }(\bX)^\top\bbeta_0\rVert_{\mP,2}\leq c_{K}$ and $\lVert \tau_{d_0d_1}(\bX)-{\bb }(\bX)^\top\bbeta_0\rVert_{\mP,\infty}\leq l_{K }c_{K }$.}\\
{(c) The complexity of the basis functions satisfies $\xi_{K}^2 \log K / n = o(1)$, where $K = K_n$.}\\
{(d) The diameter of the covariate support $\calX$, defined as $\sup_{\bx_1,\bx_2}\|\bx_1-\bx_2\|_2$, is uniformly bounded above, and the true CPCE is square-integrable. For some $\epsilon_1\in(0,1/2]$, $\epsilon_2>0$, $z=0,1$, all $s\in[S]$, and all $\bX$ and all $\bC$, $\epsilon_1\leq\{\pi(\bC),\widehat{\pi}(\bC),p_z(\bC),\widehat{p}_z(\bC)\}\leq1-\epsilon_1$, $\epsilon_1\leq\{\tau^D_{d_0d_1}(\bX),\widehat{\tau}^D_{d_0d_1}(\bX)\}\leq\epsilon_2$, $\tau^N_{d_0d_1}(\bX)\leq \epsilon_2$, $E\{(Y-m_{zd_z}(\bC))^2|Z=z,D=d_z,\bC\}\leq\epsilon_2$, $E\{\Omega^2|\bX\}\leq \epsilon_2$, and $\sup_k\sup _{\bx}b_k(\bx)\leq \epsilon_2$, where $\Omega:=Y^\ast - \tau_{d_0d_1}(\bX)$ is the regression error. In addition, $\inf_{\bC\in\mathcal{C}} \theta(\bC)>0$ and $\inf_{\bC\in\mathcal{C}^\ast}|\theta(\bC)-1|>0$, where $\mathcal{C}^\ast:=\{\bC\in\mathcal{C}:\theta(\bC)\neq 1\}$.}
\end{assumption}


Assumption \ref{asp:regularityL2}(a) bounds the condition number of the design matrix, ensuring that the covariates do not exhibit excessive collinearity. To simplify the proof, and without loss of generality, $\bH$ can be orthonormalized to the identity matrix \citep{newey1997convergence}. Assumption \ref{asp:regularityL2}(b) specifies bounds on the approximation error between the projected pseudo-target function $\bb(\bX)^\top\bbeta_0$, as defined in \eqref{eq:projection-target}, and the true CPCE $\tau_{d_0d_1}(\bX)$. Here, the sequence $c_K$ bounds the $\mathcal{L}^2(\mP)$ norm, and $l_K$ denotes the modulus of continuity for the worst-case uniform normal ($\mathcal L^\infty(\mP)$) with respect to the $\mathcal{L}^2(\mP)$ norm. For commonly used basis functions, $l_{K}$ increases slowly with $K$ and can often be further bounded in terms of $\xi_{K}$ and other properties of the basis functions \citep{chen2007large}. Assumption \ref{asp:regularityL2}(c) ensures that the growth rate of the basis function complexity is sufficiently slow to maintain an appropriate bias–variance tradeoff. Assumption \ref{asp:regularityL2}(d) requires: (i) boundedness of certain base and intermediate nuisance functions; (ii) finite conditional variance of the outcome given $(D, Z, \bC)$; (iii) finite conditional second moments of the regression errors given $\bX$; (iv) uniform boundedness of the components of the basis functions; and (v) the odds ratio sensitivity parameter is regular and bounded (special cases of monotonicity and counterfactual intermediate independence are addressed separately in subsequent results). Of note, it is possible to drop the boundedness assumption (iv) by replacing the $\mathcal L^2$ metric of the nuisance estimation error with $\mathcal L^{2p}$ and $\mathcal L^{2q}$ norms, where $1/p + 1/q = 1$, via the Hölder's inequality. However, since the $\mathcal L^2$ metric is more commonly used, we retain the assumption in its current form.

To derive the uniform rate of convergence, we impose an additional condition. This assumption requires that the conditional $v$-th moment ($v>2$) of the regression error $\Omega$ is uniformly bounded, and that both the growth rate and smoothness of the basis functions are further restricted.

\begin{assumption}{(Regularity condition for $\mathcal{L}^\infty$ convergence)}\label{assump:uniform-further-boundedness} 
For some $\nu>2$,  $\sup_{\bx}E\{|\Omega|^\nu|\bX=\bx\}=O(1)$, $\xi_K^{2\nu/(\nu-2)}\log K/n=O(1)$, $\log \xi_K^L=O(\log K)$, and $\log \xi_K =O(\log K)$, where $\widetilde{\bb}(\bx) := \bb(\bx) / \lVert \bb(\bx) \rVert_2$ denotes the vector of normalized basis functions, and the Lipschitz constant for function $\widetilde{\bb}(\bx)$ is defined as $    \xi_K^L:= \sup_{\bx\neq\bx'}{\lVert \widetilde{\bb}(\bx)-\widetilde{\bb}(\bx')\rVert_2}/{\lVert \bx-\bx'\rVert_2}$.
\end{assumption}

\subsection{Oracle efficiency}\label{subsec:oracle-efficiency}
We derive the $\mathcal L^2$ (Theorem \ref{thm:least-squares-series-density-ratio}) and uniform convergence rates (Theorem \ref{thm:uniform-rate}) of the proposed PLSS-DCDR learner, and identify the conditions under which these rates are of the same order as if the first-stage and second-stage nuisance functions were known; that is, it achieves oracle efficiency. 

\begin{theorem}\label{thm:least-squares-series-density-ratio}
Under Assumptions \ref{assump:treatment-ignorability}-\ref{asp:regularityL2}, the error of the PLSS-DCDR learner in the $\mathcal{L}^2(\mathbb{P})$ norm is bounded by
\begin{align*}
&\lVert \widehat{\tau}_{d_0d_1}-\tau_{d_0d_1}\rVert_{\mP,2}\lessp (1+2\lambda\omega_{\min})^{-1}\left(\sqrt{\frac{K}{n}}+ m_{0n}+m_{1n}+m_{2n}+\lambda\omega_{\max}\right)+c_K,
\end{align*}
where $\omega_{\min}=\omega_1(\bP)$, $\omega_{\max}=\omega_K(\bP)$,
$\eta(\bullet)=\lVert \widehat{\bullet}-\bullet\rVert_{\mP,2}$ denotes the nuisance estimation error in $\mathcal L^2(\mP)$ norm, $m_{0n}=\min\{l_Kc_K\sqrt{{K}/{n}}, \allowbreak {\xi_Kc_K}/{\sqrt{n}}\}$, $m_{1n}=\xi_K/\sqrt{n}[\eta(\tau_{d_0d_1}^N)+     \eta(\tau_{d_0d_1}^D)+\eta(\pi)+\sum_{z=0,1}\{\eta(p_z)+\eta(m_{zd_z})\}]$, $m_{2n}=m_{2n}^{(1)}+m_{2n}^{(2)}$, $m_{2n}^{(1)}=\sqrt{K} [\sum_{z=0,1}\eta(m_{zd_z})\allowbreak\{\eta(\pi)+\sum_{j=0,1}\eta(p_j)\}+\eta(p_0)\eta(p_1)+\eta(\pi)\sum_{j=0,1}\eta(p_j)]$, and $m_{2n}^{(2)}=\sqrt{K}\{\eta(\tau_{d_0d_1}^N)\eta(\tau_{d_0d_1}^D)+\eta(\tau_{d_0d_1}^D)^2\}$.
\end{theorem}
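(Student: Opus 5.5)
We follow the standard least-squares-series argument of \citet{belloni2015some} and \citet{semenova2021debiased}, adapted to the ridge penalty and to the nested nuisance structure. With $\bH$ orthonormalized to $\boldsymbol{I}_K$ (Assumption \ref{asp:regularityL2}(a)), we have $\lVert \bb(\bX)^\top(\widehat{\bbeta}-\bbeta_0)\rVert_{\mP,2}\asymp\lVert\widehat{\bbeta}-\bbeta_0\rVert_2$. The triangle inequality then gives $\lVert\widehat{\tau}_{d_0d_1}-\tau_{d_0d_1}\rVert_{\mP,2}\le\lVert\widehat{\bbeta}-\bbeta_0\rVert_2+c_K$, so the approximation error contributes the $c_K$ term. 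To control the coefficient error, write $\widehat{\bbeta}-\bbeta_0=(\widehat{\bH}+\lambda\bP)^{-1}\{\widehat{\bh}-\widehat{\bH}\bbeta_0-\lambda\bP\bbeta_0\}$. By the matrix Rudelson/Bernstein inequality and Assumption \ref{asp:regularityL2}(c), $\lVert\widehat{\bH}-\boldsymbol{I}_K\rVert_{\mathrm{op}}\lesssim_{\mP}\sqrt{\xi_K^2\log K/n}=o_{\mP}(1)$. Hence $\omega_1(\widehat{\bH}+\lambda\bP)\ge 1/2+\lambda\omega_{\min}$ with probability approaching one, which yields the prefactor $(1+2\lambda\omega_{\min})^{-1}$ (up to constants). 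The penalty bias is bounded by $\lambda\lVert\bP\bbeta_0\rVert_2\le\lambda\omega_{\max}\lVert\bbeta_0\rVert_2$, and $\lVert\bbeta_0\rVert_2\lesssim\lVert\tau_{d_0d_1}\rVert_{\mP,2}=O(1)$ by square-integrability. This gives the term $\lambda\omega_{\max}$.

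Next, we decompose the score $\widehat{\bh}-\widehat{\bH}\bbeta_0=\mP_n\{\bb(\bX)(\widehat Y^\ast-\bb(\bX)^\top\bbeta_0)\}$ into three pieces.

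\emph{(i) Oracle term} $\mP_n\{\bb(\bX)\Omega\}$. Since $E(\Omega\mid\bX)=0$ and $E(\Omega^2\mid\bX)\le\epsilon_2$, its second moment is at most $\epsilon_2\,\mathrm{tr}(\bH)/n=O(K/n)$, which gives $\sqrt{K/n}$.

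\emph{(ii) Approximation term} $\mP_n\{\bb(\bX)r(\bX)\}$. Here $E\{\bb r\}=0$ by the definition of $\bbeta_0$ in \eqref{eq:projection-target}. Its variance is bounded in two ways: by $E\{\lVert\bb\rVert_2^2r^2\}/n\le\min\{l_K^2c_K^2K,\ \xi_K^2c_K^2\}/n$, giving $m_{0n}$.

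\emph{(iii) Nuisance term} $\mP_n\{\bb(\bX)(\widehat Y^\ast-Y^\ast)\}$. We split it foldwise. On fold $\mathcal{F}_s$, conditional on $\mathcal{F}_s^c$, the nuisances $\widehat{\bGamma}_{1s},\widehat{\bGamma}_{2s}$ are fixed, by the cross-fitting design of Algorithm \ref{alg:double_cross_fitting}. We write the term as an empirical-process part $(\mP_{n,s}-\mP)\{\bb(\widehat Y^\ast-Y^\ast)\}$ plus a bias part $\mP\{\bb(\bX)\,\text{Bias}(\bX)\}$.

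For the empirical-process part, the conditional second moment is at most $\xi_K^2E\{(\widehat Y^\ast-Y^\ast)^2\}/n$. A Lipschitz expansion of $Y^\ast$ in $\bGamma$ uses the positivity bounds of Assumption \ref{asp:regularityL2}(d): $\epsilon_1\le\pi,p_z,\tau^D$, and $\sqrt{\delta}$ is bounded away from zero, which follows from $\inf|\theta-1|>0$ and $\inf\theta>0$. It also uses bounded outcome variance. Together these give $\lVert\widehat Y^\ast-Y^\ast\rVert_{\mP,2}\lesssim\sum\eta(\cdot)$ over all nuisances, and Chebyshev then yields $m_{1n}$.

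For the bias part, we use $\lVert\mP\{\bb\,g\}\rVert_2\le\lVert g\rVert_{\mP,2}$ after orthonormalization; more crudely, $\lesssim\sqrt K\sup$-type bounds, and this is where the $\sqrt K$ factor in $m_{2n}$ comes from. The conditional bias then follows the decomposition displayed before Section \ref{sec:large-sample-theory-inference-bands}.

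The first term of that decomposition is a difference of uncentered-EIF remainders for $\varphi^N_{d_0d_1}$ and $\varphi^D_{d_0d_1}$. It is expanded as the second-order von Mises remainders of Theorem \ref{thm:EIF for smoothed cpce}, following \citet{tong2025semiparametric}:
\begin{itemize}
\item the $\psi_{\cdot,z}$ pieces give products $\eta(\pi)\eta(p_j)$ and $\eta(\pi)\eta(m_{zd_z})$;
\item the $\phi^m$ pieces, which contain $\Pr(D=d_z\mid Z=z,\bC)^{-1}$, give $\eta(p_j)\eta(m_{zd_z})$;
\item the nonlinear square-root map $e_{d_0d_1}(p_0,p_1)$, whose Taylor remainder has bounded Hessian, gives $\eta(p_0)\eta(p_1)$ and $\eta(p_j)^2$. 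These are absorbed, or we note they are paired with $\eta(\pi)$ after conditioning on $Z$. This needs care: the cross term in $p_0p_1$ arises from $\kappa$ and $\delta$.
\end{itemize}
Together these give $m_{2n}^{(1)}$.

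The second term is $(\widehat\tau-\tau)(\widehat\tau^D-\tau^D)/\widehat\tau^D$, where $\widehat\tau=\widehat\tau^N/\widehat\tau^D$. Writing
\[
\widehat\tau-\tau=\frac{\widehat\tau^N-\tau^N}{\widehat\tau^D}-\frac{\tau(\widehat\tau^D-\tau^D)}{\widehat\tau^D}
\]
and applying Cauchy--Schwarz yields $\eta(\tau^N)\eta(\tau^D)+\eta(\tau^D)^2$, that is, $m_{2n}^{(2)}$.

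The main obstacle is step (iii)'s bias part for the base nuisances. We must verify that every first-order term of $E\{\widehat\varphi^{N}-\varphi^N\mid\bC,\mathcal{F}_s^c\}$ cancels, including those generated through the square-root principal score $e_{d_0d_1}$. We must also check that the intermediate estimates $\widehat{\bGamma}_{2s}$, built from the inner-fold $\widehat{\bGamma}_{1sr}$, enter $Y^\ast$ only through $\widehat\tau^N$ and $\widehat\tau^D$, so that their errors appear solely as $\eta(\tau^N),\eta(\tau^D)$. We would first verify the cancellation pointwise in $\bC$ using $E(\psi_{F,z}\mid\bC)$ algebra, and then integrate over $\bC$ given $\bX$. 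Finally, the three pieces are summed, the result is multiplied by the operator-norm bound on $(\widehat{\bH}+\lambda\bP)^{-1}$, and the bound is aggregated over the finitely many folds $S$.
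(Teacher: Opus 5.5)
Your architecture is the paper's. You orthonormalize $\bH$ and use Rudelson's LLN with Weyl to get $\omega_1(\widehat{\bH}+\lambda\bP)\gtrsim 1+2\lambda\omega_{\min}$. You split $\widehat{\bh}-\widehat{\bH}\bbeta_0$ into oracle noise ($\sqrt{K/n}$), approximation ($m_{0n}$) and a nuisance piece. That piece is split foldwise into a conditional empirical-process part ($m_{1n}$) and a bias part bounded coordinate by coordinate, which is where the $\sqrt K$ comes from. You treat $\widehat{\bGamma}_2$ through the ratio identity ($m_{2n}^{(2)}$), and you bound the penalty bias by $\lambda\omega_{\max}\lVert\bbeta_0\rVert_2$.

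\textbf{The gap.} It sits in the base-nuisance bias, exactly where you say the $\eta(p_j)^2$ terms are ``absorbed'' or ``paired with $\eta(\pi)$ after conditioning on $Z$.'' Neither is true. Write $\pi_z(\bC)=\pi(\bC)^z\{1-\pi(\bC)\}^{1-z}$ and drop the subscript $d_0d_1$ on $e$, with hats denoting evaluation at $(\widehat p_0,\widehat p_1,\widehat\pi)$. Then $E\{\widehat\psi_{D,z}-\widehat p_z\mid\bC\}=(\pi_z/\widehat\pi_z)(p_z-\widehat p_z)$ and $\phi^e=\sum_z\partial_{p_z}e\,(\psi_{D,z}-p_z)$, so
\[
E\{\widehat{\varphi}^D_{d_0d_1}-\varphi^D_{d_0d_1}\mid\bC\}=-\Bigl\{e-\widehat{e}-\sum_{z=0,1}\widehat{\partial_{p_z}e}\,(p_z-\widehat{p}_z)\Bigr\}+\sum_{z=0,1}\widehat{\partial_{p_z}e}\,\Bigl(\frac{\pi_z}{\widehat{\pi}_z}-1\Bigr)(p_z-\widehat{p}_z).
\]
Only the second sum carries a propensity error. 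The first is the Taylor remainder of $e$ in $(p_0,p_1)$, and it contains the pure terms $\tfrac12\partial^2_{p_z}e\,(\widehat p_z-p_z)^2$. Since $\kappa_{d_0d_1}$ is affine, all curvature comes from $\sqrt{\delta}$. The identity $\delta-\{1+(\theta-1)p_0-(\theta+1)p_1\}^2=4\theta p_1(1-p_1)$ then gives
\[
\partial^2_{p_0}e_{d_0d_1}=-(-1)^{d_0+d_1}\,\frac{2\theta(\theta-1)\,p_1(1-p_1)}{\delta^{3/2}}\neq0\qquad(\theta\neq1).
\]
Concretely, take $\widehat\pi=\pi$, $\widehat m_{zd_z}=m_{zd_z}$, $\widehat p_1=p_1$ and $\widehat p_0\neq p_0$. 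Then $E\{\widehat\phi^m_{d_0d_1}\mid\bC\}=0$. The numerator of the base-nuisance bias term becomes $E[\{m_{1d_1}(\bC)-m_{0d_0}(\bC)-\widehat\tau_{d_0d_1}(\bX)\}B(\bC)\mid\bX]$, with $B=-\tfrac12\partial^2_{p_0}e_{d_0d_1}(\tilde p_0,p_1)(\widehat p_0-p_0)^2$ for an intermediate $\tilde p_0$. This is generically of exact order $\eta(p_0)^2$, yet every summand of $m_{2n}^{(1)}$ vanishes.

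\textbf{Consequence.} Your expansion proves the bound only with $\sqrt{K}\{\eta(p_0)^2+\eta(p_1)^2\}$ added to $m_{2n}^{(1)}$. These terms disappear only in the limiting cases: $e_{d_0d_1}$ is affine in $(p_0,p_1)$ under monotonicity and bilinear under counterfactual intermediate independence. They do not disappear for finite $\theta\neq1$.

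The paper never performs this expansion. It imports the base-nuisance bound wholesale from Equation (S8.21) of \citet{tong2025semiparametric} via Cauchy--Schwarz. To land on the displayed $m_{2n}^{(1)}$ you would have to rely on that citation rather than on your own derivation. Moreover, the computation above indicates that the displayed form is itself too optimistic for general $\theta$, so the defensible bound includes the squared principal-score errors.

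\textbf{A smaller point.} Your ``bounded Hessian'' for $e_{d_0d_1}$ is used both in the Lipschitz step behind $m_{1n}$ and in the remainder, and it needs $\sup\theta<\infty$. Assumption \ref{asp:regularityL2}(d) does not impose this. At $p_0=p_1$ one has $\delta=1+4(\theta-1)p_1(1-p_1)$, so $|\partial^2_{p_0}e_{d_0d_1}|$ grows like $\theta^{1/2}$.
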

In Theorem \ref{thm:least-squares-series-density-ratio},  the terms $m_{1n}$ and $m_{2n}$ summarize the impact of nuisance function estimation errors. The term $m_{1n}$, analogous to the empirical process term of the debiased machine learning estimators \citep{dml}, is typically of a smaller order than $m_{2n}$ provided the nuisance estimators are consistent, as $\xi_K \lesssim \sqrt{K}$ holds for many commonly used basis functions. In contrast, $m_{2n}$ is the critical term that typically dominates the convergence rate. 
The terms $m_{2n}^{(1)}$ and $m_{2n}^{(2)}$ both possess the mixed-bias property \citep{rotnitzky2021characterization}. Specifically, $m_{2n}^{(1)}$ depends on the second-order errors of the base nuisance estimators. The form of this term closely relates to robustness properties established in the literature. For example, it reflects the conditional double robustness property proved in \cite{tong2025semiparametric}, as the product term $\eta(p_0)\eta(p_1)$ requires that the principal score model be correctly specified. Under monotonicity, $\eta(p_0)\eta(p_1)$ vanishes, yielding the triple robustness property of \cite{JiangJRSSB2022}; under counterfactual intermediate independence, $\sum_{z=0,1}\eta(m_{zd_z})\eta(p_z)$ disappears, corresponding to the quadruple robustness property proved in \cite{tong2025semiparametric}. On the other hand, the term $m_{2n}^{(2)}$ represents the second-order errors of the intermediate nuisance estimators. 
In principle, the remainder term $m_{2n}^{(1)}$ is of smaller order than $m_{2n}^{(2)}$ if the estimation of the intermediate nuisance functions $\bGamma_2$ is not carefully handled, as $\bGamma_2$ depends on the unknown base nuisance functions $\bGamma_1$. However, because $\bGamma_2$ is estimated using an additional orthogonal learner with sample cross-fitting in Stage 2, the convergence rate of $m_{2n}^{(2)}$ is significantly improved. For example, in ideal cases where $\bGamma_2$ is sufficiently smooth and the Stage 2 orthogonal learner $\widehat{\bGamma}_2$ is oracle efficient, $m_{2n}^{(2)}$ achieves the same order as $m_{2n}^{(1)}$. Finally, the term $\lambda\omega_{\max}$ accounts for the regularization bias.
The following corollary further shows that the PLSS-DCDR learner achieves oracle efficiency in the $\mathcal L^2(\mP)$ norm under suitable conditions.

\begin{corollary}\label{corollary:L2-negligible}
    Assuming further  $c_{K}=o(1)$, $\xi_K\asymp \sqrt{K}$, $\lambda\omega_{\max}=O(n^{-1/2})$, $m_{1n}=O(m_{0n})$, and $m_{2n}=O(m_{0n})$, then $\lVert \widehat{\tau}_{d_0d_1}(\bX)-\tau_{d_0d_1}(\bX)\rVert_{\mP,2}=O_{\mP}(\delta_{n,2})$, where the oracle rate of convergence in $L^2(\mP)$ norm, $\delta_{n,2}$, is given by $\delta_{n,2}=\sqrt{\frac{K}{n}}+c_{K}$. 
\end{corollary}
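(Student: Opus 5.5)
This corollary follows from Theorem \ref{thm:least-squares-series-density-ratio} by bookkeeping on the rate terms. The plan is to bound the prefactor $(1+2\lambda\omega_{\min})^{-1}$ and then show that each term inside the parentheses is $O(\sqrt{K/n}+c_K)$.

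\textbf{The prefactor.} Since $\bP$ is positive definite, $\omega_{\min}>0$. With $\lambda\geq 0$ we get $(1+2\lambda\omega_{\min})^{-1}\leq 1$, so the bound from Theorem \ref{thm:least-squares-series-density-ratio} reduces to
\begin{align*}
\lVert \widehat{\tau}_{d_0d_1}-\tau_{d_0d_1}\rVert_{\mP,2}\lessp \sqrt{K/n}+m_{0n}+m_{1n}+m_{2n}+\lambda\omega_{\max}+c_K.
\end{align*}

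\textbf{Term by term.}
\begin{itemize}
\item \emph{Regularization bias.} By assumption $\lambda\omega_{\max}=O(n^{-1/2})$. Because $K\geq 1$, this is $O(\sqrt{K/n})$.
\item \emph{Nuisance terms.} By assumption $m_{1n}=O(m_{0n})$ and $m_{2n}=O(m_{0n})$, so both are absorbed into $m_{0n}$.
\item \emph{The term $m_{0n}$.} Since $m_{0n}$ is a minimum, $m_{0n}\leq \xi_Kc_K/\sqrt{n}$. Using $\xi_K\asymp\sqrt{K}$, this gives $m_{0n}\lesssim c_K\sqrt{K/n}$.
\end{itemize}

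\textbf{Absorbing $m_{0n}$.} We have $c_K=o(1)$, so $c_K\leq 1$ for large $n$ and hence $c_K\sqrt{K/n}\leq\sqrt{K/n}$. Alternatively, without needing $c_K\to 0$, one can use $ab\leq a^2+b^2$ together with Assumption \ref{asp:regularityL2}(c), which gives $K/n\lesssim \xi_K^2/n=o(1)$ and so $\sqrt{K/n}\to 0$.

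\textbf{Conclusion.} Collecting the bounds, the right-hand side is $O(\sqrt{K/n}+c_K)=O(\delta_{n,2})$. Since $\lessp$ denotes $O_{\mP}$, this is the claim.

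The only point needing care is that the $O(\cdot)$ bounds on $m_{1n}$ and $m_{2n}$ involve the random errors $\eta(\cdot)$ of the nuisance estimators. These should be read as holding in probability, so the final combination must be done in the $O_{\mP}$ calculus. That step is routine: a finite sum of $O_{\mP}$ terms is $O_{\mP}$ of the largest.
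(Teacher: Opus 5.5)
Your proposal is correct and matches the paper, which gives no separate proof and treats the corollary as immediate from Theorem~\ref{thm:least-squares-series-density-ratio}. As you do, drop the prefactor $(1+2\lambda\omega_{\min})^{-1}\le 1$, absorb $\lambda\omega_{\max}=O(n^{-1/2})$ into $\sqrt{K/n}$, and absorb $m_{1n},m_{2n}$ into $m_{0n}\le \xi_Kc_K/\sqrt{n}\lesssim c_K\sqrt{K/n}=o(\sqrt{K/n})$. Your main route via $c_K=o(1)$ suffices, and the side remark with $ab\le a^2+b^2$ is unnecessary (it would still need $c_K$ bounded); the cleaner alternative is simply $c_K\sqrt{K/n}=o(c_K)$ because $\sqrt{K/n}\to 0$.
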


In Corollary \ref{corollary:L2-negligible}, the assumption $c_K=o(1)$ requires that the basis functions are correctly specified. Below we discuss the potential misspecification of the basis functions.
\begin{remark}
    Importantly, the series approximation is said to be correctly specified if $c_{K}=o(1)$, whereas it is considered misspecified if $c_{K} \not\to 0$. For example, misspecification occurs when one incorrectly employs basis functions that are separately additive, whereas the true underlying function is not \citep{newey1997convergence}. Even when the basis functions are misspecified, the PLSS-DCDR learner (assuming $\lambda=0$) coincides with the debiased machine learning estimator \citep{dml} targeting the estimand of the projected surrogate parameter defined in \eqref{eq:projection-target}. It can be shown that this estimator achieves $\sqrt{n}$-consistency and semiparametric efficiency under mild conditions. 
\end{remark}
Furthermore, achieving oracle efficiency requires that (i) the first-stage nuisance estimators satisfy $m_{1n} = O(m_{0n})$ and $m_{2n} = O(m_{0n})$, which is weaker than the small bias assumption (Assumption 3.5) in \cite{semenova2021debiased}; and (ii) the shrinkage parameter satisfies $\lambda\omega_{\max}=O(n^{-1/2})$, or simply $\lambda=O(n^{-1/2})$ in cases where $\omega_{\max}$ is bounded (e.g., standard ridge penalty).
Following \cite{belloni2015some}, under the assumption that $\tau_{d_0d_1}(\bx) \in \mathcal{H}_s$ (the Hölder class of smoothness order $s$) and that the approximation error satisfies $c_K \lesssim K^{-s/p}$, the oracle $\mathcal{L}^2$ convergence rate achieves minimax optimality when the number of basis functions scales as $K \asymp n^{p/(p+2s)}$.
Next, we establish the uniform convergence rate for the PLSS-DCDR learner.

\begin{theorem}
\label{thm:uniform-rate}
Under Assumptions \ref{assump:treatment-ignorability}–\ref{assump:uniform-further-boundedness}, the error of the PLSS-DCDR learner in the $\mathcal L^\infty(\mathbb{P})$ norm is bounded by
\begin{align*}
\lVert \widehat{\tau}_{d_0d_1}-\tau_{d_0d_1}\rVert_{\mP,\infty}
\lessp& \frac{\xi_K}{\sqrt{n}(1+\lambda\omega_{\min})}\Biggl\{\left(\frac{\xi_K\sqrt{\log K}}{1+2\lambda\omega_{\min}}+\sqrt{n}\right)(\lambda\omega_{\max}/\sqrt{n}+m_{1n}+m_{2n})+\\&\frac{m_{3n}}{1+2\lambda\omega_{\min}}+\sqrt{\log K}(1+l_Kc_K)\Biggr\}+l_Kc_K.
\end{align*}
where 
$m_{3n}=\sqrt{\xi_K^2\log K/n}(n^{1/\nu}\sqrt{\log K}+\sqrt{K}l_Kc_K)$.
\end{theorem}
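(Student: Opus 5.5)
We follow the uniform-rate strategy of \citet{belloni2015some} for least-squares series estimators, adapted to the ridge penalty and to estimated pseudo-outcomes. Work with the orthonormalized basis, so $\bH=\boldsymbol I_K$ by Assumption \ref{asp:regularityL2}(a). For any fixed $\bx$ we write
\begin{align*}
\widehat{\tau}_{d_0d_1}(\bx)-\tau_{d_0d_1}(\bx)=\bb(\bx)^\top(\widehat{\bbeta}-\bbeta_0)-r(\bx),
\end{align*}
so $\sup_{\bx}|r(\bx)|\le l_Kc_K$ by Assumption \ref{asp:regularityL2}(b); this is the trailing term of the bound. Next use $\widehat{\bbeta}=(\widehat{\bH}+\lambda\bP)^{-1}\widehat{\bh}$ and write $\widehat Y^\ast=\bb(\bX)^\top\bbeta_0+r(\bX)+\Omega+(\widehat Y^\ast-Y^\ast)$. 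This splits $\widehat{\bbeta}-\bbeta_0$ into four pieces:
\begin{itemize}[label={}]
\item (i) the regularization bias $-\lambda(\widehat{\bH}+\lambda\bP)^{-1}\bP\bbeta_0$;
\item (ii) the noise term $(\widehat{\bH}+\lambda\bP)^{-1}\mP_n\{\bb(\bX)\Omega\}$;
\item (iii) the approximation term $(\widehat{\bH}+\lambda\bP)^{-1}\mP_n\{\bb(\bX)r(\bX)\}$;
\item (iv) the nuisance term $(\widehat{\bH}+\lambda\bP)^{-1}\mP_n\{\bb(\bX)(\widehat Y^\ast-Y^\ast)\}$.
\end{itemize}
By the matrix Rudelson/Bernstein inequality and Assumption \ref{asp:regularityL2}(c), $\lVert\widehat{\bH}-\boldsymbol I_K\rVert_{\mathrm{op}}\lessp \sqrt{\xi_K^2\log K/n}=o_{\mP}(1)$. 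Hence $\lVert(\widehat{\bH}+\lambda\bP)^{-1}\rVert_{\mathrm{op}}\lesssim (1+\lambda\omega_{\min})^{-1}$ with probability approaching one. We also use the expansion $(\widehat{\bH}+\lambda\bP)^{-1}=(\boldsymbol I+\lambda\bP)^{-1}+(\widehat{\bH}+\lambda\bP)^{-1}(\boldsymbol I-\widehat{\bH})(\boldsymbol I+\lambda\bP)^{-1}$. In this expansion the second piece carries the extra factor $\xi_K\sqrt{\log K/n}/(1+2\lambda\omega_{\min})$, and that factor is exactly what produces the $\xi_K\sqrt{\log K}/(1+2\lambda\omega_{\min})$ multiplier in the statement. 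Each term is then bounded uniformly using $\sup_{\bx}|\bb(\bx)^\top v|\le\xi_K\lVert v\rVert_2$, which is where the leading factor $\xi_K/\{\sqrt n(1+\lambda\omega_{\min})\}$ comes from.

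\textbf{Term (ii).} We do not pass through the crude $\ell^2$ bound, which would cost $\sqrt K$. Instead we bound $\sup_{\bx}|\widetilde{\bb}(\bx)^\top\mG_n\{\bb(\bX)\Omega\}|$ as the supremum of an empirical process indexed by $\bx$. First truncate $\Omega$ at level $n^{1/\nu}$, using $\sup_{\bx}E\{|\Omega|^\nu\mid\bX=\bx\}=O(1)$. The normalized class $\{\widetilde{\bb}(\bx)^\top\bb(\cdot)\Omega\}$ is Lipschitz in $\bx$ with constant $\xi_K^L$ on a bounded support (Assumption \ref{asp:regularityL2}(d)). Since $\log\xi_K^L=O(\log K)$, its covering numbers are polynomial in $K$, and a maximal inequality gives a bound of order $\sqrt{\log K}$. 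Multiplying by $\xi_K/\sqrt n$ gives the $\xi_K\sqrt{\log K}/\sqrt n$ contribution. The cross term from $(\widehat{\bH}-\boldsymbol I)$ multiplies the truncation level and the operator-norm rate, and this yields the part $\sqrt{\xi_K^2\log K/n}\,n^{1/\nu}\sqrt{\log K}$ of $m_{3n}$.

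\textbf{Term (iii).} We use $E\{\bb(\bX)r(\bX)\}=0$, which is the projection property of $\bbeta_0$, together with $\lVert r\rVert_\infty\le l_Kc_K$. The same maximal inequality then gives $\xi_K\sqrt{\log K}\,l_Kc_K/\sqrt n$. The operator-norm cross term gives the part $\sqrt{\xi_K^2\log K/n}\sqrt K l_Kc_K$ of $m_{3n}$.

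\textbf{Term (i).} This is bounded by $\lambda\omega_{\max}\lVert\bbeta_0\rVert_2$, and $\lVert\bbeta_0\rVert_2$ is bounded by square integrability of $\tau_{d_0d_1}$.

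\textbf{Term (iv), the main obstacle.} We reuse the analysis behind Theorem \ref{thm:least-squares-series-density-ratio}, which we may take as given. On each outer fold we condition on $\mathcal{F}_s^c$ and split $\mP_n\{\bb(\widehat Y^\ast-Y^\ast)\}$ into a centered empirical-process part and a conditional-bias part. For the empirical-process part, cross-fitting makes the summands independent given the training data, and its $\ell^2$ norm is $\lesssim \xi_K/\sqrt n$ times the $\mathcal L^2$ nuisance errors, i.e.\ $m_{1n}$. For the conditional-bias part we use the bias decomposition displayed after Remark \ref{remark:CV-pick-K,lambda}. Its first term is a mixed-bias remainder in $\bGamma_1$, handled as in \citet{tong2025semiparametric}. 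Its second term is the product $(\widehat\tau_{d_0d_1}-\tau_{d_0d_1})(\widehat\tau^D_{d_0d_1}-\tau^D_{d_0d_1})/\widehat\tau^D_{d_0d_1}$. Together these give an $\ell^2$ bound of $\sqrt K$ times the product errors, i.e.\ $m_{2n}$, using Cauchy--Schwarz and the positivity bounds in Assumption \ref{asp:regularityL2}(d).

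The delicate step is the error $\widehat\tau_{d_0d_1}-\tau_{d_0d_1}$ that appears inside the second bias term. It must be interpreted through the fold-specific intermediate quantities $\widehat\tau^N_{d_0d_1}/\widehat\tau^D_{d_0d_1}$, and care is needed so that no circular dependence on the final estimator arises. Once that is settled, combining $\xi_K\lVert v\rVert_2$ with the two resolvent pieces yields $(\xi_K\sqrt{\log K}/(1+2\lambda\omega_{\min})+\sqrt n)(m_{1n}+m_{2n}+\lambda\omega_{\max}/\sqrt n)$ after factoring out $\xi_K/\{\sqrt n(1+\lambda\omega_{\min})\}$. Summing the bounds for all four terms gives the stated inequality.
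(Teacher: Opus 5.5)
Your route is the paper's: the same split of $\widehat\bbeta-\bbeta_0$ into penalty, noise, approximation and nuisance pieces; the resolvent identity that produces the $\xi_K\sqrt{\log K}/(1+2\lambda\omega_{\min})$ cross factor; the maximal inequalities of \citet{belloni2015some}; and the proof of Theorem~\ref{thm:least-squares-series-density-ratio} for $\lVert\mP_n\{\bb(\bX)(\widehat Y^\ast-Y^\ast)\}\rVert_2\lessp m_{1n}+m_{2n}$. Two side remarks:
\begin{itemize}
\item The ``delicate step'' you leave open is harmless. The $\widehat\tau_{d_0d_1}$ inside $\widehat Y^\ast$ is the Stage-2 ratio $\widehat\tau^N_{d_0d_1}/\widehat\tau^D_{d_0d_1}$, not the final estimator. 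Dropping subscripts, the identity in that proof rewrites $(\widehat\tau-\tau)(\widehat\tau^D-\tau^D)/\widehat\tau^D$ as $(\widehat\tau^N-\tau^N)(\widehat\tau^D-\tau^D)/(\widehat\tau^D)^2-\tau^N(\widehat\tau^D-\tau^D)^2/\{(\widehat\tau^D)^2\tau^D\}$.
\item Your direct maximal-inequality bound on $\sup_{\bx}|\widetilde\bb(\bx)^\top(\boldsymbol I_K+\lambda\bP)^{-1}\mG_n\{\bb(\bX)r(\bX)\}|$ is sounder than the paper's. The paper passes through $\lVert\mG_n\{\bb(\bX)r(\bX)\}\rVert_2$, which is in general only $O_{\mP}(l_Kc_K\sqrt K)$.
\end{itemize}

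The step that fails is the final assembly of term (i). Your own bound gives
\[
\sup_{\bx}\bigl|\lambda\,\bb(\bx)^\top(\widehat\bH+\lambda\bP)^{-1}\bP\bbeta_0\bigr|\lessp\frac{\xi_K\lambda\omega_{\max}}{1+\lambda\omega_{\min}}\Bigl(1+\frac{\xi_K\sqrt{\log K}}{\sqrt n(1+2\lambda\omega_{\min})}\Bigr)
=\frac{\xi_K}{\sqrt n(1+\lambda\omega_{\min})}\Bigl(\frac{\xi_K\sqrt{\log K}}{1+2\lambda\omega_{\min}}+\sqrt n\Bigr)\lambda\omega_{\max}.
\]
After factoring out $\xi_K/\{\sqrt n(1+\lambda\omega_{\min})\}$, the penalty sits next to $m_{1n}+m_{2n}$ as $\lambda\omega_{\max}$, as in Theorem~\ref{thm:least-squares-series-density-ratio}, not as $\lambda\omega_{\max}/\sqrt n$. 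Nothing in your argument removes that factor $\sqrt n$.

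It cannot be removed, because the inequality as printed is false. Take the following setting:
\begin{itemize}
\item $K\ge2$ fixed and $\bH=\bP=\boldsymbol I_K$;
\item $\tau_{d_0d_1}=\bb^\top\bbeta_0$ with $\bbeta_0\neq0$, so $c_K=0$;
\item the true nuisance functions plugged in, so $m_{1n}=m_{2n}=0$;
\item $\lambda=n^{-1/4}$.
\end{itemize}
Then $\widehat\bbeta-\bbeta_0=(\widehat\bH+\lambda\boldsymbol I_K)^{-1}\mP_n\{\bb(\bX)\Omega\}-\lambda(\widehat\bH+\lambda\boldsymbol I_K)^{-1}\bbeta_0$. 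The first piece is $O_{\mP}(n^{-1/2})$ and the second is $-\lambda\{\bbeta_0+o_{\mP}(1)\}$. Hence $\lVert\widehat\tau_{d_0d_1}-\tau_{d_0d_1}\rVert_{\mP,\infty}\ge\lVert\widehat\bbeta-\bbeta_0\rVert_2\gtrsim n^{-1/4}$ with probability tending to one. The right-hand side of the statement, by contrast, is $O(n^{-1/2})$, since $m_{3n}=O(n^{1/\nu-1/2})=o(1)$.

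The paper reaches $\lambda\omega_{\max}/\sqrt n$ only through a slip. In the proof of Lemma~\ref{lemma:pointwise-linearization}, reused in Lemma~\ref{lemma:uniform-linearization}, the penalty inside $\sqrt n\,\widetilde\bb^\top(\widehat\bbeta-\bbeta_0)$ is written as $-\lambda\bP\bbeta_0$ instead of $-\sqrt n\lambda\bP\bbeta_0$, so its terms $(\ast_5)$ and $(\ast_6)$ are short by $\sqrt n$. Your last sentence reproduces that slip.

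What your argument does prove is the stated bound with $\lambda\omega_{\max}/\sqrt n$ replaced by $\lambda\omega_{\max}$. That is the version you should state, with the regularization condition in Corollary~\ref{corollary:uniform-negligible} strengthened accordingly.
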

In Theorem \ref{thm:uniform-rate}, the term $m_{4n}:=\left\{\xi_K\sqrt{\log K}/(1+2\lambda\omega_{\min})+\sqrt{n}\right\}(m_{1n}+m_{2n})$ captures the effect of nuisance estimation errors on the worst-case estimation error for the PLSS-DCDR learner. 
Moreover, the regularization bias can be controlled by appropriately bounding the term $\lambda\omega_{\max} $. The corollary below shows that the PLSS-DCDR learner achieves oracle efficiency in $\mathcal L^\infty(\mP)$ norm under suitable conditions.
\begin{corollary}\label{corollary:uniform-negligible}
    Assuming $c_{K}=o(1)$, $\xi_K\asymp\sqrt{K}$, $m_{4n}=O(m_{3n}+\sqrt{\log K}(1+l_Kc_K))$, $\lambda\omega_{\max}=o(1)$, and $\lambda\omega_{\max}/\sqrt{n}=O(m_{1n}+m_{2n})$, then  $\lVert \widehat{\tau}_{d_0d_1}-\tau_{d_0d_1}\rVert_{\mP,\infty}=O_{\mP}(\delta_{n,\infty})$, where the oracle rate of convergence in $\mathcal L^\infty(\mP)$ norm, $\delta_{n,\infty}$, is given by $\delta_{n,\infty}= {\xi_K}/{\sqrt{n}}\left\{m_{3n}+\sqrt{\log K}(1+l_Kc_K)\right\}+l_Kc_K$.

\end{corollary}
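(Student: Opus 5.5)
Corollary \ref{corollary:uniform-negligible} follows from Theorem \ref{thm:uniform-rate} by substituting the stated rate conditions and checking that each term in the bound is dominated by $\delta_{n,\infty}$. Write the bound of Theorem \ref{thm:uniform-rate} as
\begin{align*}
\frac{\xi_K}{\sqrt{n}(1+\lambda\omega_{\min})}\Bigl\{T_1+T_2+T_3\Bigr\}+l_Kc_K,
\end{align*}
where
\begin{align*}
T_1&=\Bigl(\tfrac{\xi_K\sqrt{\log K}}{1+2\lambda\omega_{\min}}+\sqrt{n}\Bigr)(\lambda\omega_{\max}/\sqrt{n}+m_{1n}+m_{2n}),\\
T_2&=\frac{m_{3n}}{1+2\lambda\omega_{\min}},\\
T_3&=\sqrt{\log K}\,(1+l_Kc_K).
\end{align*}
Since $\bP$ is positive definite and $\lambda\ge 0$, we have $\omega_{\min}>0$, so $(1+\lambda\omega_{\min})^{-1}\le 1$ and $(1+2\lambda\omega_{\min})^{-1}\le 1$. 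We may therefore drop these factors in every term. This gives $T_2\le m_{3n}$ and $\frac{\xi_K}{\sqrt n}(T_2+T_3)\le \delta_{n,\infty}-l_Kc_K$. The remainder $l_Kc_K$ appears verbatim in $\delta_{n,\infty}$.

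The only real work is $T_1$. I split it into a nuisance part and a regularization part. The nuisance part is
\begin{align*}
\Bigl(\tfrac{\xi_K\sqrt{\log K}}{1+2\lambda\omega_{\min}}+\sqrt{n}\Bigr)(m_{1n}+m_{2n}).
\end{align*}
This is exactly $m_{4n}$, and by assumption $m_{4n}=O(m_{3n}+T_3)$. The regularization part is
\begin{align*}
\Bigl(\tfrac{\xi_K\sqrt{\log K}}{1+2\lambda\omega_{\min}}+\sqrt{n}\Bigr)\lambda\omega_{\max}/\sqrt{n}.
\end{align*}
The assumption $\lambda\omega_{\max}/\sqrt{n}=O(m_{1n}+m_{2n})$ lets me bound it by a constant multiple of the nuisance part. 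Hence it is also $O(m_{3n}+T_3)$. The extra assumption $\lambda\omega_{\max}=o(1)$ is not needed for this domination; I will note that it only guarantees that the regularization bias vanishes, and that the factor $(1+\lambda\omega_{\min})^{-1}$ stays bounded away from zero.

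Collecting the pieces gives
\begin{align*}
\lVert \widehat\tau_{d_0d_1}-\tau_{d_0d_1}\rVert_{\mP,\infty}\lesssim_{\mP} \frac{\xi_K}{\sqrt n}\bigl\{m_{3n}+\sqrt{\log K}(1+l_Kc_K)\bigr\}+l_Kc_K=\delta_{n,\infty}.
\end{align*}
The conditions $c_K=o(1)$ and $\xi_K\asymp\sqrt K$ serve only to interpret this as an oracle rate: it matches the uniform rate of the infeasible sieve estimator of \cite{belloni2015some} with known nuisances, with $m_{3n}$ simplifying via $\xi_K^2\asymp K$. The main, and mild, obstacle is making sure no hidden constants depend on $n$ when passing between $O_{\mP}$ statements. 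I would handle this by noting that all $O(\cdot)$ relations among deterministic sequences preserve $O_{\mP}$.
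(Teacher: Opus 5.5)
Your proposal is correct and matches the paper's implicit argument: substitute the stated rate conditions into the bound of Theorem \ref{thm:uniform-rate}, recognize the nuisance part of the first bracketed term as exactly $m_{4n}$, absorb the regularization part into it via $\lambda\omega_{\max}/\sqrt{n}=O(m_{1n}+m_{2n})$, and drop the factors $(1+\lambda\omega_{\min})^{-1}$ and $(1+2\lambda\omega_{\min})^{-1}$, both of which are at most $1$. One small nit: $m_{1n}$ and $m_{2n}$ are random, since they involve the nuisance estimation errors $\eta(\cdot)$, so those two conditions should be read as $O_{\mP}$ statements rather than deterministic ones; they chain with the $O_{\mP}$ bound of the theorem just as well, so the conclusion is unaffected.
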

We discuss conditions under which the oracle uniform rate of convergence is minimax optimal. Following Comment 4.5 in \cite{belloni2015some}, assuming $l_K c_K \lesssim K^{-s/p}$, $\xi_K \lesssim \sqrt{K}$, and $m_{3n} + \sqrt{\log K}\, l_K c_K \lesssim \sqrt{\log K}$, the optimal uniform rate of convergence for $\mathcal{H}_s$, $(\log n / n)^{s/(2s+p)}$, is achieved by setting $K\asymp(\log n / n)^{-p/(2s+p)}$. In summary, Corollaries \ref{corollary:L2-negligible} and \ref{corollary:uniform-negligible} show that, under suitable conditions, the PLSS-DCDR learner is oracle efficient in both the $\mathcal L^2(\mathbb{P})$ and $\mathcal L^\infty(\mathbb{P})$ norms.

\subsection{Asymptotically honest statistical inference}
\subsubsection{Pointwise normality and strong Gaussian process approximation}
To begin with, we prove in Supplementary Material Lemmas \ref{lemma:pointwise-linearization} and \ref{lemma:uniform-linearization} that $\widehat{\bbeta}$ admits an asymptotic linear representation in $\mathcal L^2(\mathbb{P})$ and uniform norms, respectively. Based on these pointwise and uniform linearization results, we show in the following results that the proposed PLSS-DCDR learner is (i) pointwise asymptotically normal and (ii) can be approximated by a mean-zero Gaussian process; the strong approximation result (ii) naturally requires stronger conditions. To proceed, we focus on inference based on the $t$-process $\{T_n(\bx):\bx\in\mathcal{X}\}$ constructed from the \emph{studentized} PLSS-DCDR learner, where
\begin{align*}
    T_n(\bx)=\sqrt{n}\frac{\widehat{\tau}_{d_0d_1}(\bx)-\tau_{d_0d_1}(\bx)}{\lVert \mathbb{V}^{1/2}\bb(\bx)\rVert_2},
\end{align*}
and $\mathbb{V}=\bH^{-1}E\{(\Omega+r_0)^2\bb(\bX)\bb(\bX)^\top\}\bH^{-1}$ is the covariance matrix, $r_0=\tau_{d_0d_1}(\bX)-\bb(\bX)^\top\bbeta_0$ is the approximation error for the projection $\bbeta_0$ in \eqref{eq:projection-target}, and $\Omega=Y^\ast-\tau_{d_0d_1}(\bX)$ is the regression error.

\begin{theorem}
\label{thm:pointwise-normal}
Suppose Assumptions \ref{assump:treatment-ignorability}-\ref{asp:regularityL2} hold. Furthermore, assume (i) $\lambda\omega_{\max}(1+m_{0n}+\sqrt{K/n})=o(1)$; (ii) $m_{1n}=o(n^{-1/2})$ and $m_{2n}=o(n^{-1/2})$; (iii) the 
Lindeberg's condition holds such that $\sup_{\bx} E\{\Omega^2\mathcal{I}\{|\Omega|>\iota\}|\bX=\bx\}\to 0$ as $\iota\to\infty$; (iv) $1\lesssim\inf_{\bx} E\{\Omega^2|\bX=\bx\}$; (v) $(\xi_K\log K/n)^{1/2}(1+\sqrt{K}l_Kc_K)=o(1)$; and (vi) $\sqrt{n}r(\bx)=o(\lVert \mathbb{V}^{1/2}\bb(\bx)\rVert_2)$. Then for any given design point $\bx\in\mathcal{X}$, $\lim_{n\to\infty}\sup_{t\in\mathbb{R}}\left|\Pr\left(T_n(\bx)<t\right)-\Phi(t)\right|=0$,
where $\Phi$ is the standard normal cumulative distribution function. 

\end{theorem}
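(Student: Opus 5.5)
The proof reduces the studentized estimator to a linear statistic in the oracle regression errors, then applies a Lindeberg--Feller central limit theorem at the fixed point $\bx$. Throughout, normalize $\bH=\boldsymbol{I}_K$ as permitted after Assumption \ref{asp:regularityL2}.

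\textbf{Linearization.} Using the pointwise linearization of $\widehat{\bbeta}$ in Supplementary Lemma \ref{lemma:pointwise-linearization}, I would write
\begin{align*}
\sqrt{n}\{\widehat{\tau}_{d_0d_1}(\bx)-\tau_{d_0d_1}(\bx)\}
=\bb(\bx)^\top \mG_n\{\bb(\bX)(\Omega+r_0)\}+R_n(\bx)-\sqrt{n}\,r(\bx),
\end{align*}
where $\mG_n=\sqrt{n}(\mP_n-\mP)$. The remainder $R_n(\bx)$ collects four pieces.
\begin{itemize}
\item The replacement of $\widehat{\bH}$ by $\bH$. By the matrix Chernoff/Rudelson bound under Assumption \ref{asp:regularityL2}(c), $\lVert\widehat{\bH}-\bH\rVert_{\mathrm{op}}\lesssim_{\mP}(\xi_K^2\log K/n)^{1/2}$. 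Together with the bound on $\lVert\mP_n\{\bb(\Omega+r_0)\}\rVert_2$, this yields the contribution controlled by condition (v).
\item The regularization bias $\lambda(\widehat{\bH}+\lambda\bP)^{-1}\bP\widehat{\bbeta}$. It is of order $\lambda\omega_{\max}(1+m_{0n}+\sqrt{K/n})$, which is $o(1)$ by (i).
\item The nuisance error $\mP_n\{\bb(\bX)(\widehat{Y}^\ast-Y^\ast)\}$. I would split it into an empirical-process part, handled by cross-fitting and conditional Chebyshev given $\mathcal{F}_s^c$, and a conditional-bias part. The second part uses the bias decomposition displayed before Section \ref{sec:large-sample-theory-inference-bands}, namely mixed bias in $\bGamma_1$ and the product term $(\widehat{\tau}-\tau)(\widehat{\tau}^D-\tau^D)$ in $\bGamma_2$. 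This mirrors the proof of Theorem \ref{thm:least-squares-series-density-ratio} and gives $O_{\mP}(m_{1n}+m_{2n})$ in $\ell^2$. Multiplying by $\sqrt{n}$, this term is $o_{\mP}(1)$ by (ii).
\item The projection-error interaction $\mP_n\{\bb r_0\}$. It is handled through $m_{0n}$.
\end{itemize}

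\textbf{Studentization.} Dividing by $\lVert\mathbb{V}^{1/2}\bb(\bx)\rVert_2$ requires a lower bound on the denominator. By (iv) and $\bH=\boldsymbol{I}_K$, we have $\mathbb{V}\succeq c\,\boldsymbol{I}_K$ up to the $r_0$ cross term, which is negligible since $c_K$ is bounded. Hence $\lVert\mathbb{V}^{1/2}\bb(\bx)\rVert_2\gtrsim\lVert\bb(\bx)\rVert_2$. Each remainder is bounded by $\lVert\bb(\bx)\rVert_2$ times an $\ell^2$ rate via Cauchy--Schwarz, so after division all remainders vanish in probability. Condition (vi) removes the deterministic bias $\sqrt{n}r(\bx)$.

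\textbf{CLT and main obstacle.} The leading term is $\sum_i n^{-1/2}\alpha(\bx)^\top\bb(\bX_i)(\Omega_i+r_{0i})$ with $\alpha(\bx)=\bb(\bx)/\lVert\mathbb{V}^{1/2}\bb(\bx)\rVert_2$. It has mean zero, because $E\{\bb r_0\}=0$ by the projection property and $E\{\Omega\mid\bX\}=0$, and it has variance exactly one. For Lindeberg I use $|\alpha^\top\bb(\bX)|\le\xi_K/\lVert\bb(\bx)\rVert_2$ together with $E\{(\alpha^\top\bb)^2\}\lesssim1$. The $\Omega$ part follows from the uniform conditional Lindeberg condition (iii): truncate at $\iota$, and note that the event $\{|\alpha^\top\bb\,\Omega|>\epsilon\sqrt{n}\}$ forces $|\Omega|$ to be large because $\xi_K/\sqrt{n}\to0$. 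The $r_0$ part is bounded by $l_Kc_K\xi_K/\sqrt{n}\to0$.

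Pólya's theorem then converts convergence in distribution to the Kolmogorov-distance statement. I expect the main obstacle to be the nuisance remainder. That step must show that the cross-fitted conditional bias stays $o(n^{-1/2})$ after projection onto $\bb(\bx)$ and normalization, with the Stage 2 inner cross-fitting making $\widehat{\bGamma}_{2s}$ independent of the fold $\mathcal{F}_s$ so that the conditioning argument applies.
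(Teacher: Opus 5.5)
Your route is the paper's route. The paper's proof invokes Lemma~\ref{lemma:pointwise-linearization} to bound the remainder and then defers to the Lindeberg--P\'olya argument of Theorem~4.2 in \cite{belloni2015some}, which is what you spell out. The nuisance step via (ii) is also fine.

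\textbf{The regularization step fails as written.} In the $\sqrt{n}$-scaled expansion
\begin{align*}
\sqrt{n}\,\widetilde{\bb}^\top(\widehat{\bbeta}-\bbeta_0)=\widetilde{\bb}^\top(\widehat{\bH}+\lambda\bP)^{-1}\bigl[\sqrt{n}\,\mP_n\{\bb(\bX)(\widehat{Y}^\ast-\bb(\bX)^\top\bbeta_0)\}-\sqrt{n}\,\lambda\bP\bbeta_0\bigr],
\end{align*}
the penalty contributes $\sqrt{n}\lambda\,\widetilde{\bb}^\top(\widehat{\bH}+\lambda\bP)^{-1}\bP\bbeta_0$.
\begin{itemize}
\item Its generic size is $\sqrt{n}\lambda\omega_{\max}\lVert\bbeta_0\rVert_2$, not $\lambda\omega_{\max}$, and $\lVert\bbeta_0\rVert_2$ does not vanish.
\item Since $1+m_{0n}+\sqrt{K/n}\geq 1$, condition (i) is just $\lambda\omega_{\max}=o(1)$, which allows this term to be as large as $o(\sqrt{n})$.
\item For a concrete failure, take $\bP=\boldsymbol{I}_K$, $\lambda=n^{-1/4}$, fixed $K$, $m_{1n}=m_{2n}=0$, and $\tau_{d_0d_1}\equiv\tau_0\neq 0$ in the span of the basis, so $r\equiv 0$. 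Then $T_n(\bx)$ carries a deterministic shift of order $n^{1/4}$ and cannot be asymptotically $\mathcal{N}(0,1)$.
\item What you actually need is $\sqrt{n}\lambda\,|\bb(\bx)^\top(\widehat{\bH}+\lambda\bP)^{-1}\bP\bbeta_0|/\lVert\mathbb{V}^{1/2}\bb(\bx)\rVert_2=o_{\mP}(1)$, for example $\sqrt{n}\lambda\omega_{\max}\to 0$. This is an undersmoothing condition on $\lambda$ analogous to (vi).
\end{itemize}
The paper's Lemma~\ref{lemma:pointwise-linearization} drops exactly this factor. It writes $-\lambda\bP\bbeta_0$ inside the $\sqrt{n}$-scaled bracket and replaces $\mG_n$ by $\mP_n$ in its seventh term. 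So citing it does not repair the step. The condition that makes the argument work is $\sqrt{n}\lambda\omega_{\max}(1+m_{0n}+\sqrt{K/n})=o(1)$. The paper's own remark, that (i) is stronger than the $\mathcal{L}^2$ oracle requirement $\lambda\omega_{\max}=O(n^{-1/2})$, presupposes this corrected form.

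\textbf{A second, smaller gap.} For the design-replacement term $\widetilde{\bb}^\top\{(\widehat{\bH}+\lambda\bP)^{-1}-(\bH+\lambda\bP)^{-1}\}\mG_n\{\bb(\bX)\Omega\}$, you pair $\lVert\widehat{\bH}-\bH\rVert_{\mathrm{op}}$ with an $\ell^2$ bound. But $\lVert\mG_n\{\bb(\bX)\Omega\}\rVert_2$ is only $\lessp\sqrt{K}$, so this gives $\sqrt{\xi_K^2K\log K/n}$. Neither (v) nor Assumption~\ref{asp:regularityL2}(c) controls that rate, because the leading summand of (v) carries no $\sqrt{K}$.
\begin{itemize}
\item The fix is the one the paper uses for the first remainder term in Lemma~\ref{lemma:pointwise-linearization}: condition on $\bX_1,\ldots,\bX_n$ and use $E\{\Omega\mid\bX\}=0$ and $E\{\Omega^2\mid\bX\}\leq\epsilon_2$. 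This gives $O_{\mP}(\sqrt{\xi_K^2\log K/n})$.
\item The crude $\ell^2$ bound is appropriate only for the $r_0$ part, where it produces the $\sqrt{K}l_Kc_K$ factor in (v).
\end{itemize}

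\textbf{Two harmless slips.}
\begin{itemize}
\item There is no $r_0$ cross term in $\mathbb{V}$. It vanishes because $E\{\Omega\mid\bX\}=0$, so $\mathbb{V}\succeq\inf_{\bx}E\{\Omega^2\mid\bX=\bx\}\,\bH^{-1}$ directly.
\item The Lindeberg envelope is $|\alpha(\bx)^\top\bb(\bX)|\lesssim\xi_K$, not $\xi_K/\lVert\bb(\bx)\rVert_2$.
\end{itemize}
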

Theorem \ref{thm:pointwise-normal} states that, under suitable conditions, the proposed PLSS-DCDR learner is pointwise asymptotically normal, which facilitates the construction of pointwise Wald confidence intervals for inference. Most conditions in Theorem \ref{thm:pointwise-normal} are typical in the least squares series sieve literature \citep{belloni2015some}, except for the following. First, condition (i) indicates that pointwise normality requires the shrinkage parameter to decay faster than what is needed for achieving oracle efficiency in the $\mathcal L^2$ norm. Similarly, condition (ii) implies that pointwise normality demands the nuisance estimation errors to potentially converge faster than required for oracle efficiency in the $\mathcal L^2$ norm; this rate is akin to the small bias condition (Assumption 3.5 in \cite{semenova2021debiased}). Second, condition (vi) is an undersmoothing condition, which essentially requires that the bias be of smaller order relative to the standard error. As pointed out earlier, fine-tuning through the shrinkage parameter given moderate $K$ may practically outperform the ordinary least squares series sieve method in aligning with this condition. Practical considerations regarding tuning are further discussed in Section \ref{sec:sim_exp_main}.

For simultaneous inference over $\bx \in \mathcal{X}$, a stronger result involving Gaussian process approximation is required, as stated below.

\begin{theorem}
\label{thm:uniform-strong-gaussian}
Suppose Assumptions \ref{assump:treatment-ignorability}–\ref{assump:uniform-further-boundedness} and all conditions in Corollary \ref{corollary:uniform-negligible} hold. Furthermore, assume (i) $\lambda\omega_{\max}(1+m_{0n}+\sqrt{K/n})=o(1)$; (ii) $m_{3n}=o_{\mP}(a_n^{-1})$; (iii) $1\lesssim\inf_{\bx} E\{\Omega^2|\bX=\bx\}$; (iv) $a_n^6K^4\xi_K^2(1+l_K^3c_K^3)^2(\log n)^2/n=o(1)$; and (v) $\sup_{\bx}\sqrt{n}|r(\bx)|/\lVert \mathbb{V}^{1/2}\bb(\bx)\rVert_2=o(a_n^{-1})$. Then the following Gaussian process strong approximation holds in $\mathcal L^\infty$ norm $T_n(\bx)=_d{\bb(\bx)^\top\mathbb{V}^{1/2}}/{\lVert \mathbb{V}^{1/2}\bb(\bx)\rVert_2}\mathcal{N}(0,\boldsymbol{I}_K)+o_{\mP}(a_n^{-1})$.
\end{theorem}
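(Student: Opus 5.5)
The plan is to reduce the studentized $t$-process to the oracle least-squares series process of \cite{belloni2015some}, and then invoke their Yurinskii-type coupling. The reduction uses the uniform linearization in Supplementary Material Lemma \ref{lemma:uniform-linearization}.

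First, write
\[
\widehat{\bbeta}-\bbeta_0=\bH^{-1}\mP_n\{\bb(\bX)(\Omega+r_0)\}+R_n,
\]
where $R_n$ collects four pieces: (a) the ridge bias term coming from $(\widehat{\bH}+\lambda\bP)^{-1}-\widehat{\bH}^{-1}$, of order $\lambda\omega_{\max}(1+m_{0n}+\sqrt{K/n})$; (b) the matrix-estimation error $(\widehat{\bH}^{-1}-\bH^{-1})\mP_n\{\bb(\Omega+r_0)\}$; (c) the nuisance-error term $\mP_n\{\bb(\bX)(\widehat{Y}^\ast-Y^\ast)\}$; and (d) nothing else. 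For (c), the double cross-fitting gives conditional independence given $\mathcal F_s^c$. I will split it into an empirical-process part, bounded by $m_{1n}$, and a conditional-bias part. The conditional bias is controlled by the mixed-bias decomposition displayed before Section \ref{sec:large-sample-theory-inference-bands}, which gives $m_{2n}$.

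Second, multiply by $\sqrt n\,\bb(\bx)^\top/\lVert\mathbb V^{1/2}\bb(\bx)\rVert_2$ and take the supremum over $\bx$. Condition (iii), $1\lesssim\inf_{\bx} E\{\Omega^2|\bX=\bx\}$, gives $\lVert\mathbb V^{1/2}\bb(\bx)\rVert_2\gtrsim\lVert\bb(\bx)\rVert_2$. So each remainder contributes at most $\sqrt n\lVert R_n\rVert_2$ uniformly in $\bx$. Under the conditions of Corollary \ref{corollary:uniform-negligible}, which are exactly what yields the bound of Theorem \ref{thm:uniform-rate} with oracle rate, the pieces are handled as follows:
\begin{itemize}
\item the nuisance terms $\sqrt n(m_{1n}+m_{2n})$ are dominated by $m_{3n}+\sqrt{\log K}(1+l_Kc_K)$;
\item the matrix-estimation term is $O_{\mP}(m_{3n})$, using the matrix Bernstein/Rudelson bound $\lVert\widehat{\bH}-\bH\rVert_{\mathrm{op}}\lesssim_{\mP}\sqrt{\xi_K^2\log K/n}$ together with truncation at level $n^{1/\nu}$ from Assumption \ref{assump:uniform-further-boundedness};
\item the ridge term vanishes by condition (i).
\end{itemize}
Condition (ii), $m_{3n}=o_{\mP}(a_n^{-1})$, then makes all remainders $o_{\mP}(a_n^{-1})$ uniformly. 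The approximation-bias term $\sqrt n\,r(\bx)/\lVert\mathbb V^{1/2}\bb(\bx)\rVert_2$ is $o(a_n^{-1})$ by condition (v).

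Third, the leading term is $\bb(\bx)^\top\bH^{-1}\mathbb G_n\{\bb(\bX)(\Omega+r_0)\}/\lVert\mathbb V^{1/2}\bb(\bx)\rVert_2$, which is a normalized sum of i.i.d. mean-zero $K$-vectors with covariance $\mathbb V$. The vectors are mean zero by the projection definition \eqref{eq:projection-target} and by $E(\Omega|\bX)=0$. I will apply Yurinskii's coupling, or the version in Theorem 4.4 of \cite{belloni2015some}. This produces $\mathcal N_K\sim\mathcal N(0,\boldsymbol I_K)$ with
\[
\lVert\mathbb G_n\{\bb(\Omega+r_0)\}-\mathbb V^{1/2}\mathcal N_K\rVert_2=o_{\mP}(a_n^{-1}),
\]
provided the third-moment quantity satisfies $K\,E\lVert\bb(\Omega+r_0)\rVert_2^3/\sqrt n\cdot a_n^3\to0$. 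Bounding $E\lVert\bb\rVert_2^3|\Omega+r_0|^3\lesssim K\xi_K(1+l_K^3c_K^3)$ using the moment bound in Assumption \ref{assump:uniform-further-boundedness} turns this into condition (iv): $a_n^6K^4\xi_K^2(1+l_K^3c_K^3)^2(\log n)^2/n=o(1)$, with the $\log n$ absorbing the coupling tail. Since $\sup_{\bx}\lVert\bb(\bx)^\top\bH^{-1}\rVert_2/\lVert\mathbb V^{1/2}\bb(\bx)\rVert_2=O(1)$, the coupling error transfers to the $t$-process uniformly in $\bx$. Combining the three steps gives the stated representation.

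The main obstacle is the uniform control of the cross-fitted nuisance remainder in (c). Its conditional bias involves the intermediate nuisance $\widehat{\bGamma}_{2s}$, which is estimated from inner folds. One must check that the quadratic mixed-bias remainder holds uniformly after projecting onto $\bb(\bx)$, rather than only in $\mathcal L^2$. I would first bound $\lVert\mP_n\{\bb\,\mathrm{Bias}\}\rVert_2$ by $\sqrt K$ times an $\mathcal L^2$ product bound, conditioning on $\mathcal F_s^c$, and then use the operator-norm reduction above.
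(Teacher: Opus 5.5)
Your plan follows the paper's plan. You linearize $\widehat{\bbeta}$ as in Lemma~\ref{lemma:uniform-linearization}, split off the ridge remainder (the paper's $\text{Rem}_{0n}$), use (iii) to pass from $\widetilde{\bb}(\bx)$ to the studentized process, and finish with the Yurinskii coupling of Theorem~4.4 in \cite{belloni2015some}, which is where (iv) comes from.

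The coupling step is fine up to two small points. Your third-moment bound needs $\sup_{\bx}E\{|\Omega|^3|\bX=\bx\}<\infty$, i.e.\ $\nu\ge 3$ rather than only $\nu>2$. For the matrix-estimation term (b), the crude reduction to $\sqrt n\lVert R_n\rVert_2$ gives $\sqrt K$ where $m_{3n}$ has $n^{1/\nu}\sqrt{\log K}$. So in general you need the conditional maximal inequality over $\bx$ from Step~1 of Lemma~4.2 in \cite{belloni2015some}, as your mention of truncation suggests.

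The genuine gap is that two of your remainder claims do not follow from the hypotheses.

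First, the ridge term. You correctly size the ridge piece of $R_n$ as $\lambda\omega_{\max}(1+m_{0n}+\sqrt{K/n})$, but you then multiply by $\sqrt n$. In $T_n(\bx)$ it therefore becomes $\sqrt n\,\lambda\omega_{\max}(1+m_{0n}+\sqrt{K/n})$, led by the bias $\sqrt n\lambda\,\bb(\bx)^\top(\widehat{\bH}+\lambda\bP)^{-1}\bP\bbeta_0/\lVert\mathbb{V}^{1/2}\bb(\bx)\rVert_2$. Condition (i) only makes $\lambda\omega_{\max}(1+m_{0n}+\sqrt{K/n})$ itself $o(1)$. For example, $\bP=\boldsymbol{I}_K$ with $\lambda\asymp n^{-1/4}$ satisfies (i). Yet the bias is then of order $n^{1/4}|\tau_{d_0d_1}(\bx)|/\sqrt K$. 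This diverges wherever $\tau_{d_0d_1}(\bx)\neq 0$, because (iv) with $\xi_K\asymp\sqrt K$ forces $K=o(\sqrt n)$. What you need is $\sqrt n\,\lambda\omega_{\max}(1+m_{0n}+\sqrt{K/n})=o(a_n^{-1})$.

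Second, the nuisance terms. The hypothesis of Corollary~\ref{corollary:uniform-negligible} only gives $\sqrt n(m_{1n}+m_{2n})\le m_{4n}=O_{\mP}(m_{3n}+\sqrt{\log K}(1+l_Kc_K))$. Since $\sqrt{\log K}$ does not even vanish, (ii) cannot make these terms $o_{\mP}(a_n^{-1})$; it governs only $m_{3n}$. You need the uniform analogue of condition (ii) in Theorem~\ref{thm:pointwise-normal}, namely $\sqrt n(m_{1n}+m_{2n})=o_{\mP}(a_n^{-1})$. The companion term $\xi_K\sqrt{\log K}(m_{1n}+m_{2n})$ is then smaller by Assumption~\ref{asp:regularityL2}(c).

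The paper's proof makes the same two leaps. In the decomposition in the proof of Lemma~\ref{lemma:pointwise-linearization}, it writes $-\lambda\bP\bbeta_0$ where $-\sqrt n\lambda\bP\bbeta_0$ belongs. It also bounds $\text{Rem}_{0n}$ with $\mP_n$ in place of $\mG_n$, which is why (i) looks sufficient there. It then asserts $\text{Rem}_{1n}=O_{\mP}(m_{3n})$ without disposing of the $\sqrt{\log K}(1+l_Kc_K)$ slack.

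So the defect lies in the stated hypotheses rather than in your route. Add the two strengthened rate conditions above together with $\nu\ge 3$, and your argument closes.
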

Theorem \ref{thm:uniform-strong-gaussian} states that the studentized statistic from the PLSS-DCDR learner can be well approximated by a mean-zero Gaussian process up to a stochastic order of $a_n^{-1}$, which forms the basis for simultaneous inference and the construction of honest uniform confidence bands, as detailed in the following. 

\subsubsection{Construction of pointwise and uniform confidence bands}

To facilitate statistical inference and construct confidence bands, we propose using the following asymptotic covariance matrix estimator: $\widehat{\mathbb{V}}=\widehat{\bH}^{-1}\mP_n\{(\widehat{Y}^\ast-\bb(\bX)^\top\widehat{\bbeta})^2\bb(\bX)\bb(\bX)^\top\}\widehat{\bH}^{-1}$.
In Supplementary Material Theorem \ref{thm:cov-matrix-rate}, we derive the rate of convergence for the covariance matrix estimator and show that, under certain regularity conditions, it is uniformly consistent with $\sup_{\bx} | {\lVert \widehat{\mathbb{V}}^{1/2}\bb(\bx)\rVert_2}/{\lVert \mathbb{V}^{1/2}\bb(\bx)\rVert_2} - 1 | 
\lessp \lVert \widehat{\mathbb{V}} - \mathbb{V} \rVert_{\mathrm{op}} = o_{\mP}(1)$.
Therefore, we construct the confidence bands at significance level $\alpha \in (0,1)$ as
\begin{align*}
    [\mathcal{L}_n,\mathcal{U}_n]:=[\widehat{\tau}_{d_0d_1}(\bx)-c_n(1-\alpha)\lVert \widehat{\mathbb{V}}^{1/2}\bb(\bx)\rVert_2/\sqrt{n},\widehat{\tau}_{d_0d_1}(\bx)+c_n(1-\alpha)\lVert \widehat{\mathbb{V}}^{1/2}\bb(\bx)\rVert_2/\sqrt{n}].
\end{align*}
To perform pointwise inference at a given covariate value $\bx = \bx_0$, the threshold $c_n(1-\alpha)$ for the pointwise confidence band is the $(1-\alpha)$-quantile of $|\mathcal{N}(0,1)|$, or equivalently, the $(1-\alpha/2)$-quantile of $\mathcal{N}(0,1)$ by Theorem \ref{thm:pointwise-normal}. In contrast, to perform simultaneous inference with uniform confidence bands, one sets $c_n(1-\alpha)$ as the $(1-\alpha)$ quantile of $\sup_{\bx} |T_n(\bx)|$. The distribution of $\sup_{\bx} |T_n(\bx)|$ is analytically untractable and must be approximated. 

There are two possible approximation methods to obtain the threshold: the {Gaussian bootstrap} \citep{chernozhukov2014gaussian} and the {multiplier (weighted) bootstrap} \citep{chernozhukov2014anti}. The Gaussian bootstrap is based on the Gaussian process approximation established in Theorem \ref{thm:uniform-strong-gaussian}. In contrast, the multiplier bootstrap \citep{praestgaard1993exchangeably} avoids direct simulation of the limiting Gaussian process, employing instead a weighted bootstrap of the data and estimators. While the latter may exhibit superior finite-sample performance in the presence of heavy-tailed distributions, we focus on the Gaussian bootstrap due to its computational efficiency. A detailed theoretical treatment and corresponding algorithms for the multiplier bootstrap are provided in Sections \ref{sec:bayesian-boot-ci-uniform} and \ref{supp;subsec:valid-weighted-boot} of the Supplementary Material. To implement the Gaussian bootstrap, we approximate the critical value using the $(1-\alpha)$-quantile of the supremum of the Gaussian process with the estimated covariance. Algorithm \ref{alg:gaussian-threshold} outlines the simulation for $c^{(1)}_n(1-\alpha)$, justified by Theorem \ref{thm:strong-gaussian-appro-suprema} in the Supplementary Material.

\begin{algorithm}[ht!] 
\caption{Computation of the threshold $c^{(1)}_n(1-\alpha)$ for uniform confidence bands.}
\label{alg:gaussian-threshold}
\begin{algorithmic}[1]
\Require Estimated covariance matrix $\widehat{\mathbb{V}}$, sample grid $\{\bx_j\}_{j=1}^J$, number of simulations $B$
\Ensure Threshold $c^{(1)}_n(1-\alpha)$

\For{$b = 1$ to $B$}
    \State Generate a standard normal random vector $\boldsymbol{Z}_b \sim \mathcal{N}(0, \boldsymbol{I}_K)$
    \State Compute $\mathfrak{S}_b = \sup_{\bx_j} | {\bb(\bx_j)^\top \widehat{\mathbb{V}}^{1/2} \boldsymbol{Z}_b}|/{\lVert \widehat{\mathbb{V}}^{1/2}\bb(\bx_j)\rVert_2} $
\EndFor
\State Compute the empirical conditional $(1-\alpha)$-quantile of $\{\mathfrak{S}_b\}_{b=1}^B$ given the data
\State \Return $c^{(1)}_n(1-\alpha)$ as this empirical conditional quantile
\end{algorithmic}
\end{algorithm}

Finally, Theorem \ref{thm:honest-cov-gauss-boot-ci} proves that the confidence bands $[\mathcal{L}_n, \mathcal{U}_n]$ based on $c_n^{(1)}(1-\alpha)$ exhibit {asymptotically honest} uniform coverage; that is, they achieve the nominal coverage probability of $1 - \alpha$ asymptotically.

\begin{theorem}
\label{thm:honest-cov-gauss-boot-ci}
Suppose that all the conditions in Theorem~\ref{thm:uniform-strong-gaussian} hold, with some $\nu \geq 4$ and $a_n = \sqrt{\log K}$. In addition, assume that $\xi_K(\log K)^2/n^{1/2-1/\nu}=o(1)$. Then $\Pr(\tau_{d_0d_1}(\bx)\in[\mathcal{L}_n,\mathcal{U}_n]\text{ for all $\bx\in\mathcal{X}$})\to1-\alpha$.
Moreover, the convergence rate for the width of the uniform confidence bands coincides with the uniform rate of convergence of $c^{(1)}_n(1-\alpha)\lVert \widehat{\mathbb{V}}^{1/2}\bb(\bx)\rVert_2/\sqrt{n}\lessp \sqrt{{\xi_K^2\log K}/{n}}$.
    
\end{theorem}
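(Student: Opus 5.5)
The plan follows the standard route of \citet{belloni2015some} and \citet{chernozhukov2014anti} for honest uniform bands, with the nuisance and regularization effects already absorbed into the earlier results. Write $\sigma_n(\bx)=\lVert \mathbb{V}^{1/2}\bb(\bx)\rVert_2$ and $\widehat\sigma_n(\bx)=\lVert \widehat{\mathbb{V}}^{1/2}\bb(\bx)\rVert_2$. The coverage event is $\{\sup_{\bx}|\widehat T_n(\bx)|\le c^{(1)}_n(1-\alpha)\}$, where $\widehat T_n(\bx)=T_n(\bx)\sigma_n(\bx)/\widehat\sigma_n(\bx)$. The proof has three parts: approximate $\sup|\widehat T_n|$ by the supremum of the Gaussian process $G_n(\bx)=\bb(\bx)^\top\mathbb{V}^{1/2}\mathcal N(0,\boldsymbol I_K)/\sigma_n(\bx)$; show that the bootstrap quantile $c^{(1)}_n$ approximates the quantile of $\sup|G_n|$; and convert these approximations into coverage using anti-concentration.

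\textbf{Step 1.} Invoke Theorem \ref{thm:uniform-strong-gaussian} with $a_n=\sqrt{\log K}$. This gives $\sup_{\bx}|T_n(\bx)-G_n(\bx)|=o_{\mP}(1/\sqrt{\log K})$ after coupling. Next, replace $\sigma_n$ by $\widehat\sigma_n$, bounding $|\widehat T_n-T_n|\le\sup_{\bx}|T_n(\bx)|\cdot\sup_{\bx}|\sigma_n/\widehat\sigma_n-1|$. We have $\sup|T_n|\lesssim_{\mP}\sqrt{\log K}$, since a Gaussian supremum over a class indexed by the unit vectors $\widetilde{\bb}(\bx)$ with Lipschitz constant $\xi^L_K$ has expectation $O(\sqrt{\log K})$ when $\log\xi^L_K=O(\log K)$. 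Supplementary Theorem \ref{thm:cov-matrix-rate} gives $\sup|\sigma_n/\widehat\sigma_n-1|\lesssim_{\mP}\lVert\widehat{\mathbb V}-\mathbb V\rVert_{\mathrm{op}}$. The needed rate is $\lVert\widehat{\mathbb V}-\mathbb V\rVert_{\mathrm{op}}=o_{\mP}(1/\log K)$, and the extra condition $\xi_K(\log K)^2/n^{1/2-1/\nu}=o(1)$ with $\nu\ge4$ is tailored to deliver it. The reason is that the dominant term of that rate is of the form $\xi_K n^{1/\nu}\sqrt{\log K}/\sqrt n$, from a matrix Rudelson/Bernstein inequality applied to truncated fourth moments of $\Omega$, plus nuisance terms already controlled by the corollary conditions. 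This yields $\sup|\widehat T_n-G_n|=o_{\mP}(1/\sqrt{\log K})$.

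\textbf{Step 2.} Conditionally on the data, $\widehat G_n(\bx)=\bb(\bx)^\top\widehat{\mathbb V}^{1/2}\boldsymbol Z/\widehat\sigma_n(\bx)$ is Gaussian. The Gaussian comparison inequality of \citet{chernozhukov2014gaussian}, bounding the difference between the distributions of the suprema by a power of the maximal covariance discrepancy times $\log K$, shows that the conditional law of $\sup|\widehat G_n|$ is close in Kolmogorov distance to the law of $\sup|G_n|$ with probability tending to one. The covariance discrepancy is again $\lesssim\lVert\widehat{\mathbb V}-\mathbb V\rVert_{\mathrm{op}}$. It follows that $c^{(1)}_n(1-\alpha)$ lies between the $(1-\alpha\mp\epsilon_n)$ quantiles of $\sup|G_n|$ for some $\epsilon_n\to0$. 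Discretizing to the grid $\{\bx_j\}$ costs $o(1/\sqrt{\log K})$ by the Lipschitz property of $\widetilde{\bb}$, provided the grid mesh is fine enough.

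\textbf{Step 3.} This step is the main obstacle: translating an $o_{\mP}(1/\sqrt{\log K})$ sup-norm perturbation into an $o(1)$ change in probabilities. I would use the anti-concentration inequality of \citet{chernozhukov2014anti}. For a separable Gaussian process with unit variances, $\sup_t\Pr(|\sup|G_n|-t|\le\epsilon)\lesssim\epsilon\,E\sup|G_n|\lesssim\epsilon\sqrt{\log K}$. Taking $\epsilon=o(1/\sqrt{\log K})$ makes this $o(1)$, and this is precisely why $a_n=\sqrt{\log K}$ is required. Combining Steps 1–3 through a sandwich argument (perturbing $t$ by $\pm\epsilon$ and the quantile level by $\pm\epsilon_n$) gives $\Pr(\sup|\widehat T_n|\le c^{(1)}_n(1-\alpha))\to1-\alpha$.

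For the width claim, use $c^{(1)}_n(1-\alpha)\lesssim E\sup|\widehat G_n|+\sqrt{\log(1/\alpha)}\lesssim\sqrt{\log K}$ by Borell–TIS. Also $\widehat\sigma_n(\bx)\le\lVert\widehat{\mathbb V}\rVert_{\mathrm{op}}^{1/2}\xi_K\lesssim_{\mP}\xi_K$, because $\lVert\mathbb V\rVert_{\mathrm{op}}$ is bounded by Assumption \ref{asp:regularityL2}(a),(d) together with the bounded approximation error. Multiplying these bounds and dividing by $\sqrt n$ gives the stated $\sqrt{\xi_K^2\log K/n}$.
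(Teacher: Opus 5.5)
Your overall route is the paper's: its proof reruns Theorem~5.6 of \citet{belloni2015some}, with Theorem~\ref{thm:cov-matrix-rate} in place of their variance-estimation lemma and Theorem~\ref{thm:strong-gaussian-appro-suprema} in place of their strong approximation of suprema. Your Steps~1--3 and the Borell--TIS width bound are a faithful unpacking of that argument. Two steps, however, are not justified as written.

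The first is in Step~1. You correctly see that replacing $\sigma_n$ by $\widehat{\sigma}_n$ requires $\lVert\widehat{\mathbb V}-\mathbb V\rVert_{\mathrm{op}}=o_{\mP}(1/\log K)$. The condition $\xi_K(\log K)^2/n^{1/2-1/\nu}=o(1)$ does deliver this for the oracle part $(n^{1/\nu}+l_Kc_K)\sqrt{\xi_K^2\log K/n}$ of the bound in Theorem~\ref{thm:cov-matrix-rate}. The remaining part, $\kappa_n^\ast(n^{1/\nu}+l_Kc_K)+\kappa_n^\dagger$, is driven by $E\{\max_{1\le i\le n}|\widehat Y_i^\ast-Y_i^\ast|\mid\widehat{\bGamma}_s\}$ and its squared analogue, i.e.\ by the worst pseudo-outcome error over the whole sample. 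The conditions of Corollary~\ref{corollary:uniform-negligible} only restrict $m_{1n},m_{2n},m_{4n}$. These are built from the $\mathcal L^2(\mP)$ errors $\eta(\cdot)$ and say nothing about such maxima. So your claim that these terms are ``already controlled by the corollary conditions'' is false. You need an explicit extra hypothesis, e.g.\ those of Theorem~\ref{thm:cov-matrix-rate} strengthened to $\kappa_n^\ast(n^{1/\nu}+l_Kc_K)+\kappa_n^\dagger=o_{\mP}(1/\log K)$. The paper's proof has the same need: it gets these terms only through Theorem~\ref{thm:cov-matrix-rate}, whose hypotheses are not in the statement and which records only $\kappa_n=o_{\mP}(1)$.

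The second problem is Step~2. The standard Gaussian comparison bound for maxima is of order $\Delta^{1/3}(\log J)^{2/3}$. With $\Delta\lesssim\lVert\widehat{\mathbb V}-\mathbb V\rVert_{\mathrm{op}}$ and a grid fine enough that $\log J\asymp\log K$, you would need $\lVert\widehat{\mathbb V}-\mathbb V\rVert_{\mathrm{op}}(\log K)^2\to0$. The stated condition only makes the leading term $o((\log K)^{-3/2})$, so this step does not close.

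Use instead the coupling you already exploit in Step~1: build $\widehat G_n$ and $G_n$ from the same $\boldsymbol Z$. The difference process
$\bx\mapsto\{\widehat{\mathbb V}^{1/2}\bb(\bx)/\widehat{\sigma}_n(\bx)-\mathbb V^{1/2}\bb(\bx)/\sigma_n(\bx)\}^\top\boldsymbol Z$
has two properties:
\begin{enumerate}
\item Its standard deviation is $\lesssim\lVert\widehat{\mathbb V}-\mathbb V\rVert_{\mathrm{op}}$ uniformly in $\bx$, because $\mathbb V$ is well conditioned under condition (iii) of Theorem~\ref{thm:uniform-strong-gaussian}.
\item Its Lipschitz constant in $\bx$ is $\lesssim\xi_K^L$, with $\log\xi_K^L=O(\log K)$.
\end{enumerate}
Dudley's entropy bound then gives $\sup_{\bx}|\widehat G_n(\bx)-G_n(\bx)|=o_{\mP}(1/\sqrt{\log K})$ as soon as $\lVert\widehat{\mathbb V}-\mathbb V\rVert_{\mathrm{op}}=o_{\mP}(1/\log K)$, which is the same requirement as in Step~1. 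Conditionally on the data, this and the anti-concentration bound of Step~3 sandwich $c^{(1)}_n(1-\alpha)$ between the $(1-\alpha\mp o(1))$-quantiles of $\sup_{\bx}|G_n(\bx)|$, which is what the coverage argument needs.

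With these two repairs the rest of your proposal, including the width bound, goes through.
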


\section{Simulation experiments}\label{sec:sim_exp_main}

To demonstrate the empirical performance of our proposed approach, we simulate a sample of $n=3000$ individuals to evaluate the CPCE conditional on a scalar covariate $X\sim\text{Uniform}[-1,1]$. 
Briefly, the data-generating process incorporates non-linear dependence on the full covariate vector 
via a mixture of trigonometric functions and polynomials. We consider four scenarios: a non-monotonic data-generating process fitted assuming non-monotonicity or monotonicity, and a monotonic data-generating process fitted assuming monotonicity or non-monotonicity. We employ the Super Learner for first- and second-stage nuisance estimation and evaluate nine smoothers for the final-stage regression. These smoothers consist of factorial combinations of three knot choices ($\{5, 10, 20\}$) and three penalty tuning strategies: (i) unpenalized B-splines, (ii) penalized splines (P-splines) selected via GCV using the \texttt{mgcv} package, and (iii) an undersmoothed P-spline variant with $\gamma=0.4$ ($\gamma=1$ corresponds to the standard GCV; $\gamma<1$ indicates the undersmoothed variant). We evaluate uniform inference performance based on the root mean integrated squared error (MISE), empirical uniform coverage probability, and average integrated uniform band width. We assess pointwise inference performance by computing the bias, Monte Carlo standard errors (MCSD), average empirical standard errors (AESE), and empirical pointwise coverage probability at four representative points: $\{-0.5, -0.25, 0.25, 0.5\}$. Supplementary Material Section \ref{sec:supp-simulation} details the complete setup and results of the simulation experiments. We briefly comment on several key findings below.

In general, when non-monotonicity is present, the PLSS-DCDR learner fitted assuming monotonicity will generally lead to biased inference. Moreover, the CLATE estimates become extremely unstable and invalid, as evidenced by extremely large MISE, uniform band widths, bias, and pointwise AESEs. This instability arises because the principal score derived under monotonicity, $\widehat{p}_1(\mathbf{C})-\widehat{p}_0(\mathbf{C})$, can become extremely small or even negative when the data exhibit non-monotonicity. On the other hand, when monotonicity holds but the PLSS-DCDR learner is fitted assuming a large constant odds ratio (e.g., $\theta=5$), both pointwise and uniform inference remain approximately valid, and the impact of such misspecification is not significant. To assess the impact of violations of the exclusion restriction and monotonicity, we further compute the ratio estimator of \cite{takatsu2025doubly} for estimating CLATE and find that their approach is generally biased. 


\begin{figure}[ht!]
    \centering
    \includegraphics[width=\linewidth]{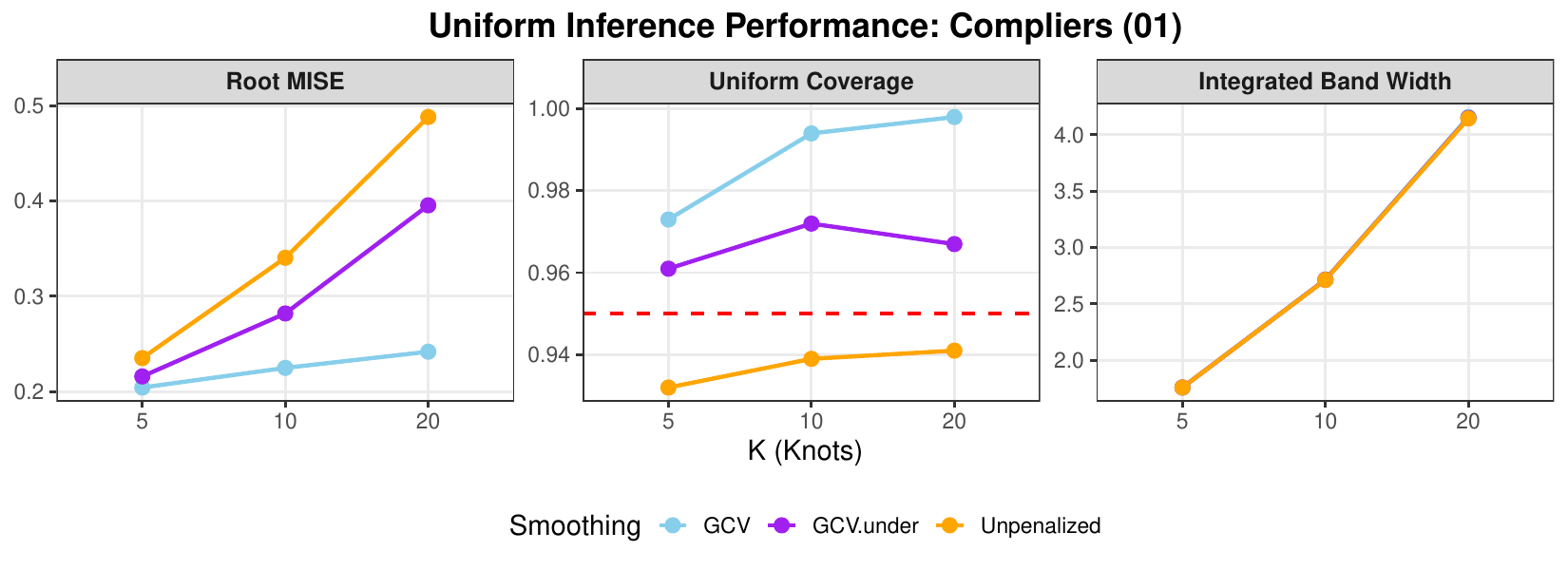}
    \caption{Simulation results presenting the root mean integrated squared error (MISE), uniform empirical coverage probability, and average integrated width of the uniform confidence bands for the proposed PLSS-DCDR learner of the CPCE among compliers when non-monotonicity holds. The basis functions are constructed using P-splines with $K \in \{5, 10, 20\}$ knots. We employ three smoothers: a penalized estimator with GCV-selected smoothing, an undersmoothed GCV variant, and an unpenalized least squares series estimator.}
    \label{fig:sim-results-01}
\end{figure}

\begin{table}[ht!]
\centering
\caption{ Simulation results presenting pointwise inference metrics, including bias, Monte Carlo standard deviation (MCSD), average estimated standard errors (AESE), and empirical coverage probabilities (CP), for the proposed PLSS-DCDR learner for the complier stratum ($01$) at four representative points $X=x_0\in\{-0.50,-0.25,0.25,0.50\}$ when non-monotonicity holds. The basis functions are constructed using P-splines with $K \in \{5, 10, 20\}$ knots. We employ three smoothers: a penalized estimator with GCV-selected smoothing, an undersmoothed GCV variant, and an unpenalized least squares series estimator.}
\label{tab:pointwise-01-compare}
\resizebox{\textwidth}{!}{%
\begin{tabular}{lrr cccc cccc cccc}
\toprule
 & & \multicolumn{4}{c}{GCV} & \multicolumn{4}{c}{GCV.under} & \multicolumn{4}{c}{Unpenalized} \\
\cmidrule(lr){3-6} \cmidrule(lr){7-10} \cmidrule(lr){11-14}
$K$ & $x_0$ & BIAS & MCSD & AESE & CP(\%) & BIAS & MCSD & AESE & CP(\%) & BIAS & MCSD & AESE & CP(\%) \\
\hline
5 & -0.50 & 0.02 & 0.10 & 0.13 & 98.8 & 0.01 & 0.11 & 0.13 & 97.7 & 0.01 & 0.13 & 0.13 & 96.0 \\ 
  & -0.25 & 0.01 & 0.12 & 0.12 & 97.0 & 0.00 & 0.12 & 0.12 & 96.5 & -0.00 & 0.13 & 0.12 & 95.5 \\ 
  & 0.25 & 0.02 & 0.12 & 0.13 & 96.4 & 0.00 & 0.12 & 0.13 & 95.7 & -0.01 & 0.13 & 0.13 & 94.1 \\ 
  & 0.50 & 0.03 & 0.17 & 0.16 & 98.3 & 0.01 & 0.18 & 0.16 & 97.4 & 0.00 & 0.20 & 0.16 & 94.6 \\ 
[0.5ex]
10 & -0.50 & 0.02 & 0.11 & 0.17 & 99.3 & 0.01 & 0.14 & 0.17 & 97.1 & 0.01 & 0.17 & 0.17 & 95.2 \\ 
  & -0.25 & 0.02 & 0.11 & 0.17 & 99.3 & 0.01 & 0.15 & 0.17 & 97.1 & 0.00 & 0.17 & 0.17 & 95.7 \\ 
  & 0.25 & 0.02 & 0.13 & 0.18 & 99.5 & -0.01 & 0.16 & 0.18 & 96.7 & -0.01 & 0.18 & 0.18 & 94.9 \\ 
  & 0.50 & 0.02 & 0.14 & 0.20 & 98.8 & 0.01 & 0.18 & 0.20 & 96.5 & -0.00 & 0.20 & 0.20 & 95.2 \\ 
[0.5ex]
20 & -0.50 & 0.02 & 0.12 & 0.26 & 99.8 & 0.00 & 0.22 & 0.26 & 97.6 & 0.00 & 0.26 & 0.26 & 95.4 \\ 
  & -0.25 & 0.02 & 0.13 & 0.25 & 99.6 & -0.00 & 0.21 & 0.25 & 96.9 & -0.00 & 0.25 & 0.25 & 94.7 \\ 
  & 0.25 & 0.02 & 0.14 & 0.28 & 99.7 & -0.02 & 0.24 & 0.28 & 96.8 & -0.02 & 0.28 & 0.28 & 94.2 \\ 
  & 0.50 & 0.02 & 0.16 & 0.31 & 99.6 & -0.00 & 0.27 & 0.31 & 96.8 & -0.01 & 0.31 & 0.31 & 94.5 \\ 
\bottomrule
\end{tabular}%
}
\end{table}

Next, we comment on the hyperparameter tuning of the proposed PLSS-DCDR learner. Figure \ref{fig:sim-results-01} and Table \ref{tab:pointwise-01-compare} present the simulation results for uniform inference and pointwise inference within the complier stratum, respectively, when non-monotonicity holds. In general, the proposed PLSS-DCDR learner, implemented with a suitable choice of knots and penalty, provides valid uniform and pointwise inference characterized by low MISE and bias, and nominal uniform and pointwise coverage. Several additional observations emerge. First, the P-spline basis with the penalty selected via GCV yields the lowest MISE, whereas the unpenalized method yields the highest. Second, bias is negligible across all tuning configurations considered, suggesting that the impact of tuning on pointwise bias appears to be limited. Third, regarding uniform and pointwise coverage probabilities, the GCV-selected P-spline smoother tends to be conservative, whereas the unpenalized method tends to be slightly anti-conservative. Hence, there exists a trade-off between optimal estimation accuracy (minimum MISE) and honest valid inference, as undersmoothing is typically required to ensure the latter. Fourth, the average integrated width of the uniform bands is relatively insensitive to the choice of smoother but is notably sensitive to the number of knots. As the number of knots increases, the MISE rises due to overfitting, and the uniform band width also increases, leading to more conservative inference. In conclusion, to achieve valid uniform and pointwise inference without incurring a substantial penalty in estimation accuracy (MISE), we recommend using the undersmoothed P-spline variant with a moderate number of knots and appropriate undersmoothing, such as $\gamma=0.4$.

\section{Data example: analysis of the ARDS trial}
\label{s:Data_Example}
We reanalyze data from the Acute Respiratory Distress Syndrome (ARDS) Network randomized trial \citep{ARDS04}, to illustrate the use of the PLSS-DCDR learner. This study involved $549$ mechanically ventilated patients with acute lung injury and ARDS, randomized to receive either lower ($Z=1$) or higher ($Z=0$) positive end-expiratory pressure (PEEP). During the trial, $68$ patients in the lower PEEP group and $76$ patients in the higher PEEP group died. Our objective is to assess the CSACE of PEEP levels on the number of days to returning home (DTRH) within $60$ days. This outcome is ambiguously defined if death occurs prior to the $60$-day cutoff. Similar to \cite{amato2015driving}, we assess the CSACE conditional on driving pressure, a scalar covariate that is arguably the strongest predictor of treatment benefit. To make the principal ignorability assumption more plausible, we conduct a comprehensive analysis adjusting for $29$ baseline covariates, including age, gender, ethnicity, and other clinical variables. In this trial, monotonicity is questionable, as neither PEEP level represents the definitive standard of care, and their relative impact on survival remains controversial. This doubt is further substantiated by the fact that, following \citet{tong2025semiparametric}, principal score estimates for compliers based on identification formulas assuming monotonicity yield negative values for nearly half of the study population, which contradicts the fact that principal scores represent probabilities. Therefore, existing approaches are not applicable to this setting, as IV methods such as that of \cite{takatsu2025doubly} assume a null CSACE, and other principal-stratification-based methods, such as \cite{chen2024bayesian}, typically assume monotonicity.
To assess sensitivity to violations of monotonicity, we consider six constant values of $\theta \in \{0.2, 0.5, 1, 2, 5, \infty\}$ for illustrative purposes, where $\theta=\infty$ corresponds to the monotonicity assumption $D(1) \geq D(0)$.

\begin{figure}[ht!]
    \centering
    \includegraphics[width=\linewidth]{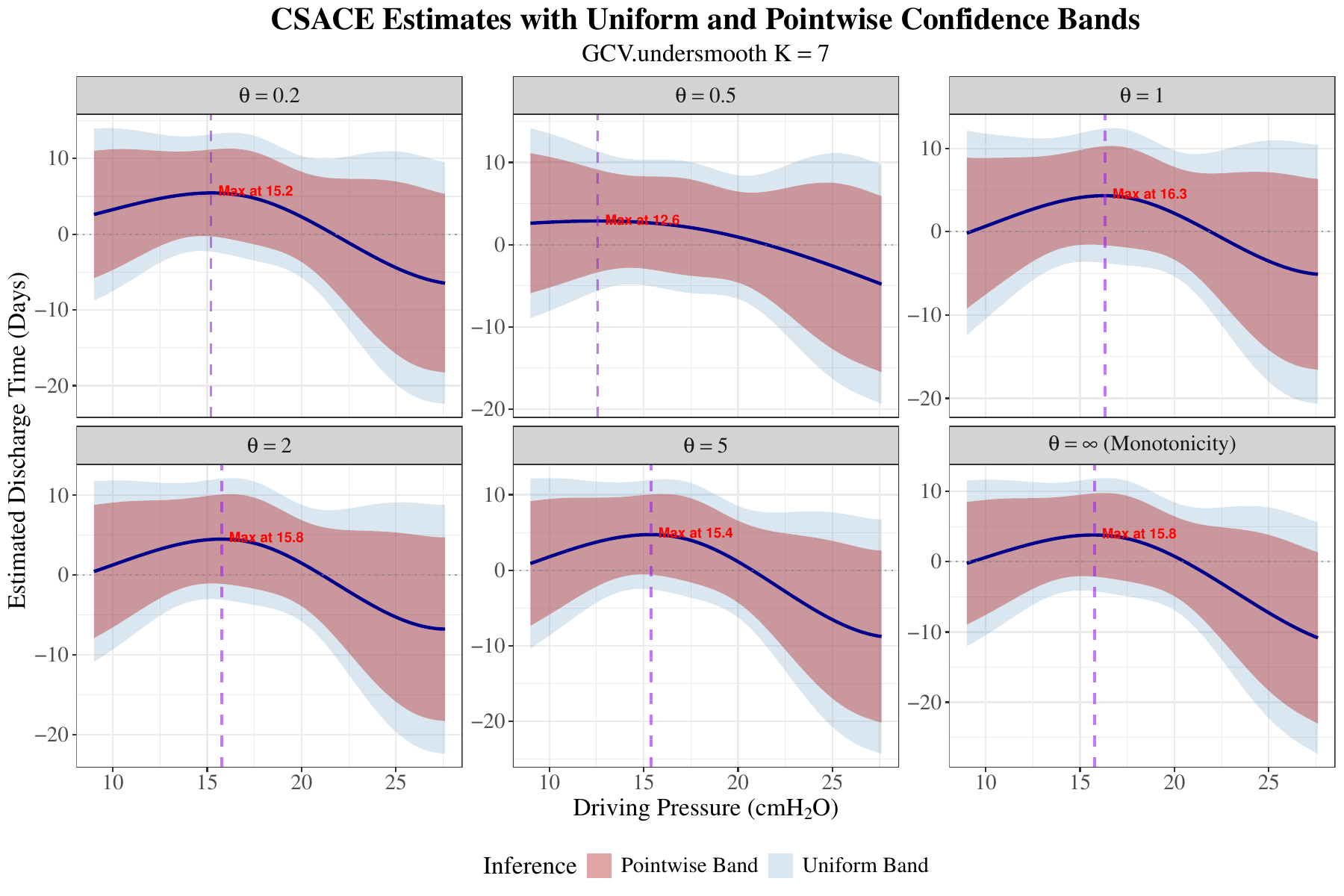}
    \caption{The estimated CSACE curves and associated 95\% pointwise and uniform confidence bands for the PLSS-DCDR learner, utilizing P-spline smoothing with $K=7$ knots, are presented in the ARDS study across six values of the conditional odds ratio $\theta\in\{0.2,0.5,1,2,5,\infty\}$ (where $\theta=\infty$ corresponds to monotonicity). The vertical purple dashed lines indicate the maximizer or turning point of the estimated CSACE curves.}
    \label{fig:ARDS-CSACE-7}
\end{figure}

Figure \ref{fig:ARDS-CSACE-7} shows the estimated CSACE curves and associated 95\% pointwise and uniform confidence bands, utilizing the PLSS-DCDR learner with P-spline smoothing ($K=7$ knots), consistent with the smoother specification used in the simulation study. We first comment on the impact of the odds ratio on the CSACE estimates. 
The harmful CSACE of higher PEEP levels on DTRH for those with very high driving pressure increases significantly as $\theta$ increases, which aligns with the interpretation that increasing $\theta$ roughly implies that higher PEEP levels tend to be stochastically harmful to survival. For example, the CSACE for those with a driving pressure of $27$ cmH$_2$O decreases from approximately $-5.5$ days to $-10$ days as $\theta$ increases from $0.2$ to $\infty$ (monotonicity).
Moreover, several additional interesting clinical observations robust to the odds ratio specification emerge. First, the estimated benefit of higher PEEP is non-monotonic, consistently peaking at driving pressures between approximately $12$ and $16~\text{cmH}_2\text{O}$. This peak aligns remarkably well with the risk threshold (approximately $15~\text{cmH}_2\text{O}$) identified by \cite{amato2015driving}, suggesting a ``sweet spot'' of recruitability where the lungs are sufficiently stiff to benefit from stabilization via higher PEEP, yet compliant enough to avoid injury. Second, as driving pressure exceeds $20~\text{cmH}_2\text{O}$---indicating significantly reduced respiratory-system compliance and small functional lung size---the treatment effect declines sharply and becomes negative. This supports the physiological hypothesis that in patients with high baseline cyclic strain, higher PEEP fails to recruit collapsed lung units and instead causes harmful overdistension. Third, the diminished benefit observed at lower driving pressures ($<10~\text{cmH}_2\text{O}$) suggests that patients with relatively healthy lung compliance derive little treatment benefit from higher PEEP. This reinforces the view that higher PEEP is not a universal intervention, but rather one that is protective primarily when it ameliorates high driving pressures. Finally, we comment on certain statistical features. First, the confidence bands 
widen at the margins due to the well-known boundary effect resulting from data sparsity. Second, a similar analysis using a higher number of knots ($K=12$) is presented in Supplementary Material Figure \ref{fig:ARDS-CSACE-12}, where the uniform bands exhibit increased width (consistent with the simulation results).

\section{Conclusion}\label{sec:conclusion}

We seek to advance the principal stratification methodology by providing a theoretically grounded framework for learning within–principal-stratum treatment effect heterogeneity under substantially weaker structural assumptions than those typically imposed for similar purposes. By embedding principal ignorability within an odds ratio sensitivity formulation, our approach moves beyond monotonicity and enables identification and efficient learning of conditional principal causal effects across all strata, including settings where non-monotonicity is scientifically plausible. From a methodological perspective, the proposed DCDR learner resolves a fundamental challenge in orthogonal learning with nested nuisance structures, yielding oracle-efficient estimation and valid pointwise and uniform inference for complex conditional principal causal effect estimands. While the validity of these results hinges on the principal ignorability assumption, in many biomedical applications including our ARDS trial reanalysis, this assumption is considered plausible through rich baseline adjustment, incorporating an extensive set of prognostic and mechanistic covariates measured prior to treatment assignment. At the same time, sensitivity analysis with respect to departures from principal ignorability assumption for assessing treatment effect heterogeneity within principal strata remains an important direction for future research. 

{Although our presentation has focused primarily on CLATE ($\tau_{01}(\bX)$) and CSACE ($\tau_{11}(\bX)$), the proposed framework is not limited to these two estimands. In general, the scientific interpretation of $\tau_{d_0d_1}(\bX)$ depends on the role of the intermediate variable in the application. For example, in mediation analysis, both $\tau_{11}(\bx)$ and $\tau_{00}(\bx)$ contribute to the assessment of direct effects because treatment does not change the mediator \citep{gallop2009mediation}. Similarly, $\tau_{10}(\bx)$ and $\tau_{01}(\bx)$ can both be scientifically meaningful when monotonicity is not imposed, and may arise in applications involving principal surrogate evaluation under causal sufficiency \citep{gilbert2008evaluating}. Thus, while CLATE and CSACE provide two canonical examples, the CPCE framework applies more broadly to heterogeneous causal effects within any scientifically interpretable principal stratum.}

\bigskip
\begin{center}
{\large\bf SUPPLEMENTARY MATERIAL}
\end{center}
\setcounter{section}{0}
\renewcommand{\thesection}{S\arabic{section}}

\begin{spacing}{0.9} 

\section{Summary}
\label{sec:intro_supple}
This supplementary material is organized as follows. 

Section \ref{sec:notation-summary} summarizes all necessary technical notation used in the main manuscript and the supplementary material for ease of reference. 

Section \ref{sec:bayesian-boot-ci-uniform} presents additional details regarding the construction of uniform confidence bands based on the Bayesian bootstrap.

Section \ref{sec:derivation-identification-CPCE} derives the nonparametric identification formulas for the CPCE under the proposed structural assumptions.

Section \ref{sec:derivation-EIF} presents the derivation of the efficient influence functions (EIFs) for the smoothed CPCE parameters $\widetilde{\tau}_{d_0d_1} = E\{\tau_{d_0d_1}(\bX)\}$.

Section \ref{sec:additional-tech-results} provides additional technical results referenced in the main manuscript, including covariance matrix estimation, linearization and strong approximation for the multiplier bootstrap, and strong approximation of suprema for the Gaussian bootstrap. 

Section~\ref{sec:proofs-all} presents the proofs of all technical results in the main manuscript and in Section~\ref{sec:additional-tech-results} of the Supplementary Material. 

Section \ref{sec:mono-supp-simulation} presents results assuming monotonicity ($D(1) \geq D(0)$), recovered as a limiting case where $\theta \to \infty$ subject to the constraint $p_1(\mathbf{C}) > p_0(\mathbf{C})$, or counterfactual intermediate independence ($D(0) \perp D(1) \mid \bC$), which corresponds to the special case $\theta(\bC)=1$.


Section~\ref{sec:supp-simulation} presents additional details of the simulation experiments, including the simulation design and supplementary results.

Section~\ref{sec:additional-table-figures} includes additional tables and figures.

\section{Notation}\label{sec:notation-summary}
Let $\| \bullet \|_{\mathrm{op}}$ denote the operator norm for a matrix, $\| \bullet \|_p$ the usual Euclidean $\ell^p$ norm for a vector, and $\| f \|_{\mP,q} = \left( \int |f(\bx)|^p d\mP(\bx) \right)^{1/p}$ the $L^p(\mP)$ norm. In particular, $\| f \|_{\mP,\infty}=\sup_{\bx} |f(\bx)|$ is the uniform norm. Let $\omega_k(\bullet)$ be the $k$th eigenvalue of a generic matrix $\bullet\in\mathbb{R}^{K\times K}$ such that $\omega_1\leq\ldots\leq\omega_K$. We say $a_n \lesssim b_n$ if there exists a constant $c$ independent of $n$ such that $a_n \leq c\,b_n$ and $a_n\asymp b_n$ if $a_n\lesssim b_n$ and $b_n\lesssim a_n$ hold simultaneously. We use $X =_d Y$ to denote that two random variables $X$ and $Y$ have the same distribution, and write $X =_d Y + o_{\mP}(a_n)$ if $(X - Y)/a_n$ converges to zero in probability. We also use $a_n \lessp b_n$ to denote that the stochastic sequence $a_n$ converges no faster than $b_n$, i.e., $a_n = O_{\mP}(b_n)$. The notations $a_n \lessp b_n$ and $a_n = O_{\mP}(b_n)$ are used interchangeably whenever convenient. We say that an event $A_n$ holds $\text{wp} \to 1$ if $\Pr(A_n) \to 1$. Consistent with standard notation in the empirical process literature, we occasionally employ $\mathbb{P}$ to denote the expectation operator $E$.

\section{Constructing uniform confidence bands using the alternative Bayesian bootstrap}\label{sec:bayesian-boot-ci-uniform}
The multiplier bootstrap \citep{praestgaard1993exchangeably} does not directly simulate the approximating Gaussian process. Instead, it performs a weighted bootstrap on the data and estimators. When the weights (or multipliers) $\{\mathfrak{w}\}_i^n$ follow a multinomial distribution $\text{Multinomial}(n,(1/n,\ldots,1/n))$, the procedure reduces to the standard nonparametric bootstrap with resampling with replacement. When the weights follow the standard exponential distribution, the resulting procedure is commonly referred to as the \emph{Bayesian bootstrap}. Following \cite{belloni2015some}, we adopt the Bayesian bootstrap. In brief, inference is conducted by approximating $c_n(1-\alpha)$ using the bootstrap analogues of the target $t$-process, given by $T^b_n(\bx)=\sqrt{n}{(\widehat{\tau}^b_{d_0d_1}(\bx)-\widehat{\tau}_{d_0d_1}(\bx))}/{\| \mathbb{V}^{1/2}\bb(\bx)\|_2}$, where $\widehat{\tau}^b_{d_0d_1}(\bx)=\bb(\bx)^\top\widehat{\bbeta}^b$ and
\begin{align*}
        \widehat{\bbeta}^b:=\arg\min_{\bbeta} \frac{1}{2}\left\{\mP_n\{\mathfrak{w}(\widehat{Y}^\ast-\bb(\bX)^\top\bbeta)^2\}+\lambda\bbeta^\top \bP\bbeta\right\}.
\end{align*}
In Theorem~\ref{thm:uniform-strong-gaussian-boot} of the Supplementary Material, we demonstrate that $T_n^b(\bx)$ admits a strong approximation by the same mean-zero Gaussian process as in Theorem~\ref{thm:uniform-strong-gaussian}, thereby justifying simultaneous inference based on the multiplier bootstrap. The computation of the threshold $c^{(2)}_n(1-\alpha)$ via the Bayesian bootstrap is detailed in Algorithm \ref{alg:multiplier-threshold} below.

\begin{algorithm}[H]
\caption{Computation of the threshold $c^{(2)}_n(1-\alpha)$ for uniform confidence bands based on Bayesian bootstrap.}
\label{alg:multiplier-threshold}
\begin{algorithmic}[1]
\Require Estimated covariance matrix $\widehat{\mathbb{V}}$, sample grid $\{\bx_j\}_{j=1}^J$, number of simulations $B$
\Ensure Threshold $c^{(2)}_n(1-\alpha)$

\For{$b = 1$ to $B$}
    \State Generate $n$ independent standard exponential random weights $\mathfrak{w}^b_i \sim \text{exp}(1)$
    \State Compute the weighted bootstrap draw $\widehat{\tau}^b_{d_0d_1}(\bx)$ 
    \State Compute 
    \[
        \mathfrak{S}_b = \sup_{\bx_j} \left| \sqrt{n}\frac{\widehat{\tau}^b_{d_0d_1}(\bx_j)-\widehat{\tau}_{d_0d_1}(\bx_j)}{\| \widehat{\mathbb{V}}^{1/2}\bb(\bx_j)\|_2} \right|
    \]
\EndFor
\State Compute the empirical conditional $(1-\alpha)$-quantile of $\{\mathfrak{S}_b\}_{b=1}^B$ given the data
\State \Return $c^{(2)}_n(1-\alpha)$ as this empirical conditional quantile
\end{algorithmic}
\end{algorithm}
We conjecture that confidence bands based on $c_n^{(2)}(1-\alpha)$ are also asymptotically honest, analogous to Theorem \ref{thm:honest-cov-gauss-boot-ci}, as the key conditions in \cite{chernozhukov2014anti} are verified, although the remaining conditions have yet to be checked.

\section{Derivation of identification formulas for the CPCEs}\label{sec:derivation-identification-CPCE}
\begin{proof}
For illustration, we consider the case $d_0 = d_1 = 1$; the results for the remaining three principal strata follow analogously. We first let $\bX=\bC$. Then, by the law of total expectation (LOTE), it follows that
\begin{align}
    E\{Y(1)|D(1)=1,\bC\}=&\sum_{d_0=0,1}\Pr(D(0)=d_0|D(1)=1,\bC)E\{Y(1)|D(0)=d_0,D(1)=1,\bC\}\nonumber\\
    =&E\{Y(1)|D(0)=1,D(1)=1,\bC\}~~~(\text{Assumption \ref{assump:PI_weak}.a}).\label{eq:proof-ident-1}
\end{align}
Similar arguments show that 
\begin{align}
    E\{Y(0)|D(0)=1,\bC\}=&E\{Y(0)|D(0)=1,D(1)=1,\bC\}.\label{eq:proof-ident-2}
\end{align}
By Treatment Ignorability, SUTVA, and Equation \eqref{eq:proof-ident-1}, we have that
\begin{align*}
E\{Y(1)|D(0)=1,D(1)=1,\bC\}=&E\{Y(1)|D(1)=1,\bC\}\\
=&E\{Y(1)|D(1)=1,Z=1,\bC\}~~~(\text{Treatment Ignorability})\\
    =&E\{Y|Z=1,D=1,\bC\}=m_{11}(\bC)~~~(\text{SUTVA}).
\end{align*}
Similarly, by SUTVA, Treatment Ignorability, Assumption \ref{assump:PI_weak}.c, and Equation \eqref{eq:proof-ident-2}, we have that $E\{Y(0)|D(0)=1,D(1)=1,\bC\}=m_{01}(\bC)$. When $\bX$ is a strict subset of $\bC$, we have that
\begin{align*}
    &E\{Y(1)|D(0)=1,D(1)=1,\bX\}\\
    =&E\{E\{Y(1)|D(0)=1,D(1)=1,\bC\}|D(0)=1,D(1)=1,\bX\}\\
    =&E\{m_{11}(\bC)|G=11,\bX\}\\
    =&E\left\{\frac{\mathcal{I}(G=11)m_{11}(\bC)}{\Pr(G=11|\bX)}|\bX\right\}\\
    =&\frac{E\{e_{11}(\bC)m_{11}(\bC)|\bX\}}{E\{e_{11}(\bC)|\bX\}},
\end{align*}
where the last two equalities follow from the LOTE. Analogous arguments demonstrate that
\begin{align*}
    E\{Y(0)|D(0)=1,D(1)=1,\bX\}=\frac{E\{e_{11}(\bC)m_{01}(\bC)|\bX\}}{E\{e_{11}(\bC)|\bX\}}.
\end{align*}
\end{proof}

\section{Derivation of EIF for the smoothed CPCE parameters}\label{sec:derivation-EIF}
\subsection{Preliminaries}
Following \cite{kennedy2022semiparametric}, we derive the EIF under a nonparametric model that imposes no restrictions on the joint density of the observed data vector $\mO=(Y,D,Z,\bC^\top)^\top$. To facilitate the derivations, let $f(\mO)$ denote the joint density function of $\mO$, which we factorize as follows:
\begin{equation*}
f(\mO)=f(\bX)f(\widetilde{\bX}|\bX)f(Z|\bC)f(D|Z,\bC)f(Y|D,Z,\bC).
\end{equation*}
Following Theorems 4.4 and 4.5 of \cite{tsiatis2006semiparametric}, the tangent space $\mathcal{F}$ for this nonparametric model is the entire Hilbert space $\mathcal{H}:=\{h(\mO):E\{h(\mO)\}=0, E\{h^2(\mO)\}<\infty\}$. Furthermore, $\mathcal{F}$ admits the orthogonal decomposition
\begin{equation*}
 \mathcal{F}=\mathcal{F}_1\oplus\mathcal{F}_2\oplus\mathcal{F}_3\oplus\mathcal{F}_4\oplus\mathcal{F}_5,
\end{equation*}
where $\mathcal{F}_1,\dots,\mathcal{F}_5$ are mutually orthogonal subspaces defined by
\begin{align*}
  &\mathcal{F}_1=\{h(\bX)\in\mathcal{H}:E\{h(\bX)\}=0\},\\
  &\mathcal{F}_2=\{h(\bC)\in\mathcal{H}:E\{h(\bC)|\bX\}=0\},\\
  &\mathcal{F}_3=\{h(Z,\bC)\in\mathcal{H}:E\{h(Z,\bC)|\bC\}=0\},\\
  &\mathcal{F}_4=\{h(D,Z,\bC)\in\mathcal{H}:E\{h(D,Z,\bC)|Z,\bC\}=0\},\\
  &\mathcal{F}_5=\{h(\mO)\in\mathcal{H}:E\{h(\mO)|D,Z,\bC\}=0\}.
\end{align*}
We consider a parametric submodel for the distribution of $\mO$, $\mathcal{P}=\{f_{\upsilon}(\mO):\upsilon\in(-\epsilon_s,\epsilon_s)\}$ for some $\epsilon_s>0$, where $f_{0}=f$ denotes the true density and $E_{0}=E$ denotes the expectation under $f$. The score vector $S(\mO)$ admits the following orthogonal decomposition:
\begin{equation*}
S(\mO)=S(\bX)+S(\widetilde{\bX}|\bX)+S(Z|\bC)+S(D|Z,\bC)+S(Y|D,Z,\bC),
\end{equation*}
where the components are the scores for the corresponding conditional densities:
\begin{align*}
  S(\mO) &= \frac{\partial \log f_{\upsilon}(\mO)}{\partial \upsilon}\Big|_{\upsilon=0}, \\
  S(Y|D,Z,\bC) &= \frac{\partial \log f_{\upsilon}(Y|D,Z,\bC)}{\partial \upsilon}\Big|_{\upsilon=0}, \\
  S(D|Z,\bC) &= \frac{\partial \log f_{\upsilon}(D|Z,\bC)}{\partial \upsilon}\Big|_{\upsilon=0}, \\
  S(Z|\bC) &= \frac{\partial \log f_{\upsilon}(Z|\bC)}{\partial \upsilon}\Big|_{\upsilon=0}, \\
  S(\widetilde{\bX}|\bX) &= \frac{\partial \log f_{\upsilon}(\widetilde{\bX}|\bX)}{\partial \upsilon}\Big|_{\upsilon=0},\\
  S(\bX) &=\frac{\partial \log f_{\upsilon}(\bX)}{\partial \upsilon}\Big|_{\upsilon=0}.
\end{align*}
We define $\widetilde{\tau}_{d_0d_1,\upsilon}$ as the value of $\widetilde{\tau}_{d_0d_1}$ under the submodel, with the true value attained at $\upsilon=0$, i.e., $\widetilde{\tau}_{d_0d_1,0}=\widetilde{\tau}_{d_0d_1}$. Theorem 3.2 of \cite{tsiatis2006semiparametric} implies that the influence function $\varphi_{d_0d_1}(\mO)\in\mathcal{H}$ for the submodel must satisfy
\begin{equation}\label{eq:influfuncchara}
  E\{\varphi_{d_0d_1}(\mO)S(\mO)\}=\frac{\partial \widetilde{\tau}_{d_0d_1,\upsilon}}{\partial {\upsilon}}\Big|_{{\upsilon}=0}.
\end{equation}
We adopt the notation $\dot{\tau}_{d_0d_1,\upsilon}|_{\upsilon=0}$ for the pathwise derivative ${\partial \widetilde{\tau}_{d_0d_1,\upsilon}}/{\partial {\upsilon}}|_{{\upsilon}=0}$, and apply this convention to all pathwise derivatives with respect to $\upsilon$. Following \cite{kennedy2022semiparametric}, there is at most one solution to the differential equation \eqref{eq:influfuncchara} under the nonparametric model. Because the tangent space for this model is the entire Hilbert space $\mathcal{H}$, the EIF $\varphi_{d_0d_1}(\mO)$ is the unique solution to \eqref{eq:influfuncchara} \cite[Theorem 4.3]{tsiatis2006semiparametric}.

\subsection{Some useful lemmas}
We begin by presenting three lemmas that simplify the subsequent derivation of the EIF for $\widetilde{\tau}_{d_0d_1}=E\{\tau_{d_0d_1}(\bX)\}$. The proofs are omitted as they follow from existing literature; the first two lemmas are adapted from Section S6.1.1 of the Supplementary Material in \cite{JiangJRSSB2022}.
\begin{lemma}\label{lemma:eif-pz(c)}
 Let $\Xi_z(\bC)=E\{F(Y,D,\bC)|Z=z,\bC\},z\in\{0,1\}$, where $F$ is a random function of $(Y,D,\bC)$. Then
 \begin{align*}
     \dot{\Xi}_{z,\upsilon}(\bC)|_{\upsilon=0}=E\{(\psi_{F(Y,D,\bC),z}-\Xi_z(\bC))S(Y,D|Z,\bC)|\bC\}.
 \end{align*}
 In particular, setting $F(Y,D,\bC)=D$ yields 
 \begin{align*}
          \dot{p}_{z,\upsilon}(\bC)|_{\upsilon=0}=E\{(\psi_{D,z}-p_z(\bC))S(Y,D|Z,\bC)|\bC\}.
 \end{align*}
\end{lemma}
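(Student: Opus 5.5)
Write $F=F(Y,D,\bC)$. The plan is to compute the pathwise derivative of $\Xi_{z,\upsilon}(\bC)=E_\upsilon\{F\mid Z=z,\bC\}$ directly, holding $\bC$ fixed. We write the conditional expectation as an integral against the conditional density $f_\upsilon(y,d\mid z,\bC)$. This density factorizes as $f_\upsilon(d\mid z,\bC)f_\upsilon(y\mid d,z,\bC)$, so its score is $S(Y,D\mid Z,\bC)=S(D\mid Z,\bC)+S(Y\mid D,Z,\bC)$.

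Differentiating under the integral sign, justified by the usual regularity of the submodel (bounded scores or dominated convergence), gives
\begin{align*}
\dot{\Xi}_{z,\upsilon}(\bC)|_{\upsilon=0}=E\{F\,S(Y,D\mid Z,\bC)\mid Z=z,\bC\}.
\end{align*}
Because $E\{S(Y,D\mid Z,\bC)\mid Z=z,\bC\}=0$, this equals $E[\{F-\Xi_z(\bC)\}S(Y,D\mid Z,\bC)\mid Z=z,\bC]$.

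The remaining step converts this conditioning on $\{Z=z,\bC\}$ into conditioning on $\bC$ alone, using the inverse-probability weight in $\psi_{F,z}$. By definition,
\begin{align*}
\psi_{F,z}-\Xi_z(\bC)=\frac{\mathcal{I}(Z=z)}{\pi(\bC)^{z}\{1-\pi(\bC)\}^{1-z}}\{F-\Xi_z(\bC)\}.
\end{align*}
Multiply this by $S(Y,D\mid Z,\bC)$ and take $E\{\cdot\mid\bC\}$, iterating over $Z$. Only the $Z=z$ term survives, and its weight is $\Pr(Z=z\mid\bC)$, which equals $\pi(\bC)^z\{1-\pi(\bC)\}^{1-z}$. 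This cancels the denominator and gives exactly $E[\{F-\Xi_z(\bC)\}S(Y,D\mid Z,\bC)\mid Z=z,\bC]$. That is the claimed identity.

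The special case follows by taking $F=D$, so that $\Xi_z(\bC)=p_z(\bC)$.

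The only delicate point is the differentiation under the integral. Everything else is bookkeeping. Positivity of $\pi$ (Assumption~\ref{asp:regularityL2}(d)) guarantees that the weights are well defined.
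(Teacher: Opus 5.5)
Your proof is correct and is the standard argument: differentiating the conditional mean gives $E[\{F-\Xi_z(\bC)\}S(Y,D\mid Z,\bC)\mid Z=z,\bC]$, and the inverse-propensity indicator in $\psi_{F,z}-\Xi_z(\bC)$ turns this into a conditional expectation given $\bC$ alone. The paper omits the proof and defers to Section S6.1.1 of the supplement to \cite{JiangJRSSB2022}, which proceeds in essentially the same way.
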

\begin{lemma}\label{lemma:mzdz(c)}
For $z\in\{0,1\}$, we have that
 \begin{align*} 
&\displaystyle   \dot{m}_{z1,\upsilon}(\bC)|_{\upsilon=0}=E\left\{\phi^m_{z1}(\mO)S(Y|Z,D,\bC)|\bC\right\},\\
&\displaystyle   \dot{m}_{z0,\upsilon}(\bC)|_{\upsilon=0}=E\left\{\phi^m_{z0}(\mO)S(Y|Z,D,\bC)|\bC\right\},
 \end{align*}
 where
 \begin{align*}
     &\phi^m_{z1}(\mO)=\frac{\psi_{YD,z}-m_{z1}(\bC)\psi_{D,z}}{p_z(\bC)},\\
     &\phi^m_{z0}(\mO)=\frac{\psi_{Y(1-D),z}-m_{z0}(\bC)\psi_{1-D,z}}{1-p_z(\bC)}.
 \end{align*}
\end{lemma}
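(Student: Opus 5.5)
The plan is to verify the two derivative formulas by direct differentiation of the conditional means along the parametric submodel. I use the factorization $f(Y,D|Z,\bC)=f(D|Z,\bC)f(Y|D,Z,\bC)$, so the relevant score is $S(Y,D|Z,\bC)=S(D|Z,\bC)+S(Y|D,Z,\bC)$. I read the lemma's $S(Y|Z,D,\bC)$ as shorthand for this joint conditional score $S(Y,D|Z,\bC)$; the $D$-score cannot be dropped, because $m_{zd}$ depends on $f(D|Z,\bC)$ through the denominator. I treat $d=1$; the case $d=0$ is identical after replacing $D$ by $1-D$ and $p_z$ by $1-p_z$.

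Write $m_{z1}(\bC)=\Xi^{YD}_z(\bC)/p_z(\bC)$, where $\Xi^{YD}_z(\bC)=E\{YD|Z=z,\bC\}$ and $p_z(\bC)=E\{D|Z=z,\bC\}$. The quotient rule gives
$\dot m_{z1}=\{\dot\Xi^{YD}_z-m_{z1}\dot p_z\}/p_z$. Lemma \ref{lemma:eif-pz(c)} applied with $F=YD$ and with $F=D$ gives
$$\dot m_{z1}=E\Big[\big\{(\psi_{YD,z}-\Xi^{YD}_z)-m_{z1}(\psi_{D,z}-p_z)\big\}S(Y,D|Z,\bC)\,\big|\,\bC\Big]/p_z .$$
Since $\Xi^{YD}_z-m_{z1}p_z=0$, the centering constants cancel. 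The integrand then becomes $\{\psi_{YD,z}-m_{z1}\psi_{D,z}\}/p_z=\phi^m_{z1}(\mO)$, which is the claimed formula.

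For completeness I also check Lemma \ref{lemma:eif-pz(c)} itself, which is the only substantive step. Differentiating $\Xi_z(\bC)=\int F f_\upsilon(y,d|z,\bC)$ gives $E\{(F-\Xi_z)S(Y,D|Z,\bC)|Z=z,\bC\}$, using that the conditional score has mean zero. Conditioning instead on $\bC$ alone, I reweight by $\mathcal I(Z=z)/\Pr(Z=z|\bC)$. This produces $E\{(\psi_{F,z}-\Xi_z)S|\bC\}$, because the term $\Xi_z$ inside $\psi_{F,z}$ is a function of $\bC$ and $E\{S(Y,D|Z,\bC)|Z,\bC\}=0$.

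I do not expect a real obstacle here. The only care needed is the bookkeeping of which score appears, the division by $p_z$, and confirming that the constants cancel exactly. The positivity bounds in Assumption \ref{asp:regularityL2}(d) justify exchanging differentiation and integration.
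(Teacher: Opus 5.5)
The paper gives no proof of its own; it defers to Section S6.1.1 of \cite{JiangJRSSB2022}. Your computation is correct and is the standard way to get this result. You write $m_{z1}(\bC)=E\{YD\mid Z=z,\bC\}/p_z(\bC)$, apply Lemma~\ref{lemma:eif-pz(c)} to the numerator and the denominator, and let the centering constants cancel.

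What this delivers, however, is $\dot m_{z1,\upsilon}(\bC)|_{\upsilon=0}=E\{\phi^m_{z1}(\mO)S(Y,D\mid Z,\bC)\mid\bC\}$. Your reason for reading the lemma's $S(Y\mid Z,D,\bC)$ as this joint score is wrong. The regression $m_{zd}(\bC)=\int y\,f(y\mid d,z,\bC)\,dy$ is a functional of $f(Y\mid D,Z,\bC)$ alone. The factor $f(D\mid Z,\bC)$ enters both $E\{YD\mid Z=z,\bC\}$ and $p_z(\bC)$ and cancels in the ratio, so perturbing $f(D\mid Z,\bC)$ alone leaves $m_{zd}$ unchanged. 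The lemma is therefore true exactly as written. The literal form is also the one the paper needs later, when it places $e_{d_0d_1}(\bC)\phi^m_{d_0d_1}(\mO)/\tau^D_{d_0d_1}(\bX)$ in $\mathcal{F}_5$.

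The missing step is one line. Substitute the definition of $\psi$ and use $E\{YD\mid Z=z,\bC\}=m_{z1}(\bC)p_z(\bC)$ to get
\[
\phi^m_{z1}(\mO)=\frac{\mathcal{I}(Z=z)\,D\,\{Y-m_{z1}(\bC)\}}{\pi(\bC)^{z}\{1-\pi(\bC)\}^{1-z}\,p_z(\bC)},
\]
so $E\{\phi^m_{z1}(\mO)\mid D,Z,\bC\}=0$. Since $S(D\mid Z,\bC)$ is a function of $(D,Z,\bC)$, this gives $E\{\phi^m_{z1}(\mO)S(D\mid Z,\bC)\mid\bC\}=0$. Your joint-score identity then reduces to the stated one. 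The case $d=0$ is identical with $D$, $m_{z1}$, $p_z$ replaced by $1-D$, $m_{z0}$, $1-p_z$.

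Alternatively, you can avoid the joint score entirely. Differentiate $m_{z1}(\bC)=\int y\,f_\upsilon(y\mid 1,z,\bC)\,dy$ directly to get $E[\{Y-m_{z1}(\bC)\}S(Y\mid D,Z,\bC)\mid D=1,Z=z,\bC]$. Then reweight by $\mathcal{I}(Z=z)D/\{\Pr(Z=z\mid\bC)p_z(\bC)\}$, which yields the literal statement at once.

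A minor point: positivity in Assumption~\ref{asp:regularityL2}(d) only makes the divisions by $p_z(\bC)$ and $\Pr(Z=z\mid\bC)$ harmless. Interchanging the derivative and the integral is a regularity requirement on the submodel, not a consequence of that assumption.
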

The third lemma is adapted from Equation (S6.9) of the Supplementary Material to \cite{tong2025semiparametric}.
\begin{lemma}\label{lemma:eif-eg(x),pswith theta}
The pathwise derivative of the principal score under the parametric submodel, $e_{d_0d_1,\upsilon}(\bC)$, evaluated at $\upsilon=0$, is given by
\begin{align*}
    \dot{e}_{d_0d_1,\upsilon}(\bC)|_{\upsilon=0}=&E\{\phi_{d_0d_1}^e(D,Z,\bC)S(D|Z,\bC)|\bC\},
\end{align*}
where
    \begin{align*}
        \phi_{d_0d_1}^e(D,Z,\bC)=&\frac{(-1)^{d_0+d_1}}{2\sqrt{\delta(\bC)}}\times\left\{\sum_{z=0}^1(\psi_{D,z}-p_z(\bC))\times\right.\\
&\left.\left\{(2d_{1-z}-1)\sqrt{\delta(\bC)}-\left[1+(\theta(\bC)-1)p_z(\bC)-(\theta(\bC)+1)p_{1-z}(\bC)\right]\right\}\right\}
    \end{align*}
\end{lemma}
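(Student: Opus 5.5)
The plan is a direct chain-rule computation. Under Assumption~\ref{assump:cor}, $\theta(\bC)$ is a known function that does not move along the submodel. The identification formula for the principal score therefore expresses $e_{d_0d_1}(\bC)$ as a smooth function $g(p_0(\bC),p_1(\bC);\theta(\bC))$ of the two estimable principal scores only. Consequently
\begin{align*}
\dot e_{d_0d_1,\upsilon}(\bC)|_{\upsilon=0}=\sum_{z=0}^1 \frac{\partial g}{\partial p_z}\,\dot p_{z,\upsilon}(\bC)|_{\upsilon=0},
\end{align*}
and Lemma~\ref{lemma:eif-pz(c)} gives $\dot p_{z,\upsilon}(\bC)|_{\upsilon=0}=E\{(\psi_{D,z}-p_z(\bC))S(Y,D|Z,\bC)|\bC\}$. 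Since $\partial g/\partial p_z$ is a function of $\bC$ alone, it can be moved inside the conditional expectation. Summing over $z$ then produces the claimed form $E\{\phi^e_{d_0d_1}S\,|\,\bC\}$, once the partial derivatives are identified and the score is reduced.

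The key algebraic step is computing $\partial g/\partial p_z$. Write $s=(-1)^{d_0+d_1}$, so that $g=\frac{s}{2(\theta-1)}\{1+(\theta-1)\kappa_{d_0d_1}-\sqrt{\delta}\}$. From the definition of $\kappa_{d_0d_1}$, $\partial\kappa_{d_0d_1}/\partial p_z=2d_{1-z}-1$. Differentiating $\delta$ gives
\begin{align*}
\frac{\partial\delta}{\partial p_z}=2(\theta-1)\{1+(\theta-1)(p_0+p_1)\}-4\theta(\theta-1)p_{1-z}=2(\theta-1)\{1+(\theta-1)p_z-(\theta+1)p_{1-z}\}.
\end{align*}
Hence
\begin{align*}
\frac{\partial g}{\partial p_z}=\frac{s}{2(\theta-1)}\Big\{(\theta-1)(2d_{1-z}-1)-\frac{(\theta-1)[1+(\theta-1)p_z-(\theta+1)p_{1-z}]}{\sqrt{\delta}}\Big\}.
\end{align*}
The factor $\theta-1$ cancels, because $\Pr(\theta(\bC)-1\neq 0)=1$. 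What remains is exactly $\frac{s}{2\sqrt\delta}\{(2d_{1-z}-1)\sqrt\delta-[1+(\theta-1)p_z-(\theta+1)p_{1-z}]\}$, which is the coefficient multiplying $(\psi_{D,z}-p_z)$ in $\phi^e_{d_0d_1}$.

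Two loose ends remain, and the first is where I expect the real care to be needed. First, I must justify differentiability of $\sqrt{\delta}$, i.e.\ $\delta(\bC)>0$. I would write $\delta$ as the discriminant of the quadratic in $e_{11}$ obtained from the odds-ratio constraint with margins $p_0,p_1$, namely $(\theta-1)e_{11}^2-\{1+(\theta-1)(p_0+p_1)\}e_{11}+\theta p_0p_1=0$. I would then argue that the two roots are distinct when $p_z\in(0,1)$ and $\theta\neq1$, so $\delta>0$. Alternatively, I would use the identity $\delta=\{1+(\theta-1)(p_0+p_1)\}^2-4\theta(\theta-1)p_0p_1$ directly and bound it below using the overlap conditions. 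The chosen root, with the minus sign in front of $\sqrt\delta$, is fixed by the identification result, so no branch switching occurs along the submodel. Second, the score must be reduced from $S(Y,D|Z,\bC)=S(D|Z,\bC)+S(Y|D,Z,\bC)$ to $S(D|Z,\bC)$. Because $\psi_{D,z}-p_z(\bC)$ is a function of $(D,Z,\bC)$ only, iterated expectations give $E\{(\psi_{D,z}-p_z)S(Y|D,Z,\bC)|\bC\}=0$, since $E\{S(Y|D,Z,\bC)|D,Z,\bC\}=0$. This yields the stated expression.
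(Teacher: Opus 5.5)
Your proposal is correct. The partial derivatives $\partial e_{d_0d_1}/\partial p_z$ you compute, the cancellation of $\theta(\bC)-1$, and the reduction of $S(Y,D|Z,\bC)$ to $S(D|Z,\bC)$ reproduce $\phi^e_{d_0d_1}$ exactly. The positivity of $\delta(\bC)$ that you flag also follows directly: when $\theta>1$ one has $\delta=1+2(\theta-1)\{p_0(1-p_1)+p_1(1-p_0)\}+(\theta-1)^2(p_0-p_1)^2\geq 1$, and when $0<\theta<1$ the term $-4\theta(\theta-1)p_0p_1$ is strictly positive.

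This is essentially the paper's route as well. The paper omits the proof and defers to Equation (S6.9) of \citet{tong2025semiparametric}, whose intended derivation is the same chain rule through $(p_0,p_1)$ with $\theta(\bC)$ held fixed, combined with Lemma~\ref{lemma:eif-pz(c)}.
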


\subsection{Derivation of the EIF}
Given these preliminaries, we now derive the EIF. For notational simplicity, we omit the subscript $d_0d_1$. By the chain rule, it follows that
\begin{align*}
    \widetilde{\tau}=&\underbrace{E\left\{\tau(\bX)S(\bX)\right\}}_{(\ast_1)}+\underbrace{E\left\{\frac{\dot{\tau}^N_{\upsilon}(\bX)}{\tau^D(\bX)}\right\}|_{\upsilon=0}}_{(\ast_2)}-\underbrace{E\left\{\frac{\tau^N(\bX)\dot{\tau}_{\upsilon}^D(\bX)}{\tau^D(\bX)^2}\right\}|_{\upsilon=0}}_{(\ast_3)}.
\end{align*}
For the term $(\ast_1)$, we obtain that 
\begin{align*}
    (\ast_1)=&E\left\{(\tau(\bX)-\widetilde{\tau})S(\bX)\right\},
\end{align*}
as $E\{S(\bX)\}=0$. For the term $(\ast_2)$, applying the chain rule again yields
\begin{align*}
    (\ast_2)=&\underbrace{E\left\{\frac{e(\bC)\Delta_m (\bC)}{\tau^D(\bX)}S(\widetilde{\bX}|\bX)\right\}}_{(\ast_4)}+\underbrace{E\left\{\frac{\dot{e}_{\upsilon}(\bC)\Delta_m(\bC)}{\tau^D(\bX)}\right\}|_{\upsilon=0}}_{(\ast_5)}+\underbrace{E\left\{\frac{e(\bC)\dot{\Delta}_{m,\upsilon}(\bC)}{\tau^D(\bX)}\right\}|_{\upsilon=0}}_{(\ast_6)},
\end{align*}
where $\Delta_m(\bC):=m_{1d_1}(\bC)-m_{0d_0}(\bC)$ denotes the contrast between the two conditional outcome-regression functions. Similarly, applying the chain rule to $(\ast_3)$ yields
\begin{align*}
    (\ast_3)=&\underbrace{E\left\{\frac{\tau^N(\bX)}{\tau^D(\bX)^2}e(\bC)S(\widetilde{\bX}|\bX)\right\}}_{(\ast_7)}+\underbrace{E\left\{\frac{\tau^N(\bX)}{\tau^D(\bX)^2}\dot{e}_{\upsilon}(\bC)|_{\upsilon=0}\right\}}_{(\ast_8)}.
\end{align*}
Combining the terms $(\ast_4)$ and $(\ast_7)$ yields
\begin{align*}
    (\ast_4)-(\ast_7)=&E\left\{\frac{\Delta_m(\bC)-\tau(\bX)}{\tau^D(\bX)}e(\bC)S(\widetilde{\bX}|\bX)\right\}.
\end{align*}
For the term $(\ast_5)$, Lemma \ref{lemma:eif-eg(x),pswith theta} implies 
\begin{align*}
    (\ast_5)=&E\left\{\frac{\Delta_m(\bC)}{\tau^D(\bX)}E\{\phi^e(D,Z,\bC)S(D|Z,\bC)|\bC\}\right\}\\
    =&E\left\{\frac{\Delta_m(\bC)}{\tau^D(\bX)}\phi^e(D,Z,\bC)S(D|Z,\bC)\right\}.
\end{align*}
Similarly, Lemma \ref{lemma:eif-eg(x),pswith theta} also implies that the term $(\ast_8)$ can be further expressed as
\begin{align*}
    (\ast_8)=&E\left\{\frac{\tau^N(\bX)}{\tau^D(\bX)^2}E\{\phi^e(D,Z,\bC)S(D|Z,\bC)|\bC\}\right\}\\
    =&E\left\{\frac{\tau(\bX)}{\tau^D(\bX)}\phi^e(D,Z,\bC)S(D|Z,\bC)\right\}.
\end{align*}
For the term $(\ast_6)$, Lemma \ref{lemma:mzdz(c)} implies
\begin{align*}
    (\ast_6)=&E\left\{\frac{e(\bC)E\{\phi_{d_0d_1}^m(\mO)S(Y|Z,D,\bC)|\bC\}}{\tau^D(\bX)}\right\}\\
    =&E\left\{\frac{e(\bC)\phi_{d_0d_1}^m(\mO)}{\tau^D(\bX)}S(Y|Z,D,\bC)\right\}.
\end{align*}
It is straightforward to verify that 
\begin{align*}
   &\tau(\bX)-\widetilde{\tau}\in \mathcal{F}_1,\\
   &\frac{\Delta_m(\bC)-\tau(\bX)}{\tau^D(\bX)}e(\bC)\in \mathcal{F}_2,\\
   &\frac{\Delta_m(\bC)}{\tau^D(\bX)}\phi^e(D,Z,\bC)\in \mathcal{F}_4,\\
   &-\frac{\tau(\bX)}{\tau^D(\bX)}\phi^e(D,Z,\bC)\in \mathcal{F}_4,\\
   &\frac{e(\bC)\phi_{d_0d_1}^m(\mO)}{\tau^D(\bX)}\in \mathcal{F}_5.
\end{align*}
Therefore, the efficient influence function for the smoothed CPCE parameter $\widetilde{\tau}_{d_0d_1}$ is given by
\begin{align*}
    \varphi_{d_0d_1}(\mO) &= -\widetilde{\tau}_{d_0d_1}+\tau_{d_0d_1}(\bX)+\frac{1}{\tau_{d_0d_1}^D(\bX)}\left[e_{d_0d_1}(\bC)\phi_{d_0d_1}^m(\mO)+ \right. \\
    & \qquad \left.  (e_{d_0d_1}(\bC)+\phi_{d_0d_1}^e(D,Z,\bC))\left(m_{1d_1}(\bC)-m_{0d_0}(\bC)-\tau_{d_0d_1}(\bX)\right)\right].
\end{align*}

\section{Additional technical results}\label{sec:additional-tech-results}
\subsection{Covariance matrix estimation}
The theorem below provides the rate of convergence for the covariance matrix estimator $\widehat{\mathbb{V}}$ and $\| \widehat{\mathbb{V}}^{1/2}\bb(\bx)\rVert$.

\begin{theorem}\label{thm:cov-matrix-rate}
Suppose that all the conditions in Corollary \ref{corollary:uniform-negligible} hold. In addition, assume that (i) $\sup\{t:F_{|\widehat{Y}^\ast-Y||\mathcal{F}_s^c}(t)<1\}<\infty$ and $\sup\{t:F_{(\widehat{Y}^\ast-Y)^2|\mathcal{F}_s^c}(t)<1\}<\infty$, where $F_{\bullet|\mathcal{F}_s^c}$ is the cumulative distribution function conditional on the training data; (ii) $m_{3n}+\sqrt{\log K}l_Kc_K\lesssim\sqrt{\log K}$; (iii) $(n^{1/\nu}+l_Kc_K)(\sqrt{\xi_K^2\log K/n}+\kappa_n^\ast)=o_{\mP}(1)$; and (iv) $\kappa_n^\dagger=o_{\mP}(1)$.
\begin{align*}  
\sup_{\bx}\left|\frac{\| \widehat{\mathbb{V}}^{1/2}\bb(\bx)\|_2}{\| \mathbb{V}^{1/2}\bb(\bx)\|_2}-1\right|\lessp \| \widehat{\mathbb{V}}-\mathbb{V}\|_{\text{op}}\lessp\kappa_n,
\end{align*}
where $\kappa_n^\dagger:=E\left\{\max_{1\leq i\leq n}(\widehat{Y}_i^\ast-Y^\ast_i)^2|\widehat{\bGamma}_s\right\}$, $\kappa_n^\ast:=E\left\{\max_{1\leq i\leq n}|\widehat{Y}^\ast-Y^\ast||\widehat{\bGamma}_s\right\}$, and 
\begin{align*}
    \kappa_n:=\left(\sqrt{\frac{\xi_K^2\log K}{n}}+\kappa_n^\ast\right)(n^{1/\nu}+l_Kc_K)+\kappa_n^\dagger=o_{\mP}(1).
\end{align*}    
\end{theorem}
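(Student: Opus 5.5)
The plan is to follow the covariance-estimation argument of \cite{belloni2015some} (their Lemma 4.3/Theorem 4.6) for least-squares series with estimated heteroskedastic variance, and to add the error from replacing $Y^\ast$ by $\widehat{Y}^\ast$. Since $\bH$ is normalized to $\boldsymbol{I}_K$ and $\mathbb{V}$ has eigenvalues bounded away from zero (by $1\lesssim\inf_{\bx}E\{\Omega^2|\bX=\bx\}$) and from infinity, the first inequality is elementary. Write $\|\widehat{\mathbb{V}}^{1/2}\bb(\bx)\|_2^2-\|\mathbb{V}^{1/2}\bb(\bx)\|_2^2=\bb(\bx)^\top(\widehat{\mathbb{V}}-\mathbb{V})\bb(\bx)$ and divide by $\|\mathbb{V}^{1/2}\bb(\bx)\|_2^2\gtrsim\|\bb(\bx)\|_2^2$. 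Then use $|a-1|\le|a^2-1|$ for $a\ge0$ and take the supremum over $\bx$. The substantive part is therefore bounding $\|\widehat{\mathbb{V}}-\mathbb{V}\|_{\mathrm{op}}$.

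For the operator-norm bound, decompose
\[
\widehat{\mathbb{V}}-\mathbb{V}=\widehat{\bH}^{-1}\widehat{\boldsymbol{\Sigma}}\widehat{\bH}^{-1}-\boldsymbol{\Sigma},
\qquad
\widehat{\boldsymbol{\Sigma}}=\mP_n\{\widehat{\epsilon}^2\bb\bb^\top\},\quad \widehat{\epsilon}=\widehat{Y}^\ast-\bb(\bX)^\top\widehat{\bbeta},
\qquad
\boldsymbol{\Sigma}=E\{(\Omega+r_0)^2\bb\bb^\top\}.
\]
The Rudelson matrix LLN with condition (c) of Assumption~\ref{asp:regularityL2} gives $\|\widehat{\bH}-\boldsymbol{I}_K\|_{\mathrm{op}}\lessp\sqrt{\xi_K^2\log K/n}$. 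This controls the $\widehat{\bH}^{-1}$ factors, since $\|\boldsymbol{\Sigma}\|_{\mathrm{op}}=O(1)$ and $\|\widehat{\boldsymbol{\Sigma}}\|_{\mathrm{op}}=O_{\mP}(1)$ by the argument below. Next expand
\[
\widehat{\epsilon}=(\Omega+r_0)+(\widehat{Y}^\ast-Y^\ast)-\bb(\bX)^\top(\widehat{\bbeta}-\bbeta_0),
\]
so that $\widehat{\boldsymbol{\Sigma}}-\boldsymbol{\Sigma}$ splits into four pieces:
\begin{enumerate}[label=(\alph*)]
\item the oracle fluctuation $(\mP_n-E)\{(\Omega+r_0)^2\bb\bb^\top\}$;
\item squared and cross terms involving $\bb^\top(\widehat{\bbeta}-\bbeta_0)$;
\item squared and cross terms involving $\Delta_i:=\widehat{Y}_i^\ast-Y_i^\ast$;
\item a mixed cross term between $\Delta_i$ and $\bb^\top(\widehat{\bbeta}-\bbeta_0)$.
\end{enumerate}

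Each piece is handled separately.
\begin{itemize}
\item Term (a). Truncate $|\Omega|$ at the level $n^{1/\nu}$ using the $\nu$-th moment bound of Assumption~\ref{assump:uniform-further-boundedness}, and bound $|r_0|\le l_Kc_K$. Rudelson's inequality then gives $\sqrt{\xi_K^2\log K/n}\,(n^{1/\nu}+l_Kc_K)$, as in \cite{belloni2015some}.
\item Term (b). Bound it by $\max_i|\Omega_i+r_{0i}|\,\sup_{\bx}|\bb(\bx)^\top(\widehat{\bbeta}-\bbeta_0)|\,\|\widehat{\bH}\|_{\mathrm{op}}$. The sup factor is controlled by Theorem~\ref{thm:uniform-rate}/Corollary~\ref{corollary:uniform-negligible} together with condition (ii), which makes $\delta_{n,\infty}\lesssim\sqrt{\xi_K^2\log K/n}$. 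Combined with $\max_i|\Omega_i|\lessp n^{1/\nu}$, this yields the same order as term (a).
\item Term (c). The squared part is bounded in operator norm by $\max_i\Delta_i^2\,\|\widehat{\bH}\|_{\mathrm{op}}$. Conditioning on the training folds (cross-fitting makes $\Delta_i$ i.i.d.\ given $\widehat{\bGamma}_s$), Markov's inequality gives $\lessp\kappa_n^\dagger$. The cross part $\mP_n\{(\Omega+r_0)\Delta\,\bb\bb^\top\}$ is bounded by $\max_i|\Delta_i|\cdot\|\mP_n\{|\Omega+r_0|\bb\bb^\top\}\|_{\mathrm{op}}$, which gives $\kappa_n^\ast(n^{1/\nu}+l_Kc_K)$ after the same truncation. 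Condition (i) (bounded conditional support) ensures these conditional expectations are finite.
\item Term (d). Bound it by Cauchy--Schwarz using the bounds on (b) and (c).
\end{itemize}
Summing the pieces gives $\kappa_n$. Conditions (iii)--(iv) make $\kappa_n=o_{\mP}(1)$, which completes the argument.

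The main obstacle is term (c), specifically its cross part. A crude bound $\max_i|\Delta_i|\max_i|\Omega_i|$ loses too much. The difficulty is to keep the $\bb\bb^\top$ averaging inside the operator norm while extracting $\max_i|\Delta_i|$. I would write the cross part as $\mP_n\{w_i\bb_i\bb_i^\top\}$ with random weights $w_i=(\Omega_i+r_{0i})\Delta_i$. For any weights, $\|\mP_n\{w\bb\bb^\top\}\|_{\mathrm{op}}\le\max_i|\Delta_i|\cdot\|\mP_n\{|\Omega+r_0|\bb\bb^\top\}\|_{\mathrm{op}}$, and the second factor is $O_{\mP}(n^{1/\nu}+l_Kc_K)$ by the term-(a) analysis. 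I would apply this inequality conditionally on $\widehat{\bGamma}_s$, fold by fold, and take a maximum over the finitely many folds $s\in[S]$.
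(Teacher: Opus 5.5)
Your proposal is correct and is essentially the paper's proof: both split $\widehat{\mathbb{V}}-\mathbb{V}$ into the oracle fluctuation, handled via \cite{belloni2015some}, plus the squared-residual difference. That difference is expanded into terms in $\bb(\bX)^\top(\widehat{\bbeta}-\bbeta_0)$ and $\Delta_i=\widehat{Y}^\ast_i-Y^\ast_i$, each bounded by a maximum or supremum times $\|\widehat{\bH}\|_{\mathrm{op}}$; for the former, cite the bound on $\sup_{\bx}|\bb(\bx)^\top(\widehat{\bbeta}-\bbeta_0)|$ from the proof of Theorem~\ref{thm:uniform-rate}, not $\delta_{n,\infty}$, whose additive $l_Kc_K$ is not made small by (ii).

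Your additions only tighten the argument. These are explicit control of the $\widehat{\bH}^{-1}$ factors (which the paper leaves implicit), conditional Markov in place of the paper's Chebyshev/bounded-support step (making (i) essentially superfluous), and a proof of the first inequality. That inequality does need the lower bound $1\lesssim\inf_{\bx}E\{\Omega^2|\bX=\bx\}$ from Theorem~\ref{thm:uniform-strong-gaussian}, which is not among this theorem's stated hypotheses. The ``main obstacle'' you flag is illusory: the crude bound $\max_i|\Delta_i|\,\max_i|\Omega_i+r_{0i}|\,\|\widehat{\bH}\|_{\mathrm{op}}$ is exactly what the paper uses, and it already gives $\kappa_n^\ast(n^{1/\nu}+l_Kc_K)$.
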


\subsection{Validity of Bayesian bootstrap method}\label{supp;subsec:valid-weighted-boot}
The theorem below establishes the uniform linearization results for the multiplier bootstrap distribution and, moreover, shows that it can be approximated by the same Gaussian process as presented in Theorem~\ref{thm:uniform-strong-gaussian} of the main manuscript. To proceed, suppose that the multipliers $\{\mathfrak{w}_i\}_{i=1}^n$ are independent of the data-generating process and are independent and identically distributed as $\mathfrak{w}_i \sim \exp(1)$.

\begin{lemma}[\emph{Uniform linearization for multiplier bootstrap}]\label{lemma:uniform-linearization-boot}
Under Assumptions \ref{asp:regularityL2}–\ref{assump:uniform-further-boundedness} and that $\allowbreak (\xi_k(\log n)^{1/2})^{2\nu/(\nu-2)}\lesssim1$, the estimator $\widehat{\bbeta}$ is asymptotically linear with 
\begin{align*}
     &\sqrt{n}\widetilde{\bb}(\bx)^\top(\widehat{\bbeta}^b-\widehat{\bbeta})=\widetilde{\bb}(\bx)^\top(\bH+\lambda\bP)^{-1}\mG_n\{(\mathfrak{w}-1)\bb(\bX)(Y^\ast-\bb(\bX)^\top\bbeta_0)\}+\text{Rem}^b_{1n}(\widetilde{\bb}(\bx)),
\end{align*}
where the remainder term $\text{Rem}^b_{1n}(\widetilde{\bb}(\bx))$ satisfies
\begin{align*}
    &\sup_{\bx}|\text{Rem}^b_{1n}(\widetilde{\bb}(\bx))|\\
    \lessp&\frac{1}{1+\lambda\omega_1(\bP)}\left\{\left(\frac{\xi_K\sqrt{\log K}}{1+2\lambda\omega_1(\bP)}+\sqrt{n}\right)(\lambda\omega_K(\bP)/\sqrt{n}+\log n(m_{1n}+m_{2n}))+\frac{m_{3n}\log n }{1+2\lambda\omega_1(\bP)}\right\}.
\end{align*}
\end{lemma}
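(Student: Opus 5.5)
The plan is to write the weighted estimator in closed form and compare it with the unweighted one. We have $\widehat{\bbeta}^b=(\widehat{\bH}^b+\lambda\bP)^{-1}\widehat{\bh}^b$, where $\widehat{\bH}^b=\mP_n\{\mathfrak{w}\bb(\bX)\bb(\bX)^\top\}$ and $\widehat{\bh}^b=\mP_n\{\mathfrak{w}\bb(\bX)\widehat{Y}^\ast\}$. Subtracting $\bbeta_0$ gives
\[
\widehat{\bbeta}^b-\bbeta_0=(\widehat{\bH}^b+\lambda\bP)^{-1}\Big[\mP_n\{\mathfrak{w}\bb(\bX)(\widehat{Y}^\ast-\bb(\bX)^\top\bbeta_0)\}-\lambda\bP\bbeta_0\Big],
\]
and the analogous identity holds with $\mathfrak{w}\equiv1$ for $\widehat{\bbeta}$. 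We first establish the uniform linearization of $\sqrt{n}\widetilde{\bb}(\bx)^\top(\widehat{\bbeta}^b-\bbeta_0)$. This has the same structure as Lemma \ref{lemma:uniform-linearization} for $\widehat{\bbeta}$, with the weight $\mathfrak{w}$ carried along. Subtracting the unweighted linearization then leaves the leading term $\widetilde{\bb}(\bx)^\top(\bH+\lambda\bP)^{-1}\mG_n\{(\mathfrak{w}-1)\bb(\bX)(Y^\ast-\bb(\bX)^\top\bbeta_0)\}$. In this difference the regularization terms $\lambda\bP\bbeta_0$ cancel to first order, up to matrix-inversion remainders.

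The first step is to control the design matrices. We bound $\|\widehat{\bH}^b-\bH\|_{\mathrm{op}}$ by the matrix Rudelson/Bernstein inequality applied to $\mathfrak{w}_i\bb(\bX_i)\bb(\bX_i)^\top$. The exponential multipliers have sub-exponential tails, so $\max_i\mathfrak{w}_i\lesssim\log n$ with high probability, and this gives $\|\widehat{\bH}^b-\bH\|_{\mathrm{op}}\lessp\sqrt{\xi_K^2\log K\log n/n}$, which is $o(1)$ under the stated growth condition. Since $\bH=\boldsymbol{I}_K$, we have $\|(\bH+\lambda\bP)^{-1}\|_{\mathrm{op}}\le(1+\lambda\omega_1(\bP))^{-1}$. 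The expansion $(\widehat{\bH}^b+\lambda\bP)^{-1}-(\bH+\lambda\bP)^{-1}=-(\widehat{\bH}^b+\lambda\bP)^{-1}(\widehat{\bH}^b-\bH)(\bH+\lambda\bP)^{-1}$ then produces the factors $(1+\lambda\omega_1)^{-1}(1+2\lambda\omega_1)^{-1}$ that appear in the remainder. This matches the bookkeeping of Theorem \ref{thm:uniform-rate}.

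The second step decomposes the score. We write $\widehat{Y}^\ast-\bb^\top\bbeta_0=(Y^\ast-\bb^\top\bbeta_0)+(\widehat{Y}^\ast-Y^\ast)$. The nuisance part $\mP_n\{\mathfrak{w}\bb(\bX)(\widehat{Y}^\ast-Y^\ast)\}$ is handled fold by fold, conditional on $\mathcal{F}_s^c$ and on the inner folds. Its conditional mean is controlled by the mixed-bias decomposition of Section \ref{SS:CPCE-inference}, which yields $m_{2n}$. Its centered part is an empirical process with independent multipliers, which yields $m_{1n}$. The multipliers enter only through $\max_i\mathfrak{w}_i$ and through $E\mathfrak{w}=1$, and this accounts for the extra $\log n$ factor multiplying $m_{1n}+m_{2n}$. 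For the oracle part, the supremum over $\bx$ of the product of the matrix error and $\mG_n\{\mathfrak{w}\bb(\bX)(\Omega+r_0)\}$ is bounded by the truncation argument of \cite{belloni2015some}, using the $\nu$-th moment in Assumption \ref{assump:uniform-further-boundedness}. This produces the term $m_{3n}$, again inflated by $\log n$ through $\max_i\mathfrak{w}_i$.

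The main obstacle is the uniformity in $\bx$ of the cross term $\widetilde{\bb}(\bx)^\top[(\widehat{\bH}^b+\lambda\bP)^{-1}-(\widehat{\bH}+\lambda\bP)^{-1}]\mG_n\{\bb(\bX)\Omega\}$ when the weights are unbounded. For this term we plan to truncate $\mathfrak{w}_i$ at $c\log n$ and to use the Lipschitz bound on $\widetilde{\bb}$ together with $\log\xi_K^L=O(\log K)$ to discretize $\mathcal{X}$ into a net of polynomial size. A maximal inequality over the net then gives the $\sqrt{\log K}$ factors, while the condition $(\xi_K(\log n)^{1/2})^{2\nu/(\nu-2)}\lesssim1$ absorbs the truncation cost. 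Finally, we collect the regularization remainder $\lambda\omega_K(\bP)/\sqrt{n}$ from $\lambda\bP\bbeta_0$ times the inverse difference, and combine all the pieces into the stated bound on $\sup_{\bx}|\text{Rem}^b_{1n}|$.
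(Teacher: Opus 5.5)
Your route is the paper's. You split $\widehat{\bbeta}^b-\widehat{\bbeta}=(\widehat{\bbeta}^b-\bbeta_0)-(\widehat{\bbeta}-\bbeta_0)$ and rerun the six-term expansion of Lemma~\ref{lemma:uniform-linearization} with the weights carried along. You control $\widehat{\bH}^b$ by Rudelson's LLN with $\max_i\mathfrak{w}_i\lessp\log n$, and you pay a $\log n$ on the nuisance and $m_{3n}$ pieces. The cross term you single out as the main obstacle does not actually arise in this route, since both expansions are centred at $(\bH+\lambda\bP)^{-1}$. The term that does need truncation and a maximal inequality is $\widetilde{\bb}(\bx)^\top[(\widehat{\bH}^b+\lambda\bP)^{-1}-(\bH+\lambda\bP)^{-1}]\mG_n\{\mathfrak{w}\bb(\bX)\Omega\}$, handled conditionally on the design and the weights as in Step 1 of Lemma 4.2 of \cite{belloni2015some}. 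Your split of the nuisance term into a conditional mean and a centred part is cleaner than simply pulling out $\max_i\mathfrak{w}_i$.

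The step that does not close as written is the regularization bookkeeping. After you multiply your identity by $\sqrt{n}$, the penalty enters as $-\sqrt{n}\lambda\bP\bbeta_0$. You are right that the zeroth-order pieces cancel exactly, because $E\{\mathfrak{w}\bb(\bX)\bb(\bX)^\top\}=\bH$; the paper does not use this and instead bounds each $(\ast_6)$ separately. What remains is
\[
\sqrt{n}\,\lambda\,\widetilde{\bb}(\bx)^\top\big[(\widehat{\bH}^b+\lambda\bP)^{-1}-(\widehat{\bH}+\lambda\bP)^{-1}\big]\bP\bbeta_0 .
\]
The operator-norm argument you propose bounds this only by $\lambda\omega_K(\bP)\,\xi_K\sqrt{\log K\log n}/(1+2\lambda\omega_1(\bP))^2$. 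That is about $\sqrt{n\log n}$ times the slot $\xi_K\sqrt{\log K}\,\lambda\omega_K(\bP)/\{\sqrt{n}(1+\lambda\omega_1(\bP))(1+2\lambda\omega_1(\bP))\}$ in the statement. When $\lambda\omega_K(\bP)=o(1)$, it is also not dominated by the other $\lambda$-slot $\lambda\omega_K(\bP)/(1+\lambda\omega_1(\bP))$.

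The paper reaches $\lambda\omega_K(\bP)/\sqrt{n}$ only because its expansion in Lemma~\ref{lemma:pointwise-linearization}, which it reuses here, writes $-\lambda\bP\bbeta_0$ where $-\sqrt{n}\lambda\bP\bbeta_0$ belongs. To finish, you have two options. You can add a term of order $\xi_K\sqrt{\log K\log n}\,\lambda\omega_K(\bP)/\{(1+\lambda\omega_1(\bP))(1+2\lambda\omega_1(\bP))\}$ to the remainder. Alternatively, you can impose a rate on $\lambda$ that makes this term negligible at the scale used downstream, e.g.\ $o(a_n^{-1})$ in Theorem~\ref{thm:uniform-strong-gaussian-boot}. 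The rest of your plan goes through.
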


\begin{theorem}[\emph{Strong Gaussian process approximation for Bayesian bootstrap}]\label{thm:uniform-strong-gaussian-boot}
Suppose Assumptions \ref{asp:regularityL2}–\ref{assump:uniform-further-boundedness} and all conditions in Corollary \ref{corollary:uniform-negligible} hold. Further, assume (i) $\lambda\omega_{\max}(1+m_{0n}+\sqrt{K/n})=o(1)$; (ii) $m_{3n}=o_{\mP}(a_n^{-1})$; (iii) $1\lesssim\inf_{\bx} E\{\Omega^2|\bX=\bx\}$; (iv) $a_n^6K^4\xi_K^2(1+l_K^3c_K^3)^2(\log n)^2/n=o(1)$; and (v) $\sup_{\bx}\sqrt{n}|r(\bx)|/\| \mathbb{V}^{1/2}\bb(\bx)\|_2=o(a_n^{-1})$. Then the following Gaussian process strong approximation holds in $L^\infty$ norm
\begin{align*}
T^b_n(\bx)=_d\frac{\bb(\bx)^\top\mathbb{V}^{1/2}}{\| \mathbb{V}^{1/2}\bb(\bx)\|_2}\mathcal{N}(0,\boldsymbol{I}_K)+o_{\mP}(a_n^{-1}),
\end{align*}
where $\{T^b_n(\bx):\bx\in\mathcal{X}\}$ is the bootstrap analogue of the targeting $t$-process $\{T_n(\bx):\bx\in\mathcal{X}\}$ with 
\begin{align*}
    T^b_n(\bx)=\sqrt{n}\frac{\widehat{\tau}^b_{d_0d_1}(\bx)-\widehat{\tau}_{d_0d_1}(\bx)}{\| \mathbb{V}^{1/2}\bb(\bx)\|_2}.
\end{align*}
\end{theorem}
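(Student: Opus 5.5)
The plan is to combine the uniform bootstrap linearization of Lemma~\ref{lemma:uniform-linearization-boot} with a conditional strong approximation for the multiplier empirical process. This mirrors the proof of Theorem~\ref{thm:uniform-strong-gaussian}.

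\textbf{Step 1 (linearization).} By Lemma~\ref{lemma:uniform-linearization-boot},
\begin{align*}
T_n^b(\bx)=\frac{\|\bb(\bx)\|_2}{\|\mathbb{V}^{1/2}\bb(\bx)\|_2}\Bigl[\widetilde{\bb}(\bx)^\top(\bH+\lambda\bP)^{-1}\mG_n\{(\mathfrak{w}-1)\bb(\bX)(Y^\ast-\bb(\bX)^\top\bbeta_0)\}+\text{Rem}^b_{1n}(\widetilde{\bb}(\bx))\Bigr].
\end{align*}
Condition (iii) and Assumption~\ref{asp:regularityL2}(a) give $\|\mathbb{V}^{1/2}\bb(\bx)\|_2\gtrsim\|\bb(\bx)\|_2$ uniformly in $\bx$. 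So the prefactor is bounded, and we only need $\sup_{\bx}|\text{Rem}^b_{1n}|=o_{\mP}(a_n^{-1})$. This follows from the remainder bound in the lemma together with four inputs: the conditions of Corollary~\ref{corollary:uniform-negligible}, which control $\sqrt{n}(m_{1n}+m_{2n})$; condition (i), which controls $\lambda\omega_{\max}$; condition (ii), which gives $m_{3n}=o_{\mP}(a_n^{-1})$; and the extra $\log n$ factors absorbed into (iv).

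\textbf{Step 2 (removing the ridge and approximation error).} I would replace $(\bH+\lambda\bP)^{-1}$ by $\bH^{-1}$. The difference has operator norm $\lesssim\lambda\omega_{\max}$, and $\|\mG_n\{(\mathfrak{w}-1)\bb(\bX)(\Omega+r_0)\}\|_2\lesssim_{\mP}\sqrt{K}(1+l_Kc_K)$, so condition (i) makes this term negligible. I would also write $Y^\ast-\bb(\bX)^\top\bbeta_0=\Omega+r_0$, with $r_0$ the projection error. The leading term is then $\widetilde{\bb}(\bx)^\top\bH^{-1}\mG_n\{(\mathfrak{w}-1)\bb(\bX)(\Omega+r_0)\}$. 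The summands $\zeta_i=(\mathfrak{w}_i-1)\bb(\bX_i)(\Omega_i+r_{0i})$ are i.i.d., mean zero (since $E\mathfrak{w}=1$ and the multipliers are independent of the data), and have covariance exactly $E\{(\Omega+r_0)^2\bb\bb^\top\}$ because $\mathrm{Var}(\mathfrak{w})=1$. The covariance of $\bH^{-1}n^{-1/2}\sum_i\zeta_i$ is therefore $\mathbb{V}$.

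\textbf{Step 3 (coupling, the main obstacle).} I would apply Yurinskii's coupling to the $K$-dimensional sum, as in the proof of Theorem~\ref{thm:uniform-strong-gaussian} and Theorem 4.6 of \cite{belloni2015some}. Two inputs are needed. First, the exponential multipliers have finite moments of all orders, so the third moments satisfy $E\|\zeta_i\|_2^3\lesssim\xi_K K(1+l_K^3c_K^3)$, using the $\nu$-th moment bound in Assumption~\ref{assump:uniform-further-boundedness}. Second, this gives a coupling $\mathcal{N}_K\sim\mathcal{N}(0,\mathbb{V})$ with $\|\bH^{-1}\mG_n\zeta-\mathcal{N}_K\|_2=o_{\mP}(a_n^{-1})$, provided $a_n^6K^4\xi_K^2(1+l_K^3c_K^3)^2(\log n)^2/n=o(1)$, which is condition (iv).

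Applying Cauchy--Schwarz with the unit vector $\widetilde{\bb}(\bx)$ makes the bound uniform in $\bx$. The delicate point is whether the coupling must hold conditionally on the data, or whether an unconditional equality in distribution suffices. The statement is only the latter, $=_d$ plus $o_{\mP}(a_n^{-1})$, so the unconditional coupling is enough. Finally, writing $\mathcal{N}_K=\mathbb{V}^{1/2}\mathcal{N}(0,\boldsymbol{I}_K)$ and dividing by $\|\mathbb{V}^{1/2}\bb(\bx)\|_2$ yields the stated Gaussian process form. Condition (v) is needed only so that the centering at $\widehat{\tau}$ rather than $\tau$ introduces no further bias term. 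Since $r$ cancels in $\widehat{\tau}^b-\widehat{\tau}$, (v) is used only through Step 1's remainder.
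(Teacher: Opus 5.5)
Your route is the paper's route. Its proof invokes Lemma~\ref{lemma:uniform-linearization-boot} and says the argument for Theorem~\ref{thm:uniform-strong-gaussian} (remove the ridge, then Yurinskii coupling) carries over, which is your Steps 1--3. The gap is that the remainder bookkeeping you assert in Step 1 does not follow from the stated hypotheses.

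The lemma's bound contains $\sqrt{n}\log n\,(m_{1n}+m_{2n})$, $m_{3n}\log n$ and a term of order $\lambda\omega_{\max}$, and all three must be $o_{\mP}(a_n^{-1})$. The hypotheses do not deliver this:
\begin{itemize}
\item The conditions of Corollary~\ref{corollary:uniform-negligible} only give $\sqrt{n}(m_{1n}+m_{2n})\le m_{4n}=O(m_{3n}+\sqrt{\log K}(1+l_Kc_K))$. This allows $\sqrt{n}(m_{1n}+m_{2n})$ to be of order $\sqrt{\log K}$.
\item Condition (ii) gives $m_{3n}=o_{\mP}(a_n^{-1})$, with no room for an extra $\log n$.
\item Condition (i) gives only $\lambda\omega_{\max}=o(1)$. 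That is not enough when $a_n\to\infty$, e.g.\ $a_n=\sqrt{\log K}$ in Theorem~\ref{thm:honest-cov-gauss-boot-ci}.
\item Condition (iv) involves only $a_n$, $K$, $\xi_K$, $l_Kc_K$ and $n$. It cannot absorb a $\log n$ multiplying nuisance errors, nor the $n^{1/\nu}$ part of $m_{3n}\log n$.
\end{itemize}
Step 2 has the same problem. Your bound $\lambda\omega_{\max}\sqrt{K}(1+l_Kc_K)$ is correct, but (i) does not make it $o(a_n^{-1})$. The paper's bound for the corresponding term $\text{Rem}_{0n}$ looks smaller only because $\mG_n$ is replaced by $\mP_n$ there, which drops a factor $\sqrt{n}$.

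The paper's one-line proof glosses over these same points, so you cannot close them by citation. You need additional conditions such as $\sqrt{n}\log n\,(m_{1n}+m_{2n})=o_{\mP}(a_n^{-1})$, $m_{3n}\log n=o_{\mP}(a_n^{-1})$ and $\lambda\omega_{\max}\{1+\sqrt{K}(1+l_Kc_K)\}=o(a_n^{-1})$. Your third-moment bound for Yurinskii also needs $\nu\ge 3$, not merely $\nu>2$.

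Second, you resolve the ``delicate point'' the wrong way. An unconditional Yurinskii coupling of the i.i.d.\ vectors $\zeta_i$ yields a Gaussian vector built jointly from the data and the weights. It therefore says nothing about the law of $T_n^b$ given the data, and that conditional law is exactly what Algorithm~\ref{alg:multiplier-threshold} uses to compute $c_n^{(2)}(1-\alpha)$. For comparison, the non-bootstrap $T_n$ also admits an unconditional Gaussian coupling, yet its conditional law given the data is a point mass.

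What distinguishes this theorem from Theorem~\ref{thm:uniform-strong-gaussian} is a coupling conditional on the data:
\begin{itemize}
\item Given the data, the $\zeta_i$ are independent but not identically distributed. Their covariance is $\widehat\Sigma=\mP_n\{(\Omega+r_0)^2\bb(\bX)\bb(\bX)^\top\}$ rather than $\Sigma=E\{(\Omega+r_0)^2\bb(\bX)\bb(\bX)^\top\}$.
\item The conditional third moment in Yurinskii's bound is controlled by Markov's inequality at the rate in (iv).
\item You then replace $\widehat\Sigma^{1/2}$ by $\Sigma^{1/2}$. Use (iii), the bound $\|\widehat\Sigma-\Sigma\|_{\mathrm{op}}\lessp(n^{1/\nu}+l_Kc_K)\sqrt{\xi_K^2\log K/n}$ (term $(\ast_2)$ in the proof of Theorem~\ref{thm:cov-matrix-rate}), and a Gaussian maximal inequality over $\bx$. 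The resulting error is $O_{\mP}(m_{3n})$, so (ii) covers it.
\end{itemize}
Only then is the approximating $\mathcal{N}(0,\boldsymbol{I}_K)$ independent of the data, which the bootstrap justification requires. Finally, (v) plays no role at all: $r$ cancels in $\widehat\tau^b-\widehat\tau$ and does not enter the lemma's remainder.
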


\subsection{Validity of Gaussian bootstrap}
\begin{theorem}[\emph{Strong approximation of suprema}]\label{thm:strong-gaussian-appro-suprema}
Under the same set of conditions as outlined in Theorem~\ref{thm:honest-cov-gauss-boot-ci} of the main manuscript. Then 

\begin{align*}
    \sup_{\bx}|T_n(\bx)|=_d\sup_{\bx}\left|\frac{\bb(\bx)^\top\mathbb{V}^{1/2}}{\| \mathbb{V}^{1/2}\bb(\bx)\|_2}\mathcal{N}(0,\boldsymbol{I}_K)\right|+o_{\mP}\left(\frac{1}{\sqrt{\log K}}\right).
\end{align*}
    
\end{theorem}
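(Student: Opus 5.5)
The statement follows almost directly from the strong Gaussian process approximation in Theorem~\ref{thm:uniform-strong-gaussian}, applied with $a_n=\sqrt{\log K}$, together with the fact that the supremum functional is $1$-Lipschitz in the uniform norm. The plan has three steps: check that the hypotheses of Theorem~\ref{thm:uniform-strong-gaussian} hold with this $a_n$; pass from the process-level coupling to the suprema; and make precise what ``$=_d$'' means under a coupling.

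\textbf{Step 1 (verifying hypotheses).} The conditions of Theorem~\ref{thm:honest-cov-gauss-boot-ci} explicitly include all conditions of Theorem~\ref{thm:uniform-strong-gaussian}, with $\nu\ge 4$ and $a_n=\sqrt{\log K}$. Hence conditions (i)--(v) there hold with $a_n^{-1}=(\log K)^{-1/2}$. In particular, (ii) becomes $m_{3n}=o_{\mP}((\log K)^{-1/2})$, (iv) becomes $(\log K)^3K^4\xi_K^2(1+l_K^3c_K^3)^2(\log n)^2/n=o(1)$, and (v) is the undersmoothing requirement at this rate. Theorem~\ref{thm:uniform-strong-gaussian} then says the following. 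On a suitably enriched probability space there is a copy $\widetilde T_n$ of $T_n$ (or the original $T_n$ on a space carrying a Gaussian vector $\mathcal N_K\sim\mathcal N(0,\boldsymbol I_K)$) such that
\[
\sup_{\bx\in\mathcal X}\left|T_n(\bx)-\frac{\bb(\bx)^\top\mathbb V^{1/2}\mathcal N_K}{\lVert\mathbb V^{1/2}\bb(\bx)\rVert_2}\right| = o_{\mP}\big((\log K)^{-1/2}\big).
\]
I will read the conclusion of Theorem~\ref{thm:uniform-strong-gaussian} as this coupling statement in $\mathcal L^\infty$, since that is the sense of ``strong approximation'' the paper uses; I will state this interpretation explicitly.

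\textbf{Step 2 (passing to suprema).} Write $G_n(\bx)=\bb(\bx)^\top\mathbb V^{1/2}\mathcal N_K/\lVert\mathbb V^{1/2}\bb(\bx)\rVert_2$. By the reverse triangle inequality,
\[
\Big|\sup_{\bx}|T_n(\bx)|-\sup_{\bx}|G_n(\bx)|\Big|\le \sup_{\bx}\big||T_n(\bx)|-|G_n(\bx)|\big|\le\sup_{\bx}|T_n(\bx)-G_n(\bx)|.
\]
The right-hand side is $o_{\mP}((\log K)^{-1/2})$ by Step 1. This gives $\sup_{\bx}|T_n(\bx)|=\sup_{\bx}|G_n(\bx)|+o_{\mP}(1/\sqrt{\log K})$ on the coupled space, which is exactly the claimed distributional statement in the paper's notation.

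\textbf{Step 3 (measurability and the main obstacle).} Both suprema must be well-defined random variables. For $G_n$ this follows because $\bx\mapsto\widetilde\bb(\bx)$ is Lipschitz with constant $\xi_K^L$ (Assumption~\ref{assump:uniform-further-boundedness}) and $\mathbb V$ is positive definite by condition (iii) and Assumption~\ref{asp:regularityL2}(a). Together these make the normalized direction $\mathbb V^{1/2}\bb(\bx)/\lVert\mathbb V^{1/2}\bb(\bx)\rVert_2$ continuous, so the supremum can be taken over a countable dense subset of the bounded support $\mathcal X$; the same argument applies to $T_n$, whose numerator is $\bb(\bx)^\top(\widehat\bbeta-\bbeta_0)-r(\bx)$. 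The only delicate point is ensuring that the $o_{\mP}(a_n^{-1})$ in Theorem~\ref{thm:uniform-strong-gaussian} is genuinely uniform in $\bx$ rather than pointwise. I expect this to be the main obstacle. If needed, I would re-derive it from the uniform linearization (Supplementary Lemma~\ref{lemma:uniform-linearization}), using the Yurinskii-type coupling of \citet{belloni2015some}, under condition (iv) with $a_n=\sqrt{\log K}$.
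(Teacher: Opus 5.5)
Your proof is correct, but it takes a different route from the paper. You treat the statement as an immediate corollary of Theorem~\ref{thm:uniform-strong-gaussian}. The hypotheses of Theorem~\ref{thm:honest-cov-gauss-boot-ci} include every condition of that theorem with $a_n=\sqrt{\log K}$, so you have an $\mathcal L^\infty$ coupling of $T_n$ with the normalized Gaussian process at error $o_{\mP}((\log K)^{-1/2})$. The $1$-Lipschitz property of $f\mapsto\sup_{\bx}|f(\bx)|$ then carries this coupling over to the suprema.

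The paper's proof is a one-line deferral to the suprema results of \citet{belloni2015some} (their Theorems 5.4--5.5). Judging from the extra hypotheses, that argument approximates $\sup_{\bx}|T_n(\bx)|$ directly rather than coupling the whole process. The leading empirical-process term from the uniform linearization (Lemma~\ref{lemma:uniform-linearization}) is presumably handled by a Chernozhukov--Chetverikov--Kato-type Gaussian approximation for suprema of empirical processes, with the remainder absorbed at rate $o_{\mP}(1/\sqrt{\log K})$. The additional assumptions $\nu\ge 4$ and $\xi_K(\log K)^2/n^{1/2-1/\nu}=o(1)$ in Theorem~\ref{thm:honest-cov-gauss-boot-ci} appear to exist for exactly this step.

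\textbf{What each route buys.} The paper's route never needs a process-level coupling. In principle it therefore works without the Yurinskii-type growth condition (iv), $(\log K)^3K^4\xi_K^2(1+l_K^3c_K^3)^2(\log n)^2/n=o(1)$, which is far more restrictive on $K$. Your route gives a two-line proof. It also shows that, as the theorem is actually stated (with (iv) imposed anyway), the extra moment and growth conditions are redundant for this particular claim.

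\textbf{Step 3.} The ``main obstacle'' you flag is not one. Theorem~\ref{thm:uniform-strong-gaussian} is stated explicitly in $\mathcal L^\infty$ norm, so its $o_{\mP}(a_n^{-1})$ is already uniform in $\bx$. No re-derivation through Yurinskii's coupling is needed.
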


\section{Proofs of Theorems and Propositions}\label{sec:proofs-all}
\subsection{Proof of Theorem \ref{thm:least-squares-series-density-ratio}}\label{sec:proofs-all;ss:proof-thm1}
\begin{proof}
By triangle inequality and Assumption \ref{asp:regularityL2}(b),
\begin{align*}
    &\| \widehat{\tau}_{d_0d_1}-\tau_{d_0d_1}\|_{\mP,2}\leq \| \bb(\bX)^\top(\widehat{\bbeta}-\bbeta_0)\|_{\mP,2} +\| \bb(\bX)^\top\bbeta_0-\tau_{d_0d_1}(\bX)\|_{\mP,2}\leq \| \bb(\bX)^\top (\widehat{\bbeta}-\bbeta_0) \|_{\mP,2}+c_{K}.
\end{align*}
Without loss of generality, Assumption \ref{asp:regularityL2}(a) implies that the basis functions can be orthonormalized, allowing us to assume $\bH= \boldsymbol{I}_{K}$ to simplify the exposition \citep{newey1997convergence}. Consequently, $\| \bb(\bX)^\top (\widehat{\bbeta}-\bbeta_0) \|_{\mP,2}=\| \widehat{\bbeta}-\bbeta_0\|_2$. Under Assumption \ref{asp:regularityL2}(c), the Rudelson’s law of large numbers for matrices (referred to as \emph{Rudelson’s LLN} hereafter) or Lemma 6.2 in \cite{belloni2015some} implies that 
\begin{align*}
    \| \widehat{\bH}-\bH\|_\text{op}\lessp\sqrt{\frac{\xi_K^2\log K}{n}}=o_{\mP}(1).
\end{align*}
Then, by Weyl’s inequality, we have $|\omega_k(\widehat{\bH}) - \omega_k(\bH)| \leq \| \widehat{\bH} - \bH \|_{\text{op}}$ for all $1\leq k\leq K$. Combined with Assumption \ref{asp:regularityL2}(a) and $\| \widehat{\bH}-\bH\|_\text{op}=o_{\mP}(1)$, this implies that 
\begin{align*}
    |\omega_k(\widehat{\bH})|\leq |\omega_k(\widehat{\bH}) - \omega_k(\bH)|+|\omega_k(\bH)|\leq |\omega_k(\widehat{\bH}) - \omega_k(\bH)|+1\lessp 1+o_{\mP}(1)=O_{\mP}(1).
\end{align*}
Moreover, \cite{belloni2015some} showed in the proof of Theorem 4.1 that, $\text{wp}\to1$, all eigenvalues of \(\widehat{\bH}\) are bounded below by $1/2$. Thus, $\text{wp}\to1$, the eigenvalues of $\widehat{\bH}$ remain uniformly bounded above and away from zero, and hence, invertible. 
Thus, $\text{wp}\to1$, the eigenvalues for $\widehat{\bH}+\lambda \bP$ are also all bounded above and away from zero, provided that $\lambda$ is bounded above. Define $\bh$ as the population analogue of $\widehat{\bh}$, i.e., $\bh=E\{\bb(\bX)Y^\ast\}=E\{\bb(\bX)\tau_{d_0d_1}(\bX)\}$. By triangle inequality and property of the operator norm, it follows that
\begin{align}
        \| \widehat{\bbeta}-\bbeta_0\|_2
        =&\| (\widehat{\bH}+\lambda \bP)^{-1} [\widehat{\bh}-(\widehat{\bH}+\lambda \bP)\bbeta_0]\|_2\nonumber\\
        \lessp&\| (\widehat{\bH}+\lambda \bP)^{-1}\|_{\text{op}}\| \widehat{\bh}-(\widehat{\bH}+\lambda \bP)\bbeta_0\|_2\nonumber\\
        \lessp&\frac{\| \widehat{\bh}-\widehat{\bH}\bbeta_0\|_2+\lambda\| \bP\bbeta_0\|_2}{\omega_1(\widehat{\bH}+\lambda\bP)}\nonumber\\
        \lessp&\frac{\| \widehat{\bh}-\widehat{\bH}\bbeta_0\|_2+\lambda\| \bP\bbeta_0\|_2}{\omega_1(\widehat{\bH})+\lambda\omega_1(\bP)}\nonumber\\
        \lessp&2\frac{\| \widehat{\bh}-\widehat{\bH}\bbeta_0\|_2+\lambda\| \bP\bbeta_0\|_2}{1+2\lambda\omega_1(\bP)}.\label{eq:two-components-L2-rate-bound}
\end{align}
To proceed, we first bound the term $\| \widehat{\bh}-\widehat{\bH}\bbeta_0\|_2$. Triangle inequality implies that
\begin{align*}
     \| \widehat{\bh}-\widehat{\bH}\bbeta_0\|_2
     =&\| \mP_n\{\bb(\bX)(\widehat{Y}^\ast-Y^\ast+Y^\ast-\tau_{d_0d_1}(\bX)+\tau_{d_0d_1}(\bX)-\bb(\bX)^\top\bbeta_0)\}\|_2\\
     \leq&\underbrace{\| \mP_n\{\bb(\bX)(\widehat{Y}^\ast-Y^\ast)\}\|_2}_{(\ast_1)} +\underbrace{\| \mP_n\{\bb(\bX)(Y^\ast-\tau_{d_0d_1}(\bX))\}\|_2}_{(\ast_2)}+\underbrace{\| \mP_n\{\bb(\bX)r_\tau(\bX)\}\|_2}_{(\ast_3)}.
\end{align*}
By sample-cross-fitting and triangle inequality, the term ($\ast_1$) can be written as
\begin{align*}
    (\ast_1)=& \| \sum_{s=1}^S\frac{n_s}{n}\mP_{n_s}\{\bb(\bX)(\widehat{Y}^\ast-Y^\ast)\}\|_2\\
\leq& \underbrace{\| \sum_{s=1}^S\frac{n_s}{n}(\mP_{n_s}\{\bb(\bX)(\widehat{Y}^\ast-Y^\ast)\}-\mP\{\bb(\bX)(\widehat{Y}^\ast-Y^\ast)|\widehat{\bGamma}_s\})\|_2}_{(\ast_4)}+\underbrace{\| \sum_{s=1}^S\frac{n_s}{n}\mP\{\bb(\bX)(\widehat{Y}^\ast-Y^\ast)|\widehat{\bGamma}_s\}\|_2}_{(\ast_5)}.
\end{align*}

Let $Y^\ast(\bGamma_1,\widehat{\bGamma}_2)$ denote the fitted pseudo-outcome under the true base nuisance parameters $\bGamma_1$ and the estimated intermediate nuisance parameters $\widehat{\bGamma}_2$. To bound $(\ast_5)$, 
 we can decompose the expected bias as follows:
\begin{align*}
&\mP\{\bb(\bX)(\widehat{Y}^\ast-Y^\ast)|\widehat{\bGamma}_s\}\\
=&\mP\{\bb(\bX)(\widehat{Y}^\ast-Y^\ast(\bGamma_1,\widehat{\bGamma}_2))|\widehat{\bGamma}_s\}+\mP\{\bb(\bX)(Y^\ast(\bGamma_1,\widehat{\bGamma}_2)-Y^\ast)|\widehat{\bGamma}_s\}\\
=&\underbrace{E\left\{\bb(\bX)\frac{E\{\widehat{\varphi}^N_{d_0d_1}-\varphi^N_{d_0d_1}|\bX,\mathcal{F}_s^c\}-\widehat{\tau}_{d_0d_1}E\{\widehat{\varphi}^D_{d_0d_1}-\varphi^D_{d_0d_1}|\bX,\mathcal{F}_s^c\}}{\widehat{\tau}_{d_0d_1}^D}|\mathcal{F}_s^c\right\}}_{(\ast_8)}+\\
&\underbrace{E\left\{\bb(\bX)\frac{(\widehat{\tau}^N_{d_0d_1}-\tau^N_{d_0d_1})(\widehat{\tau}_{d_0d_1}^D-\tau_{d_0d_1}^D)}{(\widehat{\tau}_{d_0d_1}^D)^2}|\mathcal{F}_s^c\right\}-E\left\{\bb(\bX)\frac{\tau^N_{d_0d_1}(\widehat{\tau}_{d_0d_1}^D-\tau_{d_0d_1}^D)^2}{(\widehat{\tau}_{d_0d_1}^D)^2\tau_{d_0d_1}^D}|\mathcal{F}_s^c\right\}}_{(\ast_9)}
\end{align*}
where $\widehat{\varphi}^N_{d_0d_1}:=\varphi^N_{d_0d_1}(\mO;\widehat{\bGamma}_s)$ and $\widehat{\varphi}^D_{d_0d_1}:=\varphi^D_{d_0d_1}(D,Z,\bC;\widehat{\bGamma}_s)$. For the term $(\ast_8)$, it follows that
\begin{align*}
      &\| (\ast_8)\|_2^2\\
    \leq&\epsilon_1^2(\| E\left\{\bb(\bX)(\widehat{\varphi}^N_{d_0d_1}-\varphi^N_{d_0d_1})|\mathcal{F}_s^c\}\|_2^2+\epsilon_1^2\epsilon_2^2\| E\{\bb(\bX)(\widehat{\varphi}^D_{d_0d_1}-\varphi^D_{d_0d_1})|\mathcal{F}_s^c\right\}\|_2^2)\\
    \lessp&\sum_{k=1}^K[(E\{b_k(\bX)(\widehat{\varphi}^N_{d_0d_1}-\varphi^N_{d_0d_1})|\mathcal{F}_s^c\})^2+(E\{b_k(\bX)(\widehat{\varphi}^D_{d_0d_1}-\varphi^D_{d_0d_1})|\mathcal{F}_s^c\})^2]\\
    \lessp&\sum_{k=1}^K[(E\{(\widehat{\varphi}^N_{d_0d_1}-\varphi^N_{d_0d_1})|\mathcal{F}_s^c\})^2+(E\{(\widehat{\varphi}^D_{d_0d_1}-\varphi^D_{d_0d_1})|\mathcal{F}_s^c\})^2]\\
    \lessp&K[\sum_{z=0,1}\eta(m_{zd_z})^2(\eta(\pi)^2+\sum_{j=0,1}\eta(p_j)^2)+\eta(p_0)^2\eta(p_1)^2+\eta(\pi)^2\sum_{j=0,1}\eta(p_j)^2]
\end{align*}
where the first and third inequalities follow from the triangle inequality and 
Assumption \ref{asp:regularityL2}(d), and the final inequality follows from the Cauchy–Schwarz inequality and Equation (S8.21) in the Supplementary Material to \cite{tong2025semiparametric}. For the term $(\ast_9)$, we have that
\begin{align*}
    &\| (\ast_9)\|_2^2\\
    \leq&\left\| E\left\{\bb(\bX)\frac{(\widehat{\tau}^N_{d_0d_1}-\tau^N_{d_0d_1})(\widehat{\tau}_{d_0d_1}^D-\tau_{d_0d_1}^D)}{(\widehat{\tau}_{d_0d_1}^D)^2}|\mathcal{F}_s^c\right\}\right\|_2^2+\left\| E\left\{\bb(\bX)\frac{\tau^N_{d_0d_1}(\widehat{\tau}_{d_0d_1}^D-\tau_{d_0d_1}^D)^2}{(\widehat{\tau}_{d_0d_1}^D)^2\tau_{d_0d_1}^D}|\mathcal{F}_s^c\right\}\right\|_2^2\\
    =&\sum_{k=1}^K\left[\left(E\left\{b_k(\bX)\frac{(\widehat{\tau}^N_{d_0d_1}-\tau^N_{d_0d_1})(\widehat{\tau}_{d_0d_1}^D-\tau_{d_0d_1}^D)}{(\widehat{\tau}_{d_0d_1}^D)^2}|\mathcal{F}_s^c\right\}\right)^2+\left(E\left\{b_k(\bX)\frac{\tau^N_{d_0d_1}(\widehat{\tau}_{d_0d_1}^D-\tau_{d_0d_1}^D)^2}{(\widehat{\tau}_{d_0d_1}^D)^2\tau_{d_0d_1}^D}|\mathcal{F}_s^c\right\}\right)^2\right]\\
    \leq &\epsilon_2^2\sum_{k=1}^K\left[\left(\epsilon_1^2E\left\{(\widehat{\tau}^N_{d_0d_1}-\tau^N_{d_0d_1})(\widehat{\tau}_{d_0d_1}^D-\tau_{d_0d_1}^D)|\mathcal{F}_s^c\right\}\right)^2+\left(\epsilon_1^3\epsilon_2^2E\left\{(\widehat{\tau}_{d_0d_1}^D-\tau_{d_0d_1}^D)^2|\mathcal{F}_s^c\right\}\right)^2\right]\\
    \lessp& K\left(\eta(\tau_{d_0d_1}^N)^2\eta(\tau_{d_0d_1}^D)^2+\eta(\tau_{d_0d_1}^D)^4\right),
\end{align*}
where the second inequality follows from Assumption \ref{asp:regularityL2}(d) and the last inequality follows from Cauchy-Schwarz. Thus, it follows that 
\begin{align*}
   (\ast_5)\lessp&\underbrace{\sqrt{K} \left[\sum_{z=0,1}\eta(m_{zd_z})\left(\eta(\pi)+\sum_{j=0,1}\eta(p_j)\right)+\eta(p_0)\eta(p_1)+\eta(\pi)\sum_{j=0,1}\eta(p_j)\right]}_{m_{2n}^{(1)}}+\\
   &\underbrace{\sqrt{K}\left(\eta(\tau_{d_0d_1}^N)\eta(\tau_{d_0d_1}^D)+\eta(\tau_{d_0d_1}^D)^2\right)}_{m_{2n}^{(2)}}\\
   =&m_{2n}.
\end{align*}

To bound the term $(\ast_4)$, we note that
\begin{align*}
&E\{\| (\mP_{n_s}\{\bb(\bX)(\widehat{Y}^\ast-Y^\ast)\}-\mP\{\bb(\bX)(\widehat{Y}^\ast-Y^\ast)|\widehat{\bGamma}_s\})\|_2^2|\widehat{\bGamma}_s\}\\
\leq&n^{-1}E\{\| \bb(\bX)(\widehat{Y}^\ast-Y^\ast)\|_2^2|\widehat{\bGamma}_s\}\\
\leq&n^{-1}\xi_K^2E\{(\widehat{Y}^\ast-Y^\ast)^2|\widehat{\bGamma}_s\},
\end{align*}
where the first inequality follows from the fact that $\{\widehat{Y}^\ast_i\}_{i\in\mathcal{F}_s}$ are independent and identically distributed conditional on the training sample $\mathcal{F}_s^c$. We consider the following decomposition for the second moment of the bias, conditional on the training sample:
\begin{align*}
    &E\{(\widehat{Y}^\ast-Y^\ast)^2|\widehat{\bGamma}_s\}\\
=&\underbrace{E\{(E\{\widehat{Y}^\ast|\bX,\widehat{\bGamma}_s\}-E\{Y^\ast|\bX,\widehat{\bGamma}_s\})^2|\widehat{\bGamma}_s\}}_{(\ast_6)}+\\
&\underbrace{E\{[(\widehat{Y}^\ast-E\{\widehat{Y}^\ast|\bX,\widehat{\bGamma}_s\})-(Y^\ast-E\{Y^\ast|\bX,\widehat{\bGamma}_s\})]^2|\widehat{\bGamma}_s\}}_{(\ast_7)},
\end{align*}
where the expected cross-product term vanishes by LOTE. We can bound the term $(\ast_6)$ as follows:
\begin{align*}
    (\ast_6)=&E\{(E\{\widehat{Y}^\ast-Y^\ast(\bGamma_1,\widehat{\bGamma}_2)|\bX,\widehat{\bGamma}_s\}+E\{Y^\ast(\bGamma_1,\widehat{\bGamma}_2)-Y^\ast|\bX,\widehat{\bGamma}_s\})^2|\widehat{\bGamma}_s\}\\
    \leq&2E\{E\{\widehat{Y}^\ast-Y^\ast(\bGamma_1,\widehat{\bGamma}_2)|\bX,\widehat{\bGamma}_s\}^2+E\{Y^\ast(\bGamma_1,\widehat{\bGamma}_2)-Y^\ast|\bX,\widehat{\bGamma}_s\}^2|\widehat{\bGamma}_s\}\\
    \lessp&E\left\{\left(\frac{E\{\widehat{\varphi}^N_{d_0d_1}-\varphi^N_{d_0d_1}|\bX,\mathcal{F}_s^c\}-\widehat{\tau}_{d_0d_1}E\{\widehat{\varphi}^D_{d_0d_1}-\varphi^D_{d_0d_1}|\bX,\mathcal{F}_s^c\}}{\widehat{\tau}_{d_0d_1}^D}\right)^2|\mathcal{F}_s^c\right\}+\\
    &E\left\{\left(\frac{(\widehat{\tau}^N_{d_0d_1}-\tau^N_{d_0d_1})(\widehat{\tau}_{d_0d_1}^D-\tau_{d_0d_1}^D)}{(\widehat{\tau}_{d_0d_1}^D)^2}-\frac{\tau^N_{d_0d_1}(\widehat{\tau}_{d_0d_1}^D-\tau_{d_0d_1}^D)^2}{(\widehat{\tau}_{d_0d_1}^D)^2\tau_{d_0d_1}^D}\right)^2|\mathcal{F}_s^c\right\}\\
     \lessp&E\left\{\left(\widehat{\varphi}^N_{d_0d_1}-\varphi^N_{d_0d_1}\right)^2+\left(\widehat{\varphi}^D_{d_0d_1}-\varphi^D_{d_0d_1}\right)^2|\mathcal{F}_s^c\right\}+\\
    &E\left\{(\widehat{\tau}^N_{d_0d_1}-\tau^N_{d_0d_1})^2(\widehat{\tau}_{d_0d_1}^D-\tau_{d_0d_1}^D)^2+(\widehat{\tau}_{d_0d_1}^D-\tau_{d_0d_1}^D)^4|\mathcal{F}_s^c\right\}\\   
     \lessp&     \eta(\tau_{d_0d_1}^D)^2+     \eta(\tau_{d_0d_1}^N)^2+\eta(\pi)^2+\sum_{z=0,1}(\eta(p_z)^2+\eta(m_{zd_z})^2),
\end{align*}
where the first inequality follows from the quadratic-mean arithmetic-mean inequality, the third inequality follows from the quadratic-mean arithmetic-mean inequality, Assumption \ref{asp:regularityL2}(d), and Jensen's inequality, and last inequality follows from Assumption \ref{asp:regularityL2}(d) and the proof of Theorem 3 in \cite{tong2025semiparametric}. To simplify bounding the term $(\ast_7)$, we further define
\begin{align*}
    \widetilde{Y}^\ast:=\tau_{d_0d_1}+\frac{\widehat{\varphi}^N_{d_0d_1}-\tau_{d_0d_1}\widehat{\varphi}^D_{d_0d_1}}{\widehat{\tau}^D_{d_0d_1}}.
\end{align*}
Then we have that
\begin{align*}
    &(\ast_7)\\
    \lessp&E\{[(\widehat{Y}^\ast-E\{\widehat{Y}^\ast|\bX,\widehat{\bGamma}_s\})-(\widetilde{Y}^\ast-E\{\widetilde{Y}^\ast|\bX,\widehat{\bGamma}_s\})]^2|\widehat{\bGamma}_s\}+\\
    &E\{[(\widetilde{Y}^\ast-E\{\widetilde{Y}^\ast|\bX,\widehat{\bGamma}_s\})-(Y^\ast(\bGamma_1,\widehat{\bGamma}_2)-E\{Y^\ast(\bGamma_1,\widehat{\bGamma}_2)|\bX,\widehat{\bGamma}_s\})]^2|\widehat{\bGamma}_s\}+\\
    &E\{[(Y^\ast(\bGamma_1,\widehat{\bGamma}_2)-E\{Y^\ast(\bGamma_1,\widehat{\bGamma}_2)|\bX,\widehat{\bGamma}_s\})-(Y^\ast-E\{Y^\ast|\bX,\widehat{\bGamma}_s\})]^2|\widehat{\bGamma}_s\}\\
    =&E\left\{\left(\frac{\widehat{\tau}_{d_0d_1}^N-\widehat{\tau}_{d_0d_1}^N}{(\widehat{\tau}_{d_0d_1}^D)^2}-\frac{{\tau}_{d_0d_1}^N(\widehat{\tau}_{d_0d_1}^D-{\tau}_{d_0d_1}^D)}{(\widehat{\tau}_{d_0d_1}^D)^2{\tau}_{d_0d_1}^D}\right)^2\left(\widehat{\varphi}^D_{d_0d_1}-E\{\widehat{\varphi}^D_{d_0d_1}|\bX,\widehat{\bGamma}_s\}\right)^2|\widehat{\bGamma}_s\right\}+\\
    &E\left\{\left(\frac{\widehat{\tau}^D_{d_0d_1}-\tau_{d_0d_1}^D}{\widehat{\tau}^D_{d_0d_1}\tau_{d_0d_1}^D}\right)^2\text{Var}(\widehat{\varphi}_{d_0d_1}^N-\tau_{d_0d_1}\widehat{\varphi}_{d_0d_1}^D|\bX,\widehat{\bGamma}_s)|\widehat{\bGamma}_s\right\}+\\
    &\small{E\left\{\frac{\left[\widehat{\varphi}^N_{d_0d_1}-\varphi^N_{d_0d_1}-E\{\widehat{\varphi}^N_{d_0d_1}-\varphi^N_{d_0d_1}|\bX,\widehat{\bGamma}_s\}-\tau_{d_0d_1}(\widehat{\varphi}^D_{d_0d_1}-\varphi^D_{d_0d_1}-E\{\widehat{\varphi}^D_{d_0d_1}-\varphi^D_{d_0d_1}|\bX,\widehat{\bGamma}_s\})\right]^2}{(\tau_{d_0d_1}^D)^2}|\widehat{\bGamma}_s\right\}}\\
    \lessp&E\left\{\left(\frac{\widehat{\tau}_{d_0d_1}^N-\widehat{\tau}_{d_0d_1}^N}{(\widehat{\tau}_{d_0d_1}^D)^2}-\frac{{\tau}_{d_0d_1}^N(\widehat{\tau}_{d_0d_1}^D-{\tau}_{d_0d_1}^D)}{(\widehat{\tau}_{d_0d_1}^D)^2{\tau}_{d_0d_1}^D}\right)^2\text{Var}(\widehat{\varphi}^D_{d_0d_1}|\bX,\widehat{\bGamma}_s)|\widehat{\bGamma}_s\right\}+\\
    &E\left\{\left(\widehat{\tau}^D_{d_0d_1}-\tau_{d_0d_1}^D\right)^2|\widehat{\bGamma}_s\right\}+\\
    &E\{\text{Var}(\widehat{\varphi}^N_{d_0d_1}-\varphi^N_{d_0d_1}|\bX,\widehat{\bGamma}_s)|\widehat{\bGamma}_s\}+E\{\text{Var}(\widehat{\varphi}^D_{d_0d_1}-\varphi^D_{d_0d_1}|\bX,\widehat{\bGamma}_s)|\widehat{\bGamma}_s\}\\
    \lessp&\eta(\tau_{d_0d_1}^D)^2+     \eta(\tau_{d_0d_1}^N)^2+E\{(\widehat{\varphi}^N_{d_0d_1}-\varphi^N_{d_0d_1})^2|\widehat{\bGamma}_s\}+E\{(\widehat{\varphi}^D_{d_0d_1}-\varphi^D_{d_0d_1})^2|\widehat{\bGamma}_s\}\\
    \lessp&\eta(\tau_{d_0d_1}^D)^2+     \eta(\tau_{d_0d_1}^N)^2+\eta(\pi)^2+\sum_{z=0,1}(\eta(p_z)^2+\eta(m_{zd_z})^2),
\end{align*}
where the first inequality follows from the quadratic-mean–arithmetic-mean inequality, the first equality from the LOTE, the second inequality from Assumption \ref{asp:regularityL2}(d) and the proof of Theorem 3 in \cite{tong2025semiparametric}, and the final two inequalities from reasoning analogous to that used for bounding $(\ast_6)$.

To bound the term $(\ast_2)$, we note that
\begin{align*}
    (E\{(\ast_2)^2\})^{1/2}\leq &(E\{(Y^\ast-\tau_{d_0d_1}(\bX))^2\bb(\bX)^\top\bb(\bX)/n\})^{1/2}\\
    \leq &\epsilon_2 \sqrt{K/n},
\end{align*}
where the last inequality follows from Assumption \ref{asp:regularityL2}(d). By Markov's inequality, $(\ast_2)=O_{\mP}(\sqrt{K/n})$. To bound the term $(\ast_3)$, we note that 
\begin{align*}
    (E\{(\ast_3)^2\})^{1/2}\leq &(E\{r_\tau^2\bb(\bX)^\top\bb(\bX)/n\})^{1/2}\\
    \leq& l_Kc_K(E\{\bb(\bX)^\top\bb(\bX)/n\})^{1/2}~~(\text{Assumption \ref{asp:regularityL2}(b)})\\
    =&l_Kc_K\sqrt{\text{trace}(\bH)/n}\\
    =&l_Kc_K\sqrt{\frac{K}{n}},
\end{align*}
which implies that $(\ast_3)=O_{\mP}(l_Kc_K\sqrt{{K}/{n}})$. Alternatively, to bound the term $(\ast_3)$, we note that 
\begin{align*}
    (E\{(\ast_3)^2\})^{1/2}\leq &(E\{r_\tau^2\|\bb(\bX)\|_2^2/n\})^{1/2}\\
    \leq& \xi_K(E\{r_\tau^2/n\})^{1/2})\\
    =&\xi_Kc_K/\sqrt{n},
\end{align*}
which implies that $(\ast_3)=O_{\mP}(\xi_Kc_K/\sqrt{n})$. Therefore, $(\ast_3)\lessp\min\{l_Kc_K\sqrt{{K}/{n}},\xi_Kc_K/\sqrt{n}\}=m_{0n}$. Eventually, we combine all the above results and conclude that 
\begin{align*}
    \| \bb(\bX)^\top (\widehat{\bbeta}-\bbeta_0)\|_{\mP,2}\lessp\sqrt{\frac{K}{n}}+ m_{0n}+m_{1n}+m_{2n}.
\end{align*}
Finally, we bound the regularization bias $\lambda\| \bP\bbeta_0\|_2$. 
By the Pythagorean theorem, it follows that 
\begin{align}
    E\lb \tau_{d_0d_1}(\bX)^2\rb=&E\left[\lb \bb(\bX)^\top\bbeta_0\rb^2\right]+E\left[r_\tau^2\right]\nonumber\\
    \geq&E\left[\lb \bb(\bX)^\top\bbeta_0\rb^2\right]\nonumber\\
    =&\bbeta_0^\top\bH\bbeta_0=\|\bbeta_0\|_2^2,\nonumber
\end{align}
which further implies 
\begin{align*}
    \lambda\| \bP\bbeta_0\|_2\leq&\lambda\| \bP\|_{\text{op}}\|\bbeta_0\|_2\\
    \leq&\lambda\| \bP\|_{\text{op}} \left[E\{\tau_{d_0d_1}(\bX)^2\}\right]^{1/2}\\
    \lessp& \lambda \| \bP\|_{\text{op}}~~~\text{(Assumption \ref{asp:regularityL2}(d))}\\
    \leq& \lambda \omega_K(\bP).
\end{align*}
\end{proof}

\subsection{Some useful lemmas and their proofs}
We state two key lemmas, largely following \cite{belloni2015some}, that are useful for the subsequent proofs. For clarity of exposition, we define 
$\mG_n\{V_i\} = \sqrt{n}\mP_n\{V - \mP\{V\}\}$ as the $\sqrt{n}$-scaled and mean-centered empirical measure. The first lemma establishes the pointwise linearization of the estimator $\widehat{\bbeta}$.
\begin{lemma}[\emph{Pointwise linearization}]\label{lemma:pointwise-linearization}
Under Assumption \ref{asp:regularityL2}(a)–(d), for any $\widetilde{\bb}$ in the unit ball $\{\widetilde{\bb}\in\mathbb{R}^K:\| \widetilde{\bb}\|_2=1\}$, the estimator $\widehat{\bbeta}$ is asymptotically linear with 
\begin{align*}
    &\sqrt{n}\widetilde{\bb}^\top(\widehat{\bbeta}-\bbeta_0)=\widetilde{\bb}^\top\bH^{-1}\mG_n\{\bb(\bX)(Y^\ast-\bb(\bX)^\top\bbeta_0)\}+\text{Rem}_{1n}(\widetilde{\bb}),    
\end{align*}
where the remainder term $\text{Rem}_{1n}$ is bounded as follows 
\begin{align*}
    \text{Rem}_{1n}\lessp&\frac{1}{1+\lambda\omega_1(\bP)}\left\{{\sqrt{n}(m_{1n}+m_{2n})}+\sqrt{\frac{\xi_K^2\log K}{n(1+2\lambda\omega_1(\bP))^2}}\left(1+{\sqrt{n}\sum_{j=1}^2m_{jn}}+l_Kc_K\sqrt{K}+\lambda\omega_K(\bP)\right)\right. \\
&\left. + \lambda\omega_K(\bP)\left(\sqrt{\frac{K}{n}}+m_{0n}+1\right)\right\}.
\end{align*}
\end{lemma}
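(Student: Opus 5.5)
We start from the exact algebraic identity $\widehat{\bbeta}-\bbeta_0=(\widehat{\bH}+\lambda\bP)^{-1}\{\widehat{\bh}-\widehat{\bH}\bbeta_0-\lambda\bP\bbeta_0\}$. As in the proof of Theorem \ref{thm:least-squares-series-density-ratio}, we normalize $\bH=\boldsymbol{I}_K$. Multiplying by $\sqrt{n}\,\widetilde{\bb}^\top$ and adding and subtracting $\bH^{-1}$ gives
\begin{align*}
\sqrt{n}\widetilde{\bb}^\top(\widehat{\bbeta}-\bbeta_0)=\;&\widetilde{\bb}^\top\bH^{-1}\sqrt{n}\,\mP_n\{\bb(\bX)(Y^\ast-\bb(\bX)^\top\bbeta_0)\}\\
&+\widetilde{\bb}^\top\{(\widehat{\bH}+\lambda\bP)^{-1}-\bH^{-1}\}\sqrt{n}\,\mP_n\{\bb(\bX)(Y^\ast-\bb(\bX)^\top\bbeta_0)\}\\
&+\widetilde{\bb}^\top(\widehat{\bH}+\lambda\bP)^{-1}\sqrt{n}\,\mP_n\{\bb(\bX)(\widehat{Y}^\ast-Y^\ast)\}\\
&-\sqrt{n}\lambda\,\widetilde{\bb}^\top(\widehat{\bH}+\lambda\bP)^{-1}\bP\bbeta_0 .
\end{align*}
The leading term is exactly the claimed $\mG_n$ term. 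The reason is that $\bbeta_0$ is the $\mathcal L^2$ projection, so $E\{\bb(\bX)(Y^\ast-\bb(\bX)^\top\bbeta_0)\}=E\{\bb(\bX)r_0(\bX)\}=0$, and therefore $\sqrt{n}\mP_n=\mG_n$ on this term. Each of the remaining three pieces goes into $\text{Rem}_{1n}$.

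\textbf{Nuisance piece.} This is the third line. Its outer factor satisfies $\lVert(\widehat{\bH}+\lambda\bP)^{-1}\rVert_{\mathrm{op}}\lesssim(1+\lambda\omega_1(\bP))^{-1}$ with probability tending to one. That bound follows from Rudelson's LLN, Weyl's inequality, and the fact that $\omega_1(\widehat{\bH})\ge1/2$, as already shown in the proof of Theorem \ref{thm:least-squares-series-density-ratio}. The inner vector is the term $(\ast_1)$ from that same proof, which was bounded by $m_{1n}+m_{2n}$. This piece is therefore of order $\sqrt{n}(m_{1n}+m_{2n})/(1+\lambda\omega_1(\bP))$. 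To also obtain the cross term with $\sqrt{\xi_K^2\log K/n}$, I would split the resolvent as $(\widehat{\bH}+\lambda\bP)^{-1}=(\bH+\lambda\bP)^{-1}+\{(\widehat{\bH}+\lambda\bP)^{-1}-(\bH+\lambda\bP)^{-1}\}$.

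\textbf{Resolvent piece.} This is the second line and is the main obstacle. I would use the resolvent identity
\[
(\widehat{\bH}+\lambda\bP)^{-1}-\bH^{-1}=(\widehat{\bH}+\lambda\bP)^{-1}(\bH-\widehat{\bH}-\lambda\bP)\bH^{-1}.
\]
This yields an operator-norm factor of
\[
\frac{\lVert\widehat{\bH}-\bH\rVert_{\mathrm{op}}+\lambda\omega_K(\bP)}{1+2\lambda\omega_1(\bP)}\lesssim_{\mP}\frac{\sqrt{\xi_K^2\log K/n}+\lambda\omega_K(\bP)}{1+2\lambda\omega_1(\bP)}.
\]
It multiplies $\lVert\sqrt{n}\mP_n\{\bb(\bX)(Y^\ast-\bb(\bX)^\top\bbeta_0)\}\rVert_2$. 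I would split that vector into $\Omega$ and $r_0$ parts and bound the $\ell^2$ norm of each, as for $(\ast_2)$ and $(\ast_3)$ in Theorem \ref{thm:least-squares-series-density-ratio}.
\begin{itemize}
\item The $\Omega$ part is $O_{\mP}(\sqrt{K})$ in $\ell^2$ norm. Because $\widetilde{\bb}$ is a fixed unit vector, the scalar projection is $O_{\mP}(1)$.
\item The $r_0$ part is $O_{\mP}(l_Kc_K\sqrt{K})$.
\end{itemize}
These give the "$1+l_Kc_K\sqrt{K}$" factor. When the same resolvent difference multiplies the nuisance piece, the $\sqrt{n}\sum_j m_{jn}$ contributions appear as well. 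The delicate point is to use the pointwise (fixed-$\widetilde{\bb}$) structure, so that I pay $O_{\mP}(1)$ rather than $\sqrt{K}$ for the $\Omega$ term. Otherwise the stated bound would not be attained.

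\textbf{Regularization piece.} This is the last line. Its norm is at most $\sqrt{n}\lambda\omega_K(\bP)\lVert\bbeta_0\rVert_2/(1+\lambda\omega_1(\bP))$, and $\lVert\bbeta_0\rVert_2\lesssim1$ by the Pythagorean argument at the end of the proof of Theorem \ref{thm:least-squares-series-density-ratio}. Combining it with the $\mathcal L^2$ rate $\sqrt{K/n}+m_{0n}$ from that theorem gives the $\lambda\omega_K(\bP)(\sqrt{K/n}+m_{0n}+1)$ term. Collecting all the pieces and factoring out $(1+\lambda\omega_1(\bP))^{-1}$ then yields the stated remainder bound.
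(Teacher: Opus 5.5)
Your decomposition is the paper's, regrouped. Your nuisance piece is its $(\ast_3)+(\ast_4)$, your resolvent piece is $(\ast_1)+(\ast_2)+(\ast_7)$, and your regularization piece corresponds to $(\ast_5)+(\ast_6)$. The leading term and the nuisance piece are handled correctly.

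The gap is the last step. You rightly keep the factor $\sqrt n$ in $-\sqrt n\lambda\widetilde\bb^\top(\widehat\bH+\lambda\bP)^{-1}\bP\bbeta_0$, which is bounded by a constant times $\sqrt n\lambda\omega_K(\bP)/(1+\lambda\omega_1(\bP))$. You then assert that ``combining'' it with $\sqrt{K/n}+m_{0n}$ yields $\lambda\omega_K(\bP)(\sqrt{K/n}+m_{0n}+1)/(1+\lambda\omega_1(\bP))$. There is no such step: your term is $\sqrt n$ times the ``$+1$'' term, and nothing in your argument removes that $\sqrt n$.

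A similar mismatch sits in your resolvent piece. There the $\lambda\omega_K(\bP)$ part of the operator-norm factor multiplies $\lVert\mG_n\{\bb(\bX)r_0\}\rVert_2\lessp l_Kc_K\sqrt K$. This can exceed $\lambda\omega_K(\bP)m_{0n}$ by a factor $\sqrt n$, since $m_{0n}\le l_Kc_K\sqrt{K/n}$. Splitting off the nonrandom matrix $(\bH+\lambda\bP)^{-1}-\bH^{-1}$ and bounding a variance for fixed $\widetilde\bb$ would reduce this to $\lambda\omega_K(\bP)l_Kc_K$.

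The first mismatch cannot be repaired, because the displayed bound itself is missing a factor $\sqrt n$ in its regularization terms. Take $K=1$, $b\equiv1$, $\bP=1$, known nuisances ($\widehat Y^\ast=Y^\ast$), and a constant CPCE $\tau_{d_0d_1}\equiv\beta_0\neq0$. Then $c_K=m_{0n}=m_{1n}=m_{2n}=0$ and $\log K=0$, and $\widehat\beta=\mP_n\{Y^\ast\}/(1+\lambda)$. Hence $\text{Rem}_{1n}=-\{\lambda\mG_n\{Y^\ast-\beta_0\}+\sqrt n\lambda\beta_0\}/(1+\lambda)$, which diverges for $\lambda=n^{-1/4}$. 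The displayed bound, by contrast, equals $\lambda(1+n^{-1/2})/(1+\lambda)\to0$.

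The paper's proof reaches the displayed form only through two slips:
\begin{itemize}
\item The second line of its decomposition has $-\lambda\bP\bbeta_0$ where $-\sqrt n\lambda\bP\bbeta_0$ belongs, so its $(\ast_5)$ and $(\ast_6)$ lack a $\sqrt n$.
\item Its bound on $(\ast_7)$ replaces $\mG_n$ by $\mP_n$.
\end{itemize}
What your decomposition does establish is the corrected remainder bound, whose regularization contribution is of order $\sqrt n\lambda\omega_K(\bP)/(1+\lambda\omega_1(\bP))$. You should state that bound instead. Downstream conditions such as (i) of Theorem~\ref{thm:pointwise-normal} must then control $\sqrt n\lambda\omega_{\max}$ rather than $\lambda\omega_{\max}$.

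Finally, for the $\Omega$ part of the resolvent piece, the fixedness of $\widetilde\bb$ alone does not give $O_{\mP}(1)$. The vector paired with $\mG_n\{\bb(\bX)\Omega\}$ is the random $\{(\widehat\bH+\lambda\bP)^{-1}-\bH^{-1}\}^\top\widetilde\bb$. You need to condition on $\{\bX_i\}_{i=1}^n$ and use $E(\Omega\mid\bX)=0$ and $E(\Omega^2\mid\bX)\le\epsilon_2$, as the paper does for its $(\ast_1)$.
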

\begin{proof}
Consider the following decomposition:
\begin{align*}
    &\sqrt{n}\widetilde{\bb}^\top(\widehat{\bbeta}-\bbeta_0)\\
    =&\sqrt{n}\widetilde{\bb}^\top(\widehat{\bH}+\lambda \bP)^{-1}\left\{\mP_n\{\bb(\bX)\widehat{Y}^\ast\}-(\widehat{\bH}+\lambda \bP)\bbeta_0\right\}\\
    =&\widetilde{\bb}^\top(\widehat{\bH}+\lambda \bP)^{-1}\left\{\sqrt{n}\mP_n\{\bb(\bX)(\widehat{Y}^\ast-Y^\ast)\}+\mG_n\{\bb(\bX)(Y^\ast-\tau_{d_0d_1}(\bX))\}+ \mG_n\{\bb(\bX)r_\tau(\bX)\}-\lambda\bP \bbeta_0\right\}\\
    =&\widetilde{\bb}^\top\bH^{-1}\mG_n\{\bb(\bX)(Y^\ast-\bb(\bX)^\top\bbeta_0)\}+\\
    &\underbrace{\widetilde{\bb}^\top[(\widehat{\bH}+\lambda \bP)^{-1}-(\bH+\lambda \bP)^{-1}]\mG_n\{\bb(\bX)(Y^\ast-\tau_{d_0d_1}(\bX))\}}_{(\ast_1)}+\\
    &\underbrace{\widetilde{\bb}^\top[(\widehat{\bH}+\lambda \bP)^{-1}-(\bH+\lambda \bP)^{-1}]\mG_n\{\bb(\bX)r_\tau(\bX)\}}_{(\ast_2)}+\\    
    &\underbrace{\sqrt{n}\widetilde{\bb}^\top(\bH+\lambda\bP)^{-1}\mP_n\{\bb(\bX)(\widehat{Y}^\ast-Y^\ast)\}}_{(\ast_3)}+\\
    &\underbrace{\sqrt{n}\widetilde{\bb}^\top[(\widehat{\bH}+\lambda \bP)^{-1}-(\bH+\lambda \bP)^{-1}]\mP_n\{\bb(\bX)(\widehat{Y}^\ast-Y^\ast)\}}_{(\ast_4)}-\\
    &\underbrace{\lambda\widetilde{\bb}^\top[(\widehat{\bH}+\lambda \bP)^{-1}-(\bH+\lambda \bP)^{-1}]\bP\bbeta_0}_{(\ast_5)}-\underbrace{\lambda\widetilde{\bb}^\top(\bH+\lambda \bP)^{-1}\bP\bbeta_0}_{(\ast_6)}+\\
    &\underbrace{\widetilde{\bb}^\top[(\bH+\lambda \bP)^{-1}-\bH^{-1}]\mG_n\{\bb(\bX)(Y^\ast-\bb(\bX)^\top\bbeta_0)\}}_{(\ast_7)},
\end{align*}
where the second equality follows from that 
\begin{align}
    E\{\bb(\bX)(Y^\ast-\tau_{d_0d_1}(\bX))\}=&E\{\bb(\bX)(E\{Y^\ast|\bX\}-\tau_{d_0d_1}(\bX))\}=0,\label{eq:mean-zero-Y-tau}\\
    E\{\bb(\bX)r_\tau(\bX)\}=&E\{\bb(\bX)\tau_{d_0d_1}(\bX)\}-E\{\bb(\bX)\bb(\bX)^\top\}\bbeta_0\nonumber\\
    =&E\{\bb(\bX)\tau_{d_0d_1}(\bX)\}-\bH\bbeta_0\nonumber\\
    =&E\{\bb(\bX)\tau_{d_0d_1}(\bX)\}-\bH\bH^{-1}E\{\bb(\bX)\tau_{d_0d_1}(\bX)\}=0.\nonumber
\end{align}
Therefore, the remainder term is the sum of seven components, i.e., $\text{Rem}_{1n}(\widetilde{\bb})=\sum_{j=1}^7(\ast_j)$. Thus, it suffices to analyze and bound each component. We repeatedly use the following results: 
\begin{align}
    &\|(\widehat{\bH}+\lambda \bP)^{-1}-(\bH+\lambda \bP)^{-1}\|_\text{op}\nonumber\\
    =&\| (\widehat{\bH}+\lambda\bP)^{-1}(\bH-\widehat{\bH})(\bH+\lambda\bP)^{-1}\|_\text{op}\nonumber\\
    =&\| (\widehat{\bH}+\lambda\bP)^{-1}\|_\text{op}\|\bH-\widehat{\bH}\|_\text{op}\|(\bH+\lambda\bP)^{-1}\|_\text{op}\nonumber\\
    \leq&\frac{1}{1+\lambda\omega_1(
    \bP
    )}\| (\widehat{\bH}+\lambda\bP)^{-1}\|_\text{op}\|\bH-\widehat{\bH}\|_\text{op}\nonumber\\
    \lessp&\frac{2}{(1+2\lambda\omega_1(\bP))(1+\lambda\omega_1(
    \bP
    ))}\|\widehat{\bH}-\bH\|_\text{op}\label{eq:regularized-norm-differ-bound-2}\\
    \lessp&\frac{\sqrt{\xi_K^2\log K/n}}{(1+2\lambda\omega_1(\bP))(1+\lambda\omega_1(
    \bP
    ))}\label{eq:regularized-norm-differ-bound-2}&,
\end{align}
where the first equality is due to the matrix identity $\boldsymbol{A}^{-1}-\boldsymbol{B}^{-1}=\boldsymbol{A}^{-1}(\boldsymbol{B}-\boldsymbol{A})\boldsymbol{B}^{-1}$ for any invertible matrices $\boldsymbol{A}$ and $\boldsymbol{B}$, and the next to last inequality follows from $\| (\widehat{\bH}+\lambda\bP)^{-1}\|_\text{op}=1/\omega_1(\widehat{\bH}+\lambda\bP)\leq 1/(\omega_1(\widehat{\bH})+\lambda\omega_1(\bP))\lessp 2/(1+2\lambda\omega_1(\bP))$ ($\omega_1(\widehat{\bH})>1/2$ $\text{wp}\to1$), and the last inequality follows from the Rudelson’s LLN. To bound the first term, we note that
\begin{align}
    &E\{(\ast_1)^2|\{\bX_i\}_1^n\}\nonumber\\
    =&\text{Var}\{(\ast_1)|\{\bX_i\}_1^n\}\nonumber\\
    =&\widetilde{\bb}^\top [(\widehat{\bH}+\lambda \bP)^{-1}-(\bH +\lambda \bP)^{-1}]\left\{1/n\sum_{i=1}^n\text{Var}\{\bb(\bX)(Y^\ast-\tau_{d_0d_1}(\bX))|\{\bX_i\}_1^n\}\right\} \nonumber\\
    &[(\widehat{\bH}+\lambda \bP)^{-1}-(\bH +\lambda \bP)^{-1}]\widetilde{\bb}\nonumber\\
    =&\widetilde{\bb}^\top [(\widehat{\bH}+\lambda \bP)^{-1}-(\bH +\lambda \bP)^{-1}]\left\{1/n\sum_{i=1}^n\bb(\bX)\bb(\bX)^\top E\{(Y^\ast-\tau_{d_0d_1}(\bX))^2|\bX\}\right\} \nonumber\\
    &[(\widehat{\bH}+\lambda \bP)^{-1}-(\bH +\lambda \bP)^{-1}]\widetilde{\bb}\nonumber\\
    \leq&\epsilon_2^2\widetilde{\bb}^\top [(\widehat{\bH}+\lambda \bP)^{-1}-(\bH +\lambda \bP)^{-1}]\widehat{\bH}[(\widehat{\bH}+\lambda \bP)^{-1}-(\bH +\lambda \bP)^{-1}]\widetilde{\bb}\nonumber\\
    \lessp&\frac{\| \widehat{\bH}\|_{\text{op}}\| \widehat{\bH}-\bH\|_{\text{op}}^2}{(1+2\lambda\omega_1(\bP))^2(1+\lambda\omega_1(\bP))^2}\nonumber\\
    \lessp&\frac{\xi_K^2\log K}{n(1+\lambda\omega_1(\bP))^2(1+2\lambda\omega_1(\bP))^2},\nonumber
\end{align}
where $\{\bX_i\}_1^n=\{\bX_1,\ldots,\bX_n\}$, the first equality follows from that $E\{(\ast_1)|\{\bX_i\}_1^n\}=0$ by Equation \eqref{eq:mean-zero-Y-tau}, the first inequality follows from Assumption \ref{asp:regularityL2}(d), the second to last inequality follows from Equation \eqref{eq:regularized-norm-differ-bound-2}, and the last inequality follows from Rudelson’s LLN such that $\|\widehat{\bH}-\bH\|_{\text{op}}\lessp \sqrt{\xi_K^2\log K/n}$ under Assumption \ref{asp:regularityL2}(c). By Markov's inequality, it follows that $(\ast_1)=O_{\mP}(\sqrt{\xi_K^2\log K/n}/(1+\lambda\omega_1(\bP))/(1+2\lambda\omega_1(\bP)))$. For the second term, Step 2 of the proof of Lemma 4.1 in \cite{belloni2015some} implies
\begin{align*}
    |(\ast_2)|\lessp &\| (\widehat{\bH}+\lambda \bP)^{-1}-(\bH+\lambda \bP)^{-1}\|_{\text{op}} l_Kc_K\sqrt{K}\\
    \lessp&\frac{\sqrt{\xi_K^2\log K/n}l_Kc_K\sqrt{K}}{(1+2\lambda\omega_1(\bP))(1+\lambda\omega_1(\bP))},
\end{align*}
where the second inequality follows from Equation \eqref{eq:regularized-norm-differ-bound-2}. 
To bound the third term, it follows that
\begin{align*}
    (\ast_3)=&\sqrt{n}\widetilde{\bb}^\top(\bH+\lambda\bP)^{-1}\mP_n\{\bb(\bX)(\widehat{Y}^\ast-Y^\ast)\}\\\leq &\sqrt{n}\| \widetilde{\bb} \|_2\| (\bH+\lambda\bP)^{-1}\|_{\text{op}}\| \mP_n\{\bb(\bX)(\widehat{Y}^\ast-Y^\ast)\}\|_2\\
    \lesssim&\frac{\sqrt{n}\| \mP_n\{\bb(\bX)(\widehat{Y}^\ast-Y^\ast)\}\|_2}{(1+\lambda\omega_1(\bP))}\\
    \lessp&\frac{\sqrt{n}(m_{1n}+m_{2n})}{(1+\lambda\omega_1(\bP))},
\end{align*}
where the first inequality follows from Cauchy-Schwarz, the second inequality follows from that $\| \widetilde{\bb} \|_2=1$ and that $\| (\bH+\lambda\bP)^{-1}\|_{\text{op}}=1/\omega_1(\bH+\lambda\bP)\leq1/(1+\lambda\omega_1(\bP))$, and the third inequality follows from the proof of Theorem \ref{thm:least-squares-series-density-ratio}. Similarly, for the term $(\ast_4)$, we obtatin that
\begin{align*}
    (\ast_4)=    &\sqrt{n}\widetilde{\bb}^\top[(\widehat{\bH}+\lambda \bP)^{-1}-(\bH+\lambda \bP)^{-1}]\mP_n\{\bb(\bX)(\widehat{Y}^\ast-Y^\ast)\}\\
    \lessp &\frac{\sqrt{n}\| \widetilde{\bb} \|_2\| \widehat{\bH}^{-1}-\bH^{-1}\|_{\text{op}}\| \mP_n\{\bb(\bX)(\widehat{Y}^\ast-Y^\ast)\}\|_2}{(1+2\lambda\omega_1(\bP))(1+\lambda\omega_1(
    \bP
    ))}\\
    \lessp&\sqrt{\xi_K^2\log K/n}\sqrt{n}(m_{1n}+m_{2n})/(1+2\lambda\omega_1(\bP))/(1+\lambda\omega_1(
    \bP
    )).
\end{align*}
where the first inequality follows from Cauchy-Schwarz and Equation \eqref{eq:regularized-norm-differ-bound-2}, and the second inequality follows from the proof of Theorem \ref{thm:least-squares-series-density-ratio} and the Rudelson’s LLN for Matrices under Assumption \ref{asp:regularityL2}(c). For the fifth term, we have that 
\begin{align*}
    (\ast_5)\leq&\lambda\| \widetilde{\bb} \|_2\| [(\widehat{\bH}+\lambda \bP)^{-1}-(\bH+\lambda \bP)^{-1}]\|_{\text{op}}\|\bP\|_{\text{op}}\| \bbeta_0\|_2\\
    \lessp&\lambda \| \bbeta_0\|_2O_{\mP}(\omega_K(\bP)\sqrt{\xi_K^2 \log K/n}/(1+2\lambda\omega_1(\bP))/(1+\lambda\omega_1(\bP)))\\
    \lessp& \lambda\omega_K(\bP)\sqrt{\xi_K^2 \log K/n}/(1+2\lambda\omega_1(\bP))/(1+\lambda\omega_1(\bP)),
\end{align*}
where the Second inequality follows from Equation \eqref{eq:regularized-norm-differ-bound-2} and the last inequality follows from Assumption \ref{asp:regularityL2}(c). Finally, for the sixth term, we have that
\begin{align*}
    (\ast_6)\leq&\frac{\lambda\omega_K(\bP)}{1+\lambda\omega_1(\bP)}\|\bbeta_0\|_2\lessp \frac{\lambda\omega_K(\bP)}{(1+\lambda\omega_1(\bP))}.
\end{align*}
Finally, for the seventh term, it follows that
\begin{align*}
        (\ast_7)\leq&\| \widetilde{\bb}\|_2\|(\bH+\lambda \bP)^{-1}-\bH^{-1}\|_{\text{op}}\|\mG_n\{\bb(\bX)(Y^\ast-\bb(\bX)^\top\bbeta_0)\}\|_2\\
        \leq&\lambda\|(\bH+\lambda \bP)^{-1}\|_{\text{op}}\| \bP\|_{\text{op}}\|\mG_n\{\bb(\bX)(Y^\ast-\bb(\bX)^\top\bbeta_0)\}\|_2\\
        \leq&\frac{\lambda\omega_K(\bP)}{1+\lambda\omega_1(\bP)}\|\mP_n\{\bb(\bX)(Y^\ast-\tau_{d_0d_1}(\bX))+\mP_n\{\bb(\bX)r_\tau(\bX)\}\|_2\\
        \lessp&\frac{\lambda\omega_K(\bP)}{1+\lambda\omega_1(\bP)}\left(\sqrt{\frac{K}{n}}+m_{0n}\right),
\end{align*}
where the first inequality is due to Cauchy-Schwarz and the second inequality follows from $(\bH+\lambda \bP)^{-1}-\bH^{-1}=(\bH+\lambda \bP)^{-1}(-\lambda\bP)\bH^{-1}$, and the last inequality follows from the proof of Theorem \ref{thm:least-squares-series-density-ratio}.
\end{proof}

The second lemma establishes the uniform linearization of the estimator $\widehat{\bbeta}$, which facilitates the derivation of its uniform convergence rate.

\begin{lemma}[\emph{Uniform linearization}]\label{lemma:uniform-linearization}
Under Assumptions \ref{asp:regularityL2}–\ref{assump:uniform-further-boundedness}, the estimator $\widehat{\bbeta}$ is asymptotically linear with 
\begin{align*}
     &\sqrt{n}\widetilde{\bb}(\bx)^\top(\widehat{\bbeta}-\bbeta_0)=\widetilde{\bb}(\bx)^\top(\bH+\lambda\bP)^{-1}\mG_n\{\bb(\bX)(Y^\ast-\bb(\bX)^\top\bbeta_0)\}+\text{Rem}_{1n}(\widetilde{\bb}(\bx)),\\
    &\sqrt{n}\widetilde{\bb}(\bx)^\top(\widehat{\bbeta}-\bbeta_0)=\widetilde{\bb}(\bx)^\top(\bH+\lambda\bP)^{-1}\mG_n\{\bb(\bX)(Y^\ast-\tau_{d_0d_1}(\bX))\}+\text{Rem}_{1n}(\widetilde{\bb}(\bx))+\text{Rem}_{2n}(\widetilde{\bb}(\bx)),   
\end{align*}
where the remainder terms $\text{Rem}_{1n}(\widetilde{\bb}(\bx))$ and $\text{Rem}_{2n}(\widetilde{\bb}(\bx))$ capture the impact of the unknown design and first-stage nuisance estimation errors, and the impact of misspecification error, respectively, and are bounded as follows. Define $m_{3n}=\sqrt{\xi_K^2\log K/n}(n^{1/\nu}\sqrt{\log K}+\sqrt{K}l_Kc_K)$, then
\begin{align*}
    &\sup_{\bx}|\text{Rem}_{1n}(\widetilde{\bb}(\bx))|\\
    \lessp&\frac{1}{1+\lambda\omega_1(\bP)}\left\{\left(\frac{\xi_K\sqrt{\log K}}{1+2\lambda\omega_1(\bP)}+\sqrt{n}\right)(\lambda\omega_K(\bP)/\sqrt{n} +m_{1n}+m_{2n})+\frac{m_{3n}}{1+2\lambda\omega_1(\bP)}\right\},\\
   &\sup_{\bx}|\text{Rem}_{2n}(\widetilde{\bb}(\bx))|\lessp\frac{\sqrt{\log K}l_Kc_K}{1+\lambda\omega_1(\bP)}.
\end{align*}
\end{lemma}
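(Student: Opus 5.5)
The plan is to reuse the algebraic decomposition from the proof of Lemma \ref{lemma:pointwise-linearization}, with the fixed unit vector $\widetilde{\bb}$ replaced by $\widetilde{\bb}(\bx)$. Each of the seven remainder terms must now be bounded uniformly over $\bx\in\calX$. For the first display, the leading term is taken as $\widetilde{\bb}(\bx)^\top(\bH+\lambda\bP)^{-1}\mG_n\{\bb(\bX)(Y^\ast-\bb(\bX)^\top\bbeta_0)\}$, so the term $(\ast_7)$ of the pointwise proof is absorbed into the leading term. For the second display, we further split
\[
Y^\ast-\bb(\bX)^\top\bbeta_0=(Y^\ast-\tau_{d_0d_1}(\bX))+r_\tau(\bX).
\]
We then set $\text{Rem}_{2n}(\widetilde{\bb}(\bx)):=\widetilde{\bb}(\bx)^\top(\bH+\lambda\bP)^{-1}\mG_n\{\bb(\bX)r_\tau(\bX)\}$.

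The terms that carry no empirical-process supremum are handled first, since the operator-norm bounds already give uniformity over the unit sphere. The nuisance term $(\ast_3)$ is bounded using Cauchy--Schwarz, $\|(\bH+\lambda\bP)^{-1}\|_{\mathrm{op}}\le(1+\lambda\omega_1(\bP))^{-1}$, and the bound $\|\mP_n\{\bb(\bX)(\widehat{Y}^\ast-Y^\ast)\}\|_2\lessp m_{1n}+m_{2n}$ established in the proof of Theorem \ref{thm:least-squares-series-density-ratio}. This gives $\sqrt{n}(m_{1n}+m_{2n})/(1+\lambda\omega_1(\bP))$. The term $(\ast_4)$ is treated the same way, combined with \eqref{eq:regularized-norm-differ-bound-2}, and contributes $\xi_K\sqrt{\log K}(m_{1n}+m_{2n})/\{(1+2\lambda\omega_1(\bP))(1+\lambda\omega_1(\bP))\}$. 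The regularization terms $(\ast_5)$ and $(\ast_6)$ use $\|\bbeta_0\|_2\lesssim1$ and produce the $(\xi_K\sqrt{\log K}/(1+2\lambda\omega_1(\bP))+\sqrt n)\lambda\omega_K(\bP)/\sqrt n$ part of the stated bound.

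The main obstacle is the pair $(\ast_1)$ and $(\ast_2)$. Here the crude bound $\|\widehat{\bH}-\bH\|_{\mathrm{op}}\|\mG_n\{\bb(\bX)\Omega\}\|_2\lessp\sqrt{\xi_K^2\log K/n}\sqrt K$ loses a factor $\sqrt K$ compared with $m_{3n}$. I would follow Step 3 of the proof of Lemma 4.1 in \cite{belloni2015some}. Conditional on $\{\bX_i\}$, the process $\bx\mapsto\widetilde{\bb}(\bx)^\top[(\widehat{\bH}+\lambda\bP)^{-1}-(\bH+\lambda\bP)^{-1}]\mG_n\{\bb(\bX)\Omega\}$ is a sum of independent mean-zero terms. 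I would truncate $\Omega$ at level $n^{1/\nu}$, using the $\nu$-th moment bound of Assumption \ref{assump:uniform-further-boundedness} for the tail. For the bounded part I would apply a maximal inequality over a finite net of $\calX$; its log-cardinality is $O(\log K)$ by the conditions $\log\xi_K^L=O(\log K)$ and $\log\xi_K=O(\log K)$ together with bounded diameter. This yields $\sup_{\bx}|(\ast_1)|\lessp\sqrt{\xi_K^2\log K/n}\,n^{1/\nu}\sqrt{\log K}/\{(1+2\lambda\omega_1)(1+\lambda\omega_1)\}$.

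For $(\ast_2)$ I would reuse the $\|\cdot\|_{\mathrm{op}}\,l_Kc_K\sqrt K$ bound from the pointwise lemma, which is already uniform in $\bx$. Together these give the $m_{3n}/(1+2\lambda\omega_1(\bP))$ contribution. Finally, $\sup_{\bx}|\text{Rem}_{2n}|$ is bounded by the same truncation-free maximal inequality applied to the bounded summands $\bb(\bX)r_\tau(\bX)$, with $|r_\tau|\le l_Kc_K$ by Assumption \ref{asp:regularityL2}(b). This yields $\sqrt{\log K}\,l_Kc_K/(1+\lambda\omega_1(\bP))$, with the lower-order $\xi_K l_Kc_K\log K/\sqrt n$ term absorbed using Assumption \ref{assump:uniform-further-boundedness}.
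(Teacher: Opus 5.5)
Your plan follows the paper's proof essentially step for step. You use the seven-term decomposition of Lemma~\ref{lemma:pointwise-linearization} with $(\ast_7)$ absorbed into the leading term, the same operator-norm bounds for $(\ast_2)$--$(\ast_4)$, and, for $(\ast_1)$, the conditional maximal inequality (truncation at $n^{1/\nu}$, a net of log-cardinality $O(\log K)$). The paper imports that last bound by citing Step~1 of the proof of Lemma~4.2 in \cite{belloni2015some}. For $\text{Rem}_{2n}$ your route is actually the sound one. The paper bounds it by $\|(\bH+\lambda\bP)^{-1}\|_{\mathrm{op}}\|\mG_n\{\bb(\bX)r_\tau(\bX)\}\|_2$ and asserts this $\ell^2$ norm is $O_{\mP}(l_Kc_K\sqrt{\log K})$. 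In general it is only $O_{\mP}(l_Kc_K\sqrt K)$, which is the bound used for $(\ast_2)$. The $\sqrt{\log K}$ rate needs exactly the supremum-over-$\bx$ maximal inequality you describe.

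The step that fails is your claim that $(\ast_5)$ and $(\ast_6)$ yield the $\lambda\omega_K(\bP)/\sqrt n$ part of the bound. The error is inherited from the pointwise decomposition, and the paper's proof has the same slip. Since $\widehat{\bbeta}-\bbeta_0=(\widehat{\bH}+\lambda\bP)^{-1}\{\widehat{\bh}-\widehat{\bH}\bbeta_0-\lambda\bP\bbeta_0\}$, multiplying by $\sqrt n$ gives the ridge term $-\sqrt n\lambda\widetilde{\bb}(\bx)^\top(\widehat{\bH}+\lambda\bP)^{-1}\bP\bbeta_0$. The pointwise proof instead carries $-\lambda\bP\bbeta_0$ without the $\sqrt n$. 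With the factor restored, your own bounds give $\sup_{\bx}|(\ast_5)|\lessp\lambda\omega_K(\bP)\xi_K\sqrt{\log K}/\{(1+2\lambda\omega_1(\bP))(1+\lambda\omega_1(\bP))\}$ and $\sup_{\bx}|(\ast_6)|\lessp\sqrt n\lambda\omega_K(\bP)/(1+\lambda\omega_1(\bP))$. That is, the bracket contains $\lambda\omega_K(\bP)$ instead of $\lambda\omega_K(\bP)/\sqrt n$.

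This is not slack in the bounding. Take $K=2$ with an orthonormal basis containing the constant, $\tau_{d_0d_1}\equiv\tau_0\neq0$ (so $c_K=0$), $\bP=\boldsymbol{I}_K$, known nuisances ($\widehat{Y}^\ast=Y^\ast$) and $\lambda=n^{-1/4}$. Then the remainder contains the deterministic term $-\sqrt n\lambda(1+\lambda)^{-1}\widetilde{\bb}(\bx)^\top\bbeta_0$, of exact order $n^{1/4}$, while all other pieces are $O_{\mP}(n^{-1/4})$. The stated bound, by contrast, is $O(n^{-1/4}+n^{1/\nu-1/2})\to0$.

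So the $\text{Rem}_{1n}$ bound cannot be proved as stated when $\lambda>0$. What your argument (and the paper's) actually establishes is the version with $\lambda\omega_K(\bP)/\sqrt n$ replaced by $\lambda\omega_K(\bP)$. That scaling is also the one consistent with the $\lambda\omega_{\max}$ term in Theorem~\ref{thm:least-squares-series-density-ratio}.
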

\begin{proof}
We follow the decomposition from the proof of Lemma \ref{lemma:pointwise-linearization} with $\text{Rem}_{1n}(\widetilde{\bb}(\bx))=\sum_{j=1}^6(\ast_j)$, where the seventh term is absorbed into the linear representation. For the first term, we adapt Step 1 of the proof of Lemma 4.2 in \cite{belloni2015some} as follows:
\begin{align*}
    \sup_{\bx}|(\ast_1)|\lessp&n^{1/\nu}\| (\widehat{\bH}+\lambda \bP)^{-1}-(\bH+\lambda \bP)^{-1}\|_{\text{op}}\| \widehat{\bH}\|_{\text{op}}^{1/2}\sqrt{\log K}\\
    \lessp&n^{1/\nu}\| (\widehat{\bH}+\lambda \bP)^{-1}-(\bH+\lambda \bP)^{-1}\|_{\text{op}}\sqrt{\log K}\\
    \lessp&n^{1/\nu}\frac{\sqrt{{\xi_K^2\log K}/{n}}\sqrt{\log K}}{(1+2\lambda\omega_1(\bP))(1+\lambda\omega_1(\bP))},
\end{align*}
where the second inequality follows from that $\| \widehat{\bH}\|_{\text{op}}=1/\omega_K(\widehat{\bH})=O_{\mP}(1)$ and the third inequality follows from Equation \eqref{eq:regularized-norm-differ-bound-2}. For the second term, we adapt Step 2 of the proof of Lemma 4.2 in \cite{belloni2015some} as follows:
\begin{align*}
    \sup_{\bx}|(\ast_2)|\lessp&\| (\widehat{\bH}+\lambda \bP)^{-1}-(\bH+\lambda \bP)^{-1}\|_{\text{op}}\| \mG_n\{\bb(\bX)r_\tau\}\|_2\\
    \lessp& \sqrt{\frac{\xi_K^2\log K}{n}}\frac{l_Kc_K\sqrt{K}}{(1+2\lambda\omega_1(\bP))(1+\lambda\omega_1(\bP))}.
\end{align*}
Therefore, we obtain that $\sup_{\bx}|\sum_{j=1}^2(\ast_j)|\lessp m_{3n}/(1+2\lambda\omega_1(\bP))/(1+\lambda\omega_1(\bP))$. The third term can be bounded as
\begin{align}
    \sup_{\bx}|(\ast_3)|=&\sup_{\bx}|\sqrt{n}\widetilde{\bb}(\bx)^\top(\bH+\lambda\bP)^{-1}\mP_n\{\bb(\bX)(\widehat{Y}^\ast-Y^\ast)\}|\nonumber\\
    \leq&\sqrt{n}\sup_{\bx}\| \widetilde{\bb}(\bx)\|_2\|(\bH+\lambda\bP)^{-1}\|_{\text{op}}\|\mP_n\{\bb(\bX)(\widehat{Y}^\ast-Y^\ast)\}\|_2\nonumber\\
    \leq&\frac{\sqrt{n}\sup_{\bx}\| \widetilde{\bb}(\bx)\|_2}{1+\lambda \omega_1(\bP)}\|\mP_n\{\bb(\bX)(\widehat{Y}^\ast-Y^\ast)\}\|_2\nonumber\\
    \lessp&\frac{\sqrt{n}(m_{1n}+m_{2n})}{1+\lambda \omega_1(\bP)}\nonumber.
\end{align}
The fourth term can be bounded as
\begin{align}
    \sup_{\bx}|(\ast_4)|=&\sup_{\bx}|\sqrt{n}\widetilde{\bb}^\top[(\widehat{\bH}+\lambda\bP)^{-1}-(\bH+\lambda\bP)^{-1}]\mP_n\{\bb(\bX)(\widehat{Y}^\ast-Y^\ast)\}|\nonumber\\
    \leq&\sqrt{n}\sup_{\bx}\| \widetilde{\bb}(\bx)\|_2\|(\widehat{\bH}+\lambda\bP)^{-1}-(\bH+\lambda\bP)^{-1}\|_{\text{op}}\|\mP_n\{\bb(\bX)(\widehat{Y}^\ast-Y^\ast)\}\|_2\nonumber\\
    \lessp&\frac{\sqrt{\xi_K^2\log K/n}\sqrt{n}(m_{1n}+m_{2n})}{(1+2\lambda\omega_1(\bP))(1+\lambda\omega_1(\bP))},\nonumber
\end{align}
where the last inequality follows from Equation \eqref{eq:regularized-norm-differ-bound-2} and the proof of Theorem \ref{thm:least-squares-series-density-ratio}. For the fifth term, we have that 
\begin{align*}
    \sup_{\bx}|(\ast_5)|\leq&\lambda\sup_{\bx}\| \widetilde{\bb}(\bx) \|_2\| [(\widehat{\bH}+\lambda \bP)^{-1}-(\bH+\lambda \bP)^{-1}]\|_{\text{op}}\|\bP\|_{\text{op}}\| \bbeta_0\|_2\\
    \lessp&\lambda \omega_K(\bP)\| \bbeta_0\|_2\sqrt{\xi_K^2 \log K/n}/(1+2\lambda\omega_1(\bP))/(1+\lambda\omega_1(\bP)\\
    \lessp& \frac{\lambda\omega_K(\bP)\sqrt{\xi_K^2 \log K/n}}{(1+2\lambda\omega_1(\bP))(1+\lambda\omega_1(\bP))}.
\end{align*}
For the sixth term, it follows that
\begin{align*}
    (\ast_6)\leq&\frac{\lambda\sup_{\bx}\|\widetilde{\bb}(\bx)\|_2\|\bP\|_{\text{op}}\|\bbeta_0\|_2}{1+\lambda\omega_1(\bP)}\\
    \lessp &\frac{\lambda\omega_K(\bP)}{1+\lambda\omega_1(\bP)}.
\end{align*}
Finally, the term $\sup_{\bx}|\text{Rem}_{2n}(\widetilde{\bb}(\bx))|$ satisfies
\begin{align*}
    \sup_{\bx}|\text{Rem}_{2n}(\widetilde{\bb}(\bx))|=&\sup_{\bx}|\widetilde{\bb}(\bx)^\top(\bH+\lambda\bP)^{-1}\mG_n\{\bb(\bX)r_\tau\}|\\
    \leq&\sup_{\bx} \| \widetilde{\bb}(\bx)\|_2\| (\bH+\lambda\bP)^{-1}\|_{\text{op}}\| \mG_n\{\bb(\bX)r_\tau\}\|_2\\
    \lessp&1/(1+\lambda\omega_1(\bP))l_Kc_K\sqrt{\log K},
\end{align*}
where $\| \mG_n\{\bb(\bX)r_\tau\}\|_2=O_{\mP}(l_Kc_K\sqrt{\log K})$ follows from Step 3 of the proof of Lemma 4.2 in \cite{belloni2015some}.

\end{proof}

\subsection{Proof of Theorem \ref{thm:uniform-rate}}
\begin{proof}
By the Equation (4.23) in \cite{belloni2015some}, we obtain that 
\begin{align*}
    \sup_{\bx} |\widetilde{\bb}(\bx)^\top(\bH+\lambda\bP)^{-1}\mG_n\{\bb(\bX)(Y^\ast-\tau_{d_0d_1}(\bX))\}|\lessp \frac{\sqrt{\log K}}{1+\lambda\omega_1(\bP)}.
\end{align*}
and  
\begin{align*}
    &\sup_{\bx} |\bb(\bX)^\top(\widehat{\bbeta}-\bbeta_0)|\\
    =&\sup_{\bx} \|\bb(\bX)\|_2|\widetilde{\bb}(\bX)^\top(\widehat{\bbeta}-\bbeta_0)|\\
    \leq&\xi_K\sup_{\bx} |\widetilde{\bb}(\bX)^\top(\widehat{\bbeta}-\bbeta_0)|\\
    \leq&\xi_K/\sqrt{n}\sup_{\bx} |\sqrt{n}\widetilde{\bb}(\bX)^\top(\widehat{\bbeta}-\bbeta_0)|\\
    \leq&\xi_K/\sqrt{n}\left\{\sup_{\bx} |\widetilde{\bb}(\bx)^\top(\bH+\lambda\bP)^{-1}\mG_n\{\bb(\bX)(Y-\tau_{d_0d_1}(\bX))\}|+\sup_{\bx}\sum_{j=1}^2|\text{Rem}_{jn}(\widetilde{\bb}(\bx))|\right\}\\
\lessp &\frac{\xi_K}{(1+\lambda\omega_1(\bP))\sqrt{n}}\Biggl\{\left(\frac{\xi_K\sqrt{\log K}}{1+2\lambda\omega_1(\bP)}+\sqrt{n}\right)(\lambda\omega_K(\bP) /\sqrt{n}+m_{1n}+m_{2n})+\\
&\frac{m_{3n}}{1+2\lambda\omega_1(\bP)}+\sqrt{\log K}(1+l_Kc_K)\Biggr\}.
\end{align*}
Finally, we obtain the uniform rate of convergence for the regularized estimator as follows
Thus, it follows that
\begin{align*}
&\sup_{\bx} |\widehat{\tau}_{d_0d_1}-\tau_{d_0d_1}|\\
\leq&\sup_{\bx} |\bb(\bX)^\top(\widehat{\bbeta}-\bbeta_0
)+r_\tau|\\
\leq&\sup_{\bx} |\bb(\bX)^\top(\widehat{\bbeta}-\bbeta_0
)|+\sup_{\bx}|r_\tau|\\
\leq&\sup_{\bx} |\bb(\bX)^\top(\widehat{\bbeta}-\bbeta_0)|+l_Kc_K\\
\lessp &\frac{\xi_K}{(1+\lambda\omega_1(\bP))\sqrt{n}}\Biggl\{\left(\frac{\xi_K\sqrt{\log K}}{1+2\lambda\omega_1(\bP)}+\sqrt{n}\right)(\lambda\omega_K(\bP) /\sqrt{n}+m_{1n}+m_{2n})+\\
&\frac{m_{3n}}{1+2\lambda\omega_1(\bP)}+\sqrt{\log K}(1+l_Kc_K)\Biggr\}+l_Kc_K.
\end{align*}
where the third inequality follows from Assumption \ref{asp:regularityL2}(b). 
\end{proof}

\subsection{Proof of Theorem \ref{thm:pointwise-normal}}
\begin{proof}
We note that under regularity conditions outlined in Theorem \ref{thm:pointwise-normal}, Lemma \ref{lemma:pointwise-linearization} implies that
\begin{align*}
    &\text{Rem}_{1n}(\widetilde{\bb})\lessp\sqrt{\frac{\xi_K^2\log K}{n}}(1+\sqrt{K}l_Kc_K),
\end{align*}
which enables us to follow a similar proof strategy as in Theorem 4.2 of \cite{belloni2015some}.
\end{proof}

\subsection{Proof of Theorem \ref{thm:uniform-strong-gaussian}}
\begin{proof}
To enables us to follow a similar proof strategy as in Theorem 4.4 of \cite{belloni2015some}, we refine Lemma \ref{lemma:uniform-linearization}. We consider the following decomposition
\begin{align*}
     \sqrt{n}\widetilde{\bb}(\bx)^\top(\widehat{\bbeta}-\bbeta_0)=&\widetilde{\bb}(\bx)^\top(\bH+\lambda\bP)^{-1}\mG_n\{\bb(\bX)(Y^\ast-\bb(\bX)^\top\bbeta_0)\}+\text{Rem}_{1n}(\widetilde{\bb}(\bx))\\
     =&\widetilde{\bb}(\bx)^\top\bH^{-1}\mG_n\{\bb(\bX)(Y^\ast-\bb(\bX)^\top\bbeta_0)\}+\text{Rem}_{1n}(\widetilde{\bb}(\bx))+\text{Rem}_{0n}(\widetilde{\bb}(\bx)),
\end{align*}
where the new remainder term $\text{Rem}_{0n}(\widetilde{\bb}(\bx))$ is given by
\begin{align*}
    \text{Rem}_{0n}(\widetilde{\bb}(\bx))=&\widetilde{\bb}^\top[(\bH+\lambda \bP)^{-1}-\bH^{-1}]\left\{\mG_n\{\bb(\bX)(Y^\ast-\bb(\bX)^\top\bbeta_0)\}\right\}.
\end{align*}
We bound this new term as follows:
\begin{align*}
    \sup_{\bx}|\text{Rem}_{0n}(\widetilde{\bb}(\bx))|\leq&\sup_{\bx} |\widetilde{\bb}^\top(\bx)[(\bH+\lambda \bP)^{-1}-\bH^{-1}]\mG_n\{\bb(\bX)(Y^\ast-\bb(\bX)^\top\bbeta_0)\}|\\
    \leq&\sup_{\bx} \|\widetilde{\bb}(\bx)\|_2\|(\bH+\lambda \bP)^{-1}-\bH^{-1}\|_{\text{op}}\|\mG_n\{\bb(\bX)(Y^\ast-\bb(\bX)^\top\bbeta_0)\}\|_2\\
    \lessp&\frac{\lambda\omega_K(\bP)}{1+\lambda\omega_1(\bP)}\left(\sqrt{\frac{K}{n}}+m_{0n}\right),
\end{align*}
where the last inequality follows from the proof for Lemma \ref{lemma:pointwise-linearization}. Therefore, under regularity conditions outlined in Theorem \ref{thm:uniform-strong-gaussian}, Lemma \ref{lemma:uniform-linearization} together with the above result imply that
\begin{align*}
    &\text{Rem}_{1n}(\widetilde{\bb}(\bx))=O_{\mP}\left(m_{3n}\right),
\end{align*}

\end{proof}

\subsection{Proof of Theorem \ref{thm:cov-matrix-rate}}
\begin{proof}
We consider the following decomposition
\begin{align*}
  \widehat{\mathbb{V}}-\mathbb{V}=&\underbrace{\mP_n\left\{\bb(\bX)\bb(\bX)^\top\left[(\widehat{Y}^\ast-\bb(\bX)^\top\widehat{\bbeta})^2-(Y^\ast-\bb(\bX)^\top\bbeta_0)^2\right]\right\}}_{(\ast_1)}+\\
  &\underbrace{\mP_n\left\{\bb(\bX)\bb(\bX)^\top\right(Y^\ast-\bb(\bX)^\top\bbeta_0)^2\}-\mathbb{V}}_{(\ast_2)}.
\end{align*}
For the second term, the argument in the proof of Theorem 4.6 of \cite{belloni2015some} (see analysis for the second term of Equation (A.56)) implies that
\begin{align*}
   (\ast_2)\lessp (n^{1/\nu}+l_Kc_K)\sqrt{\frac{\xi_K^2\log K}{n}}. 
\end{align*}
To analyze the first term, we further decompose it as
\begin{align*}
    (\ast_1)=&2\mP_n\left\{\bb(\bX)\bb(\bX)^\top(Y^\ast-\bb(\bX)^\top\bbeta_0)\left[\bb(\bX)^\top(\bbeta_0-\widehat{\bbeta})+(\widehat{Y}^\ast-Y^\ast)\right]\right\}+\\
    &\mP_n\left\{\bb(\bX)\bb(\bX)^\top\left[\bb(\bX)^\top(\bbeta_0-\widehat{\bbeta})+(\widehat{Y}^\ast-Y^\ast)\right]^2\right\}\\
    \leq&2\underbrace{\mP_n\left\{\bb(\bX)\bb(\bX)^\top|(Y^\ast-\bb(\bX)^\top\bbeta_0)\bb(\bX)^\top(\bbeta_0-\widehat{\bbeta})|\right\}}_{(\ast_3)}+\\
    &2\underbrace{\mP_n\left\{\bb(\bX)\bb(\bX)^\top|(Y^\ast-\bb(\bX)^\top\bbeta_0)(\widehat{Y}^\ast-Y^\ast)|\right\}}_{(\ast_4)}+\\
    &2\underbrace{\mP_n\left\{\bb(\bX)\bb(\bX)^\top\left[\bb(\bX)^\top(\bbeta_0-\widehat{\bbeta})\right]^2\right\}}_{(\ast_5)}+2\underbrace{\mP_n\left\{\bb(\bX)\bb(\bX)^\top(\widehat{Y}^\ast-Y^\ast)^2\right\}}_{(\ast_6)}.
\end{align*}
By Chebyshev's Inequality and assumption that the support of distribution of $|\widehat{Y}^\ast-Y^\ast|$ conditional on the training sample is finite, it follows that, for all $s\in[S]$,
\begin{align*}
    \max_{i\in\mathcal{F}_s}|\widehat{Y}^\ast-Y^\ast|=&E\{\max_{i\in\mathcal{F}_s}|\widehat{Y}^\ast-Y^\ast||\widehat{\bGamma}_s\}+o_{\mP^\ast}(1)\\
    =&E\{\max_{i\in\mathcal{F}_s}|\widehat{Y}^\ast-Y^\ast||\widehat{\bGamma}_s\}+o_{\mP}(1),
\end{align*}
where the probability measure $\mP^\ast$ is conditional on the training sample $\mathcal{F}^c_s$. Thus, we obtain 
\begin{align}\label{eq:bounded-abs-1-n-hatY-Y}
    \max_{1\leq i\leq n}|\widehat{Y}^\ast-Y^\ast|\leq\sum_{s}\max_{i\in\mathcal{F}_s}|\widehat{Y}^\ast-Y^\ast|=S\times E\{\max_{i\in\mathcal{F}_s}|\widehat{Y}^\ast-Y^\ast||\widehat{\bGamma}_s\}+o_{\mP}(1).
\end{align}
Similarly, it follows that 
\begin{align}\label{eq:bounded-square-1-n-hatY-Y}
    \max_{1\leq i\leq n}(\widehat{Y}^\ast-Y^\ast)^2=S\times E\{\max_{i\in\mathcal{F}_s}(\widehat{Y}^\ast-Y^\ast)^2|\widehat{\bGamma}_s\}+o_{\mP}(1).
\end{align}
For the term $(\ast_4)$, it follows that 
\begin{align*}
    \| (\ast)_4\|_{\text{op}}\leq &\max_{1\leq i\leq n}|\widehat{Y}^\ast-Y^\ast|\max_{1\leq i\leq n}\left(|Y^\ast-\bb(\bX)\bbeta_0|\right)\| \mP_n\{\bb(\bX)\bb(\bX)^\top\}\|_{\text{op}}\\
    \leq&\left(SE\left\{\max_{1\leq i\leq n}|\widehat{Y}^\ast-Y^\ast||\widehat{\bGamma}_s\right\}+o_{\mP}(1)\right)\times\\
    &\max_{1\leq i\leq n}\left(|Y^\ast-\tau_{d_0d_1}(\bX)|+|r_\tau(\bX)|\right)\| \mP_n\{\bb(\bX)\bb(\bX)^\top\}\|_{\text{op}}\\
    \lessp&E\left\{\max_{1\leq i\leq n}|\widehat{Y}^\ast-Y^\ast||\widehat{\bGamma}_s\right\}(n^{1/\nu}+l_Kc_K),
\end{align*}
where the second inequality follows from Equation \eqref{eq:bounded-abs-1-n-hatY-Y}, the third inequality holds because $\| \widehat{\bH}\|_{\text{op}}\lessp 1$, $\max_{1\leq i\leq n}|r_\tau(\bX)|\leq l_Kc_K$ by Assumption \ref{asp:regularityL2}(b), and $\max_{1\leq i\leq n}|Y^\ast-\tau_{d_0d_1}(\bX)|=O_{\mP}(n^{1/\nu})$ by the Step 1 of proof of Lemma 4.2 in \cite{belloni2015some}. For the term $(\ast_3)$, we obtain that
\begin{align*}
    \| (\ast)_3\|_{\text{op}}\leq &\| \widehat{\bH}\|_{\text{op}}\sup_{\bX}|Y^\ast-\bb(\bX)^\top\bbeta_0|\sup_{\bX}|\bb(\bX)^\top(\bbeta_0-\widehat{\bbeta})|\\
    \lessp&n^{1/\nu}\frac{\xi_K}{(1+\lambda\omega_1(\bP))\sqrt{n}}\Biggl\{\left(\frac{\xi_K\sqrt{\log K}}{1+2\lambda\omega_1(\bP)}+\sqrt{n}\right)(\lambda\omega_K(\bP)/\sqrt{n}+m_{1n}+m_{2n})+\\
&\frac{m_{3n}}{1+2\lambda\omega_1(\bP)}+\sqrt{\log K}(1+l_Kc_K)\Biggr\},
\end{align*}
where the last inequality follows from Lemma \ref{lemma:uniform-linearization} and that $\| \widehat{\bH}\|_{\text{op}}\lessp 1$. For the term $(\ast_5)$, we have that
\begin{align*}
    \| (\ast)_5\|_{\text{op}}\lessp &\frac{\xi_K^2}{(1+\lambda\omega_1(\bP))^2n}\Biggl\{\left(\frac{\xi_K\sqrt{\log K}}{1+2\lambda\omega_1(\bP)}+\sqrt{n}\right)(\lambda\omega_K(\bP)/\sqrt{n}+m_{1n}+m_{2n})+\\
&\frac{m_{3n}}{1+2\lambda\omega_1(\bP)}+\sqrt{\log K}(1+l_Kc_K)\Biggr\}^2,
\end{align*}
where the inequality follows from Lemma \ref{lemma:uniform-linearization} and that $\| \widehat{\bH}\|_{\text{op}}\lessp 1$. For the term $(\ast_6)$, we have that 
\begin{align*}
    \| (\ast)_6\|_{\text{op}}\leq &\max_{1\leq i\leq n}(\widehat{Y}^\ast_i-Y_i^\ast)^2\| \mP_n\{\bb(\bX)\bb(\bX)^\top\}\|_{\text{op}}\\
    \lessp&\left(E\left\{\max_{1\leq i\leq n}(\widehat{Y}_i^\ast-Y^\ast_i)^2|\widehat{\bGamma}_s\right\}+o_{\mP}(1)\right) 
    \lessp E\left\{\max_{1\leq i\leq n}(\widehat{Y}_i^\ast-Y^\ast_i)^2|\widehat{\bGamma}_s\right\},
\end{align*}
where the second inequality follows from Equation \eqref{eq:bounded-square-1-n-hatY-Y}. Finally, under the conditions as those outlined in Theorem \ref{thm:cov-matrix-rate}, the bounds simplify to
\begin{align*}
    \| \widehat{\mathbb{V}}-\mathbb{V}\|_{\text{op}}\lessp\left(E\left\{\max_{1\leq i\leq n}|\widehat{Y}^\ast-Y^\ast||\widehat{\bGamma}_s\right\}+\sqrt{\frac{\xi_K^2\log K}{n}}\right)(n^{1/\nu}+l_Kc_K)+E\left\{\max_{1\leq i\leq n}(\widehat{Y}_i^\ast-Y^\ast_i)^2|\widehat{\bGamma}_s\right\}.
\end{align*}
\end{proof}

\subsection{Proof of Lemma \ref{lemma:uniform-linearization-boot}}
\begin{proof}
   By the proof of Theorem 4.5 in \cite{belloni2015some}, (i) the weights satisfy the following properties
\begin{align}
    \max_{1\leq i\leq n}\mathfrak{w}_i\lessp \log n,\nonumber\\
    E\{\mathfrak{w}\}=E\{\mathfrak{w}^2\}=1,\nonumber\\
    E\{\mathfrak{w}^{\nu/2}\}\lesssim1,\label{eq:property-exp-weights}
\end{align}
and (ii) that
\begin{align}
    \text{log}K\lesssim \log n.\label{eq:property-logK<=logn}
\end{align}

The bootstrap estimator $\widehat{\bbeta}^b$ is given by $\widehat{\bbeta}^b=(\widehat{\bH}^b_\tau+\lambda\bP)^{-1}\widehat{\bh}$, where
\begin{align*}
    &\widehat{\bH}^b_\tau=\mP_n\{\sqrt{\mathfrak{w}}_i\bb(\bX_i)(\sqrt{\mathfrak{w}}_i\bb(\bX_i))^\top\}\\
    &\widehat{\bh}=\mP_n\{\sqrt{\mathfrak{w}}\bb(\bX)\sqrt{\mathfrak{w}}\widehat{Y}^\ast\}.
\end{align*}
In addition, the projection $\bbeta_0$ becomes
\begin{align*}
    \bbeta_0=&{\bH^b}^{-1}\bh^b\\
    =&\bH^{-1}E\{\mathfrak{w}\bb(\bX)\tau_{d_0d_1}(\bX)\}\\
    =&\bH^{-1}E\{\bb(\bX)\tau_{d_0d_1}(\bX)\}=\bH^{-1}\bh,
\end{align*}
where the second and third equalities follow from $E\{\mathfrak{W}\}=1$. Now, we consider the decomposition
\begin{align*}
    &\sqrt{n}\widetilde{\bb}^\top(\widehat{\bbeta}^b-\widehat{\bbeta})=\sqrt{n}\widetilde{\bb}^\top(\widehat{\bbeta}^b-\bbeta_0)-\sqrt{n}\widetilde{\bb}^\top(\widehat{\bbeta}-\bbeta_0),
\end{align*}
where the second term $\sqrt{n}\widetilde{\bb}^\top(\widehat{\bbeta}-\bbeta_0)$ has already been analyzed in Lemma \ref{lemma:uniform-linearization}. It suffices to analyze the first term $\sqrt{n}\widetilde{\bb}^\top(\widehat{\bbeta}^b-\bbeta_0)$. By the proof of Lemma \ref{lemma:uniform-linearization}, we obtain 
\begin{align*}
    &\sqrt{n}\widetilde{\bb}^\top(\widehat{\bbeta}^b-\bbeta_0)\\
    =&\widetilde{\bb}^\top({\bH+\lambda\bP})^{-1}\mG_n\{\mathfrak{w}\bb(\bX)(Y^\ast-\bb(\bX)^\top\bbeta_0)\}+\\
    &\underbrace{\widetilde{\bb}^\top[(\widehat{\bH}^b+\lambda \bP)^{-1}-(\bH^b+\lambda \bP)^{-1}]\mG_n\{\mathfrak{w}\bb(\bX)(Y^\ast-\tau_{d_0d_1}(\bX))\}}_{(\ast_1)}+\\
    &\underbrace{\widetilde{\bb}^\top[(\widehat{\bH}^b+\lambda \bP)^{-1}-(\bH^b+\lambda \bP)^{-1}]\mG_n\{\mathfrak{w}\bb(\bX)r_\tau(\bX)\}}_{(\ast_2)}+\\    
    &\underbrace{\sqrt{n}\widetilde{\bb}^\top(\bH^b+\lambda\bP)^{-1}\mP_n\{\mathfrak{w}\bb(\bX)(\widehat{Y}^\ast-Y^\ast)\}}_{(\ast_3)}+\\
    &\underbrace{\sqrt{n}\widetilde{\bb}^\top[(\widehat{\bH}^b+\lambda \bP)^{-1}-(\bH^b+\lambda \bP)^{-1}]\mP_n\{\mathfrak{w}\bb(\bX)(\widehat{Y}^\ast-Y^\ast)\}}_{(\ast_4)}-\\
    &\underbrace{\lambda\widetilde{\bb}^\top[(\widehat{\bH}^b+\lambda \bP)^{-1}-(\bH^b+\lambda \bP)^{-1}]\bP\bbeta_0}_{(\ast_5)}-\underbrace{\lambda\widetilde{\bb}^\top(\bH^b+\lambda \bP)^{-1}\bP\bbeta_0}_{(\ast_6)}.
\end{align*}
We analyze these terms using techniques similar to those employed in the proof of Lemma~\ref{lemma:uniform-linearization}. The first term obeys
\begin{align*}
    \sup_{\bx}|(\ast_1)|\lessp&\sqrt{\log n}n^{1/\nu}\| (\widehat{\bH}^b+\lambda \bP)^{-1}-(\bH^b+\lambda \bP)^{-1}\|_{\text{op}}\| \widehat{\bH}^b\|_{\text{op}}^{1/2}\sqrt{\log K}\\
    \lessp&\sqrt{\log n}n^{1/\nu}\| (\widehat{\bH}^b_\tau+\lambda \bP)^{-1}-(\bH^b+\lambda \bP)^{-1}\|_{\text{op}}\sqrt{\log K}\\
    \lessp&\frac{n^{1/\nu}\sqrt{\log n}\sqrt{{\xi_K^2\log K}/{n}}\sqrt{\log K}}{(1+2\lambda\omega_1(\bP))(1+\lambda\omega_1(\bP))}\\
    \lessp&\frac{n^{1/\nu}\sqrt{\xi_K^2(\log n)^3/n}}{(1+2\lambda\omega_1(\bP))(1+\lambda\omega_1(\bP))},
\end{align*}
where the last inequality follows from Equation \eqref{eq:property-logK<=logn}. The second term obeys 

\begin{align*}
    \sup_{\bx}|(\ast_2)|\lessp&\| (\widehat{\bH}^b_\tau+\lambda \bP)^{-1}-(\bH^b+\lambda \bP)^{-1}\|_{\text{op}}\| \mG_n\{\mathfrak{w}\bb(\bX)r_\tau\}\|_2\\
    \lessp& \sqrt{\frac{\xi_K^2(\log n)^3}{n}}\frac{l_Kc_K\sqrt{K}}{(1+2\lambda\omega_1(\bP))(1+\lambda\omega_1(\bP))}.
\end{align*}
The third term obeys
\begin{align}
    \sup_{\bx}|(\ast_3)|=&\sup_{\bx}|\sqrt{n}\widetilde{\bb}(\bx)^\top(\bH^b+\lambda\bP)^{-1}\mP_n\{\mathfrak{w}\bb(\bX)(\widehat{Y}^\ast-Y^\ast)\}|\nonumber\\
    \leq&\sqrt{n}\sup_{\bx}\| \widetilde{\bb}(\bx)\|_2\|(\bH^b+\lambda\bP)^{-1}\|_{\text{op}}\|\mP_n\{\mathfrak{w}\bb(\bX\bb(\bX)(\widehat{Y}^\ast-Y^\ast)\}\|_2\nonumber\\
    \leq&\frac{\sqrt{n}\sup_{\bx}\| \widetilde{\bb}(\bx)\|_2}{1+\lambda \omega_1(\bP)}\|\mP_n\{\mathfrak{w}\bb(\bX)(\widehat{Y}^\ast-Y^\ast)\}\|_2\nonumber\\
    \lessp&\frac{\sqrt{n}\log n(m_{1n}+m_{2n})}{1+\lambda \omega_1(\bP)}\nonumber.
\end{align}
The fourth term obeys
\begin{align}
    \sup_{\bx}|(\ast_4)|=&\sup_{\bx}|\sqrt{n}\widetilde{\bb}^\top[(\widehat{\bH}^b_\tau+\lambda\bP)^{-1}-(\bH^b+\lambda\bP)^{-1}]\mP_n\{\mathfrak{w}\bb(\bX)(\widehat{Y}^\ast-Y^\ast)\}|\nonumber\\
    \leq&\sqrt{n}\sup_{\bx}\| \widetilde{\bb}(\bx)\|_2\|(\widehat{\bH}^b+\lambda\bP)^{-1}-(\bH^b+\lambda\bP)^{-1}\|_{\text{op}}\|\mP_n\{\mathfrak{w}\bb(\bX)(\widehat{Y}^\ast-Y^\ast)\}\|_2\nonumber\\
    \lessp&\frac{\sqrt{\xi_K^2(\log n)^3/n}\sqrt{n}(m_{1n}+m_{2n})}{(1+2\lambda\omega_1(\bP))(1+\lambda\omega_1(\bP))},\nonumber
\end{align}
The bounds for the fifth and the sixth terms remain identical with 
\begin{align*}
  &(\ast_5)\lessp \frac{\lambda\omega_K(\bP)\sqrt{\xi_K^2 \log n/n}}{(1+2\lambda\omega_1(\bP))(1+\lambda\omega_1(\bP))},\\
    &(\ast_6)\lessp \frac{\lambda\omega_K(\bP)}{1+\lambda\omega_1(\bP)}.
\end{align*}
\end{proof}

\subsection{Proof of Theorem \ref{thm:uniform-strong-gaussian-boot}}
\begin{proof}
From the proof of Lemma~\ref{lemma:uniform-linearization-boot}, the regularization bias is not affected by the additional bootstrap weights, which implies that the reasoning in the proof of Theorem~\ref{thm:uniform-strong-gaussian} 
applies. 
\end{proof}

\subsection{Proof of Theorem \ref{thm:strong-gaussian-appro-suprema}}
\begin{proof}
The proof follows the similar steps as those in the proofs of Theorems 5.4 and 5.5 in \cite{belloni2015some}.
\end{proof}

\subsection{Proof of Theorem \ref{thm:honest-cov-gauss-boot-ci}}
\begin{proof}
The proof follows from Theorems~\ref{thm:cov-matrix-rate} and \ref{thm:strong-gaussian-appro-suprema}, and the similar steps as those in the proof of Theorem 5.6 of \cite{belloni2015some}, by replacing the corresponding preliminary results in Lemma 5.1 and Theorem 5.5 with those in Theorems~\ref{thm:cov-matrix-rate} and \ref{thm:strong-gaussian-appro-suprema}, respectively.
\end{proof}

\section{Supporting information under monotonicity or counterfactual intermediate independence}\label{sec:mono-supp-simulation}
For brevity, results assuming monotonicity or counterfactual intermediate independence are presented without proof, with the exception of the derivations for the critical rate modifications of $m_{2n}$. The omitted derivations are analogous to those for the general odds ratio parameterization. We begin by formally stating the assumptions of monotonicity and counterfactual intermediate independence below.
\begin{assumption}[\emph{Monotonicity}]\label{assumption:mono-supp}
 $D(1)\geq D(0)$.   
\end{assumption}

\begin{assumption}[\emph{Counterfactual intermediate independence}]\label{assumption:CII-supp}
 $D(0) \perp D(1)|\bC$.   
\end{assumption}

Under the monotonicity assumption, the identification formulas for the principal scores simplify to the following expressions \citep{DingandLu2016}:
\begin{align*}
e_{d_0d_1}(\bC)=(2d_0-d_1)p_0(\bC)+(2d_1-d_0-1)p_1(\bC)+1-d_1.    
\end{align*}

Under counterfactual intermediate independence, the identification formulas for the principal scores simplify to the following expressions \citep{tong2025semiparametric}:
\begin{align*}
e_{d_0d_1}(\bC)=\prod_{z=0,1} p_{z}(\bC)^{d_{z}}\{1-p_{z}(\bC)\}^{1-d_{z}}.    
\end{align*}

\subsection{EIFs and pseudo-outcomes}
The EIFs and the implied pseudo-outcomes under monotonicity or counterfactual intermediate independence can be derived by substituting the identification formulas for the principal scores and the corresponding pathwise derivatives, $\phi_{d_0d_1}^e$, as detailed below. Following \cite{JiangJRSSB2022,tong2025semiparametric}, the expression for $\phi_{d_0d_1}^e$ is given by:
\begin{equation*}
   \phi_{d_0d_1}^e(D,Z,\bC)=\begin{cases}
       (2d_0-d_1)\{\psi_{D,0}-p_0(\bC)\}+(2d_1-d_0-1)\{\psi_{D,1}-p_1(\bC)\},&(\text{Assumption \ref{assumption:mono-supp}})\\
       \sum_{z=0,1}(-1)^{d_z+1}\{\psi_{D,z}-p_z(\bC)\}p_{1-z}(\bC)^{d_{1-z}}\{1-p_{1-z}(\bC)\}^{1-d_{1-z}},&(\text{Assumption \ref{assumption:CII-supp}})
   \end{cases}. 
\end{equation*}

\subsection{Rates summarizing nuisance estimation errors}

Given the hierarchical structure of the asymptotic results in Section \ref{sec:large-sample-theory-inference-bands}, to ensure the validity of the large-sample theory under Assumption \ref{assumption:mono-supp} or Assumption \ref{assumption:CII-supp}, it suffices to remove the constraints on the odds ratio parameter in the main manuscript Assumption \ref{asp:regularityL2}(d) and refine the form of $m_{2n}$, which summarizes the impact of nuisance estimation error, in the main manuscript Theorem \ref{thm:least-squares-series-density-ratio}. With these modifications, the asymptotic theory presented in the main manuscript remains valid under the Assumption \ref{assumption:mono-supp} or Assumption \ref{assumption:CII-supp}

We adopt the proof provided in Section \ref{sec:proofs-all;ss:proof-thm1}. To derive the form of $m_{2n}$, we can decompose the expected bias as follows:
\begin{align*}
&\mP\{\bb(\bX)(\widehat{Y}^\ast-Y^\ast)|\widehat{\bGamma}_s\}\\
=&\underbrace{E\left\{\bb(\bX)\frac{E\{\widehat{\varphi}^N_{d_0d_1}-\varphi^N_{d_0d_1}|\bX,\mathcal{F}_s^c\}-\widehat{\tau}_{d_0d_1}E\{\widehat{\varphi}^D_{d_0d_1}-\varphi^D_{d_0d_1}|\bX,\mathcal{F}_s^c\}}{\widehat{\tau}_{d_0d_1}^D}|\mathcal{F}_s^c\right\}}_{(\ast_1)}+\\
&\underbrace{E\left\{\bb(\bX)\frac{(\widehat{\tau}^N_{d_0d_1}-\tau^N_{d_0d_1})(\widehat{\tau}_{d_0d_1}^D-\tau_{d_0d_1}^D)}{(\widehat{\tau}_{d_0d_1}^D)^2}|\mathcal{F}_s^c\right\}-E\left\{\bb(\bX)\frac{\tau^N_{d_0d_1}(\widehat{\tau}_{d_0d_1}^D-\tau_{d_0d_1}^D)^2}{(\widehat{\tau}_{d_0d_1}^D)^2\tau_{d_0d_1}^D}|\mathcal{F}_s^c\right\}}_{(\ast_2)}.
\end{align*}
For the term $(\ast_1)$, it follows that, under monotonicity,
\begin{align*}
      &\| (\ast_1)\|_2^2\\
    \lessp&\sum_{k=1}^K[(E\{(\widehat{\varphi}^N_{d_0d_1}-\varphi^N_{d_0d_1})|\mathcal{F}_s^c\})^2+(E\{(\widehat{\varphi}^D_{d_0d_1}-\varphi^D_{d_0d_1})|\mathcal{F}_s^c\})^2]\\
    \lessp&K[\sum_{z=0,1}\eta(m_{zd_z})^2(\eta(\pi)^2+\sum_{j=0,1}\eta(p_j)^2)+\eta(\pi)^2\sum_{j=0,1}\eta(p_j)^2],
\end{align*}
where 
the final inequality follows from the Cauchy–Schwarz inequality and the proof provided in Section S7.2 of the Supplementary Material to \cite{JiangJRSSB2022}. For the term $(\ast_2)$, we have that, under monotonicity,
\begin{align*}
    &\| (\ast_2)\|_2^2
    \lessp K\left(\eta(\tau_{d_0d_1}^N)^2\eta(\tau_{d_0d_1}^D)^2+\eta(\tau_{d_0d_1}^D)^4\right).
\end{align*}
Thus, it follows that 
\begin{align*}
   &\mP\{\bb(\bX)(\widehat{Y}^\ast-Y^\ast)|\widehat{\bGamma}_s\}\\
   \lessp&\underbrace{\sqrt{K} \left[\sum_{z=0,1}\eta(m_{zd_z})\left(\eta(\pi)+\sum_{j=0,1}\eta(p_j)\right)+\eta(\pi)\sum_{j=0,1}\eta(p_j)\right]}_{m_{2n}^{(1)}}+\\
   &\underbrace{\sqrt{K}\left(\eta(\tau_{d_0d_1}^N)\eta(\tau_{d_0d_1}^D)+\eta(\tau_{d_0d_1}^D)^2\right)}_{m_{2n}^{(2)}}\\
   =&m_{2n}.
\end{align*}
That is, under monotonicity, the rate $m_{2n}$ is improved by eliminating the dependence on the product of the estimation errors for $p_0(\bC)$ and $p_1(\bC)$ in the $\mathcal{L}^2(\mP)$ norm. Similarly, following Section S8.3 of the Supplementary Material in \cite{tong2025semiparametric}, we obtain that: under Assumption \ref{assumption:CII-supp},
\begin{align*}
   &\mP\{\bb(\bX)(\widehat{Y}^\ast-Y^\ast)|\widehat{\bGamma}_s\}\\
   \lessp&\underbrace{\sqrt{K} \left[\sum_{z=0,1}\eta(m_{zd_z})\left\{\eta(\pi)+\eta(p_{1-z})\right\}+\eta(p_0)\eta(p_1)+\eta(\pi)\sum_{j=0,1}\eta(p_j)\right]}_{m_{2n}^{(1)}}+\\
   &\underbrace{\sqrt{K}\left(\eta(\tau_{d_0d_1}^N)\eta(\tau_{d_0d_1}^D)+\eta(\tau_{d_0d_1}^D)^2\right)}_{m_{2n}^{(2)}}\\
   =&m_{2n}.
\end{align*}
That is, under Assumption \ref{assumption:CII-supp}, the rate $m_{2n}$ is improved by eliminating the dependence on $\sum_{z=0,1}\eta(m_{zd_z})\eta(p_z)$.

\section{Supporting information for the simulation experiments}\label{sec:supp-simulation}


\subsection{Design}\label{sec:sim;ss:design}
Throughout the simulation experiments, all uniform and pointwise performance metrics are computed over $1000$ replications.
\subsubsection{Non-monotonicity data-generating process}
We generate a sample of $n = 3000$ individuals with two continuous covariates, $X\sim\text{Uniform}[-1,1]$ and $\widetilde{X}_1\sim\mathcal{N}(0,1)$, and one binary covariate, $\widetilde{X}_2 \sim \text{Bern}(0.5)$. The treatment assignment is then simulated using logistic regression, with $\Pr(Z = 1|\bX) = \text{expit}(0.4X-0.3\widetilde{X}_1+0.4\widetilde{X}_2)$, where $\text{expit}(x)=(1+e^{-x})^{-1}$. The potential intermediate outcomes are simulated from 
\begin{align*}
    &\Pr(D(0) = 1|\bC) = \text{expit}(0.1X^2+\sin(X)/3+0.4X-0.2\widetilde{X}_1+0.3\widetilde{X}_2),\\
    &\Pr(D(1) = 1|\bC) = \text{expit}(0.2X^2-0.3X+0.3\widetilde{X}_1+0.2\widetilde{X}_2).
\end{align*}
We assume that the odds ratio function satisfies $\theta(\bX) = |X-\widetilde{X}_1/2-\widetilde{X}_2/3| + 0.3$. The joint distribution of $D(0)$ and $D(1)$ can be computed following \cite{tong2025semiparametric}. Finally, the potential final outcomes follow 
\begin{align*}
    &Y(1)|G,\bC\sim \mathcal{N}(-1+D(1)+0.5X^2-0.3X-0.3\widetilde{X}_1+0.4\widetilde{X}_2, 1),\\
    &Y(0)|G,\bC\sim \mathcal{N}(3-D(0)+\cos(X)/3-0.7X^2-0.2X+0.7\widetilde{X}_1-0.3\widetilde{X}_2, 1).
\end{align*}
It is straightforward to verify that all structural assumptions are satisfied in the above data-generating process.

\subsubsection{Monotonicity data-generating process}
We modify the data-generating process for the potential intermediate outcomes and potential final outcomes, while all other aspects remain identical to the data-generating process under non-monotonicity. Specifically, the potential intermediate outcomes are simulated from the multinomial logistic model as follows:
\begin{align*}
    &\Pr(G=11|\bC)=\left\{1+\exp(-0.3X+0.4\widetilde{X}_1+0.1\widetilde{X}_2)+\exp(0.3X-0.4\widetilde{X}_1+0.1\widetilde{X}_2)\right\}^{-1},\\
    &\Pr(G=01|\bC)=\Pr(G=11|\bC)\exp(-0.3X+0.4\widetilde{X}_1+0.1\widetilde{X}_2),\\
    &\Pr(G=00|\bC)=\Pr(G=11|\bC)\exp(0.3X-0.4\widetilde{X}_1+0.1\widetilde{X}_2).
\end{align*}
The potential final outcomes follow 
\begin{align*}
    &Y(1)|G,\bC\sim \mathcal{N}(-1+D(1)-0.3X-0.3\widetilde{X}_1+0.4\widetilde{X}_2, 1),\\
    &Y(0)|G,\bC\sim \mathcal{N}(3-D(0)-0.2X+0.7\widetilde{X}_1-0.3\widetilde{X}_2, 1).
\end{align*}
To assess the impact of odds ratio misspecification, we also compute estimates fitted under non-monotonicity with $\theta(\mathbf{C})=5$.

\subsection{Results}
In the main manuscript, we established uniform and pointwise inference results for the complier stratum in the presence of non-monotonicity. Tables \ref{tab:pointwise-11-compare}--\ref{tab:pointwise-10-compare} summarize the pointwise inference performance for the remaining strata, while Figures \ref{fig:sim-results-11}--\ref{fig:sim-results-10} illustrate the corresponding uniform inference results. These findings exhibit qualitatively similar patterns to those observed for the complier stratum. To expand, we now discuss 
the performance under monotonicity, the consequences of sensitivity parameter misspecification, and the empirical comparison with existing instrumental variable (IV) method. These results provide practical guidance for implementing our methods.


\subsubsection{Performance under monotonicity}
Figures \ref{fig:sim-mono-fitmono-11}--\ref{fig:sim-mono-fitmono-00} summarize the uniform inference results under monotonicity, while Tables \ref{tab:pointwise-11-mono-fitmono}--\ref{tab:pointwise-00-mono-fitmono} summarize the pointwise inference results. Generally, the observed patterns and practical recommendations are qualitatively similar to those under non-monotonicity, indicating that the proposed methods perform well when monotonicity holds.

\subsubsection{Sensitivity parameter misspecifications}
We first discuss the scenario where non-monotonicity holds, but the estimators are fitted assuming monotonicity. In this case, Figures \ref{fig:sim-results-nonmono-fitmono-11}--\ref{fig:sim-results-nonmono-fitmono-00} summarize the uniform inference results, while Tables \ref{tab:pointwise-11-nonmono-fitmono}--\ref{tab:pointwise-00-nonmono-fitmono} summarize the pointwise inference results. Consistent with the simulation findings in \cite{tong2025semiparametric}, the misspecification of non-monotonicity as monotonicity can lead to both biased uniform and pointwise inference for CPCEs across all strata. Moreover, the estimates for CPCEs among compliers become extremely unstable and invalid, as evidenced by extremely large mean integrated squared errors (MISEs), uniform band width, bias, and pointwise AESEs. This occurs because the principal score derived under monotonicity, $\whp_1(\mathbf{C})-\whp_0(\mathbf{C})$, can become extremely small or even negative when the data exhibit non-monotonicity.

Second, we discuss the scenario where monotonicity holds, but the estimators are fitted assuming non-monotonicity. In this case, Figures \ref{fig:sim-mono-fit5-11}--\ref{fig:sim-mono-fit5-00} summarize the uniform inference results, while Tables \ref{tab:pointwise-11-mono-fit5}--\ref{tab:pointwise-00-mono-fit5} summarize the pointwise inference results. Consistent with \cite{tong2025semiparametric}, when monotonicity is misspecified as a large constant odds ratio, both pointwise and uniform inference remain approximately valid, and the impact of such misspecification is not significant. This follows from the fact that monotonicity is equivalent to setting $\theta\to\infty$, provided that $p_1(\mathbf{C})>p_0(\mathbf{C})$.

\subsubsection{Comparison with existing IV method}
We fit the ratio estimator proposed by \cite{takatsu2025doubly} under the IV framework. Under a data-generating process satisfying non-monotonicity, Figure \ref{fig:sim-results-JRSSA-uniform} summarizes the root-MISE for the CLATE, while Figure \ref{fig:sim-results-JRSSA-pointwise} presents the pointwise error box plots at five representative points. Under a data-generating process satisfying monotonicity, Figure \ref{fig:sim-results-JRSSA-uniform-mono} summarizes the root-MISE for the CLATE, while Figure \ref{fig:sim-results-JRSSA-pointwise-mono} presents the pointwise inference performance at four representative points. In general, violations of structural assumptions render inference derived from the naive application of the IV estimator biased; notably, the resulting estimates become unstable when the monotonicity assumption is violated.

Since the method in \cite{takatsu2025doubly} is designed for the IV framework under the assumptions of monotonicity and the exclusion restriction (ER), it may exhibit bias and instability (especially when monotonicity fails) when these assumptions are violated. Therefore, it is recommended to assess the plausibility of structural causal assumptions when considering the application of the method proposed in this manuscript or those designed for the IV framework, such as \cite{takatsu2025doubly}.

\clearpage

\section{Additional Tables and Figures}\label{sec:additional-table-figures}

\begin{table}[ht!]
\centering
\caption{Comparison with the concurrent work of \cite{zhang2026estimation}.}
\label{supp;tab:comparison-zhang}
\renewcommand{\arraystretch}{1.0}
\setlength{\tabcolsep}{3pt}
\begin{tabularx}{\linewidth}{
>{\raggedright\arraybackslash}p{0.20\linewidth}
>{\raggedright\arraybackslash}X
>{\raggedright\arraybackslash}X}
\toprule
\textbf{Aspect} & \textbf{This paper} & \textbf{\cite{zhang2026estimation}} \\
\midrule
Principal-stratum setting
&
Allows all principal strata and does not impose monotonicity as a identifying restriction.
&
Works under monotonicity and studies always-takers, compliers, and never-takers.
\\

Target estimand
&
Targets interpretable CPCEs indexed by a scientifically chosen subset of baseline covariates.
&
Targets conditional PCEs given the full set of exogenous baseline covariates.
\\

Identification
&
Uses principal ignorability with a conditional odds-ratio sensitivity parameter.
&
Uses principal ignorability under monotonicity.
\\

Sensitivity analysis
&
Allows sensitivity analysis through a covariate-dependent odds-ratio function.
&
Does not primarily develop a sensitivity-analysis framework.
\\

Statistical challenges
&
Requires handling intermediate nuisance functions for which existing machine learners are not directly available.
&
Allows the nuisance functions to be learned using existing methods.
\\

Estimation
&
Uses efficient-influence-function-based orthogonal pseudo-outcomes and a doubly cross-fit procedure to address the resulting nuisance-estimation challenges.
&
Proposes several flexible causal machine learners.
\\

Inference
&
Provides computationally efficient pointwise inference and simultaneous inference.
&
Provides pointwise inference based on computationally intensive bootstrap procedures, but does not provide simultaneous inference.
\\

Theory
&
Derives $\mathcal{L}^2$ and uniform convergence rates, and establishes oracle properties and uniform inference under penalized linear sieves.
&
Establishes multiply robust large-sample theory based on high-level learner-agnostic rate conditions.
\\
\bottomrule
\end{tabularx}
\end{table}

\begin{table}[htbp]
\centering
\caption{ Simulation results presenting pointwise inference metrics, including bias, Monte Carlo standard deviation (MCSD), average estimated standard errors (AESE), and empirical coverage probabilities (CP), for the proposed PLSS-DCDR learner for the always-takers stratum ($11$) at four representative points $X=x_0\in\{-0.50,-0.25,0.25,0.50\}$ when non-monotonicity holds. The basis functions are constructed using P-splines with $K \in \{5, 10, 20\}$ knots. We employ three implementation strategies: a penalized estimator with the smoothing parameter selected via the generalized cross-validation (GCV) criterion \cite{eilers1996flexible}, a variant based on an undersmoothed GCV selection, and an unpenalized estimator corresponding to the least squares series approach.}
\label{tab:pointwise-11-compare}
\resizebox{\textwidth}{!}{%
\begin{tabular}{lrr cccc cccc cccc}
\toprule
 & & \multicolumn{4}{c}{GCV} & \multicolumn{4}{c}{GCV.under} & \multicolumn{4}{c}{Unpenalized} \\
\cmidrule(lr){3-6} \cmidrule(lr){7-10} \cmidrule(lr){11-14}
$K$ & $x_0$ & BIAS & MCSD & AESE & CP(\%) & BIAS & MCSD & AESE & CP(\%) & BIAS & MCSD & AESE & CP(\%) \\
\hline
5 & -0.50 & 0.02 & 0.11 & 0.14 & 97.8 & 0.01 & 0.12 & 0.14 & 96.6 & 0.00 & 0.14 & 0.14 & 95.1 \\ 
  & -0.25 & 0.02 & 0.11 & 0.12 & 95.5 & 0.01 & 0.12 & 0.12 & 95.1 & -0.00 & 0.13 & 0.12 & 93.4 \\ 
  & 0.25 & 0.01 & 0.11 & 0.12 & 96.0 & -0.00 & 0.12 & 0.12 & 95.6 & -0.01 & 0.12 & 0.12 & 94.2 \\ 
  & 0.50 & 0.01 & 0.13 & 0.14 & 98.9 & 0.01 & 0.14 & 0.14 & 96.5 & 0.00 & 0.16 & 0.14 & 94.0 \\ 
[2ex]
10 & -0.50 & 0.03 & 0.12 & 0.18 & 98.9 & 0.01 & 0.16 & 0.18 & 96.7 & 0.01 & 0.18 & 0.18 & 94.7 \\ 
  & -0.25 & 0.03 & 0.13 & 0.17 & 98.6 & 0.01 & 0.15 & 0.17 & 96.6 & -0.00 & 0.17 & 0.17 & 93.3 \\ 
  & 0.25 & 0.01 & 0.13 & 0.17 & 98.4 & -0.00 & 0.15 & 0.17 & 95.7 & -0.01 & 0.17 & 0.17 & 94.6 \\ 
  & 0.50 & 0.02 & 0.12 & 0.18 & 99.1 & 0.01 & 0.16 & 0.18 & 96.2 & 0.01 & 0.18 & 0.18 & 94.8 \\ 
[2ex]
20 & -0.50 & 0.03 & 0.13 & 0.28 & 99.7 & 0.01 & 0.23 & 0.28 & 97.7 & 0.00 & 0.28 & 0.28 & 95.7 \\ 
  & -0.25 & 0.03 & 0.14 & 0.26 & 99.8 & -0.00 & 0.23 & 0.26 & 96.7 & -0.01 & 0.26 & 0.26 & 95.5 \\ 
  & 0.25 & 0.01 & 0.14 & 0.25 & 99.5 & -0.00 & 0.23 & 0.25 & 97.0 & -0.00 & 0.26 & 0.25 & 94.6 \\ 
  & 0.50 & 0.02 & 0.13 & 0.27 & 99.8 & 0.01 & 0.24 & 0.27 & 96.7 & 0.01 & 0.28 & 0.27 & 93.3 \\ 
\bottomrule
\end{tabular}%
}
\end{table}

\begin{table}[htbp]
\centering
\caption{Simulation results presenting pointwise inference metrics, including bias, Monte Carlo standard deviation (MCSD), average estimated standard errors (AESE), and empirical coverage probabilities (CP), for the proposed PLSS-DCDR learner for the never-takers stratum ($00$) at four representative points $X=x_0\in\{-0.50,-0.25,0.25,0.50\}$ when non-monotonicity holds. The basis functions are constructed using P-splines with $K \in \{5, 10, 20\}$ knots. We employ three implementation strategies: a penalized estimator with the smoothing parameter selected via the generalized cross-validation (GCV) criterion \cite{eilers1996flexible}, a variant based on an undersmoothed GCV selection, and an unpenalized estimator corresponding to the least squares series approach.}
\label{tab:pointwise-00-compare}
\resizebox{\textwidth}{!}{%
\begin{tabular}{lrr cccc cccc cccc}
\toprule
 & & \multicolumn{4}{c}{GCV} & \multicolumn{4}{c}{GCV.under} & \multicolumn{4}{c}{Unpenalized} \\
\cmidrule(lr){3-6} \cmidrule(lr){7-10} \cmidrule(lr){11-14}
$K$ & $x_0$ & BIAS & MCSD & AESE & CP(\%) & BIAS & MCSD & AESE & CP(\%) & BIAS & MCSD & AESE & CP(\%) \\
\hline
5 & -0.50 & 0.03 & 0.12 & 0.16 & 98.7 & 0.01 & 0.14 & 0.16 & 98.0 & 0.00 & 0.15 & 0.16 & 95.9 \\ 
  & -0.25 & 0.03 & 0.13 & 0.14 & 96.2 & 0.01 & 0.13 & 0.14 & 95.9 & 0.00 & 0.14 & 0.14 & 94.5 \\ 
  & 0.25 & 0.01 & 0.15 & 0.15 & 96.8 & 0.00 & 0.14 & 0.14 & 96.0 & -0.00 & 0.15 & 0.14 & 94.4 \\ 
  & 0.50 & 0.01 & 0.21 & 0.20 & 97.9 & 0.01 & 0.18 & 0.19 & 96.3 & 0.00 & 0.21 & 0.19 & 93.9 \\ 
[2ex]
10 & -0.50 & 0.03 & 0.14 & 0.20 & 99.2 & 0.01 & 0.17 & 0.20 & 98.0 & -0.01 & 0.19 & 0.20 & 96.9 \\ 
  & -0.25 & 0.04 & 0.14 & 0.19 & 99.2 & 0.01 & 0.17 & 0.19 & 96.8 & 0.00 & 0.19 & 0.19 & 95.0 \\ 
  & 0.25 & 0.02 & 0.17 & 0.20 & 98.3 & 0.01 & 0.18 & 0.20 & 96.6 & 0.00 & 0.20 & 0.20 & 94.6 \\ 
  & 0.50 & 0.02 & 0.21 & 0.23 & 99.0 & 0.01 & 0.20 & 0.22 & 96.3 & 0.01 & 0.23 & 0.22 & 94.9 \\ 
[2ex]
20 & -0.50 & 0.02 & 0.15 & 0.31 & 99.5 & -0.00 & 0.26 & 0.31 & 97.2 & -0.01 & 0.31 & 0.31 & 94.8 \\ 
  & -0.25 & 0.04 & 0.15 & 0.29 & 99.6 & 0.01 & 0.25 & 0.29 & 96.5 & 0.00 & 0.29 & 0.29 & 94.7 \\ 
  & 0.25 & 0.03 & 0.18 & 0.31 & 99.5 & 0.01 & 0.27 & 0.31 & 97.0 & 0.01 & 0.32 & 0.31 & 95.2 \\ 
  & 0.50 & 0.01 & 0.24 & 0.35 & 99.2 & -0.00 & 0.33 & 0.35 & 94.6 & -0.00 & 0.38 & 0.35 & 92.0 \\ 
\bottomrule
\end{tabular}%
}
\end{table}

\begin{table}[htbp]
\centering
\caption{ Simulation results presenting pointwise inference metrics, including bias, Monte Carlo standard deviation (MCSD), average estimated standard errors (AESE), and empirical coverage probabilities (CP), for the proposed PLSS-DCDR learner for the defiers stratum ($10$) at four representative points $X=x_0\in\{-0.50,-0.25,0.25,0.50\}$ when non-monotonicity holds. The basis functions are constructed using P-splines with $K \in \{5, 10, 20\}$ knots. We employ three implementation strategies: a penalized estimator with the smoothing parameter selected via the generalized cross-validation (GCV) criterion \cite{eilers1996flexible}, a variant based on an undersmoothed GCV selection, and an unpenalized estimator corresponding to the least squares series approach.}
\label{tab:pointwise-10-compare}
\resizebox{\textwidth}{!}{%
\begin{tabular}{lrr cccc cccc cccc}
\toprule
 & & \multicolumn{4}{c}{GCV} & \multicolumn{4}{c}{GCV.under} & \multicolumn{4}{c}{Unpenalized} \\
\cmidrule(lr){3-6} \cmidrule(lr){7-10} \cmidrule(lr){11-14}
$K$ & $x_0$ & BIAS & MCSD & AESE & CP(\%) & BIAS & MCSD & AESE & CP(\%) & BIAS & MCSD & AESE & CP(\%) \\
\hline
5 & -0.50 & 0.01 & 0.12 & 0.15 & 98.1 & 0.00 & 0.14 & 0.15 & 97.3 & -0.01 & 0.15 & 0.15 & 95.8 \\ 
  & -0.25 & 0.01 & 0.12 & 0.13 & 97.1 & 0.00 & 0.13 & 0.13 & 96.6 & -0.00 & 0.14 & 0.13 & 95.2 \\ 
  & 0.25 & 0.02 & 0.13 & 0.12 & 96.5 & 0.01 & 0.13 & 0.12 & 96.5 & -0.00 & 0.13 & 0.12 & 95.3 \\ 
  & 0.50 & 0.02 & 0.17 & 0.15 & 98.3 & 0.00 & 0.19 & 0.15 & 96.4 & -0.01 & 0.21 & 0.15 & 94.4 \\ 
[2ex]
10 & -0.50 & 0.02 & 0.13 & 0.19 & 99.0 & 0.01 & 0.17 & 0.19 & 97.3 & -0.00 & 0.19 & 0.19 & 96.1 \\ 
  & -0.25 & 0.02 & 0.12 & 0.17 & 99.3 & 0.00 & 0.16 & 0.17 & 97.2 & -0.00 & 0.17 & 0.17 & 95.5 \\ 
  & 0.25 & 0.02 & 0.12 & 0.17 & 99.4 & 0.00 & 0.15 & 0.17 & 97.2 & -0.01 & 0.16 & 0.17 & 96.0 \\ 
  & 0.50 & 0.03 & 0.13 & 0.17 & 98.9 & 0.02 & 0.16 & 0.17 & 96.2 & 0.01 & 0.18 & 0.17 & 94.0 \\ 
[2ex]
20 & -0.50 & 0.02 & 0.15 & 0.30 & 99.7 & 0.00 & 0.25 & 0.30 & 97.3 & -0.00 & 0.30 & 0.30 & 95.3 \\ 
  & -0.25 & 0.02 & 0.14 & 0.27 & 99.3 & -0.00 & 0.23 & 0.27 & 97.6 & -0.01 & 0.26 & 0.27 & 95.4 \\ 
  & 0.25 & 0.03 & 0.13 & 0.25 & 99.8 & 0.00 & 0.21 & 0.25 & 97.2 & 0.01 & 0.25 & 0.25 & 94.5 \\ 
  & 0.50 & 0.02 & 0.14 & 0.26 & 99.6 & 0.00 & 0.22 & 0.26 & 97.1 & 0.00 & 0.26 & 0.26 & 94.9 \\ 
\bottomrule
\end{tabular}%
}
\end{table}

\begin{table}[htbp]
\centering
\caption{ Simulation results presenting pointwise inference metrics, including bias, Monte Carlo standard deviation (MCSD), average estimated standard errors (AESE), and empirical coverage probabilities (CP), for the proposed PLSS-DCDR learner for the always-takers stratum ($11$) at four representative points $X=x_0\in\{-0.50,-0.25,0.25,0.50\}$ when non-monotonicity holds but the estimator is fitted assuming monotonicity. The basis functions are constructed using P-splines with $K \in \{5, 10, 20\}$ knots. We employ three implementation strategies: a penalized estimator with the smoothing parameter selected via the generalized cross-validation (GCV) criterion \cite{eilers1996flexible}, a variant based on an undersmoothed GCV selection, and an unpenalized estimator corresponding to the least squares series approach.}
\label{tab:pointwise-11-nonmono-fitmono}
\resizebox{\textwidth}{!}{%
\begin{tabular}{lrr cccc cccc cccc}
\toprule
 & & \multicolumn{4}{c}{GCV} & \multicolumn{4}{c}{GCV.under} & \multicolumn{4}{c}{Unpenalized} \\
\cmidrule(lr){3-6} \cmidrule(lr){7-10} \cmidrule(lr){11-14}
$K$ & $x_0$ & BIAS & MCSD & AESE & CP(\%) & BIAS & MCSD & AESE & CP(\%) & BIAS & MCSD & AESE & CP(\%) \\
\hline
5 & -0.50 & 0.20 & 0.12 & 0.14 & 69.1 & 0.19 & 0.16 & 0.14 & 69.8 & 0.18 & 0.19 & 0.14 & 70.8 \\ 
  & -0.25 & 0.21 & 0.31 & 0.12 & 59.0 & 0.20 & 0.35 & 0.12 & 63.4 & 0.19 & 0.38 & 0.12 & 64.4 \\ 
  & 0.25 & 0.11 & 0.41 & 0.12 & 77.9 & 0.10 & 0.41 & 0.12 & 80.5 & 0.09 & 0.41 & 0.12 & 81.6 \\ 
  & 0.50 & 0.08 & 0.74 & 0.16 & 91.6 & 0.06 & 0.83 & 0.16 & 90.4 & 0.05 & 0.89 & 0.16 & 87.7 \\ 
[2ex]
10 & -0.50 & 0.21 & 0.14 & 0.17 & 85.8 & 0.20 & 0.17 & 0.17 & 82.5 & 0.20 & 0.19 & 0.17 & 79.3 \\ 
  & -0.25 & 0.19 & 0.17 & 0.16 & 82.9 & 0.18 & 0.19 & 0.16 & 80.9 & 0.17 & 0.21 & 0.16 & 79.0 \\ 
  & 0.25 & 0.12 & 0.26 & 0.16 & 94.3 & 0.11 & 0.28 & 0.16 & 90.6 & 0.10 & 0.30 & 0.16 & 88.0 \\ 
  & 0.50 & 0.12 & 0.21 & 0.16 & 96.8 & 0.10 & 0.24 & 0.16 & 93.2 & 0.10 & 0.25 & 0.16 & 90.5 \\ 
[2ex]
20 & -0.50 & 0.21 & 0.12 & 0.26 & 98.6 & 0.20 & 0.22 & 0.26 & 92.0 & 0.19 & 0.25 & 0.26 & 88.7 \\ 
  & -0.25 & 0.20 & 0.12 & 0.24 & 98.0 & 0.17 & 0.20 & 0.24 & 92.5 & 0.17 & 0.24 & 0.24 & 88.7 \\ 
  & 0.25 & 0.13 & 0.13 & 0.23 & 99.4 & 0.12 & 0.20 & 0.23 & 94.8 & 0.11 & 0.24 & 0.23 & 91.2 \\ 
  & 0.50 & 0.10 & 0.25 & 0.25 & 99.6 & 0.09 & 0.31 & 0.25 & 96.2 & 0.09 & 0.36 & 0.25 & 93.5 \\ 
\bottomrule
\end{tabular}%
}
\end{table}

\begin{table}[htbp]
\centering
\caption{ Simulation results presenting pointwise inference metrics, including bias, Monte Carlo standard deviation (MCSD), average estimated standard errors (AESE), and empirical coverage probabilities (CP), for the proposed PLSS-DCDR learner for the compliers stratum ($01$) at four representative points $X=x_0\in\{-0.50,-0.25,0.25,0.50\}$ when non-monotonicity holds but the estimator is fitted assuming monotonicity. The basis functions are constructed using P-splines with $K \in \{5, 10, 20\}$ knots. We employ three implementation strategies: a penalized estimator with the smoothing parameter selected via the generalized cross-validation (GCV) criterion \cite{eilers1996flexible}, a variant based on an undersmoothed GCV selection, and an unpenalized estimator corresponding to the least squares series approach.}
\label{tab:pointwise-01-nonmono-fitmono}
\resizebox{\textwidth}{!}{%
\begin{tabular}{lrr cccc cccc cccc}
\toprule
 & & \multicolumn{4}{c}{GCV} & \multicolumn{4}{c}{GCV.under} & \multicolumn{4}{c}{Unpenalized} \\
\cmidrule(lr){3-6} \cmidrule(lr){7-10} \cmidrule(lr){11-14}
$K$ & $x_0$ & BIAS & MCSD & AESE & CP(\%) & BIAS & MCSD & AESE & CP(\%) & BIAS & MCSD & AESE & CP(\%) \\
\hline
5 & -0.50 & $>10^{8}$ & $>10^{10}$ & $>10^{7}$ & 48.8 & $>10^{8}$ & $>10^{10}$ & $>10^{7}$ & 56.9 & $>10^{7}$ & $>10^{8}$ & $>10^{7}$ & 99.6 \\ 
  & -0.25 & $>10^{8}$ & $>10^{10}$ & $>10^{9}$ & 99.7 & $>10^{8}$ & $>10^{10}$ & $>10^{9}$ & 99.4 & $>10^{9}$ & $>10^{10}$ & $>10^{9}$ & 99.4 \\ 
  & 0.25 & $>10^{8}$ & $>10^{10}$ & $>10^{9}$ & 99.2 & $>10^{8}$ & $>10^{10}$ & $>10^{9}$ & 98.8 & $>10^{9}$ & $>10^{10}$ & $>10^{9}$ & 99.8 \\ 
  & 0.50 & $>10^{8}$ & $>10^{10}$ & $>10^{7}$ & 44.5 & $>10^{8}$ & $>10^{10}$ & $>10^{7}$ & 55.6 & $<-10^{7}$ & $>10^{8}$ & $>10^{7}$ & 99.6 \\ 
[2ex]
10 & -0.50 & $>10^{8}$ & $>10^{10}$ & $>10^{8}$ & 89.0 & $>10^{7}$ & $>10^{8}$ & $>10^{8}$ & 93.0 & $<-10^{8}$ & $>10^{10}$ & $>10^{8}$ & 99.4 \\ 
  & -0.25 & $>10^{8}$ & $>10^{10}$ & $>10^{8}$ & 90.6 & $>10^{9}$ & $>10^{10}$ & $>10^{8}$ & 89.7 & $>10^{8}$ & $>10^{10}$ & $>10^{8}$ & 99.3 \\ 
  & 0.25 & $>10^{8}$ & $>10^{10}$ & $>10^{8}$ & 88.9 & $>10^{9}$ & $>10^{10}$ & $>10^{8}$ & 89.7 & $>10^{8}$ & $>10^{10}$ & $>10^{8}$ & 99.7 \\ 
  & 0.50 & $>10^{8}$ & $>10^{10}$ & $>10^{8}$ & 88.2 & $<-10^{6}$ & $>10^{8}$ & $>10^{8}$ & 94.0 & $<-10^{8}$ & $>10^{10}$ & $>10^{8}$ & 99.7 \\ 
[2ex]
20 & -0.50 & $>10^{8}$ & $>10^{10}$ & $>10^{7}$ & 7.0 & $>10^{7}$ & $>10^{8}$ & $>10^{6}$ & 28.9 & $<-10^{6}$ & $>10^{8}$ & $>10^{7}$ & 100.0 \\ 
  & -0.25 & $>10^{8}$ & $>10^{10}$ & $>10^{8}$ & 80.0 & $<-10^{8}$ & $>10^{9}$ & $>10^{8}$ & 86.5 & $<-10^{8}$ & $>10^{10}$ & $>10^{8}$ & 99.5 \\ 
  & 0.25 & $>10^{8}$ & $>10^{10}$ & $>10^{8}$ & 73.4 & $<-10^{8}$ & $>10^{9}$ & $>10^{8}$ & 88.4 & $<-10^{8}$ & $>10^{10}$ & $>10^{8}$ & 99.6 \\ 
  & 0.50 & $>10^{8}$ & $>10^{10}$ & $>10^{7}$ & 7.4 & $>10^{7}$ & $>10^{8}$ & $>10^{6}$ & 34.3 & 8985.56 & $>10^{7}$ & $>10^{7}$ & 100.0 \\ 
\bottomrule
\end{tabular}%
}
\end{table}

\begin{table}[htbp]
\centering
\caption{ Simulation results presenting pointwise inference metrics, including bias, Monte Carlo standard deviation (MCSD), average estimated standard errors (AESE), and empirical coverage probabilities (CP), for the proposed PLSS-DCDR learner for the never-takers stratum ($00$) at four representative points $X=x_0\in\{-0.50,-0.25,0.25,0.50\}$ when non-monotonicity holds but the estimator is fitted assuming monotonicity. The basis functions are constructed using P-splines with $K \in \{5, 10, 20\}$ knots. We employ three implementation strategies: a penalized estimator with the smoothing parameter selected via the generalized cross-validation (GCV) criterion \cite{eilers1996flexible}, a variant based on an undersmoothed GCV selection, and an unpenalized estimator corresponding to the least squares series approach.}
\label{tab:pointwise-00-nonmono-fitmono}
\resizebox{\textwidth}{!}{%
\begin{tabular}{lrr cccc cccc cccc}
\toprule
 & & \multicolumn{4}{c}{GCV} & \multicolumn{4}{c}{GCV.under} & \multicolumn{4}{c}{Unpenalized} \\
\cmidrule(lr){3-6} \cmidrule(lr){7-10} \cmidrule(lr){11-14}
$K$ & $x_0$ & BIAS & MCSD & AESE & CP(\%) & BIAS & MCSD & AESE & CP(\%) & BIAS & MCSD & AESE & CP(\%) \\
\hline
5 & -0.50 & 0.13 & 0.09 & 0.05 & 35.2 & 0.13 & 0.09 & 0.05 & 36.3 & 0.13 & 0.10 & 0.05 & 36.8 \\ 
  & -0.25 & 0.22 & 0.09 & 0.05 & 7.6 & 0.21 & 0.09 & 0.05 & 9.4 & 0.21 & 0.09 & 0.05 & 11.1 \\ 
  & 0.25 & 0.15 & 0.09 & 0.05 & 24.5 & 0.15 & 0.09 & 0.05 & 26.3 & 0.14 & 0.09 & 0.05 & 28.0 \\ 
  & 0.50 & 0.03 & 0.10 & 0.06 & 75.3 & 0.02 & 0.10 & 0.06 & 73.1 & 0.02 & 0.11 & 0.06 & 70.9 \\ 
[2ex]
10 & -0.50 & 0.13 & 0.10 & 0.07 & 50.3 & 0.13 & 0.11 & 0.07 & 52.2 & 0.13 & 0.11 & 0.07 & 51.3 \\ 
  & -0.25 & 0.22 & 0.10 & 0.07 & 19.7 & 0.21 & 0.11 & 0.07 & 23.0 & 0.21 & 0.11 & 0.07 & 24.3 \\ 
  & 0.25 & 0.15 & 0.10 & 0.07 & 41.3 & 0.15 & 0.11 & 0.07 & 44.3 & 0.14 & 0.11 & 0.07 & 45.3 \\ 
  & 0.50 & 0.03 & 0.11 & 0.08 & 82.6 & 0.03 & 0.12 & 0.08 & 79.6 & 0.03 & 0.12 & 0.08 & 78.0 \\ 
[2ex]
20 & -0.50 & 0.14 & 0.10 & 0.11 & 77.4 & 0.13 & 0.13 & 0.11 & 73.2 & 0.13 & 0.14 & 0.11 & 71.5 \\ 
  & -0.25 & 0.22 & 0.11 & 0.10 & 43.4 & 0.21 & 0.13 & 0.10 & 46.3 & 0.21 & 0.14 & 0.10 & 45.3 \\ 
  & 0.25 & 0.15 & 0.11 & 0.11 & 71.5 & 0.14 & 0.13 & 0.11 & 69.7 & 0.14 & 0.14 & 0.11 & 67.1 \\ 
  & 0.50 & 0.03 & 0.12 & 0.12 & 94.7 & 0.02 & 0.15 & 0.12 & 87.3 & 0.03 & 0.16 & 0.12 & 83.6 \\ 
\bottomrule
\end{tabular}%
}
\end{table}

\begin{table}[htbp]
\centering
\caption{ Simulation results presenting pointwise inference metrics, including bias, Monte Carlo standard deviation (MCSD), average estimated standard errors (AESE), and empirical coverage probabilities (CP), for the proposed PLSS-DCDR learner for the always-takers stratum ($11$) at four representative points $X=x_0\in\{-0.50,-0.25,0.25,0.50\}$ when monotonicity holds. The basis functions are constructed using P-splines with $K \in \{5, 10, 20\}$ knots. We employ three implementation strategies: a penalized estimator with the smoothing parameter selected via the generalized cross-validation (GCV) criterion \cite{eilers1996flexible}, a variant based on an undersmoothed GCV selection, and an unpenalized estimator corresponding to the least squares series approach.}
\label{tab:pointwise-11-mono-fitmono}
\resizebox{\textwidth}{!}{%
\begin{tabular}{lrr cccc cccc cccc}
\toprule
 & & \multicolumn{4}{c}{GCV} & \multicolumn{4}{c}{GCV.under} & \multicolumn{4}{c}{Unpenalized} \\
\cmidrule(lr){3-6} \cmidrule(lr){7-10} \cmidrule(lr){11-14}
$K$ & $x_0$ & BIAS & MCSD & AESE & CP(\%) & BIAS & MCSD & AESE & CP(\%) & BIAS & MCSD & AESE & CP(\%) \\
\hline
5 & -0.50 & 0.00 & 0.11 & 0.15 & 99.7 & 0.00 & 0.12 & 0.15 & 98.2 & 0.00 & 0.15 & 0.15 & 95.8 \\ 
  & -0.25 & 0.00 & 0.11 & 0.14 & 98.6 & 0.00 & 0.12 & 0.13 & 96.4 & 0.00 & 0.13 & 0.13 & 94.3 \\ 
  & 0.25 & 0.00 & 0.12 & 0.14 & 98.3 & 0.00 & 0.13 & 0.14 & 96.6 & 0.00 & 0.14 & 0.14 & 94.4 \\ 
  & 0.50 & 0.00 & 0.17 & 0.18 & 98.9 & 0.00 & 0.16 & 0.17 & 98.3 & 0.01 & 0.18 & 0.17 & 95.8 \\ 
[2ex]
10 & -0.50 & 0.00 & 0.12 & 0.20 & 99.6 & 0.00 & 0.17 & 0.20 & 96.7 & 0.00 & 0.20 & 0.20 & 94.6 \\ 
  & -0.25 & 0.00 & 0.11 & 0.19 & 99.6 & 0.00 & 0.17 & 0.19 & 95.6 & -0.00 & 0.19 & 0.19 & 93.8 \\ 
  & 0.25 & 0.00 & 0.12 & 0.19 & 99.8 & 0.00 & 0.17 & 0.19 & 97.7 & -0.00 & 0.19 & 0.19 & 95.1 \\ 
  & 0.50 & 0.01 & 0.13 & 0.21 & 99.3 & 0.02 & 0.18 & 0.21 & 96.3 & 0.01 & 0.21 & 0.21 & 94.9 \\ 
[2ex]
20 & -0.50 & 0.01 & 0.14 & 0.30 & 99.5 & 0.00 & 0.26 & 0.30 & 96.4 & 0.00 & 0.31 & 0.30 & 93.8 \\ 
  & -0.25 & -0.00 & 0.13 & 0.28 & 99.6 & -0.00 & 0.25 & 0.28 & 96.7 & -0.00 & 0.29 & 0.28 & 94.7 \\ 
  & 0.25 & -0.00 & 0.14 & 0.30 & 99.5 & -0.00 & 0.26 & 0.30 & 95.9 & 0.00 & 0.30 & 0.30 & 94.5 \\ 
  & 0.50 & 0.01 & 0.15 & 0.33 & 99.5 & 0.01 & 0.28 & 0.33 & 96.5 & 0.01 & 0.34 & 0.33 & 94.4 \\ 
\bottomrule
\end{tabular}%
}
\end{table}

\begin{table}[htbp]
\centering
\caption{ Simulation results presenting pointwise inference metrics, including bias, Monte Carlo standard deviation (MCSD), average estimated standard errors (AESE), and empirical coverage probabilities (CP), for the proposed PLSS-DCDR learner for the compliers stratum ($01$) at four representative points $X=x_0\in\{-0.50,-0.25,0.25,0.50\}$ when monotonicity holds. The basis functions are constructed using P-splines with $K \in \{5, 10, 20\}$ knots. We employ three implementation strategies: a penalized estimator with the smoothing parameter selected via the generalized cross-validation (GCV) criterion \cite{eilers1996flexible}, a variant based on an undersmoothed GCV selection, and an unpenalized estimator corresponding to the least squares series approach.}
\label{tab:pointwise-01-mono-fitmono}
\resizebox{\textwidth}{!}{%
\begin{tabular}{lrr cccc cccc cccc}
\toprule
 & & \multicolumn{4}{c}{GCV} & \multicolumn{4}{c}{GCV.under} & \multicolumn{4}{c}{Unpenalized} \\
\cmidrule(lr){3-6} \cmidrule(lr){7-10} \cmidrule(lr){11-14}
$K$ & $x_0$ & BIAS & MCSD & AESE & CP(\%) & BIAS & MCSD & AESE & CP(\%) & BIAS & MCSD & AESE & CP(\%) \\
\hline
5 & -0.50 & 0.00 & 0.10 & 0.15 & 98.9 & 0.00 & 0.12 & 0.15 & 97.6 & 0.01 & 0.15 & 0.15 & 95.2 \\ 
  & -0.25 & 0.00 & 0.11 & 0.14 & 98.6 & 0.00 & 0.13 & 0.14 & 97.3 & 0.00 & 0.15 & 0.14 & 95.2 \\ 
  & 0.25 & 0.00 & 0.15 & 0.15 & 97.7 & -0.00 & 0.15 & 0.15 & 96.5 & -0.01 & 0.16 & 0.15 & 96.1 \\ 
  & 0.50 & 0.00 & 0.22 & 0.20 & 98.5 & 0.00 & 0.24 & 0.20 & 97.4 & 0.00 & 0.27 & 0.20 & 94.7 \\ 
[2ex]
10 & -0.50 & 0.01 & 0.11 & 0.19 & 99.5 & 0.01 & 0.16 & 0.19 & 96.8 & 0.01 & 0.19 & 0.19 & 94.7 \\ 
  & -0.25 & -0.00 & 0.11 & 0.18 & 99.5 & -0.00 & 0.16 & 0.18 & 97.4 & 0.00 & 0.18 & 0.18 & 95.8 \\ 
  & 0.25 & 0.00 & 0.14 & 0.20 & 98.9 & -0.00 & 0.17 & 0.20 & 96.8 & -0.01 & 0.19 & 0.20 & 96.0 \\ 
  & 0.50 & 0.01 & 0.18 & 0.23 & 98.5 & 0.01 & 0.21 & 0.23 & 96.4 & 0.01 & 0.23 & 0.23 & 94.5 \\ 
[2ex]
20 & -0.50 & 0.01 & 0.13 & 0.29 & 99.6 & 0.01 & 0.24 & 0.29 & 97.2 & 0.01 & 0.29 & 0.29 & 94.6 \\ 
  & -0.25 & 0.00 & 0.13 & 0.28 & 99.7 & -0.00 & 0.24 & 0.28 & 97.4 & -0.01 & 0.28 & 0.28 & 95.6 \\ 
  & 0.25 & 0.00 & 0.15 & 0.31 & 99.6 & -0.01 & 0.27 & 0.31 & 96.6 & -0.01 & 0.31 & 0.31 & 94.6 \\ 
  & 0.50 & 0.01 & 0.20 & 0.36 & 99.4 & 0.03 & 0.31 & 0.35 & 96.7 & 0.03 & 0.37 & 0.35 & 94.3 \\ 
\bottomrule
\end{tabular}%
}
\end{table}

\begin{table}[htbp]
\centering
\caption{ Simulation results presenting pointwise inference metrics, including bias, Monte Carlo standard deviation (MCSD), average estimated standard errors (AESE), and empirical coverage probabilities (CP), for the proposed PLSS-DCDR learner for the never-takers stratum ($00$) at four representative points $X=x_0\in\{-0.50,-0.25,0.25,0.50\}$ when monotonicity holds. The basis functions are constructed using P-splines with $K \in \{5, 10, 20\}$ knots. We employ three implementation strategies: a penalized estimator with the smoothing parameter selected via the generalized cross-validation (GCV) criterion \cite{eilers1996flexible}, a variant based on an undersmoothed GCV selection, and an unpenalized estimator corresponding to the least squares series approach.}
\label{tab:pointwise-00-mono-fitmono}
\resizebox{\textwidth}{!}{%
\begin{tabular}{lrr cccc cccc cccc}
\toprule
 & & \multicolumn{4}{c}{GCV} & \multicolumn{4}{c}{GCV.under} & \multicolumn{4}{c}{Unpenalized} \\
\cmidrule(lr){3-6} \cmidrule(lr){7-10} \cmidrule(lr){11-14}
$K$ & $x_0$ & BIAS & MCSD & AESE & CP(\%) & BIAS & MCSD & AESE & CP(\%) & BIAS & MCSD & AESE & CP(\%) \\
\hline
5 & -0.50 & -0.01 & 0.11 & 0.15 & 99.1 & -0.01 & 0.12 & 0.15 & 97.8 & -0.01 & 0.15 & 0.15 & 95.0 \\ 
  & -0.25 & -0.00 & 0.10 & 0.12 & 97.8 & -0.00 & 0.11 & 0.12 & 96.9 & -0.00 & 0.13 & 0.12 & 94.5 \\ 
  & 0.25 & -0.00 & 0.09 & 0.12 & 98.6 & -0.00 & 0.11 & 0.12 & 97.0 & -0.00 & 0.12 & 0.12 & 95.4 \\ 
  & 0.50 & 0.00 & 0.11 & 0.14 & 98.5 & -0.00 & 0.13 & 0.14 & 96.8 & -0.01 & 0.15 & 0.14 & 94.6 \\ 
[2ex]
10 & -0.50 & -0.01 & 0.12 & 0.19 & 99.5 & -0.01 & 0.16 & 0.19 & 97.2 & -0.01 & 0.19 & 0.19 & 95.9 \\ 
  & -0.25 & -0.00 & 0.11 & 0.18 & 98.9 & 0.00 & 0.15 & 0.18 & 97.0 & 0.00 & 0.18 & 0.18 & 95.2 \\ 
  & 0.25 & -0.00 & 0.11 & 0.17 & 99.0 & -0.01 & 0.15 & 0.17 & 97.0 & -0.01 & 0.17 & 0.17 & 94.2 \\ 
  & 0.50 & 0.01 & 0.11 & 0.18 & 99.3 & 0.00 & 0.15 & 0.18 & 96.2 & 0.00 & 0.17 & 0.18 & 94.0 \\ 
[2ex]
20 & -0.50 & -0.01 & 0.13 & 0.30 & 99.9 & -0.01 & 0.24 & 0.30 & 97.9 & -0.01 & 0.30 & 0.30 & 96.0 \\ 
  & -0.25 & -0.00 & 0.12 & 0.27 & 99.9 & 0.01 & 0.23 & 0.27 & 97.1 & 0.00 & 0.27 & 0.27 & 94.5 \\ 
  & 0.25 & -0.00 & 0.12 & 0.25 & 99.8 & -0.00 & 0.23 & 0.25 & 96.8 & -0.00 & 0.27 & 0.25 & 94.0 \\ 
  & 0.50 & 0.00 & 0.12 & 0.27 & 99.7 & -0.00 & 0.23 & 0.27 & 96.8 & -0.01 & 0.27 & 0.27 & 94.5 \\ 
\bottomrule
\end{tabular}%
}
\end{table}


\begin{table}[htbp]
\centering
\caption{ Simulation results presenting pointwise inference metrics, including bias, Monte Carlo standard deviation (MCSD), average estimated standard errors (AESE), and empirical coverage probabilities (CP), for the proposed PLSS-DCDR learner for the always-takers stratum ($11$) at four representative points $X=x_0\in\{-0.50,-0.25,0.25,0.50\}$, when the true data-generating process follows monotonicity but the model is fitted under non-monotonicity with $\theta=5$. The basis functions are constructed using P-splines with $K \in \{5, 10, 20\}$ knots. We employ three implementation strategies: a penalized estimator with the smoothing parameter selected via the generalized cross-validation (GCV) criterion \cite{eilers1996flexible}, a variant based on an undersmoothed GCV selection, and an unpenalized estimator corresponding to the least squares series approach.}
\label{tab:pointwise-11-mono-fit5}
\resizebox{\textwidth}{!}{%
\begin{tabular}{lrr cccc cccc cccc}
\toprule
 & & \multicolumn{4}{c}{GCV} & \multicolumn{4}{c}{GCV.under} & \multicolumn{4}{c}{Unpenalized} \\
\cmidrule(lr){3-6} \cmidrule(lr){7-10} \cmidrule(lr){11-14}
$K$ & $x_0$ & BIAS & MCSD & AESE & CP(\%) & BIAS & MCSD & AESE & CP(\%) & BIAS & MCSD & AESE & CP(\%) \\
\hline
5 & -0.50 & -0.07 & 0.10 & 0.15 & 98.0 & -0.07 & 0.12 & 0.15 & 96.0 & -0.07 & 0.14 & 0.15 & 92.4 \\ 
  & -0.25 & -0.07 & 0.11 & 0.13 & 96.0 & -0.07 & 0.12 & 0.13 & 94.4 & -0.07 & 0.14 & 0.13 & 91.0 \\ 
  & 0.25 & -0.08 & 0.12 & 0.13 & 95.0 & -0.08 & 0.13 & 0.13 & 92.9 & -0.08 & 0.14 & 0.13 & 89.0 \\ 
  & 0.50 & -0.09 & 0.16 & 0.16 & 97.5 & -0.09 & 0.18 & 0.16 & 95.6 & -0.09 & 0.20 & 0.16 & 92.1 \\ 
[2ex]
10 & -0.50 & -0.07 & 0.12 & 0.19 & 98.9 & -0.07 & 0.16 & 0.19 & 95.8 & -0.07 & 0.19 & 0.19 & 93.8 \\ 
  & -0.25 & -0.07 & 0.11 & 0.18 & 99.1 & -0.08 & 0.16 & 0.18 & 94.3 & -0.08 & 0.19 & 0.18 & 92.3 \\ 
  & 0.25 & -0.08 & 0.12 & 0.18 & 98.0 & -0.08 & 0.16 & 0.18 & 94.3 & -0.09 & 0.18 & 0.18 & 91.5 \\ 
  & 0.50 & -0.08 & 0.13 & 0.20 & 98.5 & -0.08 & 0.17 & 0.20 & 95.1 & -0.08 & 0.20 & 0.20 & 92.2 \\ 
[2ex]
20 & -0.50 & -0.06 & 0.14 & 0.30 & 99.6 & -0.07 & 0.25 & 0.30 & 95.8 & -0.07 & 0.31 & 0.30 & 92.7 \\ 
  & -0.25 & -0.08 & 0.13 & 0.28 & 99.5 & -0.08 & 0.24 & 0.28 & 96.3 & -0.08 & 0.28 & 0.28 & 94.9 \\ 
  & 0.25 & -0.09 & 0.14 & 0.28 & 99.3 & -0.09 & 0.25 & 0.28 & 95.0 & -0.08 & 0.29 & 0.28 & 92.7 \\ 
  & 0.50 & -0.08 & 0.15 & 0.31 & 99.6 & -0.08 & 0.27 & 0.31 & 96.7 & -0.09 & 0.32 & 0.31 & 93.2 \\ 
\bottomrule
\end{tabular}%
}
\end{table}

\begin{table}[htbp]
\centering
\caption{ Simulation results presenting pointwise inference metrics, including bias, Monte Carlo standard deviation (MCSD), average estimated standard errors (AESE), and empirical coverage probabilities (CP), for the proposed PLSS-DCDR learner for the compliers stratum ($01$) at four representative points $X=x_0\in\{-0.50,-0.25,0.25,0.50\}$, when the true data-generating process follows monotonicity but the model is fitted under non-monotonicity with $\theta=5$. The basis functions are constructed using P-splines with $K \in \{5, 10, 20\}$ knots. We employ three implementation strategies: a penalized estimator with the smoothing parameter selected via the generalized cross-validation (GCV) criterion \cite{eilers1996flexible}, a variant based on an undersmoothed GCV selection, and an unpenalized estimator corresponding to the least squares series approach.}
\label{tab:pointwise-01-mono-fit5}
\resizebox{\textwidth}{!}{%
\begin{tabular}{lrr cccc cccc cccc}
\toprule
 & & \multicolumn{4}{c}{GCV} & \multicolumn{4}{c}{GCV.under} & \multicolumn{4}{c}{Unpenalized} \\
\cmidrule(lr){3-6} \cmidrule(lr){7-10} \cmidrule(lr){11-14}
$K$ & $x_0$ & BIAS & MCSD & AESE & CP(\%) & BIAS & MCSD & AESE & CP(\%) & BIAS & MCSD & AESE & CP(\%) \\
\hline
5 & -0.50 & 0.07 & 0.08 & 0.12 & 96.8 & 0.08 & 0.10 & 0.12 & 94.4 & 0.08 & 0.12 & 0.12 & 90.4 \\ 
  & -0.25 & 0.08 & 0.09 & 0.11 & 95.4 & 0.08 & 0.10 & 0.11 & 91.7 & 0.08 & 0.12 & 0.11 & 87.9 \\ 
  & 0.25 & 0.10 & 0.10 & 0.11 & 91.5 & 0.10 & 0.11 & 0.11 & 89.1 & 0.10 & 0.12 & 0.11 & 85.4 \\ 
  & 0.50 & 0.12 & 0.14 & 0.14 & 93.1 & 0.12 & 0.15 & 0.14 & 90.4 & 0.12 & 0.17 & 0.14 & 84.3 \\ 
[2ex]
10 & -0.50 & 0.08 & 0.10 & 0.16 & 98.8 & 0.08 & 0.14 & 0.16 & 94.2 & 0.08 & 0.16 & 0.16 & 92.0 \\ 
  & -0.25 & 0.08 & 0.09 & 0.15 & 98.1 & 0.08 & 0.13 & 0.15 & 95.2 & 0.08 & 0.15 & 0.15 & 91.4 \\ 
  & 0.25 & 0.10 & 0.10 & 0.16 & 97.3 & 0.10 & 0.13 & 0.16 & 93.2 & 0.10 & 0.15 & 0.16 & 90.6 \\ 
  & 0.50 & 0.12 & 0.11 & 0.17 & 97.3 & 0.12 & 0.15 & 0.17 & 91.5 & 0.12 & 0.17 & 0.17 & 87.1 \\ 
[2ex]
20 & -0.50 & 0.08 & 0.11 & 0.24 & 99.6 & 0.08 & 0.21 & 0.24 & 96.8 & 0.08 & 0.24 & 0.24 & 94.3 \\ 
  & -0.25 & 0.08 & 0.11 & 0.23 & 99.4 & 0.08 & 0.20 & 0.23 & 96.6 & 0.07 & 0.23 & 0.23 & 93.9 \\ 
  & 0.25 & 0.11 & 0.12 & 0.24 & 99.2 & 0.10 & 0.20 & 0.24 & 95.8 & 0.10 & 0.24 & 0.24 & 93.2 \\ 
  & 0.50 & 0.12 & 0.13 & 0.26 & 98.9 & 0.12 & 0.23 & 0.26 & 94.9 & 0.13 & 0.27 & 0.26 & 91.4 \\ 
\bottomrule
\end{tabular}%
}
\end{table}

\begin{table}[htbp]
\centering
\caption{ Simulation results presenting pointwise inference metrics, including bias, Monte Carlo standard deviation (MCSD), average estimated standard errors (AESE), and empirical coverage probabilities (CP), for the proposed PLSS-DCDR learner for the never-takers stratum ($00$) at four representative points $X=x_0\in\{-0.50,-0.25,0.25,0.50\}$, when the true data-generating process follows monotonicity but the model is fitted under non-monotonicity with $\theta=5$. The basis functions are constructed using P-splines with $K \in \{5, 10, 20\}$ knots. We employ three implementation strategies: a penalized estimator with the smoothing parameter selected via the generalized cross-validation (GCV) criterion \cite{eilers1996flexible}, a variant based on an undersmoothed GCV selection, and an unpenalized estimator corresponding to the least squares series approach.}
\label{tab:pointwise-00-mono-fit5}
\resizebox{\textwidth}{!}{%
\begin{tabular}{lrr cccc cccc cccc}
\toprule
 & & \multicolumn{4}{c}{GCV} & \multicolumn{4}{c}{GCV.under} & \multicolumn{4}{c}{Unpenalized} \\
\cmidrule(lr){3-6} \cmidrule(lr){7-10} \cmidrule(lr){11-14}
$K$ & $x_0$ & BIAS & MCSD & AESE & CP(\%) & BIAS & MCSD & AESE & CP(\%) & BIAS & MCSD & AESE & CP(\%) \\
\hline
5 & -0.50 & -0.02 & 0.11 & 0.15 & 98.8 & -0.03 & 0.12 & 0.15 & 97.6 & -0.03 & 0.15 & 0.15 & 94.5 \\ 
  & -0.25 & -0.02 & 0.10 & 0.12 & 97.7 & -0.02 & 0.11 & 0.12 & 95.9 & -0.02 & 0.13 & 0.12 & 94.3 \\ 
  & 0.25 & -0.01 & 0.09 & 0.12 & 98.4 & -0.01 & 0.10 & 0.12 & 97.4 & -0.02 & 0.12 & 0.12 & 95.0 \\ 
  & 0.50 & -0.01 & 0.10 & 0.14 & 98.7 & -0.01 & 0.11 & 0.14 & 96.9 & -0.01 & 0.14 & 0.14 & 94.5 \\ 
[2ex]
10 & -0.50 & -0.02 & 0.12 & 0.19 & 99.2 & -0.03 & 0.17 & 0.19 & 96.6 & -0.03 & 0.19 & 0.19 & 95.4 \\ 
  & -0.25 & -0.02 & 0.11 & 0.18 & 98.3 & -0.01 & 0.15 & 0.18 & 97.3 & -0.01 & 0.18 & 0.18 & 95.3 \\ 
  & 0.25 & -0.02 & 0.11 & 0.17 & 99.0 & -0.02 & 0.15 & 0.17 & 96.9 & -0.02 & 0.17 & 0.17 & 95.0 \\ 
  & 0.50 & -0.00 & 0.11 & 0.18 & 99.3 & -0.01 & 0.15 & 0.18 & 96.1 & -0.01 & 0.18 & 0.18 & 93.9 \\ 
[2ex]
20 & -0.50 & -0.03 & 0.13 & 0.30 & 99.9 & -0.03 & 0.25 & 0.30 & 97.8 & -0.03 & 0.30 & 0.30 & 95.0 \\ 
  & -0.25 & -0.02 & 0.13 & 0.27 & 99.7 & -0.01 & 0.23 & 0.27 & 97.3 & -0.02 & 0.27 & 0.27 & 94.8 \\ 
  & 0.25 & -0.01 & 0.12 & 0.25 & 99.6 & -0.02 & 0.23 & 0.25 & 96.7 & -0.01 & 0.27 & 0.25 & 94.5 \\ 
  & 0.50 & -0.01 & 0.12 & 0.27 & 99.8 & -0.01 & 0.23 & 0.27 & 96.2 & -0.01 & 0.27 & 0.27 & 93.3 \\ 
\bottomrule
\end{tabular}%
}
\end{table}


\clearpage

\begin{figure}[ht!]
    \centering
    \includegraphics[width=\linewidth]{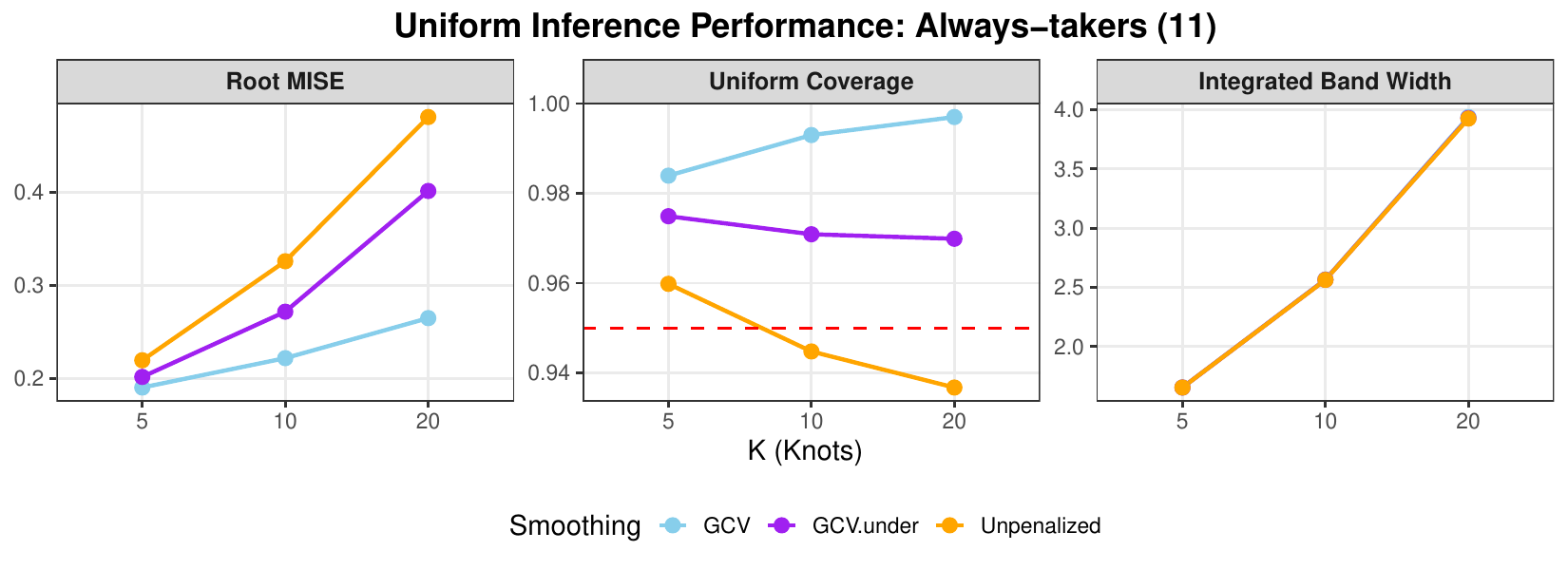}
    \caption{ Simulation results presenting the Root mean integrated squared error (MISE), uniform empirical coverage probability, and average integrated width of the uniform confidence bands for the proposed PLSS-DCDR learner of the CPCE among always-takers when non-monotonicity holds. The basis functions are constructed using P-splines with $K \in \{5, 10, 20\}$ knots. We employ three implementation strategies: a penalized estimator with the smoothing parameter selected via the generalized cross-validation (GCV) criterion \cite{eilers1996flexible}, a variant based on an undersmoothed GCV selection, and an unpenalized estimator corresponding to the least squares series approach.}
    \label{fig:sim-results-11}
\end{figure}

\begin{figure}[ht!]
    \centering
    \includegraphics[width=\linewidth]{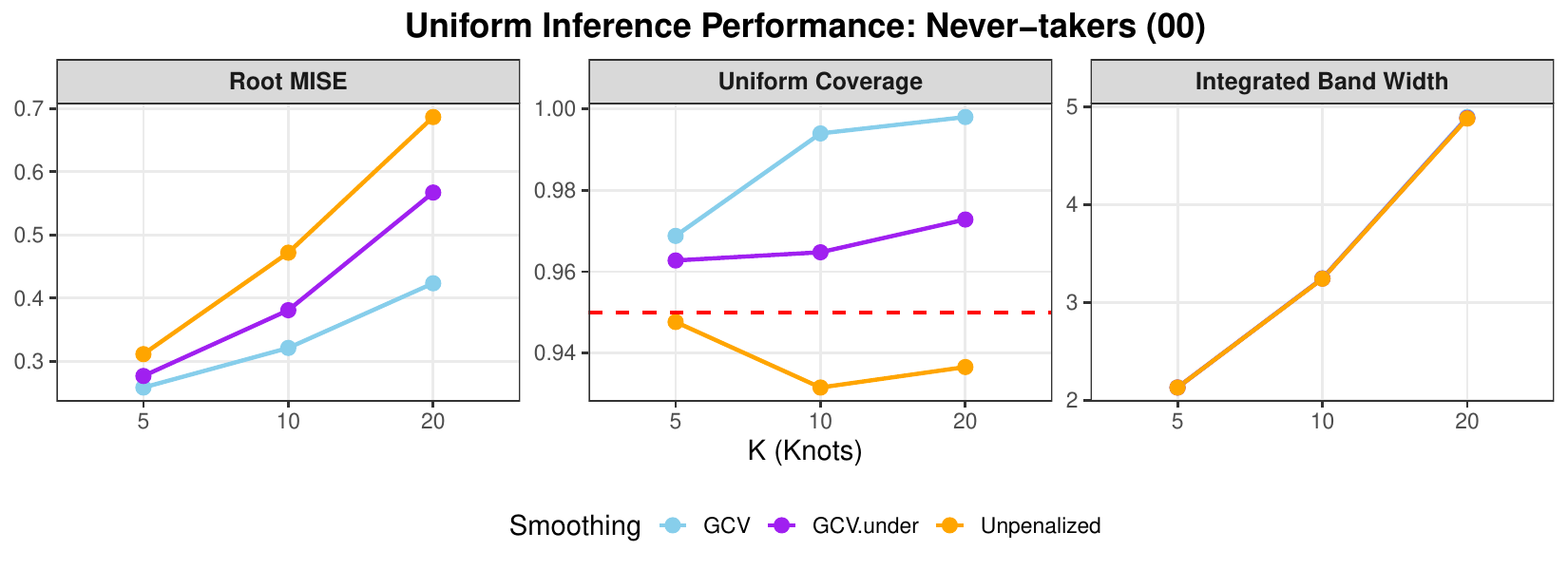}
    \caption{ Simulation results presenting the Root mean integrated squared error (MISE), uniform empirical coverage probability, and average integrated width of the uniform confidence bands for the proposed PLSS-DCDR learner of the CPCE among never-takers when non-monotonicity holds. The basis functions are constructed using P-splines with $K \in \{5, 10, 20\}$ knots. We employ three implementation strategies: a penalized estimator with the smoothing parameter selected via the generalized cross-validation (GCV) criterion \cite{eilers1996flexible}, a variant based on an undersmoothed GCV selection, and an unpenalized estimator corresponding to the least squares series approach.}
    \label{fig:sim-results-00}
\end{figure}

\begin{figure}[ht!]
    \centering
    \includegraphics[width=\linewidth]{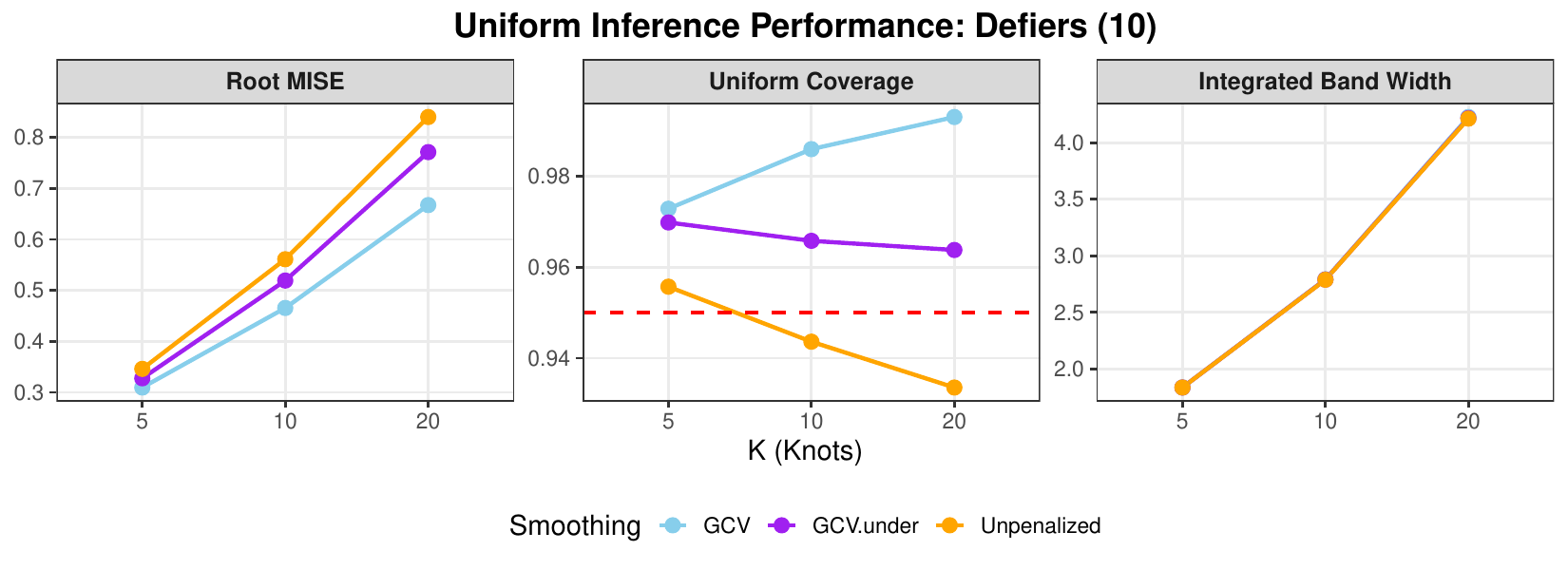}
    \caption{ Simulation results presenting the Root mean integrated squared error (MISE), uniform empirical coverage probability, and average integrated width of the uniform confidence bands for the proposed PLSS-DCDR learner of the CPCE among defiers when non-monotonicity holds. The basis functions are constructed using P-splines with $K \in \{5, 10, 20\}$ knots. We employ three implementation strategies: a penalized estimator with the smoothing parameter selected via the generalized cross-validation (GCV) criterion \cite{eilers1996flexible}, a variant based on an undersmoothed GCV selection, and an unpenalized estimator corresponding to the least squares series approach.}
    \label{fig:sim-results-10}
\end{figure}

\begin{figure}[ht!]
    \centering
    \includegraphics[width=0.5\linewidth]{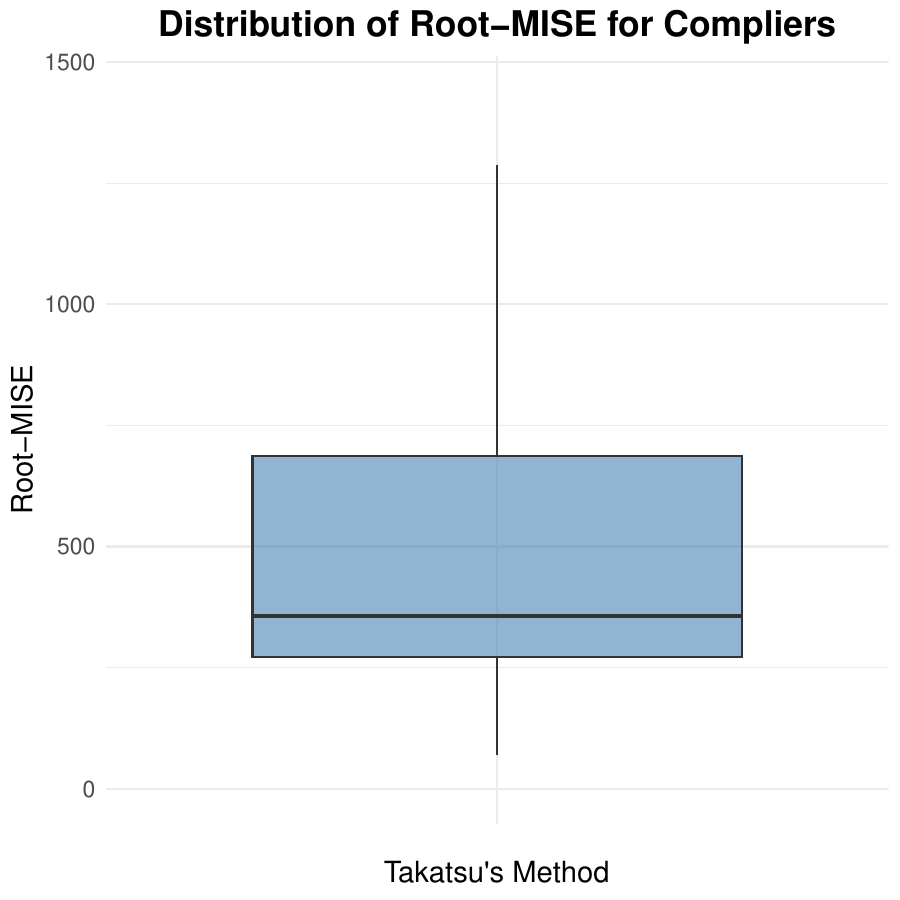}
    \caption{Simulation results presenting box plots of the Root mean integrated squared error (MISE) for the estimators of the CLATE proposed by \cite{takatsu2025doubly} when non-monotonicity holds. Both the numerator and denominator DR-learners were implemented using the Super Learner \citep{van2007super}.}
    \label{fig:sim-results-JRSSA-uniform}
\end{figure}

\begin{figure}[ht!]
    \centering
    \includegraphics[width=0.8\linewidth]{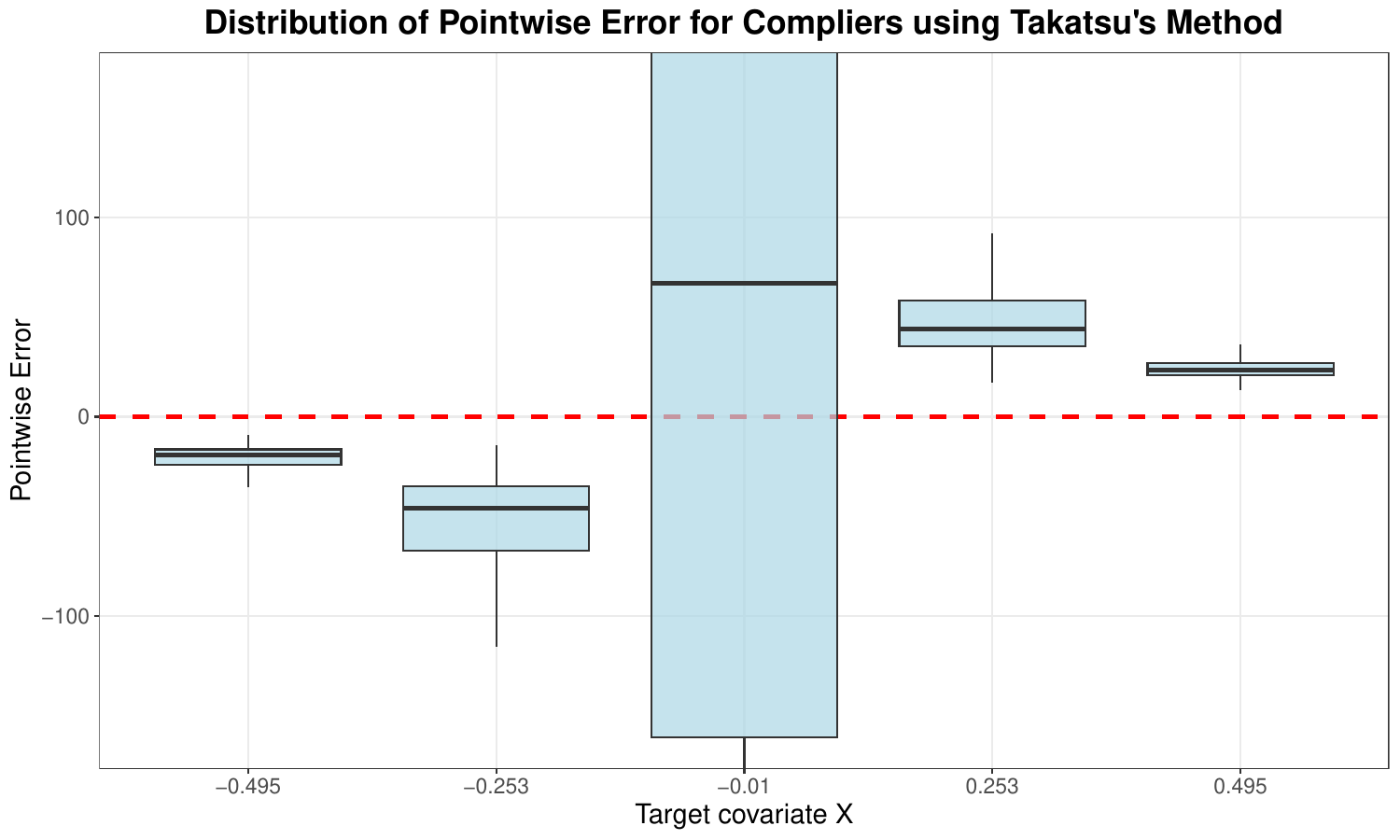}
    \caption{ Simulation results displaying box plots of the pointwise estimation errors for the CLATE proposed by \cite{takatsu2025doubly}, evaluated at five representative points: $X\in\{-0.495, -0.253, -0.01, 0.253, 0.495\}$ when non-monotonicity holds. Both the numerator and denominator DR-learners were implemented using the Super Learner \citep{van2007super}.}
    \label{fig:sim-results-JRSSA-pointwise}
\end{figure}

\begin{figure}[ht!]
    \centering
    \includegraphics[width=0.5\linewidth]{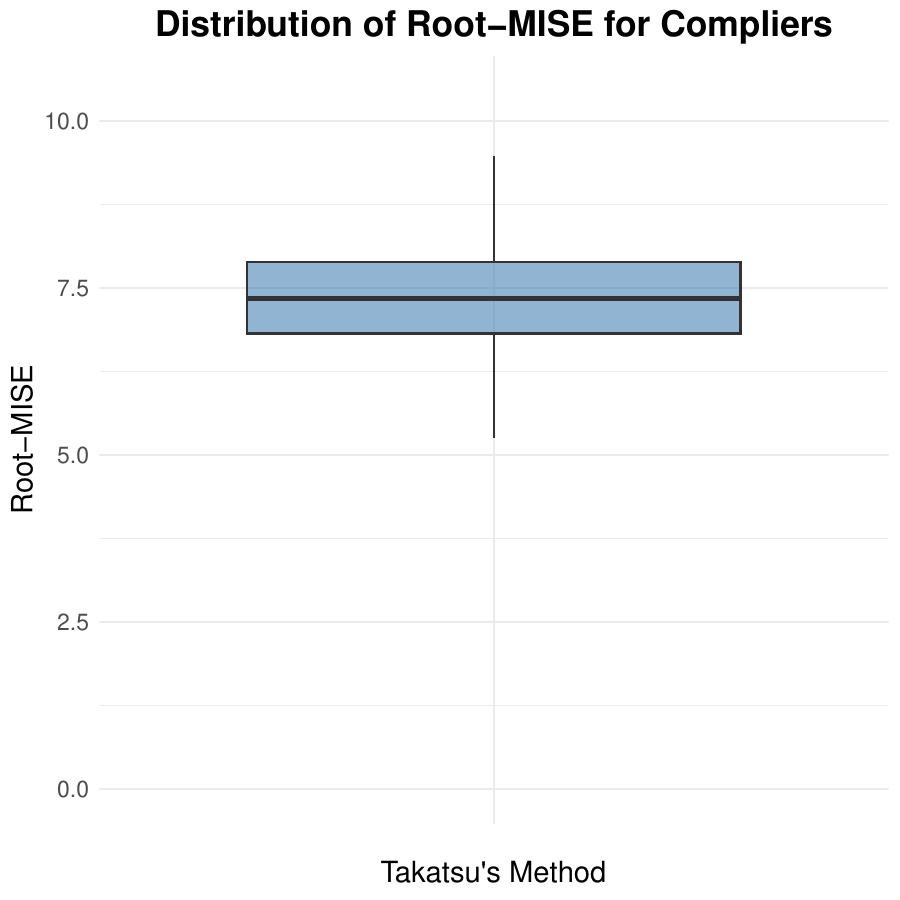}
    \caption{Simulation results presenting box plots of the Root mean integrated squared error (MISE) for the estimators of the CLATE proposed by \cite{takatsu2025doubly} when monotonicity holds. Both the numerator and denominator DR-learners were implemented using the Super Learner \citep{van2007super}.}
    \label{fig:sim-results-JRSSA-uniform-mono}
\end{figure}

\begin{figure}[ht!]
    \centering
    \includegraphics[width=0.8\linewidth]{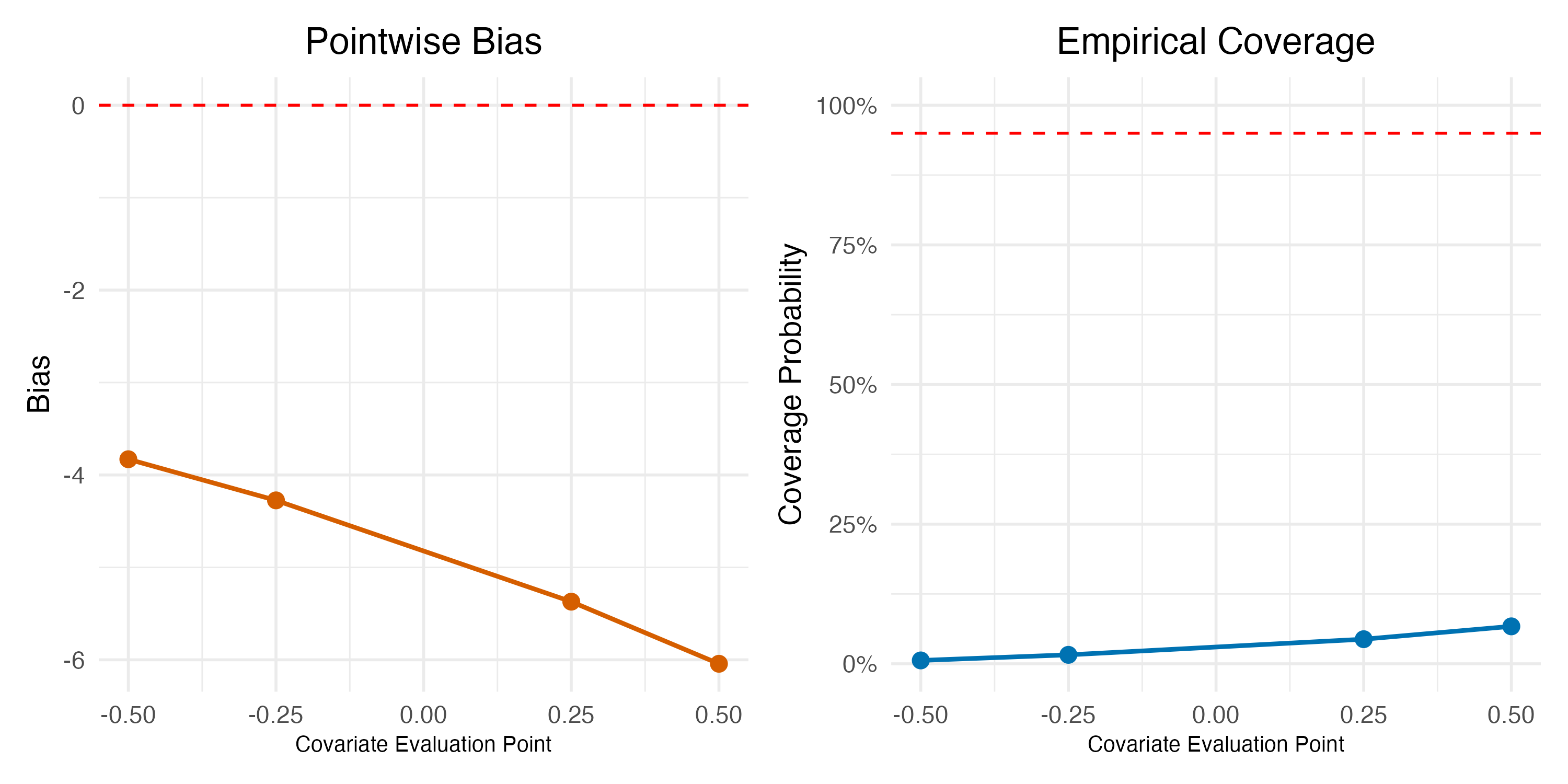}
    \caption{ Simulation results displaying the pointwise bias and empirical pointwise coverage probabilities of the CLATE estimator proposed by \cite{takatsu2025doubly}, evaluated at four representative points $X \in \{-0.5, -0.25, 0.25, 0.50\}$ under the assumption that monotonicity holds. Both the numerator and denominator DR-learners were implemented using the Super Learner \citep{van2007super}.}
    \label{fig:sim-results-JRSSA-pointwise-mono}
\end{figure}

\begin{figure}[ht!]
    \centering
    \includegraphics[width=\linewidth]{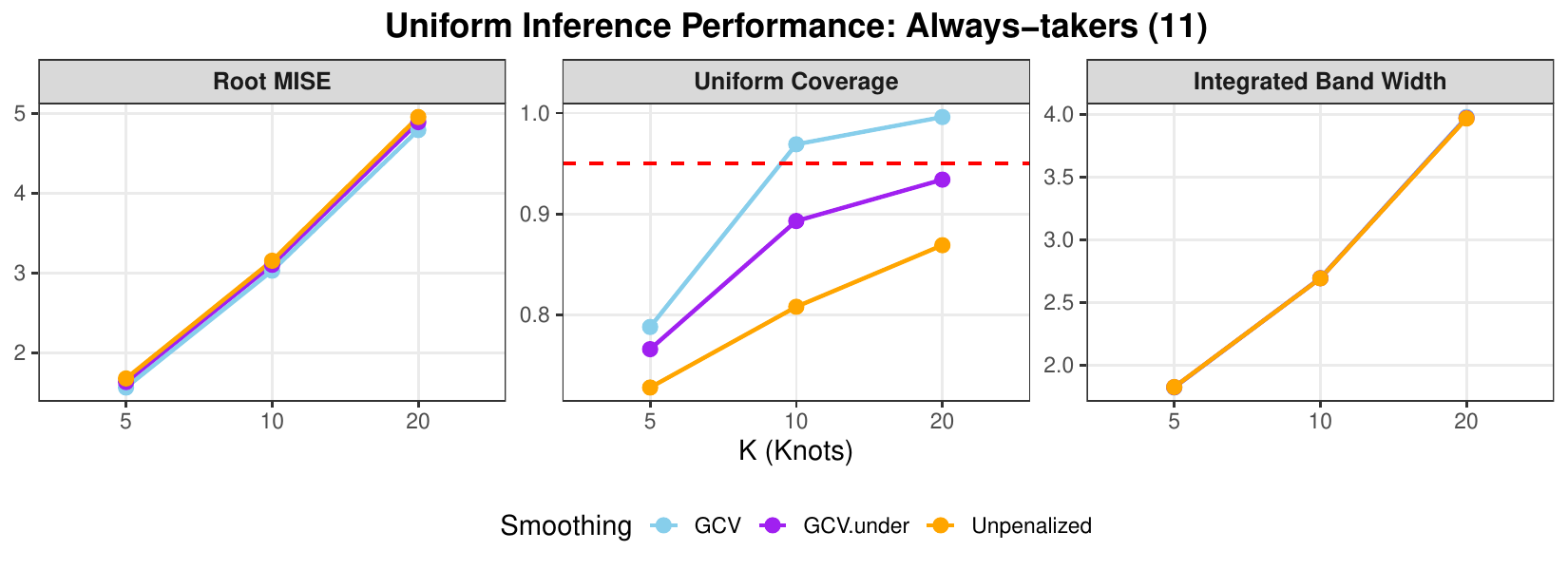}
    \caption{ Simulation results presenting the Root mean integrated squared error (MISE), uniform empirical coverage probability, and average integrated width of the uniform confidence bands for the proposed PLSS-DCDR learner of the CPCE among always-takers, when the true data-generating process follows non-monotonicity but the model is fitted under monotonicity. The basis functions are constructed using P-splines with $K \in \{5, 10, 20\}$ knots. We employ three implementation strategies: a penalized estimator with the smoothing parameter selected via the generalized cross-validation (GCV) criterion \cite{eilers1996flexible}, a variant based on an undersmoothed GCV selection, and an unpenalized estimator corresponding to the least squares series approach.}
    \label{fig:sim-results-nonmono-fitmono-11}
\end{figure}

\begin{figure}[ht!]
    \centering
    \includegraphics[width=\linewidth]{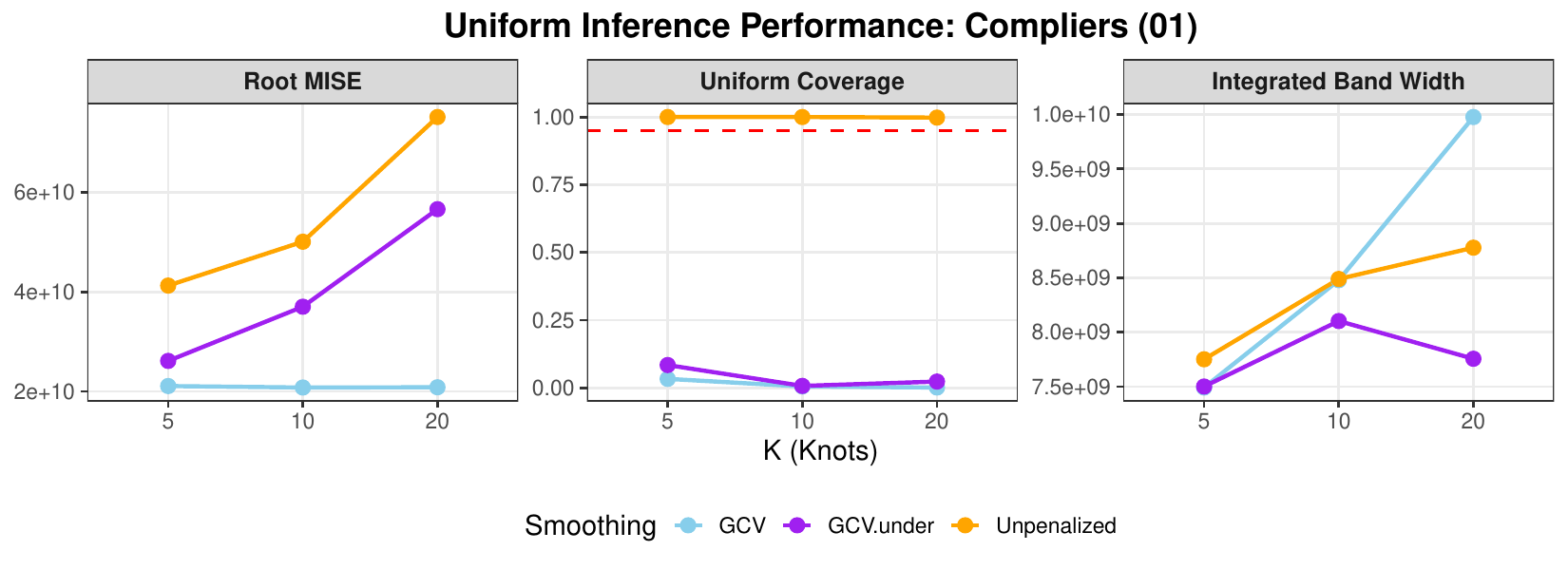}
    \caption{ Simulation results presenting the Root mean integrated squared error (MISE), uniform empirical coverage probability, and average integrated width of the uniform confidence bands for the proposed PLSS-DCDR learner of the CPCE among compliers, when the true data-generating process follows non-monotonicity but the model is fitted under monotonicity. The basis functions are constructed using P-splines with $K \in \{5, 10, 20\}$ knots. We employ three implementation strategies: a penalized estimator with the smoothing parameter selected via the generalized cross-validation (GCV) criterion \cite{eilers1996flexible}, a variant based on an undersmoothed GCV selection, and an unpenalized estimator corresponding to the least squares series approach.}
    \label{fig:sim-results-nonmono-fitmono-01}
\end{figure}

\begin{figure}[ht!]
    \centering
    \includegraphics[width=\linewidth]{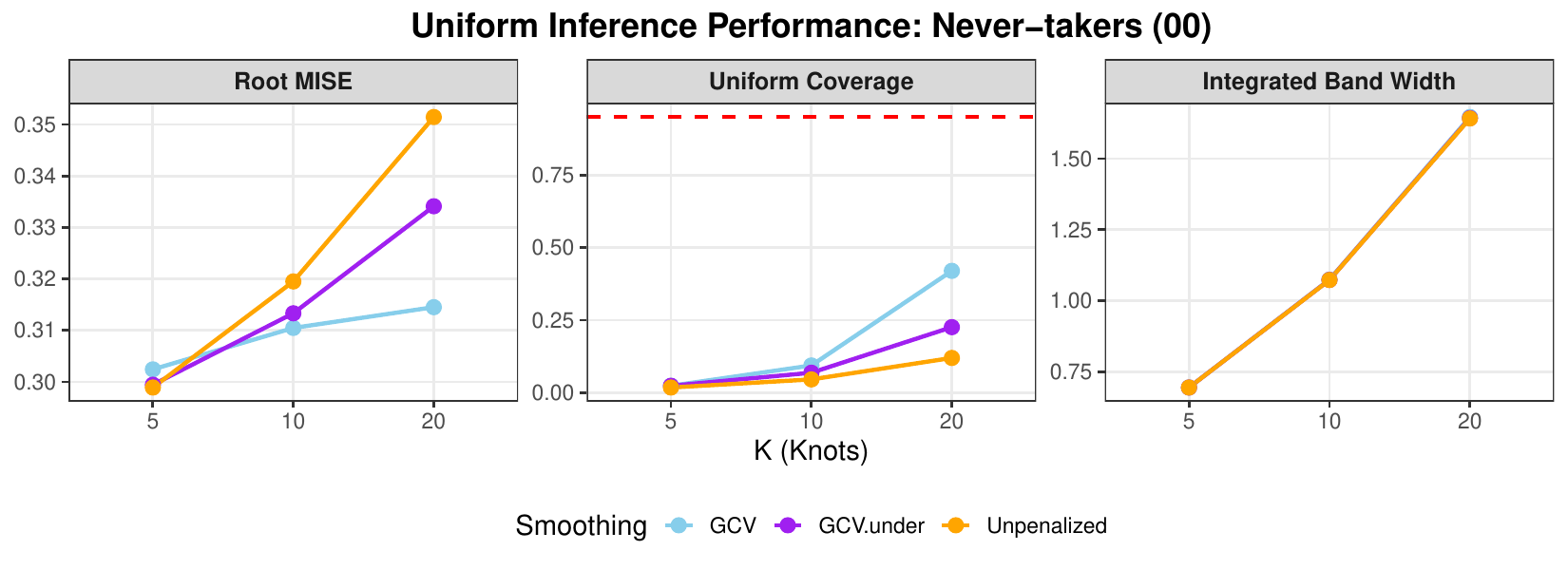}
    \caption{ Simulation results presenting the Root mean integrated squared error (MISE), uniform empirical coverage probability, and average integrated width of the uniform confidence bands for the proposed PLSS-DCDR learner of the CPCE among never-takers, when the true data-generating process follows non-monotonicity but the model is fitted under monotonicity. The basis functions are constructed using P-splines with $K \in \{5, 10, 20\}$ knots. We employ three implementation strategies: a penalized estimator with the smoothing parameter selected via the generalized cross-validation (GCV) criterion \cite{eilers1996flexible}, a variant based on an undersmoothed GCV selection, and an unpenalized estimator corresponding to the least squares series approach.}
    \label{fig:sim-results-nonmono-fitmono-00}
\end{figure}

\begin{figure}[ht!]
    \centering
    \includegraphics[width=\linewidth]{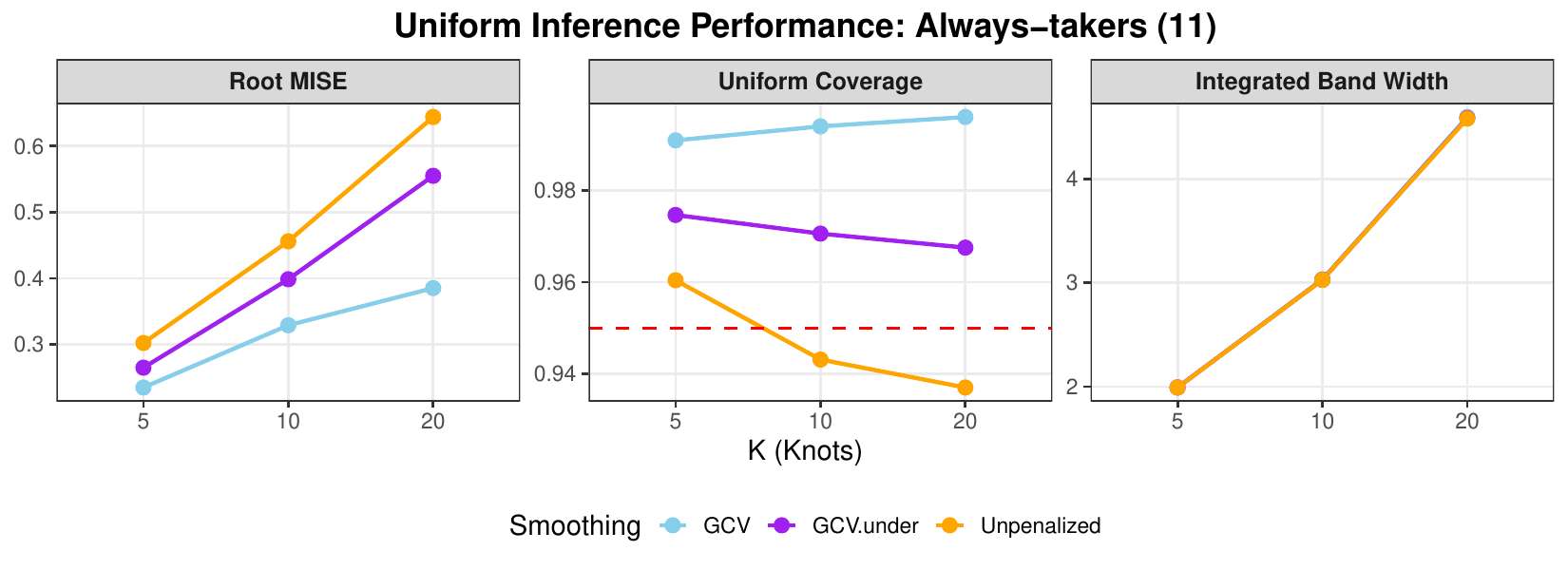}
    \caption{ Simulation results presenting the Root mean integrated squared error (MISE), uniform empirical coverage probability, and average integrated width of the uniform confidence bands for the proposed PLSS-DCDR learner of the CPCE among always-takers when monotonicity holds. The basis functions are constructed using P-splines with $K \in \{5, 10, 20\}$ knots. We employ three implementation strategies: a penalized estimator with the smoothing parameter selected via the generalized cross-validation (GCV) criterion \cite{eilers1996flexible}, a variant based on an undersmoothed GCV selection, and an unpenalized estimator corresponding to the least squares series approach.}
    \label{fig:sim-mono-fitmono-11}
\end{figure}

\begin{figure}[ht!]
    \centering
    \includegraphics[width=\linewidth]{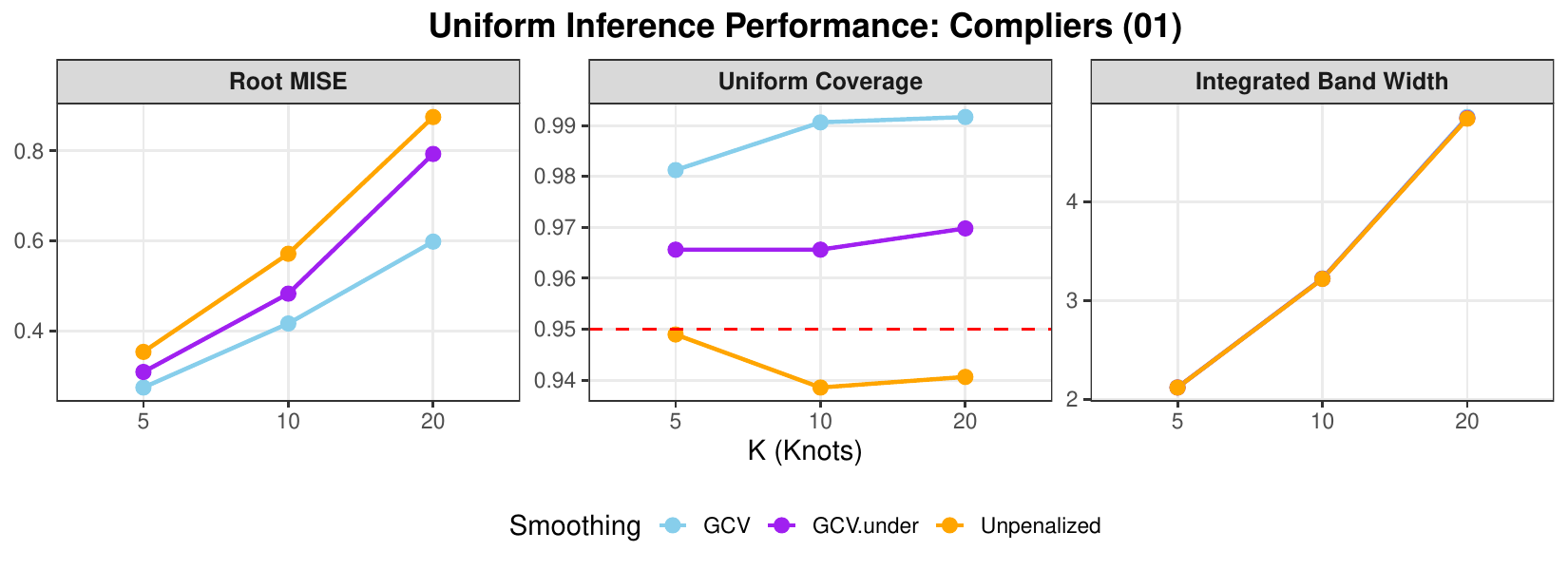}
    \caption{ Simulation results presenting the Root mean integrated squared error (MISE), uniform empirical coverage probability, and average integrated width of the uniform confidence bands for the proposed PLSS-DCDR learner of the CPCE among compliers when monotonicity holds. The basis functions are constructed using P-splines with $K \in \{5, 10, 20\}$ knots. We employ three implementation strategies: a penalized estimator with the smoothing parameter selected via the generalized cross-validation (GCV) criterion \cite{eilers1996flexible}, a variant based on an undersmoothed GCV selection, and an unpenalized estimator corresponding to the least squares series approach.}
    \label{fig:sim-mono-fitmono-01}
\end{figure}

\begin{figure}[ht!]
    \centering
    \includegraphics[width=\linewidth]{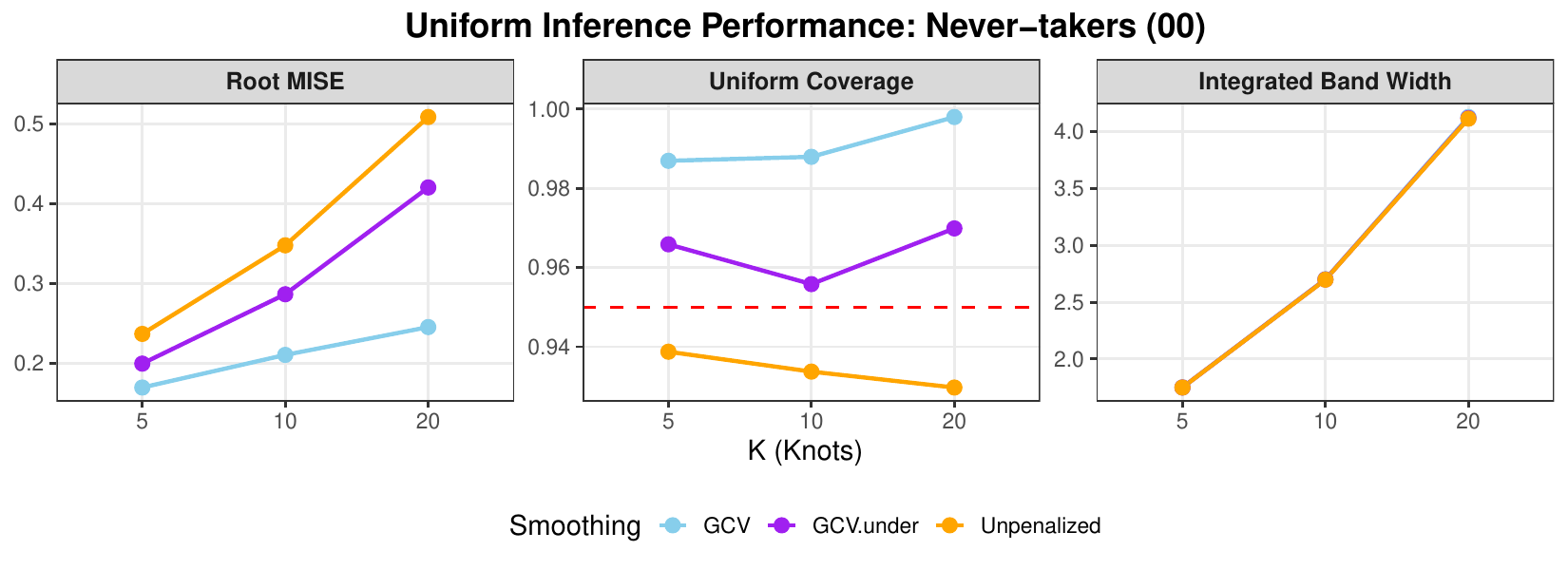}
    \caption{ Simulation results presenting the Root mean integrated squared error (MISE), uniform empirical coverage probability, and average integrated width of the uniform confidence bands for the proposed PLSS-DCDR learner of the CPCE among never-takers when monotonicity holds. The basis functions are constructed using P-splines with $K \in \{5, 10, 20\}$ knots. We employ three implementation strategies: a penalized estimator with the smoothing parameter selected via the generalized cross-validation (GCV) criterion \cite{eilers1996flexible}, a variant based on an undersmoothed GCV selection, and an unpenalized estimator corresponding to the least squares series approach.}
    \label{fig:sim-mono-fitmono-00}
\end{figure}


\begin{figure}[ht!]
    \centering
    \includegraphics[width=\linewidth]{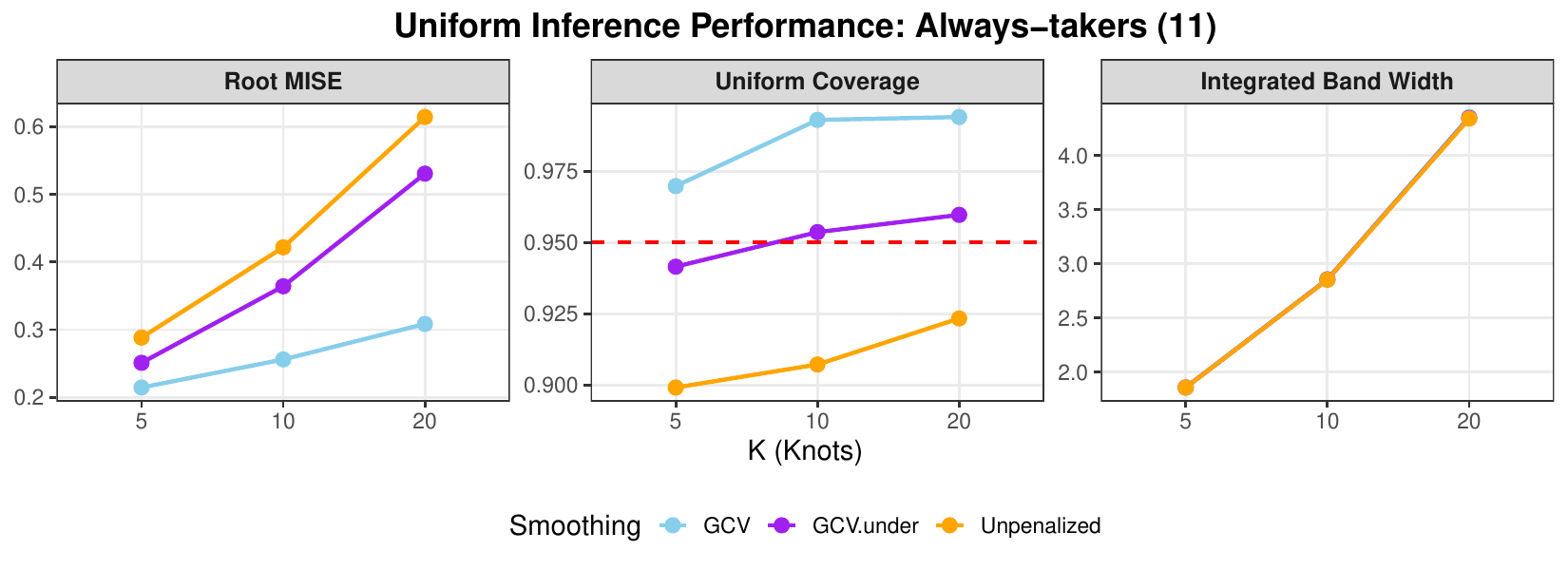}
    \caption{ Simulation results presenting the Root mean integrated squared error (MISE), uniform empirical coverage probability, and average integrated width of the uniform confidence bands for the proposed PLSS-DCDR learner of the CPCE among always-takers, when the true data-generating process follows monotonicity but the model is fitted under non-monotonicity with $\theta=5$. The basis functions are constructed using P-splines with $K \in \{5, 10, 20\}$ knots. We employ three implementation strategies: a penalized estimator with the smoothing parameter selected via the generalized cross-validation (GCV) criterion \cite{eilers1996flexible}, a variant based on an undersmoothed GCV selection, and an unpenalized estimator corresponding to the least squares series approach.}
    \label{fig:sim-mono-fit5-11}
\end{figure}

\begin{figure}[ht!]
    \centering
    \includegraphics[width=\linewidth]{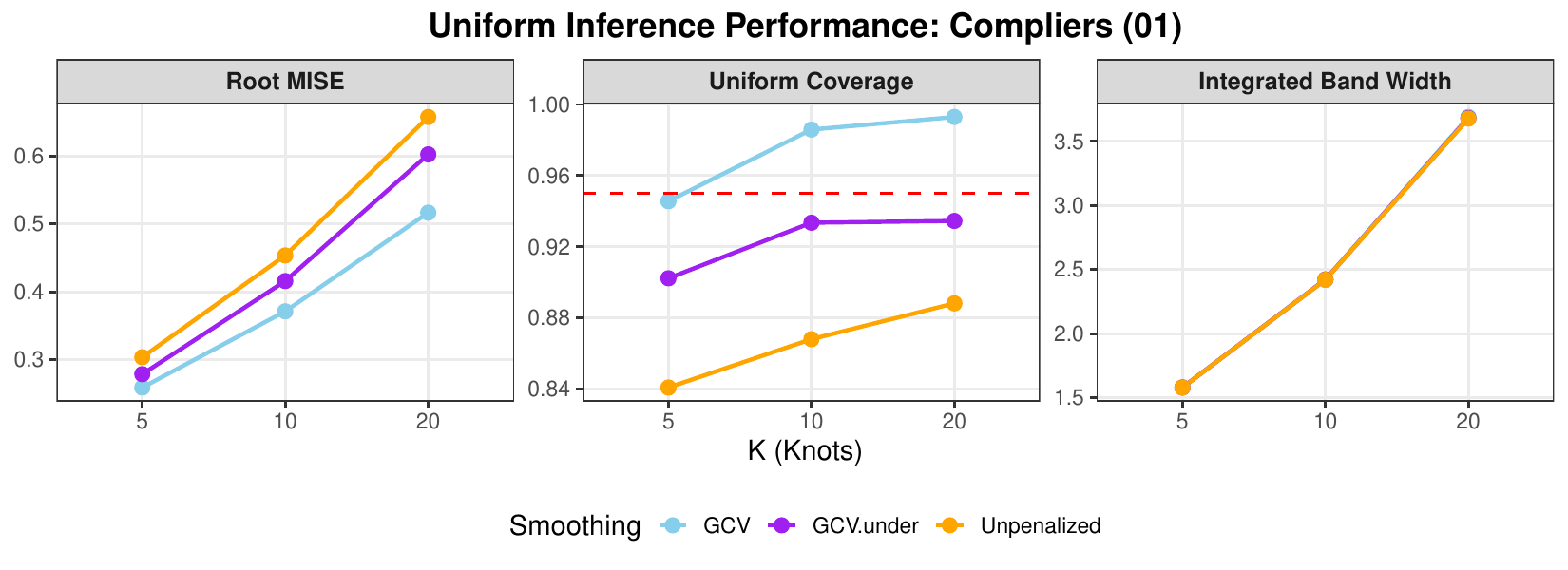}
    \caption{ Simulation results presenting the Root mean integrated squared error (MISE), uniform empirical coverage probability, and average integrated width of the uniform confidence bands for the proposed PLSS-DCDR learner of the CPCE among compliers, when the true data-generating process follows monotonicity but the model is fitted under non-monotonicity with $\theta=5$. The basis functions are constructed using P-splines with $K \in \{5, 10, 20\}$ knots. We employ three implementation strategies: a penalized estimator with the smoothing parameter selected via the generalized cross-validation (GCV) criterion \cite{eilers1996flexible}, a variant based on an undersmoothed GCV selection, and an unpenalized estimator corresponding to the least squares series approach.}
    \label{fig:sim-mono-fit5-01}
\end{figure}

\begin{figure}[ht!]
    \centering
    \includegraphics[width=\linewidth]{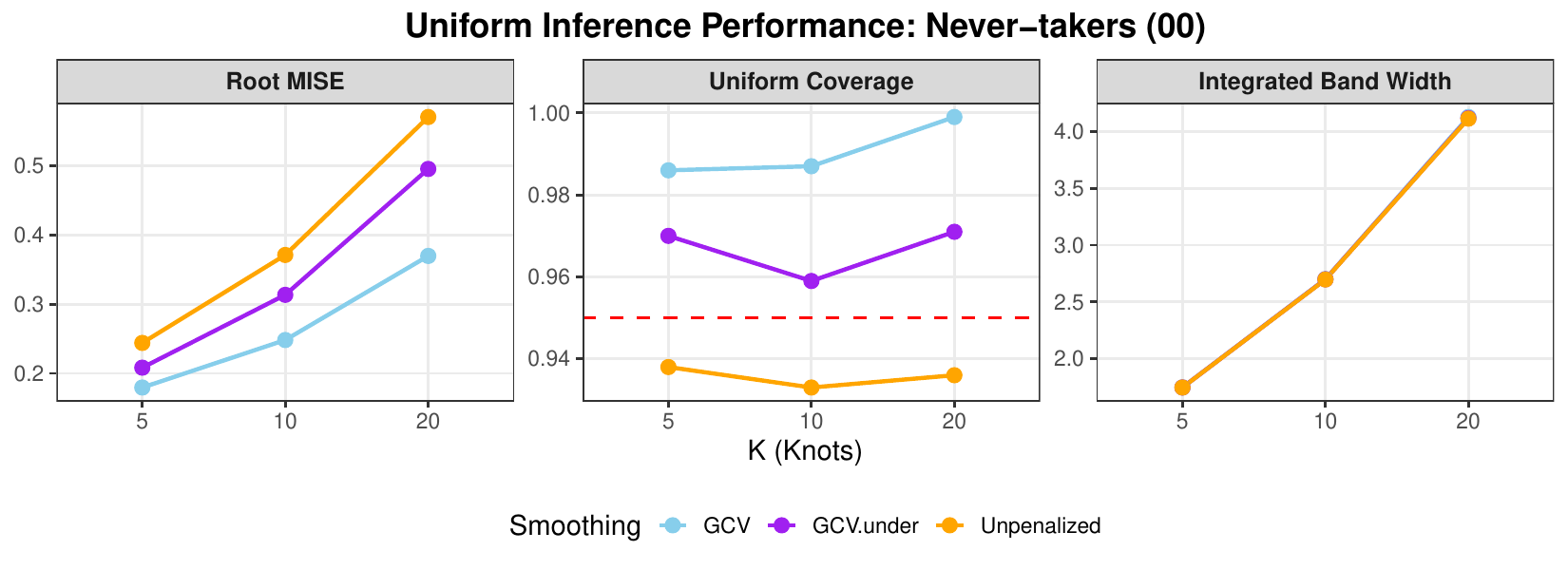}
    \caption{ Simulation results presenting the Root mean integrated squared error (MISE), uniform empirical coverage probability, and average integrated width of the uniform confidence bands for the proposed PLSS-DCDR learner of the CPCE among never-takers, when the true data-generating process follows monotonicity but the model is fitted under non-monotonicity with $\theta=5$. The basis functions are constructed using P-splines with $K \in \{5, 10, 20\}$ knots. We employ three implementation strategies: a penalized estimator with the smoothing parameter selected via the generalized cross-validation (GCV) criterion \cite{eilers1996flexible}, a variant based on an undersmoothed GCV selection, and an unpenalized estimator corresponding to the least squares series approach.}
    \label{fig:sim-mono-fit5-00}
\end{figure}

\begin{figure}[htbp]
    \centering
    \includegraphics[width=\linewidth]{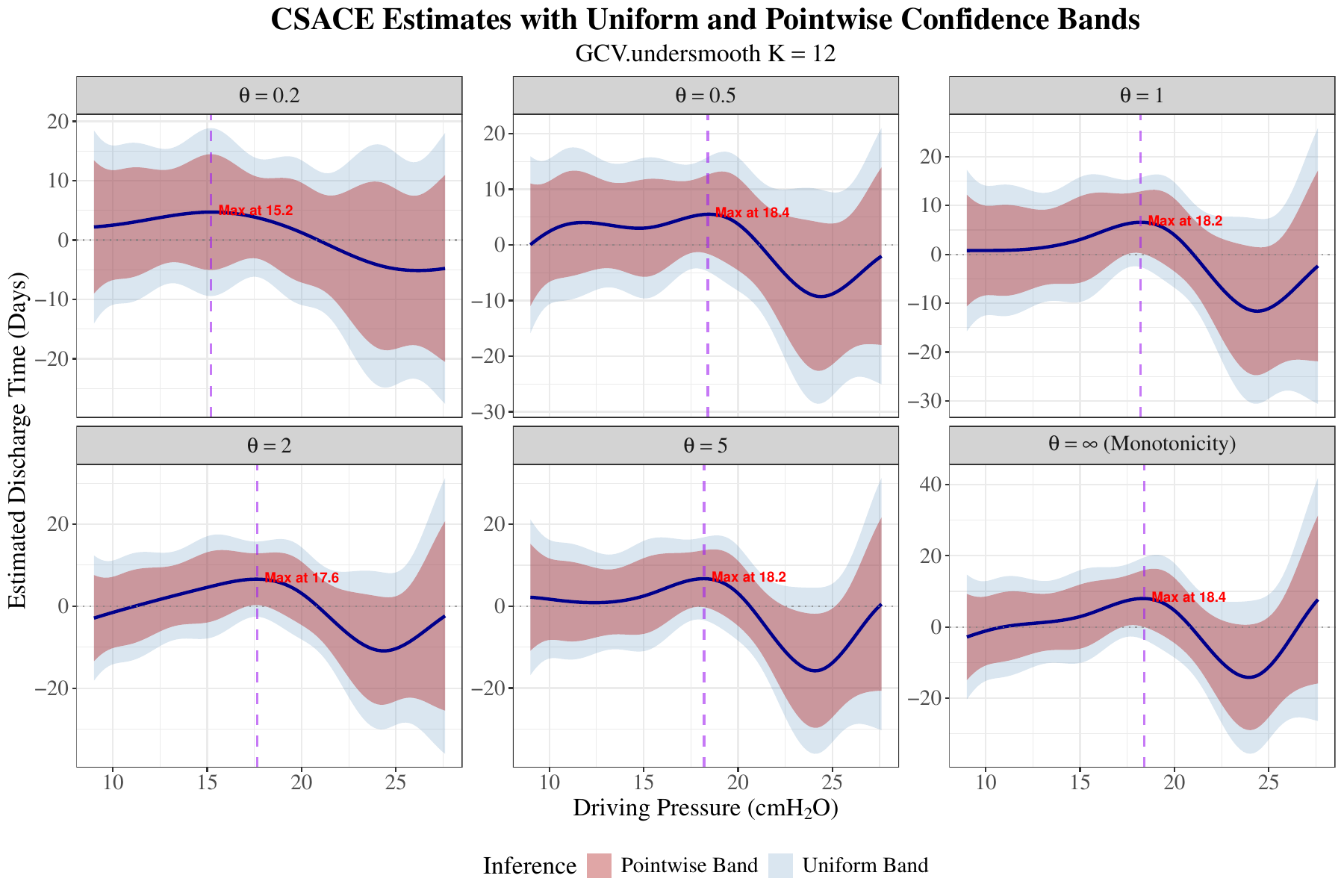}
    \caption{The estimated curves and associated 95\% pointwise and uniform confidence bands for the PLSS TR-learner, utilizing P-spline smoothing with $K=12$ knots, are presented for the CSACE in the ARDS study across six values of the conditional odds ratio $\theta\in\{0.2,0.5,1,2,5,\infty\}$ (where $\theta=\infty$ corresponds to monotonicity). The vertical purple dashed lines indicate the maximizer or turning point of the estimated CSACE curves.}
    \label{fig:ARDS-CSACE-12}
\end{figure}
\end{spacing} 
\clearpage

\bibliography{bibliography.bib}

\bibliographystyle{chicago}

\end{document}